\crefname{section}{\textbf{\S}}{\textbf{\S\S}}
\crefname{subsection}{\textbf{\S}}{\textbf{\S\S}}
\crefname{subsubsection}{\textbf{\S}}{\textbf{\S\S}}
\crefname{line}{l.}{ls.}
\def\defaultHypSeparation{\hskip 5pt}
\definecolor{ruleColor}{rgb}{0.1, 0.3, 0.1}%
\definecolor{codepurple}{rgb}{.58,0,.82}
\definecolor{codegreen}{rgb}{0,.6,0}
\definecolor{primary}{HTML}{6C8EBF}
\definecolor{secondary}{HTML}{D6B656}
\definecolor{tertiary}{HTML}{82B366}
\definecolor{quaternary}{HTML}{666666}
\definecolor{quinary}{HTML}{B85450}
\definecolor{senary}{HTML}{9673A6}
\pgfplotsset{compat=1.16,
  /pgfplots/xbar legend/.style={
      /pgfplots/legend image code/.code={\draw[##1,/tikz/.cd,yshift=-.4em] (0cm,0cm) rectangle (0.8em,0.8em);},
    },
  /pgfplots/ybar legend/.style={
      /pgfplots/legend image code/.code={\draw[##1,/tikz/.cd,yshift=-.4em] (0cm,0cm) rectangle (0.8em,0.8em);},
    },
}
\setlist[description]{
  style=unboxed,
  topsep=0pt,
  leftmargin=0cm,
  listparindent=\parindent,wide
}
\footnotesize\linespread{.9}\selectfont,
\lstdefinelanguage{Scribble}{
  sensitive,
  morecomment=[l]{//},
  morecomment=[s]{/*}{*/},
  morestring=[b]{"},
  morekeywords={module, type, as, global, protocol, role, rec, choice, at, or, continue, from, to}
}
\theoremstyle{plain}
\newtheorem{theorem}{Theorem}
\newtheorem{lemma}[theorem]{Lemma}
\theoremstyle{remark}
\newtheorem*{remark}{Remark}
\newcommand{\code}[1]{\texttt{#1}}
\newcommand{\async}{\code{async}\xspace}
\newcommand{\await}{\code{await}\xspace}
\newcommand{\amr}{\textsc{AMR}\xspace}
\newcommand{\mpst}{\textsc{MPST}\xspace}
\newcommand{\rumpsteak}{\textsc{Rumpsteak}\xspace}
\newcommand{\multicrusty}{\textsc{MultiCrusty}\xspace}
\newcommand{\ferrite}{\textsc{Ferrite}\xspace}
\newcommand{\sesh}{\textsc{Sesh}\xspace}
\newcommand{\rustfft}{\textsc{RustFFT}\xspace}
\newcommand{\kmc}{\(k\)-\textsc{MC}\xspace}
\newcommand{\concur}{\textsc{SoundBinary}\xspace}
\newcommand{\nuscr}{\textsc{\(\nu\)Scr}\xspace}
\newcommand{\repository}{\cite{Rumpsteak}\xspace}
\newcommand{\ppt}[1]{\code{#1}}
\newcommand{\tnat}{\code{nat}}
\newcommand{\tint}{\code{int}}
\newcommand{\gt}[0]{\mathsf{G}}
\newcommand{\gtmsg}[3]{\ppt{#1} \to \ppt{#2}: \left\{#3\right\}}
\newcommand{\gtmsgs}[2]{\ppt{#1} \to \ppt{#2}: \{}
\newcommand{\gtmsge}{\}}
\newcommand{\gtvar}[1]{\mathbf{#1}}
\newcommand{\gtrec}[1]{\mu\mathbf{#1}.}
\newcommand{\gtend}{\code{end}}
\newcommand{\gtproj}[2]{#1 \upharpoonright \ppt{#2}}
\newcommand{\lt}[0]{\mathsf{T}}
\newcommand{\tout}[1]{\ppt{#1}!}
\newcommand{\tin}[1]{\ppt{#1}?}
\newcommand{\tsel}[1]{\oplus_{#1}}
\newcommand{\tbra}[1]{\&_{#1}}
\newcommand{\tvar}[1]{\mathbf{#1}}
\newcommand{\trec}[2]{\mu\mathbf{#1}.#2}
\newcommand{\tend}{\mathtt{end}}
\newcommand{\efsm}{\mathsf{M}}
\newcommand{\oefsm}{\mathsf{M}'}
\newcommand{\api}{\mathsf{A}}
\newcommand{\proc}{\mathsf{P}}
\newcommand{\refa}[1]{\ensuremath{\mathcal{A}^{(\ppt{#1})}}}
\newcommand{\refb}[1]{\ensuremath{\mathcal{B}^{(\ppt{#1})}}}
\newcommand{\act}[1]{\code{act}(#1)}
\newcommand{\fterms}[1]{\code{terms}(#1)}
\newcommand{\tr}[1]{\mathcal{T}\left(#1\right)}
\newcommand{\nil}{\varnothing}
\newcommand{\pair}[2]{\langle#1\ \lceil\!\!\rfloor \ #2\rangle}
\newcommand{\triple}[3]{\langle#1,#2,#3\rangle}
\newcommand{\map}[1]{\left[#1\right]}
\newcommand{\si}[1]{\llbracket #1 \rrbracket_{\tiny\texttt{si}}}
\newcommand{\so}[1]{\llbracket #1 \rrbracket_{\tiny\texttt{so}}}
\newcommand{\ruleSub}[1]{\rulename{sub-#1}}
\newcommand{\ruleAlg}[1]{\rulename{#1}}
\newcommand{\veryshortrightarrow}[1][3pt]{\mathrel{%
    \vcenter{\hbox{\rule[-.5\fontdimen8\textfont3]{#1}{\fontdimen8\textfont3}}}%
    \mkern-4mu\hbox{\usefont{U}{lasy}{m}{n}\symbol{41}}}}
\newcommand{\reduce}{\shortrightarrow}
\newcommand{\ruleRed}[1]{\rulename{\(\veryshortrightarrow\)#1}}
\newcommand{\ruleRedA}{\ruleRed{$\mathcal{A}$}}
\newcommand{\ruleRedB}{\ruleRed{$\mathcal{B}$}}
\newcommand{\ruleRef}[1]{\rulename{ref-#1}}
\newcommand{\ruleRefA}{\ruleRef{$\mathcal{A}$}}
\newcommand{\ruleRefB}{\ruleRef{$\mathcal{B}$}}
\newcommand{\bnfas}{\mathrel{::=}}
\newcommand{\bnfalt}{\mathrel{\mid}}
\newcommand\feat[1]{\textbf{\scriptsize #1}}
\newcommand{\rulename}[1]{{\color{ruleColor}\textsc{\small [#1]}}}%
\newcommand{\tick}{{\tiny\ding{52}}}
\newcommand{\ftick}{{\color{tertiary}\tiny\ding{52}}}
\newcommand{\htick}{{\color{secondary}\tiny\ding{55}}}
\newcommand{\cross}{{\color{quinary}\tiny\ding{55}}}
\newcommand{\tsize}[1]{|#1|}
\newcommand{\oh}{\mathcal{O}}
  \newcommand{\inApp}[1]{the \cref{#1}}
  \newcommand{\inApp}[1]{the full version~\cite{FullVersion}}
\providecommand\BibTeX{{%
\normalfont B\kern-0.5em{\scshape i\kern-0.25em b}\kern-0.8em\TeX}}}
\keywords{Rust, Asynchronous Message Passing, Message Reordering, Computation-Communication Overlap, Multiparty Session Types}
    \renewcommand{\rumpsteak}{\textsc{Ranon}\xspace}
    \renewcommand{\repository}{\cite{Ranon}\xspace}
\begin{document}
\pagestyle{empty}
%%
%% The "title" command has an optional parameter,
%% allowing the author to define a "short title" to be used in page headers.
\title{Deadlock-Free Asynchronous Message Reordering in Rust with Multiparty Session Types}

%%
%% The "author" command and its associated commands are used to define
%% the authors and their affiliations.
%% Of note is the shared affiliation of the first two authors, and the
%% "authornote" and "authornotemark" commands
%% used to denote shared contribution to the research.
\author{Zak Cutner}
%\email{zachary.cutner17@imperial.ac.uk}
\orcid{0000-0001-7180-4530}
\affiliation{%
    \institution{Imperial College London}
    \city{London}
    \country{UK}
}

\author{Nobuko Yoshida}
%\email{n.yoshida@imperial.ac.uk}
\orcid{0000-0002-3925-8557}
\affiliation{%
    \institution{Imperial College London}
    \city{London}
    \country{UK}
}

\author{Martin Vassor}
%\email{m.vassor@imperial.ac.uk}
\orcid{ 0000-0002-2057-0495 }
\affiliation{%
    \institution{Imperial College London}
    \city{London}
    \country{UK}
}

%%
%% By default, the full list of authors will be used in the page
%% headers. Often, this list is too long, and will overlap
%% other information printed in the page headers. This command allows
%% the author to define a more concise list
%% of authors' names for this purpose.
% \renewcommand{\shortauthors}{Trovato and Tobin, et al.}

%%
%% The abstract is a short summary of the work to be presented in the
%% article.
\begin{abstract}
    Rust is a modern systems language focused on performance and reliability.
    Complementing Rust's promise to provide "fearless concurrency", developers
    frequently exploit asynchronous message passing. Unfortunately, sending and
    receiving messages in an arbitrary order to maximise
    computation-communication overlap (a popular optimisation in message-passing
    applications) opens up a Pandora's box of subtle concurrency bugs.

    To guarantee deadlock-freedom by construction, we present \rumpsteak: a new
    Rust framework based on \emph{multiparty session types}. Previous session
    type implementations in Rust are either built upon synchronous and blocking
    communication and/or are limited to two-party interactions. Crucially, none
    support the arbitrary ordering of messages for efficiency.

    \rumpsteak instead targets asynchronous \code{async}/\code{await} code. Its
    unique ability is allowing developers to arbitrarily order send/receive
    messages while preserving deadlock-freedom. For this, \rumpsteak
    incorporates two recent advanced session type theories: (1) $k$-multiparty
    compatibility (\kmc), which \emph{globally} verifies the safety of a set of
    participants, and (2) asynchronous multiparty session subtyping, which
    \emph{locally} verifies optimisations in the context of a single
    participant. Specifically, we propose a novel algorithm for asynchronous
    subtyping that is both sound and decidable.

    We first evaluate the performance and expressiveness of \rumpsteak against
    three previous Rust implementations. We discover that \rumpsteak is around
    1.7--8.6x more efficient and can safely express many more examples by virtue
    of offering arbitrary ordering of messages. Secondly, we analyse the
    complexity of our new algorithm and benchmark it against \kmc and a
    \emph{binary} session subtyping algorithm. We find they are exponentially
    slower than \rumpsteak's.
\end{abstract}

\pagestyle{empty}
%%
%% The code below is generated by the tool at http://dl.acm.org/ccs.cfm.
%% Please copy and paste the code instead of the example below.
%%
%\begin{CCSXML}
%  <ccs2012>
%  <concept>
%  <concept_id>10010520.10010553.10010562</concept_id>
%  <concept_desc>Computer systems organization~Embedded systems</concept_desc>
%  <concept_significance>500</concept_significance>
%  </concept>
%  <concept>
%  <concept_id>10010520.10010575.10010755</concept_id>
%  <concept_desc>Computer systems organization~Redundancy</concept_desc>
%  <concept_significance>300</concept_significance>
%  </concept>
%  <concept>
%  <concept_id>10010520.10010553.10010554</concept_id>
%  <concept_desc>Computer systems organization~Robotics</concept_desc>
%  <concept_significance>100</concept_significance>
%  </concept>
%  <concept>
%  <concept_id>10003033.10003083.10003095</concept_id>
%  <concept_desc>Networks~Network reliability</concept_desc>
%  <concept_significance>100</concept_significance>
%  </concept>
%  </ccs2012>
%\end{CCSXML}

%\ccsdesc[500]{Computer systems organization~Embedded systems}
%\ccsdesc[300]{Computer systems organization~Redundancy}
%\ccsdesc{Computer systems organization~Robotics}
%\ccsdesc[100]{Networks~Network reliability}

%%
%% Keywords. The author(s) should pick words that accurately describe
%% the work being presented. Separate the keywords with commas.
%\keywords{datasets, neural networks, gaze detection, text tagging}

%%
%% This command processes the author and affiliation and title
%% information and builds the first part of the formatted document.
\maketitle
\thispagestyle{empty} 
\section{Introduction}
\label{sec:Introduction}

Rust is a statically-typed language designed for systems software development.
It is rapidly growing in popularity and has been voted ``most loved language''
over five years of surveys by Stack Overflow~\cite{StackOverflowSurvey}. Rust
aims to offer the safety of a high-level language without compromising on the
performance enjoyed by low-level languages. \emph{Message passing} over
\emph{typed channels} is common in concurrent Rust applications, where
(low-level) threads or (high-level) actors communicate efficiently and safely by
sending messages containing data.

To improve performance by maximising computation-communication
overlap~\cite{Choi2020,Sergent2018,Kim2012}, developers often wish to
arbitrarily change the order of sending and receiving messages---we will present
several examples of this technique, which we refer to as \emph{asynchronous
	message reordering} (\amr). Our challenge is to remedy communication errors such
as deadlocks, which can easily occur in message-passing applications,
particularly those that leverage \amr.

To achieve this, we introduce \rumpsteak: a framework for efficiently
coordinating message-passing processes in Rust using \emph{multiparty session
	types}. Session types~\cite{Honda1998,Takeuchi1994} (see \cite{Yoshida2020} for
a gentle introduction and~\cite{gay_behavioural_2017} for a more exhaustive one)
coordinate interactions through \emph{linearly typed channels}, that must be
used exactly once, ensuring \emph{protocol compliance} without deadlocks or
communication mismatches.

\paragraph{Current state of the art}
Since Rust's \emph{affine type system} is particularly well-suited to session
types by statically guaranteeing a linear usage of session channels, there are
several previous attempts at implementing session types in
Rust~\cite{Lagaillardie2020,Chen2021,Kokke2019,Jespersen2015}. However, their
current limitations prevent them from guaranteeing all four of
\emph{deadlock-freedom}, \emph{multiparty communication}, \emph{asynchronous
	execution} and \amr.

\paragraph{Our framework} We motivate the importance of each feature and explain
how \rumpsteak incorporates this.
\begin{figure*}
	\centering
	\sffamily\footnotesize
	$\boldsymbol{\gt}$ \; Global Type \qquad
	$\boldsymbol{\efsm}$ \; Finite State Machine (FSM) \qquad
	$\boldsymbol{\oefsm}$ \; Optimised FSM \qquad
	$\boldsymbol{\api}$ \; Rust API \qquad
	$\boldsymbol{\proc}$ \; Rust Process \\[.4cm]
	\begin{subfigure}[b]{0.3\textwidth}
		\centering
		\begin{tikzpicture}[
				x=.3cm, y=-.95cm, align=center, thick,
				font=\footnotesize\linespread{0.8}\selectfont, draw,
				minimum size=.55cm,
				>=to, node distance=.4cm,
				hand/.style={fill=primary!30, draw=primary, densely dotted, rounded corners},
				generated/.style={fill=secondary!30, draw=secondary, rounded corners},
				compat/.style={fill=tertiary!15, draw=none, rounded corners},
				kmc/.style={fill=quaternary!15, draw=none, rounded corners},
				subtype/.style={fill=quinary!15, draw=none, rounded corners}
			]
			\node [hand, inner sep=1mm] (G) at (4,0) {$\gt$};
			\node [below=of G] (GB) {};
			\node [inner sep=1mm, right=-0.1cm of GB] (LL) {$\ldots$};
			\node [generated, inner sep=1mm, left=of LL] (L2) {$\efsm_2$};
			\node [generated, inner sep=1mm, left=of L2] (L1) {$\efsm_1$};
			\node [generated, inner sep=1mm, right=of LL] (Ln) {$\efsm_n$};
			\node [hand, inner sep=1mm, below=.15cm of L1] (M1) {$\oefsm_1$};
			\node [hand, inner sep=1mm, below=.15cm of L2] (M2) {$\oefsm_2$};
			\node [inner sep=1mm, below=.15cm of LL] (MM) {$\ldots$};
			\node [hand, inner sep=1mm, below=.15cm of Ln] (Mn) {$\oefsm_n$};
			\node [generated, inner sep=1mm, below=.6cm of M1] (A1) {$\api_1$};
			\node [generated, inner sep=1mm, below=.6cm of M2] (A2) {$\api_2$};
			\node [inner sep=1mm, below=.6cm of MM] (AA) {$\ldots$};
			\node [generated, inner sep=1mm, below=.6cm of Mn] (An) {$\api_n$};
			\node [hand, inner sep=1mm, below=.15cm of A1] (P1) {$\proc_1$};
			\node [hand, inner sep=1mm, below=.15cm of A2] (P2) {$\proc_2$};
			\node [inner sep=1mm, below=.15cm of AA] (PP) {$\ldots$};
			\node [hand, inner sep=1mm, below=.15cm of An] (Pn) {$\proc_n$};

			\begin{scope}[on background layer]
				\node [subtype, fit=(L1)(M1)] (sub1) {};
				\node [subtype, fit=(L2)(M2)] (sub2) {};
				\node [subtype, fit=(Ln)(Mn)] (subn) {};
			\end{scope}
			\node [left=1mm of sub1, anchor=south, rotate=90] {Asynchronous\\Subtyping};

			\begin{scope}[on background layer]
				\node [compat, fit=(A1)(P1)] (compat1) {};
				\node [compat, fit=(A2)(P2)] (compat2) {};
				\node [compat, fit=(An)(Pn)] (compatn) {};
			\end{scope}
			\node [left=1mm of compat1, anchor=south, rotate=90] {Rust Typecheck};

			\draw [->] (G.south) to (L1);
			\draw [->] (G.south) to (L2);
			\draw [->] (G.south) to (LL);
			\draw [->] (G.south) to (Ln);
			\draw [->] (M1) to (A1);
			\draw [->] (M2) to (A2);
			\draw [->] (MM) to (AA);
			\draw [->] (Mn) to (An);
		\end{tikzpicture}
		\caption{Top-down}
		\label{fig:topdownapproach}
	\end{subfigure}\begin{subfigure}[b]{0.4\textwidth}
		\centering
		\begin{tikzpicture}[
				x=.3cm, y=-.95cm, align=center, thick,
				font=\footnotesize\linespread{0.8}\selectfont, draw,
				minimum size=.4cm,
				>=to, node distance=.4cm,
				hand/.style={fill=primary!30, draw=primary, densely dotted, rounded corners=4},
				generated/.style={fill=secondary!30, draw=secondary, rounded corners=4},
				compat/.style={fill=tertiary!15, draw=none, rounded corners},
				kmc/.style={fill=quaternary!15, draw=none, rounded corners},
				subtype/.style={fill=quinary!15, draw=none, rounded corners}
			]
			\matrix[column sep=.2cm]{
				\node[hand]  {};  &
				\node {User-Written};  &
				\node[generated] {}; &
				\node {Generated};   \\
			};
		\end{tikzpicture} \\[.3cm]
		\begin{tikzpicture}[
				x=.3cm, y=-.95cm, align=center, thick,
				font=\footnotesize\linespread{0.8}\selectfont, draw,
				minimum size=.55cm,
				>=to, node distance=.4cm,
				hand/.style={fill=primary!30, draw=primary, densely dotted, rounded corners},
				generated/.style={fill=secondary!30, draw=secondary, rounded corners},
				compat/.style={fill=tertiary!15, draw=none, rounded corners},
				kmc/.style={fill=quaternary!15, draw=none, rounded corners},
				subtype/.style={fill=quinary!15, draw=none, rounded corners}
			]
			\node [generated, inner sep=1mm] (M1) {$\oefsm_1$};
			\node [generated, inner sep=1mm, right=of M1] (M2) {$\oefsm_2$};
			\node [inner sep=1mm, right=of M2] (MM) {$\ldots$};
			\node [generated, inner sep=1mm, right=of MM] (Mn) {$\oefsm_n$};
			\node [hand, inner sep=1mm, below=.6cm of M1] (A1) {$\api_1$};
			\node [hand, inner sep=1mm, below=.6cm of M2] (A2) {$\api_2$};
			\node [inner sep=1mm, below=.6cm of MM] (AA) {$\ldots$};
			\node [hand, inner sep=1mm, below=.6cm of Mn] (An) {$\api_n$};
			\node [hand, inner sep=1mm, below=.15cm of A1] (P1) {$\proc_1$};
			\node [hand, inner sep=1mm, below=.15cm of A2] (P2) {$\proc_2$};
			\node [inner sep=1mm, below=.15cm of AA] (PP) {$\ldots$};
			\node [hand, inner sep=1mm, below=.15cm of An] (Pn) {$\proc_n$};

			\begin{scope}[on background layer]
				\node [kmc, fit=(M1)(Mn)] (kMC) {};
			\end{scope}
			\node [above=1mm of kMC, anchor=south] {$k$-Multiparty Compatibility~\cite{Lange2019}};

			\begin{scope}[on background layer]
				\node [compat, fit=(A1)(P1)] (compat1) {};
				\node [compat, fit=(A2)(P2)] (compat2) {};
				\node [compat, fit=(An)(Pn)] (compatn) {};
			\end{scope}
			\node [left=1mm of compat1, anchor=south, rotate=90] {Rust Typecheck};

			\draw [<-] (M1) to (A1);
			\draw [<-] (M2) to (A2);
			\draw [<-] (MM) to (AA);
			\draw [<-] (Mn) to (An);
		\end{tikzpicture}
		\caption{Bottom-up}
		\label{fig:bottomup}
	\end{subfigure}\begin{subfigure}[b]{0.3\textwidth}
		\centering
		\begin{tikzpicture}[
				x=.3cm, y=-.95cm, align=center, thick,
				font=\footnotesize\linespread{0.8}\selectfont, draw,
				minimum size=.55cm,
				>=to, node distance=.4cm,
				hand/.style={fill=primary!30, draw=primary, densely dotted, rounded corners},
				generated/.style={fill=secondary!30, draw=secondary, rounded corners},
				compat/.style={fill=tertiary!15, draw=none, rounded corners},
				kmc/.style={fill=quaternary!15, draw=none, rounded corners},
				subtype/.style={fill=quinary!15, draw=none, rounded corners}
			]
			\node [hand, inner sep=1mm] (G) at (4,0) {$\gt$};
			\node [below=of G] (GB) {};
			\node [inner sep=1mm, right=-0.1cm of GB] (LL) {$\ldots$};
			\node [generated, inner sep=1mm, left=of LL] (L2) {$\efsm_2$};
			\node [generated, inner sep=1mm, left=of L2] (L1) {$\efsm_1$};
			\node [generated, inner sep=1mm, right=of LL] (Ln) {$\efsm_n$};
			\node [generated, inner sep=1mm, below=.15cm of L1] (M1) {$\oefsm_1$};
			\node [generated, inner sep=1mm, below=.15cm of L2] (M2) {$\oefsm_2$};
			\node [inner sep=1mm, below=.15cm of LL] (MM) {$\ldots$};
			\node [generated, inner sep=1mm, below=.15cm of Ln] (Mn) {$\oefsm_n$};
			\node [hand, inner sep=1mm, below=.6cm of M1] (A1) {$\api_1$};
			\node [hand, inner sep=1mm, below=.6cm of M2] (A2) {$\api_2$};
			\node [inner sep=1mm, below=.6cm of MM] (AA) {$\ldots$};
			\node [hand, inner sep=1mm, below=.6cm of Mn] (An) {$\api_n$};
			\node [hand, inner sep=1mm, below=.15cm of A1] (P1) {$\proc_1$};
			\node [hand, inner sep=1mm, below=.15cm of A2] (P2) {$\proc_2$};
			\node [inner sep=1mm, below=.15cm of AA] (PP) {$\ldots$};
			\node [hand, inner sep=1mm, below=.15cm of An] (Pn) {$\proc_n$};

			\begin{scope}[on background layer]
				\node [subtype, fit=(L1)(M1)] (sub1) {};
				\node [subtype, fit=(L2)(M2)] (sub2) {};
				\node [subtype, fit=(Ln)(Mn)] (subn) {};
			\end{scope}
			\node [left=1mm of sub1, anchor=south, rotate=90] {Asynchronous\\Subtyping};

			\begin{scope}[on background layer]
				\node [compat, fit=(A1)(P1)] (compat1) {};
				\node [compat, fit=(A2)(P2)] (compat2) {};
				\node [compat, fit=(An)(Pn)] (compatn) {};
			\end{scope}
			\node [left=1mm of compat1, anchor=south, rotate=90] {Rust Typecheck};

			\draw [->] (G.south) to (L1);
			\draw [->] (G.south) to (L2);
			\draw [->] (G.south) to (LL);
			\draw [->] (G.south) to (Ln);
			\draw [<-] (M1) to (A1);
			\draw [<-] (M2) to (A2);
			\draw [<-] (MM) to (AA);
			\draw [<-] (Mn) to (An);
		\end{tikzpicture}
		\caption{Hybrid}
		\label{fig:hybridapproach}
	\end{subfigure}
	\caption{Workflow of the \rumpsteak framework (three approaches).}
	\label{fig:approaches}
\end{figure*}
\begin{description}
	\item[Deadlock-freedom.] One of the most important properties of
		concurrent/parallel systems is that their computations are not blocked.
		Deadlock-freedom in our context states that the system can always either
		make progress by exchanging messages or properly terminate.

	\item[Multiparty communication.] Many previous Rust
		implementations~\cite{Chen2021,Kokke2019,Jespersen2015} support only
		\emph{binary} session types, which is limited to two-party
		communication. On the other hand, the majority of real-world
		communication protocols consist of more than two participants.
		\rumpsteak therefore uses \emph{multiparty} session types
		(\mpst)~\cite{Honda2016,Honda2008} to ensure deadlock-freedom in
		protocols with any number of participants (or roles).

	\item[Asynchronous execution.] Most previous Rust
		implementations~\cite{Lagaillardie2020,Kokke2019,Jespersen2015} use
		\emph{synchronous} communication channels. This approach suffers from
		performance limitations since threads are \emph{blocked} while waiting
		to receive messages. \rumpsteak instead uses \emph{asynchronous}
		communication, where lightweight asynchronous tasks share a pool of
		threads. When one task is blocked, another's work can be scheduled in
		the meantime to prevent the wasting of computational resources.

		Although \cite{Chen2021} is also based on asynchronous communication,
		only \rumpsteak closely integrates with Rust's modern \async/\await
		syntax, allowing asynchronous programs to be written sequentially. To
		achieve this, asynchronous functions are annotated with \async, causing
		them to return \emph{futures}. Developers can \await calls to these
		functions, denoting that execution should continue elsewhere until a
		result is ready.

	\item[Asynchronous message reordering.] Our main contribution is offering
		\amr, which no existing work can provide while preserving deadlock-freedom.
		To motivate this, we introduce a running example of the \emph{double
			buffering} protocol~\cite{Huang2002}. Buffering is frequently used in
		multimedia applications where a continuous stream of data must be sent from
		a source (e.g.\ a graphics card) to a sink (e.g.\ a CPU). To prevent the
		source from being blocked while the sink is busy, it writes to a buffer,
		which is later read by the sink.

		\begin{figure}[h]
			\centering
			\sffamily\footnotesize
			\begin{tikzpicture}[
					x=11.5pt, y=-10pt, thick,
					copy/.style={
							{Circle[width=3pt, length=3pt]}-to,
							shorten <=-1.5pt, shorten >=-1.5pt
						},
					time/.style={-To, every node/.style={fill=white}}
				]
				\node[anchor=south, rotate=90] at (0,1) {Source};
				\node[anchor=south, rotate=90] at (0,4) {Buffers};
				\node[anchor=south, rotate=90] at (0,7) {Sink};

				% Single buffer
				\node[anchor=south] at (5,0) {\textit{Single Buffer}};
				\path[time] (.5,8.5) edge node {\textbf{Time}} (9.5,8.5);

				\draw[draw=none, fill=secondary!30, rounded corners] (.5,.5) rectangle ++(9,1);
				\draw[draw=none, fill=senary!30, rounded corners] (.5,6.5) rectangle ++(9,1);

				\draw[primary, fill=primary!30, rounded corners] (1.5,2) rectangle ++(1,2);
				\path[copy] (2,1) edge node {} (2,3);

				\draw[primary, fill=primary!30, rounded corners] (3.5,2) rectangle ++(1,2);
				\path[copy] (4,3) edge node {} (4,7);

				\draw[primary, fill=primary!30, rounded corners] (5.5,2) rectangle ++(1,2);
				\path[copy] (6,1) edge node {} (6,3);

				\draw[primary, fill=primary!30, rounded corners] (7.5,2) rectangle ++(1,2);
				\path[copy] (8,3) edge node {} (8,7);

				% Double buffer
				\node[anchor=south] at (15,0) {\textit{Double Buffer}};
				\path[time] (10.5,8.5) edge node {\textbf{Time}} (19.5,8.5);

				\draw[draw=none, fill=secondary!30, rounded corners] (10.5,.5) rectangle ++(9,1);
				\draw[draw=none, fill=senary!30, rounded corners] (10.5,6.5) rectangle ++(9,1);

				\draw[primary, fill=primary!30, rounded corners] (11.5,2) rectangle ++(1,2);
				\draw[quinary, fill=quinary!30, rounded corners] (11.5,4) rectangle ++(1,2);
				\path[copy] (12,1) edge node {} (12,3);

				\draw[primary, fill=primary!30, rounded corners] (13.5,2) rectangle ++(1,2);
				\draw[quinary, fill=quinary!30, rounded corners] (13.5,4) rectangle ++(1,2);
				\path[copy, bend left=70] (14,1) edge node {} (14,5);
				\path[copy, bend right=70] (14,3) edge node {} (14,7);

				\draw[primary, fill=primary!30, rounded corners] (15.5,2) rectangle ++(1,2);
				\draw[quinary, fill=quinary!30, rounded corners] (15.5,4) rectangle ++(1,2);
				\path[copy] (16,1) edge node {} (16,3);
				\path[copy] (16,5) edge node {} (16,7);

				\draw[primary, fill=primary!30, rounded corners] (17.5,2) rectangle ++(1,2);
				\draw[quinary, fill=quinary!30, rounded corners] (17.5,4) rectangle ++(1,2);
				\path[copy, bend left=70] (18,1) edge node {} (18,5);
				\path[copy, bend right=70] (18,3) edge node {} (18,7);
			\end{tikzpicture}
			\caption{Illustration of the double buffering protocol.}
			\label{fig:DoubleBuffering}
		\end{figure}

		We illustrate this effect in \cref{fig:DoubleBuffering}. With a single
		buffer, both the source and sink are constantly blocked as we cannot
		read and write simultaneously. However, adding a second buffer allows
		the source and sink to operate on different buffers at once so that (in
		the best case) they are never blocked and throughput can increase
		twofold.

		As we will see in \cref{sec:Overview}, the double buffering protocol
		takes advantage of \amr so it \emph{cannot} be expressed with standard
		\mpst theory~\cite{Honda2008}; communication with the source and sink
		are overlapped so both buffers can be accessed at once. The goal of this
		paper is to provide a method for ensuring that protocols optimised in
		this way preserve deadlock-freedom.
\end{description}

\paragraph{\rumpsteak framework} We achieve these features while allowing three
different approaches, summarised in \cref{fig:approaches}.

In the \textbf{top-down approach}, the developer writes a global type, which
describes the entire protocol (see \cref{sec:Overview} for an example). We
project this onto each participant to obtain a local finite state machine
(FSM), which describes the protocol from that participant's perspective.
Next, the developer uses \amr to propose optimised FSMs for each
participant. We \emph{locally} verify that each optimised FSM is compatible
with the projected one using an asynchronous subtyping algorithm. From each
optimised FSM, we generate an API so that the developer can write a Rust
process implementation. Our API uses Rust's type checker to ensure that
these processes conform to the protocol, and are therefore deadlock-free.

In the \textbf{bottom-up approach}, the developer manually writes an API and
process implementation for each participant. We derive the optimised FSMs
directly from these APIs and ensure that they are safe using
\emph{\(k\)-multiparty compatibility} (\kmc)~\cite{Lange2019}. \kmc globally
verifies that multiple FSMs are compatible with each other without using a
global type.

Finally, we propose a \textbf{hybrid approach} where the projected FSMs are
generated from a global type on one side (as in the top-down approach). On the
other side, the developer writes both the APIs and process implementations
directly, and we derive the optimised FSMs (as in the bottom-up approach). We
then \emph{locally} verify that the optimised FSMs are asynchronous subtypes of
the projected FSMs.

To summarise, our subtyping algorithm used in the top-down/hybrid approaches
\emph{locally} verifies the correctness of optimisations. This makes it far more
scalable than \kmc as we will see in \cref{sec:Evaluation}. These approaches are
also more intuitive to the developer since they use \emph{safety by
	construction}---it is much easier to write a global type than to determine why a
\kmc analysis has failed for a complex protocol.

\paragraph{Contribution and outline} In \cref{sec:Overview}, we give an overview
of the design and implementation of \rumpsteak. In \cref{sec:Theory}, we present
our new \emph{asynchronous subtyping algorithm} used to check that an optimised
FSM is correct. We prove our new algorithm is \emph{sound}
(\cref{thm:Soundness}) against the precise \mpst asynchronous subtyping
theory~\cite{Ghilezan2021}, \emph{terminates} (\cref{thm:Termination}) and
analyse its complexity (\cref{thm:Complexity}). In \cref{sec:Evaluation}, we
compare the performance and expressiveness of our framework with existing
work~\cite{Lagaillardie2020,Chen2021,Kokke2019,Lange2019,Bravetti2019}. We show
\begin{enumerate*}[label=\textbf{(\arabic*)}]
	\item \rumpsteak's runtime is faster and can express many more asynchronous
	protocols than other Rust
	implementations~\cite{Lagaillardie2020,Kokke2019,Chen2021}; and
	\item our subtyping algorithm is more efficient than existing
	algorithms~\cite{Lange2019,Bravetti2019}, confirming our complexity
	analysis.
\end{enumerate*}
\cref{sec:Related} discusses related work and concludes. The
\textbf{supplementary materials} available in the full
version~\cite{FullVersion} contain further examples and proofs of the theorems.
We include our source code and benchmarks in a \textbf{public GitHub
	repository}~\repository.

\begin{figure*}
  \begin{subfigure}[b]{0.3\textwidth}
    \centering
    \begin{lstlisting}[language=Scribble]
global protocol Ring(role s,
    role t) {
  rec loop {
    ready() from t to s;
    choice at s {
      value() from s to t;
      continue loop;
    } or {
      stop() from s to t;
    }
  }
}
\end{lstlisting}
    \caption{Scribble protocol description}
    \label{fig:example_scribble}
  \end{subfigure}\begin{subfigure}[b]{0.35\textwidth}
    \centering
    \begin{tikzpicture}[>=to, -to, thick, font=\footnotesize, initial left, initial distance=1.5ex, initial text={}, state/.style={draw, circle, node distance=1cm and 1.6cm, inner sep=3pt}, every path/.style={every node/.style={fill=white}}]
      \node[state,initial] (1) {};
      \node[state] (2) [right=of 1] {};
      \node[state] (3) [below=of 2] {};

      \path (1) edge[bend left] node {$\tin{t}\mathit{ready}$} (2);
      \path (2) edge[bend left] node {$\tout{t}\mathit{value}$} (1);
      \path (2) edge node {$\tout{t}\mathit{stop}$} (3);
    \end{tikzpicture}

    \vspace{.4cm}
    \begin{tikzpicture}[>=to, -to, thick, font=\footnotesize, initial left, initial distance=1.5ex, initial text={}, state/.style={draw, circle, node distance=1cm and 1.6cm, inner sep=3pt}, every path/.style={every node/.style={fill=white}}]
      \node[state,initial] (1) {};
      \node[state] (2) [right=of 1] {};
      \node[state] (3) [below=of 2] {};

      \path (1) edge[bend left] node {$\tout{s}\mathit{ready}$} (2);
      \path (2) edge[bend left] node {$\tin{s}\mathit{value}$} (1);
      \path (2) edge node {$\tin{s}\mathit{stop}$} (3);
    \end{tikzpicture}
    \caption{FSMs for the source and sink}
    \label{fig:example_automata}
  \end{subfigure}\begin{subfigure}[b]{0.3\textwidth}
    \centering
    \begin{lstlisting}[language=Rust]
type Source = Receive<T, Ready,
    Select<T, SourceChoice>>;

enum SourceChoice {
  Value(Value, Source),
  Stop(Stop, End) }

type Sink = Send<S, Ready,
    Branch<S, SinkChoice>>;

enum SinkChoice {
  Value(Value, Sink),
  Stop(Stop, End) }
\end{lstlisting}
    \caption{Rust API code (simplified excerpt)}
    \label{fig:example_rust}
  \end{subfigure}
  \caption{Different session type representations used within \rumpsteak.}
  \label{fig:rumpsteak_files}
\end{figure*}

\section{Design and Implementation}
\label{sec:Overview}

This section presents \rumpsteak and explains its three workflows: a
\emph{top-down approach} using asynchronous subtyping, a \emph{bottom-up
  approach} using $k$-multiparty compatibility (\kmc)~\cite{Lange2019} and a
\emph{hybrid approach} combining the two.

\subsection{Top-Down Approach}
\label{sec:TopDown}

All three approaches result in the same final application, so we use the example
of the top-down workflow (\cref{fig:topdownapproach}) to give a  general
overview of \rumpsteak's implementation.

\paragraph{Two-party protocol}
Before going into the details of the top-down approach, we first give an
informal insight into how the protocol is represented at each stage, shown in
\cref{fig:rumpsteak_files}. For this, we use the example of a simple streaming
protocol, later introduced in~\cref{sec:Evaluation}.

\begin{description}
  \item[A global session type] describes the protocol from a global perspective
    and includes all participants in the protocol. The global type for the
    streaming protocol (shown below) is read as follows: participant \(\ppt{t}\)
    first sends \(\mathit{ready}\) to participant \(\ppt{s}\). Next, \(\ppt{s}\)
    replies to \(\ppt{t}\) with two possible messages: either \(\mathit{value}\)
    or \(\mathit{stop}\). In the latter case, the protocol terminates (type
    \(\gtend\)). In the former case, the protocol continues with \(\gtvar{x}\),
    which is a type variable that is bound by \(\gtrec{x}\), i.e. the protocol
    starts over.
    \begin{equation*}
      \gt_\textsf{ST} = \gtrec{x}\gtmsg{t}{s}{\mathit{ready}.\gtmsg{s}{t}{
          \mathit{value}.\gtvar{x}, \quad
          \mathit{stop}.\gtend
        }}
    \end{equation*}
    In \rumpsteak, developers express global types syntactically with
    Scribble~\cite{ScribbleWebsite,Yoshida2013}: a widely used and
    target-agnostic language for describing multiparty protocols. We show the
    corresponding Scribble description for the streaming protocol
    in~\cref{fig:example_scribble}.

  \item[A local finite state machine] describes the protocol from the
    perspective of a single participant. It shows the send and receive actions
    of that participant, independent of what other participants may be doing in
    the meantime.

    In \cref{fig:example_automata}, we show the local FSMs for each participant
    in the streaming protocol, where $!$ and $?$ denote \emph{send} and
    \emph{receive} respectively in session type syntax~\cite{Yoshida2020}. For
    example, the FSM for the source first receives \(\mathit{ready}\) from the
    sink, then chooses to either send \(\mathit{value}\) and start over or
    simply send \(\mathit{stop}\) and finish.

  \item[The Rust API] is an encoding of a local FSM as Rust code. This code then
    uses Rust's type checker to confirm that a process implementation, also
    written in Rust, conforms to the FSM. For example, the encodings for the
    source and sink are shown in \cref{fig:example_rust}.
\end{description}

\paragraph{Multiparty protocol}
In the remainder of this section, we use a more complex example (the double
buffering protocol from \cref{sec:Introduction}), which allows us to illustrate
how \rumpsteak can verify multiparty protocols. We carefully go through each
step of the top-down approach, closely following \cref{fig:topdownapproach}.

To guarantee safety in our double buffering example using \mpst, the developer
defines a global type $\gt_\textsf{DB}$ for the protocol. This protocol has
three participants: a source \ppt{s}, a kernel \ppt{k} (which controls both of
the buffers) and a sink \ppt{t}. We also show the corresponding Scribble
description for the double buffering protocol in~\cref{lst:Scribble}.
\begin{multline*}
  \gt_\textsf{DB} = \gtrec{x}\gtmsgs{k}{s}\mathit{ready}.\gtmsgs{s}{k} \\
  \mathit{value}.\gtmsgs{t}{k}\mathit{ready}.\gtmsgs{k}{t}\mathit{value}.\gtvar{x}\gtmsge\gtmsge\gtmsge\gtmsge
\end{multline*}

\begin{lstlisting}[language=Scribble, label={lst:Scribble}, caption={Scribble representation of the double buffering protocol.}, float=t]
global protocol DoubleBuffering(role s,
    role k, role t) {
  rec loop {
    ready() from k to s;
    value() from s to k;
    ready() from t to k;
    value() from k to t;
    continue loop;
  }
}
\end{lstlisting}

\paragraph{Projection} Once we have created a global type, we use a projection
algorithm to derive the local FSM for each participant. In \rumpsteak, we
perform projection using \nuscr~\cite{NuScr}: a new lightweight and extensible
Scribble toolchain implemented in OCaml.

We show the projected local FSM for each of the participants in
\cref{fig:DoubleBufferingFSMs}. For instance, the kernel, whose projected type
$\efsm_\ppt{k}$ is shown in \cref{fig:DoubleBufferingEFSM},
\begin{enumerate*}[label=\textbf{(\arabic*)}, ref={\arabic*}]
  \item\label{itm:DoubleBufferingStart} informs the source that it is ready to
  receive;
  \item receives a value from the source;
  \item waits for the sink to become ready;
  \item sends a value to the sink; and
  \item repeats from step~\ref{itm:DoubleBufferingStart}.
\end{enumerate*}

\begin{figure}
  \centering
  \begin{subfigure}{.4\linewidth}
    \centering
    \begin{tikzpicture}[>=to, -to, thick, font=\footnotesize, initial left, initial distance=1.5ex, initial text={}, state/.style={draw, circle, node distance=1cm and 1.6cm, inner sep=3pt}, every path/.style={every node/.style={fill=white}}]
      \node[state,initial] (1) {};
      \node[state] (2) [right=of 1] {};
      \node[state] (3) [below=of 2] {};
      \node[state] (4) [left=of 3] {};

      \path (1) edge node {$\tout{s}\mathit{ready}$} (2);
      \path (2) edge node {$\tin{s}\mathit{value}$} (3);
      \path (3) edge node {$\tin{t}\mathit{ready}$} (4);
      \path (4) edge node {$\tout{t}\mathit{value}$} (1);
    \end{tikzpicture}
    \caption{Projected FSM for \(\ppt{k}\) ($\efsm_\ppt{k}$)}
    \label{fig:DoubleBufferingEFSM}
  \end{subfigure}\begin{subfigure}{.6\linewidth}
    \centering
    \begin{tikzpicture}[>=to, -to, thick, font=\footnotesize, initial left, initial distance=1.5ex, initial text={}, state/.style={draw, circle, node distance=1cm and 1.6cm, inner sep=3pt}, every path/.style={every node/.style={fill=white}}]
      \node[state,initial] (1) {};
      \node[state] (2) [right=of 1] {};
      \node[state] (3) [right=of 2] {};
      \node[state] (4) [below=of 3] {};
      \node[state] (5) [left=of 4] {};

      \path (1) edge node {$\tout{s}\mathit{ready}$} (2);
      \path (2) edge node {$\tout{s}\mathit{ready}$} (3);
      \path (3) edge node {$\tin{s}\mathit{value}$} (4);
      \path (4) edge node {$\tin{t}\mathit{ready}$} (5);
      \path (5) edge node {$\tout{t}\mathit{value}$} (2);
    \end{tikzpicture}
    \caption{Optimised FSM for \(\ppt{k}\) ($\oefsm_\ppt{k}$)}
    \label{fig:DoubleBufferingOFSM}
  \end{subfigure}

  \vspace{.4cm}
  \begin{subfigure}{.5\linewidth}
    \centering
    \begin{tikzpicture}[>=to, -to, thick, font=\footnotesize, initial left, initial distance=1.5ex, initial text={}, state/.style={draw, circle, node distance=1cm and 1.6cm, inner sep=3pt}, every path/.style={every node/.style={fill=white}}]
      \node[state,initial] (1) {};
      \node[state] (2) [right=of 1] {};

      \path (1) edge[bend left] node {$\tin{k}\mathit{ready}$} (2);
      \path (2) edge[bend left] node {$\tout{k}\mathit{value}$} (1);
    \end{tikzpicture}
    \caption{Projected FSM for \(\ppt{s}\) (\(\efsm_\ppt{s}\))}
  \end{subfigure}\begin{subfigure}{.5\linewidth}
    \centering
    \begin{tikzpicture}[>=to, -to, thick, font=\footnotesize, initial left, initial distance=1.5ex, initial text={}, state/.style={draw, circle, node distance=1cm and 1.6cm, inner sep=3pt}, every path/.style={every node/.style={fill=white}}]
      \node[state,initial] (1) {};
      \node[state] (2) [right=of 1] {};

      \path (1) edge[bend left] node {$\tout{k}\mathit{ready}$} (2);
      \path (2) edge[bend left] node {$\tin{k}\mathit{value}$} (1);
    \end{tikzpicture}
    \caption{Projected FSM for \(\ppt{t}\) (\(\efsm_\ppt{t}\))}
  \end{subfigure}
  \caption{FSMs for roles \(\ppt{k}\), \(\ppt{t}\) and \(\ppt{s}\) of the double buffering protocol.}
  \label{fig:DoubleBufferingFSMs}
\end{figure}

\paragraph{Asynchronous message reordering} Unfortunately, global types cannot
represent overlapping communication. Therefore, the projection $\efsm_\ppt{k}$
cannot achieve the simultaneous interactions we saw in
\cref{fig:DoubleBuffering}. In practice, developers use \amr to produce an
optimised FSM for the kernel $\oefsm_\ppt{k}$ (shown in
\cref{fig:DoubleBufferingOFSM}). Here, the kernel initially sends two
$\mathit{ready}$ messages to the source. This allows the source to write to the
second buffer while the sink is busy reading from the first. Note that the
asynchronous queue is effectively acting as the second buffer---the second
message from the source waits in the queue while the kernel is processing the
first.

Crucially, \rumpsteak verifies that $\oefsm_\ppt{k}$ is safe to use in the place
of $\efsm_\ppt{k}$ without causing deadlocks or other communication errors.
Otherwise, we would be throwing away the safety benefits that come with \mpst.
To achieve this, we must determine that $\oefsm_\ppt{k}$ is an
\emph{asynchronous subtype} of $\efsm_\ppt{k}$, which we perform using our
algorithm presented in \cref{sec:Theory}.

\paragraph{API design} Once we arrive at an optimised FSM $\oefsm_i$ for each
participant, our challenge is to create a Rust API $\api_i$, which uses Rust's
type checker to ensure that a developer-written process $\proc_i$ conforms to
$\oefsm_i$ (a relevant introduction to the Rust programming language can be found in
Jung's thesis~\cite[Chapter~8]{Jung_2020}). In the top-down approach, this API
is automatically generated from an optimised FSM.

To illustrate, we show the API for the kernel ($\api_\ppt{k}$) in
\cref{lst:API}, which checks conformance to $\oefsm_\ppt{k}$. To ensure that our
API remains readable by developers and to eliminate extensive boilerplate code,
we make use of Rust's procedural macros \cite{RustProceduralMacros}. By
decorating types with \code{\#[...]}, these macros perform additional
compile-time code generation.

\begin{description}
  \item[Roles (Participants).] Each role is represented as a struct, which
    stores its communication channels with other roles. The struct for the
    kernel (\crefrange{line:RoleStart}{line:RoleEnd}) contains channels to and
    from \ppt{s} and \ppt{t}. Developers can in fact use any custom channel that
    implements Rust's standard \code{Sink} or \code{Stream}
    interfaces~\cite{Futures} for asynchronous sends and receives respectively.
    This approach minimises the expensive creation of channels in cases where
    only bounded or unidirectional channels are required.

    Our \code{\#[derive(Role)]} macro generates methods for programmatically
    retrieving these channels from the struct. Moreover, an optional
    \code{\#[derive(Roles)]} macro can be applied on a struct containing every
    role to also generate code for automatically instantiating all roles at
    once. This will create the necessary combination of channels and assign them
    to the correct structs in order to reduce human error.

  \item[Sessions.] Following the approach of
    \multicrusty~\cite{Lagaillardie2020}, we build a set of \emph{generic
      primitives} to construct a simple API. For instance, the \code{Send}
    primitive (\cref{line:SendPrimitive}) takes a role, label and continuation
    as generic parameters. In contrast to the standard approach of creating a
    type for every state \cite{Hu2016}, these primitives reduce the number of
    types required and avoid arbitrary naming of these `state types'. For
    brevity, our API elides two additional parameters used to store channels at
    runtime, which are reinserted with the \code{\#[session]} macro.

    \begin{lstlisting}[language=Rust, label={lst:API}, caption={\rumpsteak API for the kernel.}, float=t]
#[derive(Role)]@\label{line:RoleStart}@
#[message(Label)]
struct K(
  #[route(S)] Channel,
  #[route(T)] Channel
);@\label{line:RoleEnd}@

#[derive(Message)]
enum Label {@\label{line:Message}@
  Ready(Ready),
  Value(Value),
}

struct Ready;@\label{line:ReadyLabel}@
struct Value(i32);@\label{line:CopyLabel}@

#[session]
type Kernel = Send<S, Ready, KernelLoop>;

#[session]
struct KernelLoop(Send<S, Ready, Receive<S, Value,@\label{line:SendPrimitive}@
    Receive<T, Ready, Send<T, Value, Self>>>>);
\end{lstlisting}

  \item[Labels.] Internally, \rumpsteak sends a \code{Label} enum
    (\cref{line:Message}) over reusable channels to communicate with other
    participants. Each label is represented as a type
    (\crefrange{line:ReadyLabel}{line:CopyLabel}) and our
    \code{\#[derive(Message)]} macro generates methods for converting to and
    from the \code{Label} enum.

  \item[Choice.] The kernel's API does not contain choice but we show an example
    of this below. Choice is represented as an enum where each branch contains
    the label sent/received and a continuation. This design allows processes to
    pattern match on external choices to determine which label was received.
    Methods allowing the enum to be used with \code{Branch} (an \emph{external}
    choice, \cref{line:BranchPrimitive}) or \code{Select} (an \emph{internal}
    choice) are also generated with the \code{\#[session]} macro. Even so, the
    enum is still necessary since Rust's lack of variadic generics means choice
    cannot be easily implemented as its own primitive.
    \begin{lstlisting}[language=Rust]
#[session]
enum Choice {
  Continue(Continue, Branch<A, Self>),@\label{line:BranchPrimitive}@
  Stop(Stop, End),
}
    \end{lstlisting}
\end{description}

\paragraph{Process implementation} Using the API $\api_\ppt{k}$, the developer
writes an implementation for the process $\proc_\ppt{k}$; we show an example in
\cref{lst:Process}. We discuss how our API uses Rust's type system to check that
$\proc_\ppt{k}$ conforms to the protocol.
\begin{lstlisting}[language=Rust, caption={Process implementation for the kernel.}, label={lst:Process}, float=t]
async fn kernel(role: &mut K) -> Result<()> {
  try_session(role, |s: Kernel<'_, _>| async {@\label{line:TrySession}@
    let mut t = s.send(Ready).await?;
    loop {@\label{line:LoopBottom}@
      let s = t.into_session().send(Ready).await?;
      let (Value(value), s) = s.receive().await?;
      let (Ready, s) = s.receive().await?;
      t = s.send(Value(value)).await?;
    }
  }).await
}
\end{lstlisting}
\begin{description}
  \item[Linearity.] The linear usage of channels is checked by Rust's
    \emph{affine type system}, preventing channels from being used multiple
    times. When a primitive is executed, it consumes itself, preventing reuse,
    and returns its continuation.

    Developers are prevented from constructing primitives directly using
    visibility modifiers. They must instead use \code{try\_session}
    (\cref{line:TrySession}). This function accepts a reference to the role
    being implemented and a closure (or function pointer) to its process
    implementation. The closure takes the input session type as an argument and
    returns the terminal type \code{End}. If a session is discarded, thereby
    breaking linearity, then the developer will have no \code{End} to return and
    Rust's compiler will complain that the closure does not satisfy its type.

  \item[Infinite recursion.] At first glance, it seems impossible to implement
    processes with infinitely recursive types using this approach. Nevertheless,
    for types without an \code{End} primitive, such as \code{Kernel}, we can use
    an infinite loop (\cref{line:LoopBottom}) to get around this problem.
    Conveniently, infinite loops are assigned \code{!}, Rust's never (or bottom)
    type, which can be implicitly cast to any other type. This allows the type
    of the loop to be coerced to \code{End}, enabling the closure to pass the
    type checker as before.

  \item[Channel reuse.] We allow roles to be reused across sessions since the
    channels they contain are usually expensive to create. Crucially, to prevent
    communication mismatches between different sessions, we must ensure that the
    same role is not used in multiple sessions at once. Therefore,
    \code{try\_session} takes an \emph{exclusive} reference to the role, causing
    Rust's borrow checker to enforce this requirement.
\end{description}

\subsection{Bottom-Up Approach}
\label{subsec:bottomup}
In the top-down approach, we generate a Rust API $\api_i$ from an optimised FSM
$\oefsm_i$. In the bottom-up approach (\cref{fig:bottomup}), we do the reverse:
we \textbf{\emph{serialise}} each API $\api_i$ to obtain an FSM $\oefsm_i$.
Next, we use \kmc on the set of FSMs $\oefsm_{1 \ldots n}$. If they are indeed
compatible, then the processes $\proc_{1 \ldots n}$, which implement their
respective APIs, are free from deadlocks.

\kmc takes the optimised FSMs of all participants and verifies deadlock freedom.
In contrast, asynchronous subtyping checks the optimisation of a single
participant's FSM in isolation. Therefore, \kmc can be seen as a global analysis
of the protocol and asynchronous subtyping as a local analysis of a single
participant.

To perform the serialisation of an API to an FSM we provide a Rust function
\code{serialize<S>() -> Fsm} (this is a simplified version). It takes a session
type API (such as \code{Kernel} from \cref{sec:TopDown}) as a generic type
parameter \code{S} and returns its corresponding FSM. This FSM can be printed in
a variety of formats and passed into the \kmc tool for verification.

\subsection{Hybrid Approach}
\label{subsec:hybrid}
Developers may naturally prefer the bottom-up approach since code generation, as
used in the top-down approach, can be quite opaque and difficult to understand
or debug. Nevertheless, the top-down approach has the advantage of using
asynchronous subtyping rather than \kmc---analysing the local types for all
participants in the protocol at once is challenging to do scalably (see
\cref{sec:Evaluation}).

Moreover, when a \kmc analysis fails, it can be difficult to determine how a
developer should update a complex protocol to make it free from deadlocks.
Safety by construction, as used in the top-down approach, is easier to work with
since verification is done locally on each participant.

Therefore, we also propose a third, hybrid approach (\cref{fig:hybridapproach}).
In this workflow, a global type $\gt$ is provided by the developer and projected
to obtain the FSMs $\efsm_{1 \ldots n}$ as before. Rather than the developer
proposing the optimised FSMs $\oefsm_{1 \ldots n}$ directly, they instead simply
write the APIs $\api_{1 \ldots n}$ (as in the bottom-up approach). These are
serialised to $\oefsm_{1 \ldots n}$ which can (as in the top-down approach) be
checked for safety against $\efsm_{1 \ldots n}$ using asynchronous subtyping. In
essence, the hybrid approach uses the same theory as the top-down approach, but
presents a more developer-friendly interface that uses serialisation rather than
code generation.

\section{A Sound Asynchronous Multiparty Session Subtyping Algorithm}
\label{sec:Theory}

This section proposes an algorithm for asynchronous multiparty subtyping
(\cref{sec:SoundAlgorithm}), shows its soundness (\cref{thm:Soundness}),
termination (\cref{thm:Termination}) and complexity (\cref{thm:Complexity}), and
sketches its implementation.

We begin by formally defining global and local session types. In the remainder
of this section, we will omit sorts $S$ from the syntax to simplify the
presentation.

\begin{definition}[Global and local types]
  \label{def:syntax}
  \[
    \begin{array}{rclr}
      S   & \bnfas & \code{i32} \bnfalt \code{u32} \bnfalt \code{i64} \bnfalt \code{u64} \bnfalt \ldots \\
      \gt & \bnfas & \gtend  \bnfalt  \gtmsg{p}{q}{\ell_i(S_i).\gt_i}_{i
        \in I}
      \bnfalt \gtrec{t}\gt
      \bnfalt  \gtvar{t}                                                                                \\
      \lt & \bnfas & \tend
      \bnfalt
      \tsel{i \in I} \tout{p}\ell_i(S_i).\lt_i
      \bnfalt  \tbra{i \in I} \tin{p}\ell_i(S_i).\lt_i
      \bnfalt  \trec{t}{\lt}
      \bnfalt  \tvar{t}
    \end{array}
  \]
  $\tsel{i \in I} \tout{p}\ell_i(S_i).\lt_i$ and $\tbra{i \in I}
    \tin{p}\ell_i(S_i).\lt_i$ represent \emph{internal} and \emph{external}
  choices respectively where $!$ and $?$ denote send and receive respectively,
  and all $\ell_i$ are pairwise distinct.
\end{definition}

Defining sound message reordering is non-trivial since it may introduce
deadlocks, as shown in the example below.

\begin{example}[Correct/incorrect \amr]
  \label{ex:amr}
  Consider the following local types
  \begin{equation*}
    \lt_Q = \tin{p}\ell_1.\tout{p}\ell_2.\tend \quad
    \lt_P = \tout{q}\ell_1.\tin{q}\ell_2.\tend
  \end{equation*}
  which are given by projecting a global type
  \begin{equation*}
    \gtmsg{p}{q}{\ell_1.\gtmsg{q}{p}{\ell_2.\tend}}
  \end{equation*}

  Reordering the actions of \(\ppt{q}\) to become \(\lt_Q^\prime =
  \tout{p}\ell_2.\tin{p}\ell_1.\tend\), first sending before receiving, retains
  deadlock-freedom as messages are stored in queues and their reception can be
  delayed. However, if instead we reorder \(\ppt{p}\)'s interactions to become
  \(\lt_P^\prime = \tin{q}\ell_2.\tout{q}\ell_1.\tend\), we arrive at a deadlock
  since both processes are simultaneously expecting to receive a message that
  has not yet been sent.
\end{example}

\subsection{Precise Asynchronous Multiparty Session Subtyping}
\label{sec:PreciseSubtyping}

\emph{Multiparty session subtyping} formulates \emph{refinement} for
communication protocols with more than two communicating processes. A process
implementing a session type $\lt$ can be safely used whenever a process
implementing one of its supertypes $\lt'$ is expected. This applies in any
context, ensuring that no deadlocks or other communication errors will be
introduced. Replacing the implementation of $\lt'$ with that of the subtype
$\lt$ may allow for more \emph{optimised} communication patterns as we have
seen. \citet{Ghilezan2021} present the \emph{precise} asynchronous subtyping
relation $\leq$ for multiparty session processes. Precision is characterised by
both \emph{soundness} (safe process replacement is guaranteed) and
\emph{completeness} (any extension of the relation is unsound).

Crucially, asynchronous subtyping supports the optimisation of communications.
Under certain conditions, the subtype can anticipate some input/output actions
occurring in the supertype, performing them \emph{earlier} than prescribed to
achieve the most flexible and precise subtyping. Such reorderings can take two
forms:
\begin{enumerate}[label=\textbf{R\arabic*.},ref=\textbf{R\arabic*}]
  \item\label{itm:ReorderA} anticipating an input from participant $\ppt{p}$
  before a finite number of inputs that are not from $\ppt{p}$; or
  \item\label{itm:ReorderB} anticipating an output to participant $\ppt{p}$
  before a finite number of inputs (from any participant), and also before other
  outputs that are not to $\ppt{p}$.
\end{enumerate}
To denote such reorderings, two kinds of finite sequences of inputs/outputs are
defined by \cite{Ghilezan2021}, where $\ppt{p} \neq \ppt{q}$. $\refa{p}$ is a
sequence containing receives from participants apart from $\ppt{p}$; $\refb{p}$
is a sequence containing receives from \emph{any} participant (\(\ppt{r} =
\ppt{p}\) is allowed) and sends to participants apart from $\ppt{p}$.
\begingroup
\small
\begin{gather*}
  \refa{p}\! ::= \!\tin{q}\ell \bnfalt \tin{q}\ell.\refa{p} \quad
  \refb{p} \! ::= \! \enskip \tin{r}\ell \bnfalt \tout{q}\ell \bnfalt \tin{r}\ell.\refb{p} \bnfalt \tout{q}\ell.\refb{p}
\end{gather*}
\endgroup
The \emph{tree refinement relation} $\lesssim$ is defined coinductively on
infinite session type trees that contain only single-inputs (SI) and
single-outputs (SO). We leave the formal definition of $\lesssim$ in
\cite{Ghilezan2021} and shall explain its essence with our sound
algorithm. Based on $\lesssim$, the subtyping relation $\leq$ for all types
(including internal and external choice) is given as
\vspace{-.2cm}
\begin{prooftree}
  \small
  \def\defaultHypSeparation{\hskip 2.5pt}
  \AxiomC{$\forall U \in \so{T}$}
  \AxiomC{$\forall V' \in \si{T'}$}
  \AxiomC{$\exists W \in \si{U}$}
  \AxiomC{$\exists W' \in \so{V'}$}
  \AxiomC{$W \lesssim W'$}
  \QuinaryInfC{$\lt \leq \lt'$}
\end{prooftree}
where $\so{T}$ (resp. $\si{T}$) is the minimal set of trees containing only
single \emph{outputs} (resp. \emph{inputs}) of the session type tree $T$. Using
existential quantifiers for $\si{U}$ and $\so{V'}$ allows external choices to be
added and internal choices to be removed (see \inApp{subsec:subtypeex} for
examples of $\leq$).

\subsection{Our Algorithm}
\label{sec:SoundAlgorithm}

The precise asynchronous subtyping in \cite{Ghilezan2021} is
\emph{undecidable}---even when limited to two participants, as proven
in~\cite{Lange2017}. This subsection introduces our \emph{practical} algorithm
which is \emph{sound} and \emph{terminates}, through the use of a bound on
recursion.

\paragraph{Prefixes} We first define reduction rules for finite single-input and
single-output (SISO) session type \emph{prefixes}.

\begin{definition}[Prefix reduction]
  \label{def:prefix}
  Let us define the syntax of \emph{prefixes} as
  \(
  \begin{array}{lclr}
    \pi, \rho  \bnfas   \epsilon
    \bnfalt  \tout{p}\ell(S)
    \bnfalt  \tin{p}\ell(S)
    \bnfalt  \pi_1.\pi_2
  \end{array}
  \)
  where ``$.$'' denotes the concatenation operator. We define the reduction
  $\pi \reduce \pi'$ as the smallest relation that ensures
  \begingroup
  \small
  \begin{equation*}
    \begin{array}{rrcl}
      \ruleRed{i}
                                                          & \pair{\tin{p}\ell.\pi}{\tin{p}\ell.\pi'}
                                                          & \reduce                                  & \pair{\pi}{\pi'}          \\
      \ruleRed{o}                                         &
      \pair{\tout{p}\ell.\pi}{\tout{p}\ell.\pi'}          & \reduce                                  &
      \pair{\pi}{\pi'}                                                                                                           \\
      \ruleRedA                                           &
      \pair{\tin{p}\ell.\pi}{\refa{p}.\tin{p}\ell.\pi'}   & \reduce                                  & \pair{\pi}{\refa{p}.\pi'} \\
      \ruleRedB                                           &
      \pair{\tout{p}\ell.\pi}{\refb{p}.\tout{p}\ell.\pi'} & \reduce                                  & \pair{\pi}{\refb{p}.\pi'}
    \end{array}
  \end{equation*}
  \endgroup
\end{definition}
\noindent\ruleRed{i} and \ruleRed{o} erase the top input and output prefixes
respectively; \ruleRedA{} formalises \ref{itm:ReorderA} from
\cref{sec:PreciseSubtyping} (permuting the input $\tin{p}\ell$) while
\ruleRedB{} represents \ref{itm:ReorderB} (permuting the output $\tout{p}\ell$).

\begin{example}[Prefix reduction]
  \label{ex:prefix}
  In \cref{ex:amr}, We can consider both \(\lt_Q\) and \(\lt_P\) as prefixes since
  they already contain no choice. The (safe) reordering \(\lt_Q'\) can be achieved
  using \ruleRedB, where $\refb{p} = \tin{p}\ell_1$:
  \begin{equation*}
    \pair{\tout{p}\ell_2.\tin{p}\ell_1.\tend}{\tin{p}\ell_1.\tout{p}\ell_2.\tend} \reduce \pair{\tin{p}\ell_1.\tend}{\tin{p}\ell_1.\tend}
  \end{equation*}
  \noindent On the other hand, the (unsafe)
  reordering \(\lt_P'\) cannot be achieved with \ruleRedA, since $\refa{q} =
    \tout{q}\ell_1$ violates the definition of $\refa{q}$.
\end{example}

\begin{figure}[t]
  {\small
    \begin{prooftree}
      \RightLabel{\ruleAlg{init}}
      \AxiomC{$\epsilon; \nil \vdash_k \triple{\epsilon}{\lt}{n} \leq \triple{\epsilon}{\lt'}{n}$}
      \AxiomC{$k, n \in \mathbb{N}$}
      \BinaryInfC{$\lt \leq \lt'$}
    \end{prooftree}
    \begin{prooftree}
      \AxiomC{}
      \RightLabel{\ruleAlg{end}}
      \UnaryInfC{$\rho; \Sigma \vdash_k \triple{\epsilon}{\tend}{n} \leq \triple{\epsilon}{\tend}{n'}$}
    \end{prooftree}
    \begin{prooftree}
      \AxiomC{$\Sigma\map{\pair{\pi}{\lt} \leq \pair{\pi'}{\lt'}} = \rho\quad \act{\rho'} \supseteq \act{\pi'}$}
      \RightLabel{\ruleAlg{asm}}
      \UnaryInfC{$\rho.\rho'; \Sigma \vdash_k \triple{\pi}{\lt}{n} \leq \triple{\pi'}{\lt'}{n'}$}
    \end{prooftree}
    \begin{prooftree}
      \AxiomC{$\pair{\pi_1}{\pi'_1} \reduce \pair{\pi_2}{\pi'_2}$}
      \AxiomC{$\rho; \Sigma \vdash_k \triple{\pi_2}{\lt}{n} \leq \triple{\pi'_2}{\lt'}{n'}$}
      \RightLabel{\ruleAlg{sub}}
      \BinaryInfC{$\rho; \Sigma \vdash_k \triple{\pi_1}{\lt}{n} \leq \triple{\pi'_1}{\lt'}{n'}$}
    \end{prooftree}
    \begin{prooftree}
      \AxiomC{$\forall i \in I.\
          \forall j \in J.\
          \rho . \tout{p} \ell_i; \Sigma \vdash_k \triple{\pi . \tout{p} \ell_i}{\lt_i}{n} \leq \triple{\pi' . \tin{q} \ell_j}{\lt'_j}{n'}$}
      \RightLabel{\ruleAlg{oi}}
      \UnaryInfC{$\rho; \Sigma \vdash_k \triple{\pi}{\tsel{i \in I} \tout{p} \ell_i . \lt_i}{n} \leq \triple{\pi'}{\tbra{j \in J} \tin{q} \ell_j. \lt'_j}{n'}$}
    \end{prooftree}
    \begin{prooftree}
      \AxiomC{$\forall i \in I.\
          \exists j \in J.\
          \rho . \tout{p} \ell_i; \Sigma \vdash_k \triple{\pi . \tout{p} \ell_i}{\lt_i}{n} \leq \triple{\pi' . \tout{q} \ell_j}{\lt'_j}{n'}$}
      \RightLabel{\ruleAlg{oo}}
      \UnaryInfC{$\rho; \Sigma \vdash_k \triple{\pi}{\tsel{i \in I} \tout{p} \ell_i. \lt_i}{n} \leq \triple{\pi'}{\tsel{j \in J} \tout{q} \ell_j. \lt'_j}{n'}$}
    \end{prooftree}
    \begin{prooftree}
      \AxiomC{$\forall j \in J.\
          \exists i \in I.\
          \rho . \tin{p} \ell_i; \Sigma \vdash_k \triple{\pi . \tin{p} \ell_i}{\lt_i}{n} \leq \triple{\pi' . \tin{q} \ell_j}{\lt'_j}{n'}$}
      \RightLabel{\ruleAlg{ii}}
      \UnaryInfC{$\rho; \Sigma \vdash_k \triple{\pi}{\tbra{i \in I} \tin{p} \ell_i. \lt_i}{n} \leq \triple{\pi'}{\tbra{j \in J} \tin{q} \ell_j. \lt'_j}{n'}$}
    \end{prooftree}
    \begin{prooftree}
      \AxiomC{$\exists i \in I.\
          \exists j \in J.\
          \rho . \tin{p} \ell_i; \Sigma \vdash_k \triple{\pi . \tin{p} \ell_i}{\lt_i}{n} \leq \triple{\pi' . \tout{q} \ell_j}{\lt'_j}{n'}$}
      \RightLabel{\ruleAlg{io}}
      \UnaryInfC{$\rho; \Sigma \vdash_k \triple{\pi}{\tbra{i \in I} \tin{p} \ell_i. \lt_i}{n} \leq \triple{\pi'}{\tsel{j \in J} \tout{q} \ell_j. \lt'_j}{n'}$}
    \end{prooftree}
    \begin{prooftree}
      \AxiomC{$
          \begin{array}{c}
            \Sigma'=\Sigma\map{\pair{\pi}{\trec{t}{\lt}} \leq \pair{\pi'}{\lt'} \mapsto \rho} \\
            \rho; \Sigma' \vdash_k \triple{\pi}{\lt[\trec{t}{\lt} / \tvar{t}]}{n -
              1} \leq \triple{\pi'}{\lt'}{n'}
            \quad\quad n > 0
          \end{array}
        $}
      \RightLabel{\ruleAlg{\(\mu\)l}}
      \UnaryInfC{$\rho; \Sigma \vdash_k \triple{\pi}{\trec{t}{\lt}}{n} \leq \triple{\pi'}{\lt'}{n'}$}
    \end{prooftree}
    \begin{prooftree}
      \AxiomC{$
          \begin{array}{c}
            \Sigma'=\Sigma\map{\pair{\pi}{\lt} \leq \pair{\pi'}{\trec{t}{\lt'}} \mapsto \rho} \\
            \rho; \Sigma' \vdash_k \triple{\pi}{\lt}{n} \leq
            \triple{\pi'}{\lt'[\trec{t}{\lt'} / \tvar{t}]}{n' - 1}\quad\quad n' > 0
          \end{array}$}
      \RightLabel{\ruleAlg{\(\mu\)r}}
      \UnaryInfC{$\rho; \Sigma \vdash_k \triple{\pi}{\lt}{n} \leq \triple{\pi'}{\trec{t}{\lt'}}{n'}$}
    \end{prooftree}
    \begin{prooftree}
      \AxiomC{$
          \begin{array}{c}
            \rho; \Sigma \vdash_{k - 1} \triple{\pi}{\lt}{n} \leq \triple{\pi'}{\lt'}{n'} \\
            \rho; \Sigma \vdash_{k - 1} \triple{\pi'}{\lt'}{n'} \leq
            \triple{\pi''}{\lt''}{n''}
            \quad k > 0
          \end{array}
        $}
      \RightLabel{\ruleAlg{tra}}
      \UnaryInfC{$\rho; \Sigma \vdash_k \triple{\pi}{\lt}{n} \leq \triple{\pi''}{\lt''}{n''}$}
    \end{prooftree}
  }
  \caption{Asynchronous subtyping algorithm rules.}
  \label{fig:SubtypingAlgorithm}
\end{figure}

\paragraph{Algorithm} We define the rules for our asynchronous subtyping
algorithm in \cref{fig:SubtypingAlgorithm}, following the style of
\cite{Gay2005}. Our rules use the function $\act{W}$, the set of input and
output actions of $\pi$ such that $\act{\epsilon} = \nil$,
$\act{\tin{p}\ell.\pi} = \{\tin{p}\} \cup \act{\pi}$ and $\act{\tout{p}\ell.\pi}
  = \{\tout{p}\} \cup \act{\pi}$.

Our algorithm operates on triples of $\triple{\pi}{\lt}{n}$, where $\pi$ is a
session prefix and $n$ is a bound on the number of recursions to unroll. We keep
track of $\rho$, a prefix containing all actions in the subtype seen so far, and
$\Sigma$, a set of subtyping assumptions. Each assumption in $\Sigma$ is
associated with the value of $\rho$ as it was at the time of the assumption. An
additional bound $k$ is included to limit applications of the \ruleAlg{tra}
rule. We use our algorithm to check whether $\lt$ is a subtype of $\lt'$ by
beginning with the \ruleAlg{init} rule. If a proof derivation can be found then
we conclude that $\lt \leq \lt'$. If not, then either $\lt \not\leq \lt'$ or
$\lt \leq \lt'$ but this cannot be shown by our algorithm since $\leq$ is
undecidable (hence our algorithm cannot be complete).

Our algorithm works as follows:
\begin{enumerate*}[label=\textbf{(\arabic*)}]
  \item\label{itm:AlgorithmStart} If both prefixes $\pi$ and $\pi'$ are empty
  and $\lt = \lt' = \tend$, then we have nothing left to check and we terminate
  with success (\ruleAlg{end}).
  \item If $\pair{\pi}{\lt} \leq \pair{\pi'}{\lt'}$ is already in our set of
  assumptions, then we perform a check on $\pi'$ to ensure that no actions have
  been forgotten by the subtype (see \inApp{sec:Proofs} for more detail). If this
  check passes, then we terminate with success (\ruleAlg{asm}).
  \item We attempt to reduce the pair of prefixes $\pair{\pi}{\pi'}$. If we can,
  then the algorithm repeats from \ref{itm:AlgorithmStart} (\ruleAlg{sub}).
  \item If not, we try to pop one action from the start of both $\lt$ and $\lt'$
  and push them to the end of $\pi$ and $\pi'$ respectively. If this is
  possible, we repeat from \ref{itm:AlgorithmStart} (\ruleAlg{oi,oo,ii,io}).
  Note that the quantifiers permit subtyping for internal and external choices.
  For example, \ruleAlg{oo} says $\lt$ is a subtype of $\lt'$ if it has a subset
  of $\lt'$'s internal choices (defined with $\forall i\in I . \exists j\in J$).
  \item Otherwise, we attempt to unroll recursion in $\lt$ (resp.~$\lt'$),
  decrement the bound $n$ (resp.~$n'$) by one and repeat from
  \ref{itm:AlgorithmStart}. If the bound is already zero, we instead terminate
  (\ruleAlg{\(\mu\)l,\(\mu\)r}).
\end{enumerate*}

\paragraph{Double buffering example} We show the execution of our algorithm to
check the optimised type from \cref{sec:Overview} for the kernel in the double
buffering protocol (i.e.\ $\lt \leq \lt'$).
\begin{center}
  \small
  \begin{equation*}
    \lt = \tout{s}\mathit{ready}.\lt' \quad
    \lt' = \trec{x}{\tout{s}\mathit{ready}.\tin{s}\mathit{copy}.\tin{t}\mathit{ready}.\tout{t}\mathit{copy}.\tvar{x}}
  \end{equation*}
  \begin{prooftree}
    \AxiomC{\((\ast)\)}
    \AxiomC{$(\dagger)$}
    \AxiomC{$(\ddagger)$}
    \AxiomC{(\(\star\))}
    \AxiomC{(\(\div\))}
    \AxiomC{$\act{\pi_1.\pi_2.\pi_3.\pi_4} \supseteq \act{\epsilon}$}
    \RightLabel{\ruleAlg{asm}}
    \UnaryInfC{$\rho_5; \Sigma_3 \vdash \triple{\epsilon}{\lt^\prime}{1} \leq \triple{\epsilon}{\lt^{\prime\prime}}{0} $}
    \RightLabel{\ruleAlg{sub}}
    \BinaryInfC{$\rho_5; \Sigma_3 \vdash \triple{\pi_4}{\lt^\prime}{1} \leq \triple{\pi_4}{\lt^{\prime\prime}}{0} $}
    \RightLabel{\ruleAlg{sub}}
    \BinaryInfC{$\rho_5; \Sigma_3 \vdash \triple{\pi_3.\pi_4}{\lt^\prime}{1} \leq \triple{\pi_3.\pi_4}{\lt^{\prime\prime}}{0} $}
    \RightLabel{\ruleAlg{sub}}
    \BinaryInfC{$\rho_5; \Sigma_3 \vdash \triple{\pi_2.\pi_3.\pi_4}{\lt^\prime}{1} \leq \triple{\pi_2.\pi_3.\pi_4}{\lt^{\prime\prime}}{0} $}
    \RightLabel{\ruleAlg{sub}}
    \BinaryInfC{$\rho_5; \Sigma_3 \vdash \triple{\pi_1.\pi_2.\pi_3.\pi_4}{\lt^\prime}{1} \leq \triple{\pi_2.\pi_3.\pi_4.\pi_1}{\lt^{\prime\prime}}{0} $}
    \RightLabel{\ruleAlg{oo}}
    \UnaryInfC{$\rho_4; \Sigma_3 \vdash \triple{\pi_1.\pi_2.\pi_3}{\tout{t}\mathit{copy}.\lt^\prime}{1} \leq \triple{\pi_2.\pi_3.\pi_4}{\tout{s}\mathit{ready}.\lt^{\prime\prime}}{0} $}
    \RightLabel{\ruleAlg{\(\mu\)r}}
    \UnaryInfC{$\rho_4; \Sigma_2 \vdash \triple{\pi_1.\pi_2.\pi_3}{\tout{t}\mathit{copy}.\lt^\prime}{1} \leq \triple{\pi_2.\pi_3.\pi_4}{\lt^\prime}{1} $}
    \RightLabel{\ruleAlg{io}}
    \UnaryInfC{$\rho_3; \Sigma_2 \vdash \triple{\pi_1.\pi_2}{\tin{t}\mathit{ready}.\tout{t}\mathit{copy}.\lt^\prime}{1} \leq \triple{\pi_2.\pi_3}{\tout{t}\mathit{copy}.\lt^\prime}{1} $}
    \RightLabel{\ruleAlg{ii}}
    \UnaryInfC{$\rho_2; \Sigma_2 \vdash \triple{\pi_1}{\lt^{\prime\prime}}{1} \leq \triple{\pi_2}{\tin{t}\mathit{ready}.\tout{t}\mathit{copy}.\lt^\prime}{1} $}
    \RightLabel{\ruleAlg{oi}}
    \UnaryInfC{$\rho_1; \Sigma_2 \vdash \triple{\epsilon}{\tout{s}\mathit{ready}.\lt^{\prime\prime}}{1} \leq \triple{\epsilon}{\lt^{\prime\prime}}{1} $}
    \RightLabel{\ruleAlg{\(\mu\)l}}
    \UnaryInfC{$\rho_1; \Sigma_1 \vdash \triple{\epsilon}{\lt^\prime}{2} \leq \triple{\epsilon}{\lt^{\prime\prime}}{1} $}
    \RightLabel{\ruleAlg{sub}}
    \def\defaultHypSeparation{\hskip -35pt}
    \BinaryInfC{$\rho_1; \Sigma_1 \vdash \triple{\pi_1}{\lt^\prime}{2} \leq \triple{\pi_1}{\lt^{\prime\prime}}{1} $}
    \RightLabel{\ruleAlg{oo}}
    \UnaryInfC{$\epsilon; \Sigma_1 \vdash \triple{\epsilon}{\lt}{2} \leq \triple{\epsilon}{\tout{s}\mathit{ready}.\lt^{\prime\prime}}{1} $}
    \RightLabel{\ruleAlg{\(\mu\)r}}
    \UnaryInfC{$\epsilon; \nil \vdash \triple{\epsilon}{\lt}{2} \leq \triple{\epsilon}{\lt^\prime}{2}$}
    \RightLabel{\ruleAlg{init}}
    \UnaryInfC{$\lt \leq \lt'$}
  \end{prooftree}
  \begin{gather*}
    \lt^{\prime\prime} = \tin{s}\mathit{copy}.\tin{t}\mathit{ready}.\tout{t}\mathit{copy}.\lt^\prime \\
    \pi_1 = \tout{s}\mathit{ready} \quad
    \pi_2 = \tin{s}\mathit{copy} \quad
    \pi_3 = \tin{t}\mathit{ready} \quad
    \pi_4 = \tout{t}\mathit{copy} \\
    \rho_1 = \pi_1 \quad
    \rho_2 = \rho_1.\pi_1 \quad
    \rho_3 = \rho_2.\pi_2 \quad
    \rho_4 = \rho_3.\pi_3 \quad
    \rho_5 = \rho_4.\pi_4
  \end{gather*}
  \begin{gather*}
    \Sigma_1 = \map{\pair{\epsilon}{\lt} \leq \pair{\epsilon}{\lt^\prime}\mapsto \epsilon} \\
    \Sigma_2 = \Sigma_1\map{\pair{\epsilon}{\lt^\prime} \leq \pair{\epsilon}{\lt^{\prime\prime}}\mapsto \tout{s}\mathit{ready}} \\
    \Sigma_3 = \Sigma_2\map{\pair{\pi_1.\pi_2.\pi_3}{\tout{t}\mathit{copy}.\lt^\prime}\leq \pair{\pi_2.\pi_3.\pi_4}{\lt^\prime}\mapsto\rho_4}
  \end{gather*}
  \begin{gather*}
    (\dagger) = \pair{\pi_1.\pi_2.\pi_3.\pi_4}{\pi_2.\pi_3.\pi_4.\pi_1} \reduce \pair{\pi_2.\pi_3.\pi_4}{\pi_2.\pi_3.\pi_4} \; \ruleRedA \\
    (\ddagger) = \pair{\pi_2.\pi_3.\pi_4}{\pi_2.\pi_3.\pi_4} \reduce \pair{\pi_3.\pi_4}{\pi_3.\pi_4} \; \ruleRed{i} \\
    (\star) = \pair{\pi_3.\pi_4}{\pi_3.\pi_4} \reduce \pair{\pi_4}{\pi_4} \; \ruleRed{i} \\
    (\ast) = \pair{\pi_1}{\pi_1} \reduce \pair{\epsilon}{\epsilon} \; \ruleRed{o} \quad
    (\div) = \pair{\pi_4}{\pi_4} \reduce \pair{\epsilon}{\epsilon} \; \ruleRed{o}
  \end{gather*}
\end{center}
See also \inApp{subsec:algorthmex} for the ring and alternating bit protocols,
which include internal and external choices.

\paragraph{Properties} We prove the correctness and complexity of our algorithm.
See \inApp{sec:AppendixSubtyping} for the proofs. We define the size of prefixes
as $\tsize{\epsilon} = 0$ and $\tsize{\tin{p}\ell.\pi} =
  \tsize{\tout{p}\ell.\pi} = 1 + \tsize{\pi}$.

\begin{lemma}
  \label{thm:PrefixesTermination}
  Given finite prefixes $\pi$ and $\pi'$, $\pair{\pi}{\pi'}$ can be reduced only
  a finite number of times.
\end{lemma}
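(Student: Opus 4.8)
The plan is to find a measure on pairs of prefixes that strictly decreases with every application of a reduction rule and takes values in a well-founded order; termination then follows immediately. The natural choice is the size $\tsize{\pi}$ of the \emph{left} component of the pair, which takes values in $\mathbb{N}$.

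First I would inspect each of the four rules in \cref{def:prefix} and confirm that in every case the left prefix loses precisely its leading action. In \ruleRed{i} and \ruleRedA{} the left component goes from $\tin{p}\ell.\pi$ to $\pi$, and in \ruleRed{o} and \ruleRedB{} it goes from $\tout{p}\ell.\pi$ to $\pi$; by the definition of $\tsize{\cdot}$ this means $\tsize{\cdot}$ drops by exactly one at each step, regardless of which rule fires. In fact the right component also shrinks by one in each rule---the matched action $\tin{p}\ell$ or $\tout{p}\ell$ is erased even when it sits behind an $\refa{p}$ or $\refb{p}$ block that is carried along---so the total size $\tsize{\pi} + \tsize{\pi'}$ would serve equally well, but the left size alone suffices.

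Since $\tsize{\pi} \in \mathbb{N}$ strictly decreases along any reduction sequence, and $\mathbb{N}$ is well-founded, no infinite sequence of reductions can start from a finite pair; indeed the number of reduction steps is bounded above by the initial value $\tsize{\pi}$. This establishes the claim.

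I do not anticipate a serious obstacle here: the argument is essentially a one-line well-founded-measure termination proof once the rules are written out. The only point requiring a moment's care is the reordering rules \ruleRedA{} and \ruleRedB{}, where the $\refa{p}$ (resp.\ $\refb{p}$) block is re-attached to the right component and might superficially suggest that the pair is not getting smaller. I would make explicit that this re-attached block is not newly created but was already present on the right before the step, so in fact no rule ever increases \emph{either} component; the strict decrease of the left component is therefore unconditional.
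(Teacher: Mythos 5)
Your proposal is correct and takes essentially the same route as the paper's proof: a case analysis over the four reduction rules of \cref{def:prefix} showing that a well-founded measure strictly decreases at every step, exactly in the style of the paper's argument (which follows \cite{Gay2005}). The only difference is cosmetic---the paper tracks the pair $\pair{\fterms{\pi}}{\fterms{\pi'}}$ of term counts under a lexicographic order, whereas you observe (correctly) that the size of the left component alone already drops by one in every rule, including \ruleRedA{} and \ruleRedB{}, which slightly streamlines the same argument.
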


\begin{theorem}[Termination]
  \label{thm:Termination}
  Our subtyping algorithm always eventually terminates.
\end{theorem}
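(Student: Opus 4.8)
The plan is to prove that every derivation the algorithm attempts is a finite tree. Since each inference rule has only finitely many premises---\ruleAlg{tra} has two, each choice rule \ruleAlg{oi,oo,ii,io} has finitely many (one for each relevant index in the finite sets $I$ and $J$), and every other rule has at most one---the space of partial derivations is finitely branching. By K\"onig's lemma it therefore suffices to rule out an infinite root-to-leaf path. I would do this with a nested induction, taking the transitivity bound $k$ as the outermost measure and the unrolling budget $n + n'$ as the middle measure, so that the innermost layer reduces to a purely prefix-and-type argument.

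First I would bound applications of \ruleAlg{tra}. It is the only rule that changes $k$; it strictly decrements $k$ and is guarded by $k > 0$, whereas every other rule leaves $k$ untouched. Hence along any path \ruleAlg{tra} fires only finitely often, and by induction on $k$ it suffices to show that each maximal \ruleAlg{tra}-free segment of a path is finite. The freshly chosen intermediate type and bound $n'$ introduced by a \ruleAlg{tra} step are harmless, since both resulting subderivations run at the strictly smaller value $k-1$ and are covered by the induction hypothesis.

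Within a \ruleAlg{tra}-free segment I would next bound recursion unrolling. The only rules that alter $n$ or $n'$ are \ruleAlg{\(\mu\)l} and \ruleAlg{\(\mu\)r}, which decrement $n$ and $n'$ respectively under the guards $n > 0$ and $n' > 0$; with \ruleAlg{tra} excluded, no rule increases either bound. Thus the $\mu$-rules fire at most $n + n'$ times along the segment, and it remains to show that each stretch between consecutive unrollings---using only \ruleAlg{sub} and the choice rules \ruleAlg{oi,oo,ii,io}, and ending at \ruleAlg{end} or \ruleAlg{asm}---is finite.

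For this innermost part, note that without unrolling the types $\lt$ and $\lt'$ are never re-expanded, so each \ruleAlg{oi,oo,ii,io} step merely strips one leading message from a fixed finite type and appends it to the prefix. A finite type exposes only finitely many leading messages before a recursion binder or $\tend$ is reached---at which point only a $\mu$-rule or \ruleAlg{end}/\ruleAlg{asm} applies, ending the stretch---so the choice rules fire at most as many times as the syntactic size of the types, and the prefixes $\pi, \pi'$ grow by a bounded amount. Between these growth steps the prefixes are reduced by \ruleAlg{sub}, and \cref{thm:PrefixesTermination} guarantees that any fixed finite prefix pair reduces only finitely many times. The main obstacle, and the step I would handle most carefully, is precisely this interleaving: because each \ruleAlg{oi,oo,ii,io} step enlarges the prefix and can unlock further reductions, I would observe that the prefix pair changes only at the finitely many choice-rule steps, apply \cref{thm:PrefixesTermination} to each of the finitely many intervening stable prefix pairs, and conclude the stretch is finite. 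Reassembling the three layers then yields termination of the whole derivation.
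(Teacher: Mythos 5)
Your proposal is correct and takes essentially the same route as the paper's proof: both arguments bound applications of \ruleAlg{tra} by $k$, of \ruleAlg{\(\mu\)l}/\ruleAlg{\(\mu\)r} by the decrementing unrolling bounds, of the choice rules \ruleAlg{oi,oo,ii,io} by the finite syntactic size of the types between unrollings, and of \ruleAlg{sub} via \cref{thm:PrefixesTermination}. Your version is simply a more structured rendering of the paper's informal rule-counting argument---the K\"onig's lemma step, the explicit nested induction, the treatment of the fresh intermediate bound $n'$ introduced by \ruleAlg{tra}, and the careful handling of the interleaving between prefix growth and prefix reduction make precise several points the paper glosses over.
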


\begin{theorem}[Soundness]
  \label{thm:Soundness}
  Our subtyping algorithm is sound.
\end{theorem}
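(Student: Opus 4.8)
The plan is to prove soundness by coinduction, showing that any successful finite derivation of $\lt \leq \lt'$ built from the rules of \cref{fig:SubtypingAlgorithm} can be turned into a witness for the semantic relation $\leq$ of \citet{Ghilezan2021}. Recall that $\lt \leq \lt'$ unfolds, via the $\so{\cdot}/\si{\cdot}$ projections, to the statement: for every single-output resolution $U \in \so{\cdot}$ of $\lt$'s tree and every single-input resolution $V'$ of $\lt'$'s tree, there exist $W \in \si{U}$ and $W' \in \so{V'}$ with $W \lesssim W'$, where $\lesssim$ is the coinductive tree-refinement relation driven by the prefix reductions of \cref{def:prefix}. The first step is to observe that the four head-matching rules implement exactly this quantifier alternation: \ruleAlg{oo} ($\forall i. \exists j$) and \ruleAlg{ii} ($\forall j. \exists i$) supply the $\forall U$/$\exists W'$ and $\forall V'$/$\exists W$ resolutions respectively, \ruleAlg{oi} ($\forall i. \forall j$) supplies the two universals, and \ruleAlg{io} ($\exists i. \exists j$) the two existentials. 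Thus, after fixing arbitrary universal resolutions $U$ and $V'$, the derivation deterministically selects witnessing $W$ and $W'$ and records, through its \ruleAlg{sub} steps, a sequence of prefix reductions $\pair{\pi}{\pi'} \reduce \cdots$ over $W$ and $W'$.

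The second step builds the coinductive witness. Because a successful derivation is finite but closes its recursive branches with \ruleAlg{asm} (back-edges to assumptions introduced by \ruleAlg{\(\mu\)l}/\ruleAlg{\(\mu\)r}), I would unfold these cycles into an infinite reduction sequence on the infinite traces $W$ and $W'$, and define $\mathcal{R}$ to be the set of all trace-pairs occurring along this unfolding (which in particular contains the initial pair $\pair{W}{W'}$). I would then verify that $\mathcal{R}$ is backward-closed under the reductions \ruleRed{i}, \ruleRed{o}, \ruleRedA{} and \ruleRedB: every pair in $\mathcal{R}$ is either a matched $\tend/\tend$ (from \ruleAlg{end}) or reduces in one step to another pair in $\mathcal{R}$, so by coinduction $\mathcal{R} \subseteq \lesssim$ and hence $W \lesssim W'$. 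The rule-by-rule soundness of the non-cyclic rules feeds directly into this: \ruleAlg{sub} is a single sound reduction by \cref{def:prefix}; \ruleAlg{\(\mu\)l}/\ruleAlg{\(\mu\)r} merely unfold recursion, which preserves the underlying infinite tree, the bounds $n, n'$ serving only to cut off search and so never producing an unsound success; and \ruleAlg{tra} is discharged by transitivity of $\leq$ (equivalently, composability of reduction sequences), with $k$ again only bounding depth.

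The crux, and the step I expect to be the main obstacle, is justifying \ruleAlg{asm}: that closing a cycle yields a genuine, productive infinite $\lesssim$-derivation rather than a spurious one. The danger is a cycle that indefinitely postpones a pending output (or input) sitting in the supertype prefix $\pi'$ so that it is never actually performed; such a loop would be semantically unsound even though it type-checks finitely. The side condition $\act{\rho'} \supseteq \act{\pi'}$ of \ruleAlg{asm} is precisely what excludes this: it certifies that every action still pending in $\pi'$ at the point of cycle closure has actually been produced within the progress prefix $\rho'$ accumulated since the matching assumption was recorded in $\Sigma$. I would formalise a notion of progress along the unfolded reduction sequence and prove that this side condition guarantees every pending supertype action is consumed infinitely often, so that the sequence built in the second step is well-defined and productive, and therefore an admissible coinductive witness. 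Assembling the three steps, for arbitrary $U$ and $V'$ the algorithm's existential choices deliver $W, W'$ with $W \lesssim W'$, which is exactly the defining condition of $\lt \leq \lt'$; hence the algorithm is sound.
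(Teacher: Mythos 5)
Your proposal is correct and follows essentially the same route as the paper's own proof: a rule-by-rule correspondence between the algorithm's rules and the semantic rules of \citet{Ghilezan2021} (the quantifier alternation handled by \ruleAlg{oi}/\ruleAlg{oo}/\ruleAlg{ii}/\ruleAlg{io}, prefix reductions via \ruleAlg{sub} matching \ruleRef{in}/\ruleRef{out}/\ruleRefA/\ruleRefB, and cycle closure via \ruleAlg{asm}), with the same identification of the crux --- that the side condition $\act{\rho'} \supseteq \act{\pi'}$ is what prevents unsound cycles that ``forget'' pending supertype actions. Your version merely makes the coinduction explicit (unfolding \ruleAlg{asm} back-edges into an infinite reduction sequence and checking backward closure) where the paper leaves it informal, and additionally discharges \ruleAlg{tra} by transitivity, which the paper's proof does not mention.
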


\begin{lemma}
  \label{thm:PrefixesComplexity}
  Given finite prefixes $\pi$ and $\pi'$, the time complexity of reducing
  $\pair{\pi}{\pi'}$ is $\oh(\min(\tsize{\pi}, \tsize{\pi'}))$.
\end{lemma}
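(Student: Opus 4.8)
The plan is to bound the \emph{number} of reduction steps by tracking the two prefix sizes and then charge a constant cost to each step. The crucial invariant I would establish is that every reduction $\pair{\pi_1}{\pi'_1} \reduce \pair{\pi_2}{\pi'_2}$ permitted by \cref{def:prefix} decreases \emph{both} prefix sizes by exactly one, i.e. $\tsize{\pi_2} = \tsize{\pi_1} - 1$ and $\tsize{\pi'_2} = \tsize{\pi'_1} - 1$, independently of which of the four rules fired.

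First I would verify this invariant by a case analysis on the rule applied. For $\ruleRed{i}$ and $\ruleRed{o}$ it is immediate, as each strips exactly one leading action from each side. The cases $\ruleRedA$ and $\ruleRedB$ need slightly more care: the left prefix again loses only its leading action, but on the right the matched action ($\tin{p}\ell$ for $\ruleRedA$, $\tout{p}\ell$ for $\ruleRedB$) is deleted from \emph{within} the prefix while the block $\refa{p}$ (resp.\ $\refb{p}$) is retained. Using the definition of size together with $\tsize{\refa{p}.\tin{p}\ell.\pi'} = \tsize{\refa{p}} + 1 + \tsize{\pi'}$, the right prefix still shrinks by exactly one.

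Given the invariant, the counting step is short: after $t$ reductions the prefixes have sizes $\tsize{\pi} - t$ and $\tsize{\pi'} - t$, both necessarily non-negative, so $t \le \min(\tsize{\pi}, \tsize{\pi'})$. Because the invariant holds for every step regardless of the rule chosen and of where it is applied, this bounds the length of \emph{any} reduction sequence to a normal form, which incidentally gives a quantitative strengthening of \cref{thm:PrefixesTermination}. Charging constant time per rule application then yields the claimed $\oh(\min(\tsize{\pi}, \tsize{\pi'}))$.

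The hard part, and the point I would be most careful about, is justifying the constant per-step cost for $\ruleRedA$ and $\ruleRedB$. Unlike $\ruleRed{i}$/$\ruleRed{o}$, firing these rules requires locating the matching action beyond the $\refa{p}$/$\refb{p}$ block on the right, so a naive scan is not obviously constant-time (and repeated scans could in principle accumulate). My plan is to treat one rule application as the unit of cost---consistent with how reduction is invoked as a single atomic step by rule $\ruleAlg{sub}$ of the algorithm in \cref{fig:SubtypingAlgorithm}---so that the step-count bound above directly delivers the stated complexity; if a finer cost model were demanded, this matching cost is precisely the quantity that would require separate accounting.
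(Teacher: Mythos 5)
Your proposal is correct and takes essentially the same route as the paper's proof: both arguments boil down to the observation that every reduction rule strips exactly one action from each side of the pair (the paper encodes this inside a structural induction via the identity $\min(a,b)+1=\min(a+1,b+1)$, whereas you state it as an explicit invariant and count steps globally), and both then charge constant cost per rule application. Your closing caveat about the cost of locating the matched action beyond the $\refa{p}$/$\refb{p}$ block in \ruleRedA{} and \ruleRedB{} is a genuine subtlety that the paper's proof silently assumes away under the same unit-cost convention, so flagging it is a point in your favour rather than a deviation.
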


\begin{theorem}[Complexity]
  \label{thm:Complexity}
  Consider $\lt$ and $\lt'$ as (possibly infinite) trees $\tr{\lt}$ and
  $\tr{\lt'}$ with \emph{asymptotic branching factors} $b$ and $b'$
  respectively~\cite{Edelkamp1998,Korf1985}. Our algorithm has time complexity
  $\oh(n\min(b, b')^n)$ and space complexity $\oh(n\min(b, b'))$ in the worst
  case to determine if $\lt \leq \lt'$ with bound $n$.
\end{theorem}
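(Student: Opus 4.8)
The plan is to recast a run of the algorithm as a depth-first proof search and bound the size of the resulting derivation tree, then charge a per-node cost. First I would observe that every rule except \ruleAlg{\(\mu\)l} and \ruleAlg{\(\mu\)r} either terminates a branch (\ruleAlg{end}, \ruleAlg{asm}), rewrites the prefixes without touching the type bodies (\ruleAlg{sub}), or consumes one top-level action from each type (\ruleAlg{oi}, \ruleAlg{oo}, \ruleAlg{ii}, \ruleAlg{io}); only \ruleAlg{\(\mu\)l}/\ruleAlg{\(\mu\)r} change the recursion counters, and each such application strictly decreases $n$ or $n'$. Since a branch is forced to stop once a counter reaches $0$, the number of unrollings along any root-to-leaf path is $\oh(n)$, and between two consecutive unrollings the algorithm traverses only a bounded, recursion-free portion of the two type bodies. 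Hence the \emph{depth} of the search, measured in unrolling levels, is $\oh(n)$, and the search embeds into a synchronisation of the trees $\tr{\lt}$ and $\tr{\lt'}$ truncated at that depth.

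Next I would bound the \emph{branching factor} of this search by $\oh(\min(b,b'))$. The only rules that create more than one surviving premise are the choice rules, and their fan-out is governed by the quantifier pattern: \ruleAlg{oo} branches universally over the subtype's outputs $I$, \ruleAlg{ii} universally over the supertype's inputs $J$, while the existential side of each rule (and both sides of \ruleAlg{io}) contributes only a search for a single witness rather than additional surviving branches. Using the definition of the asymptotic branching factors $b$ and $b'$ of $\tr{\lt}$ and $\tr{\lt'}$~\cite{Edelkamp1998,Korf1985}, the lock-step nature of the rules forces each explored continuation in one tree to be matched against a continuation in the other, so the surviving fan-out per unrolling level is limited by whichever tree branches less, i.e.\ $\oh(\min(b,b'))$. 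Combining this with the depth bound, the number of nodes up to depth $n$ is $\oh(\min(b,b')^{\,n})$ by the usual estimate for a tree of asymptotic branching factor $\min(b,b')$.

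For the time bound I would then charge each node its local work. By \cref{thm:PrefixesComplexity} a single prefix reduction costs $\oh(\min(\tsize{\pi},\tsize{\pi'}))$, and I would show that the prefixes $\pi,\pi'$ together with the bookkeeping objects $\rho$ and $\Sigma$ grow by only a bounded amount per level, so their size—and hence the cost of \ruleAlg{sub} and of the assumption lookup in \ruleAlg{asm}—is $\oh(n)$ at every node. Multiplying the $\oh(n)$ per-node cost by the $\oh(\min(b,b')^{\,n})$ node count yields the claimed time complexity $\oh(n\min(b,b')^{\,n})$. For space I would exploit that the search is depth-first: at any instant we retain a single root-to-leaf path of length $\oh(n)$, together with the pending sibling judgments at each level (at most $\oh(\min(b,b'))$ of them) and the current $\rho,\Sigma$ of size $\oh(n)$, giving the stated $\oh(n\min(b,b'))$.

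The hard part will be the branching argument of the second paragraph—pinning down why the effective fan-out is $\min(b,b')$ rather than the product $b\,b'$ naively suggested by the doubly-universal rule \ruleAlg{oi}. I expect to argue that the actions \ruleAlg{oi} pushes onto $\pi$ and $\pi'$ are immediately consumed by subsequent \ruleAlg{sub} reductions, so that such steps reorder prefixes rather than multiply surviving branches, leaving \ruleAlg{oo}/\ruleAlg{ii} as the only genuine source of exponential growth. A secondary obstacle is controlling the growth of the prefixes and of $\Sigma$ so that the per-node cost is genuinely $\oh(n)$; this relies on \cref{thm:PrefixesTermination} to guarantee that reductions prevent the prefixes from accumulating without bound. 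I would also note that \ruleAlg{tra} is taken with a fixed bound $k$, so its contribution folds into the constants and does not affect the stated asymptotics in $n$.
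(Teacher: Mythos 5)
Your skeleton resembles the paper's in several respects (exploration depth $\oh(n)$, prefix-reduction cost $\oh(n)$ charged via \cref{thm:PrefixesComplexity}, a depth-first-search argument for the space bound), but the step you yourself flag as ``the hard part'' is a genuine gap, and your proposed way around it does not work. Rule \ruleAlg{oi} has premises quantified $\forall i \in I.\ \forall j \in J$: every pair $(i,j)$ yields a separate judgment with \emph{distinct} continuations $(\lt_i, \lt'_j)$, and each of these must be derived. Pushing $\tout{p}\ell_i$ and $\tin{q}\ell_j$ onto the prefixes does not merge these judgments --- later \ruleAlg{sub} reductions act within each branch, not across branches --- so \ruleAlg{oi} genuinely multiplies surviving branches by up to $b \cdot b'$ rather than merely ``reordering prefixes''. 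Moreover, even setting \ruleAlg{oi} aside, your per-level bound of $\min(b,b')$ still does not follow from the quantifier pattern you describe: \ruleAlg{oo} branches universally over the \emph{subtype's} choices (fan-out up to $b$), while \ruleAlg{ii} branches universally over the \emph{supertype's} choices (fan-out up to $b'$), so along a path that alternates the two rules the surviving fan-out is dictated by whichever side happens to be universally quantified at that step --- which can be the larger of the two trees. Lock-step traversal constrains the \emph{depth} of the search, not its width, so your node-count estimate $\oh(\min(b,b')^{\,n})$ is unsupported.

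The paper never attempts a per-level fan-out bound; it obtains the $\min$ from a different observation. It bounds the exploration of each tree \emph{separately} --- $\oh(b^n)$ nodes for $\tr{\lt}$ and $\oh(b'^n)$ for $\tr{\lt'}$, with all prefix reductions deferred (in the worst case) to the $b^{n-1}$ leaves at cost $\oh(n)$ each, giving $\oh(nb^n)$ and $\oh(nb'^n)$ --- and then takes the minimum of the two whole-tree costs because the algorithm halts as soon as the recursion bound is exhausted in \emph{either} tree (\ruleAlg{\(\mu\)l} and \ruleAlg{\(\mu\)r} require $n > 0$, resp.\ $n' > 0$). The same per-tree-then-minimum structure yields the space bound: $\oh(n + b(n-1)) = \oh(nb)$ for one tree, then $\oh(n\min(b,b'))$ overall. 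To repair your argument you would either have to adopt this per-tree accounting or prove a genuine collapsing property for the \ruleAlg{oi} branching; as written, the second paragraph of your proposal is where the proof fails.
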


\paragraph{Algorithm implementation}
\label{sec:AlgorithmImplementation}

In practice, we make some minor alterations to the algorithm when implementing
it in \rumpsteak. As outlined in \cref{sec:Overview}, \rumpsteak acts on FSMs
rather than local types so we modify our bounds-checking accordingly. We also
represent prefixes as lazily-removable lists for greater memory
efficiency---this additionally allows a slightly simplified termination
condition in the case of \ruleAlg{asm}. Finally, we provide some opportunities
for the algorithm to ``short circuit'' in cases where we can tell early that a
subtype is not valid. See \inApp{subsec:impalg} for more details.

\begin{figure*}
  \centering
  \sffamily\footnotesize
  \pgfplotslegendfromname{pgfplots:Runtime}
  \begin{tikzpicture}[every axis plot/.append style={thick}]
    \tikzset{mark options={mark size=1pt}}
    \begin{groupplot}[
        group style={group size=3 by 1},
        height=4cm,
        width=6.5cm
      ]
      \nextgroupplot[
        title=\textit{Streaming},
        xlabel=Number of values transferred (\(n\)),
        ylabel=Throughput (\(n\)/\(\mu\)s),
        xtick distance=10,
      ]
      \addplot[solid,mark=square*,color=primary,every mark/.append style={solid, fill=primary!30}] coordinates {
          (10,0.019389)
          (20,0.028142)
          (30,0.034193)
          (40,0.036566)
          (50,0.040315)
        };
      \addplot[dashed,mark=square*,color=secondary,every mark/.append style={solid, fill=secondary!30}] coordinates {
          (10,0.011678)
          (20,0.014325)
          (30,0.015160)
          (40,0.016072)
          (50,0.016577)
        };
      \addplot[densely dotted,mark=square*,color=tertiary,every mark/.append style={solid, fill=tertiary!30}] coordinates {
          (10,0.011386)
          (20,0.012994)
          (30,0.013463)
          (40,0.013671)
          (50,0.014126)
        };
      \addplot[dashed,mark=*,color=quinary,every mark/.append style={solid, fill=quinary!30}] coordinates {
          (10,0.202587)
          (20,0.336988)
          (30,0.427489)
          (40,0.488886)
          (50,0.545378)
        };
      \addplot[densely dotted,mark=*,color=senary,every mark/.append style={solid, fill=senary!30}] coordinates {
          (10,0.215583)
          (20,0.356978)
          (30,0.437795)
          (40,0.517468)
          (50,0.583366)
        };
      \nextgroupplot[
        title={\textit{Double Buffering}~\cite{CastroPerez2020,Huang2002}},
        xlabel=Number of values in each buffer (\(n\)),
        xtick distance=2500,
        ytick distance=20,
        scaled x ticks=false
      ]
      \addplot[solid,mark=square*,color=primary,every mark/.append style={solid, fill=primary!30}] coordinates {
          (2500,6.929567)
          (5000,13.138401)
          (7500,18.739983)
          (10000,24.103215)
          (12500,28.609966)
        };
      \addplot[dashed,mark=square*,color=secondary,every mark/.append style={solid, fill=secondary!30}] coordinates {
          (2500,5.675414)
          (5000,11.254181)
          (7500,16.187341)
          (10000,20.481378)
          (12500,25.050058)
        };
      \addplot[densely dotted,mark=square*,color=tertiary,every mark/.append style={solid, fill=tertiary!30}] coordinates {
          (2500,7.617643)
          (5000,14.649028)
          (7500,20.429845)
          (10000,25.506427)
          (12500,29.629025)
        };
      \addplot[dashed,mark=*,color=quinary,every mark/.append style={solid, fill=quinary!30}] coordinates {
          (2500,27.704354)
          (5000,44.154722)
          (7500,56.813002)
          (10000,67.595301)
          (12500,75.848611)
        };
      \addplot[densely dotted,mark=*,color=senary,every mark/.append style={solid, fill=senary!30}] coordinates {
          (2500,32.340989)
          (5000,50.126532)
          (7500,67.884430)
          (10000,82.039366)
          (12500,96.010424)
        };
      \nextgroupplot[
        title={\textit{FFT}~\cite{CastroPerez2020}},
        xlabel=Number of columns (\(n\)),
        xtick distance=1000,
        ytick distance=2,
        legend columns=-1,
        legend entries={\sesh,\multicrusty,\ferrite,\rustfft,\rumpsteak,\rumpsteak (optimised)},
        legend style={draw=none,column sep=5pt},
        legend to name=pgfplots:Runtime
      ]
      \addplot[solid,mark=square*,color=primary,every mark/.append style={solid, fill=primary!30}] coordinates {
          (1000,0.551154)
          (2000,1.050958)
          (3000,1.510567)
          (4000,1.935263)
          (5000,2.303627)
        };
      \addplot[dashed,mark=square*,color=secondary,every mark/.append style={solid, fill=secondary!30}] coordinates {
          (1000,0.810134)
          (2000,1.515538)
          (3000,2.163629)
          (4000,2.783617)
          (5000,3.261020)
        };
      \addplot[densely dotted,mark=square*,color=tertiary,every mark/.append style={solid, fill=tertiary!30}] coordinates {
          (1000,1.458279)
          (2000,2.513855)
          (3000,3.496405)
          (4000,4.198723)
          (5000,4.811375)
        };
      \addplot[solid,mark=*,color=quaternary,every mark/.append style={solid, fill=quaternary!30}] coordinates {
          (1000,9.320778)
          (2000,9.313359)
          (3000,9.333569)
          (4000,9.336939)
          (5000,9.323199)
        };
      \addplot[dashed,mark=*,color=quinary,every mark/.append style={solid, fill=quinary!30}] coordinates {
          (1000,5.038554)
          (2000,7.206404)
          (3000,8.421026)
          (4000,9.262763)
          (5000,9.316716)
        };
      \addlegendimage{densely dotted,mark=*,color=senary,every mark/.append style={solid, fill=senary!30}};
    \end{groupplot}
    \path (current bounding box.north)--++(0,5pt);
  \end{tikzpicture}
  \caption{Benchmarking \rumpsteak's runtime performance against previous work in Rust (raw data in \inApp{sec:RuntimeData}).}
  \label{fig:RuntimeBenchmarks}
\end{figure*}

\section{Evaluation}
\label{sec:Evaluation}

In this section, we evaluate how \rumpsteak performs with respect to existing
tools. First, in \cref{sec:RuntimeEvaluation}, we evaluate the runtime
performance of programs written in \rumpsteak versus the same benchmarks
implemented using other Rust session type tools. In the spirit of Rust's
emphasis on efficiency, this runtime performance is of particular significance
for developers. Secondly, in \cref{sec:eval_async_verif}, we evaluate how
\rumpsteak's verification of message reordering (our subtyping algorithm) scales
compared to existing verification tools. Although not a runtime cost, subtyping
is known to be a computationally challenging problem that often scales poorly.

\subsection{Session-Based Rust Implementations}
\label{sec:RuntimeEvaluation}

We compare \rumpsteak's runtime against three other session type implementations
in Rust:
\begin{enumerate*}[label=\textbf{(\arabic*)}]
  \item \sesh~\cite{Kokke2019}, a synchronous implementation of binary session
  types;
  \item \ferrite~\cite{Chen2021}, an implementation of shared binary session
  types~\cite{Balzer2017} supporting asynchronous execution; and
  \item \multicrusty~\cite{Lagaillardie2020}, a synchronous \mpst implementation
  based on \sesh.
\end{enumerate*}

We perform a series of benchmarks shown in \cref{fig:RuntimeBenchmarks}. We
execute these using a 16-core AMD Opteron\textsuperscript{TM} 6200 Series CPU @
2.6GHz with hyperthreading, 128GB of RAM, Ubuntu 18.04.5 LTS and Rust Nightly
2021-07-06. We use version 0.3.5 of the Criterion.rs library~\cite{Criterion} to
perform microbenchmarking and a \emph{multi-threaded} asynchronous runtime from
version 1.11.0 of the Tokio library~\cite{Tokio}.

\begin{description}
  \item[Streaming.] This protocol has two participants, a source \ppt{s} and a sink
    \ppt{t}, with a combination of recursion and choice. As stated previously, its global type
    $\gt_\textsf{ST}$ is given as
    \begin{equation*}
      \gt_\textsf{ST} = \gtrec{x}\gtmsg{t}{s}{\mathit{ready}.\gtmsg{s}{t}{
          \mathit{value}.\gtvar{x}, \quad
          \mathit{stop}.\gtend
        }}
    \end{equation*}
    We benchmark the protocol by varying the number of \(\mathit{value}\)s that
    are sent before the exchange \(\mathit{stop}\)s. We use \amr to create an
    optimised version for \rumpsteak: if the source knows it will send at least
    \(n\) \(\mathit{value}\)s before \(\mathit{stop}\)ping, then these messages
    can be \emph{unrolled} and sent all at once---only receiving
    \(\mathit{ready}\) from the sink after sending all \(n\)
    \(\mathit{value}\)s. For benchmarking, we unroll the first 5
    \(\mathit{value}\)s in the optimised version.

    Our results show \rumpsteak reaches around 14.5x the throughput of other
    implementations. At first, it is limited by channel creation overheads, the
    cost of which becomes less significant as more messages are sent. The
    throughput levels off as message passing overheads become the bottleneck.
    The optimised version eases this overhead, as the source is less frequently
    blocked on receiving \(\mathit{ready}\) from the sink---this effect could be
    increased by unrolling more messages.

    \sesh and \multicrusty have much lower throughputs since they use more
    expensive synchronous communication. These implementations also create a new
    channel for each interaction. This causes their throughput to stay constant
    as there are fewer one-time costs to spread than in \rumpsteak.
    Interestingly, \ferrite performs similarly to \sesh and \multicrusty,
    despite employing asynchronous execution like \rumpsteak. This is because
    its design requires that the source and sink are implemented using
    \emph{recursion} rather than \emph{iteration}. \ferrite also requires
    particularly strong safety requirements at compile time. Therefore, the
    sink's output buffer must be guarded with a mutex instead of being accessed
    directly.

  \item[Double Buffering.] We benchmark our running example by performing only
    two iterations. This allows for both of the kernel's buffers to be filled
    and, importantly, for the protocol to terminate. \sesh and \ferrite do not
    support \mpst so their implementations use binary session types between
    pairs of participants. This approach does \emph{not} provide the same safety
    as \mpst and so we cannot verify deadlock-freedom as we can in \multicrusty
    and \rumpsteak (see \cref{tab:Expressiveness}).

    We parameterise the size of the buffers and measure the throughput.
    Performance is similar to that of the stream benchmark; \rumpsteak's
    throughput reaches around 3.2x that of the others. Moreover, we confirm the
    intuition described in \cref{sec:Introduction}. Increasing the size of the
    buffers does generally improve throughput. However, it remains limited when
    only a single buffer can be used at once. \cref{fig:RuntimeBenchmarks} shows
    that optimising the kernel as described in \cref{sec:Overview} increases the
    throughput since the source and sink operate on different buffers at once.

    \begin{table*}
      \centering
      \caption{Expressiveness of \rumpsteak compared to previous work.}
      \label{tab:Expressiveness}
      \sffamily\footnotesize
      \setlength{\tabcolsep}{6pt}
      \begin{tabular}{lc|cccc|ccc|c|cc}
        \toprule
        Protocol                                                    & \(n\) & \feat{C} & \feat{R} & \feat{IR} & \feat{AMR} & \sesh  & \ferrite & \multicrusty & \rumpsteak & \kmc   & \concur \\
        \midrule
        Two Adder~\cite{NuScr}                                      & 2     & \tick    & \tick    &           &            & \ftick & \ftick   & \ftick       & \ftick     & \ftick & \ftick  \\
        Three Adder                                                 & 3     &          &          &           &            & \htick & \htick   & \ftick       & \ftick     & \ftick & \cross  \\
        Streaming                                                   & 2     & \tick    & \tick    &           &            & \ftick & \ftick   & \ftick       & \ftick     & \ftick & \ftick  \\
        Optimised Streaming                                         & 2     & \tick    & \tick    &           & \tick      & \cross & \cross   & \cross       & \ftick     & \ftick & \ftick  \\
        Ring~\cite{CastroPerez2020}                                 & 3     &          & \tick    & \tick     &            & \htick & \htick   & \ftick       & \ftick     & \ftick & \cross  \\
        Optimised Ring~\cite{CastroPerez2020}                       & 3     &          & \tick    & \tick     & \tick      & \htick & \htick   & \htick       & \ftick     & \ftick & \cross  \\
        Ring With Choice~\cite{CastroPerez2020}                     & 3     & \tick    & \tick    & \tick     &            & \htick & \htick   & \ftick       & \ftick     & \ftick & \cross  \\
        Optimised Ring With Choice~\cite{CastroPerez2020}           & 3     & \tick    & \tick    & \tick     & \tick      & \htick & \htick   & \htick       & \ftick     & \ftick & \cross  \\
        Double Buffering~\cite{CastroPerez2020}                     & 3     &          & \tick    & \tick     &            & \htick & \htick   & \ftick       & \ftick     & \ftick & \cross  \\
        Optimised Double Buffering~\cite{CastroPerez2020,Huang2002} & 3     &          & \tick    & \tick     & \tick      & \cross & \cross   & \cross       & \ftick     & \ftick & \cross  \\
        Alternating Bit~\cite{Lange2019,AlternatingBit}             & 2     & \tick    & \tick    & \tick     &            & \htick & \htick   & \htick       & \ftick     & \ftick & \ftick  \\
        Elevator~\cite{Lange2019,Bouajjani2018}                     & 3     & \tick    & \tick    & \tick     & \tick      & \cross & \cross   & \cross       & \ftick     & \ftick & \cross  \\
        FFT~\cite{CastroPerez2020}                                  & 8     &          &          &           &            & \htick & \htick   & \ftick       & \ftick     & \ftick & \cross  \\
        Optimised FFT~\cite{CastroPerez2020}                        & 8     &          &          &           & \tick      & \cross & \cross   & \cross       & \ftick     & \ftick & \cross  \\
        Authentication~\cite{Neykova2013}                           & 3     & \tick    &          &           &            & \htick & \htick   & \ftick       & \ftick     & \ftick & \cross  \\
        Client-Server Log~\cite{Lagaillardie2020}                   & 3     & \tick    & \tick    & \tick     &            & \htick & \htick   & \ftick       & \ftick     & \ftick & \cross  \\
        Hospital~\cite{Bravetti2019}                                & 2     & \tick    & \tick    & \tick     & \tick      & \cross & \cross   & \cross       & \htick     & \cross & \ftick  \\
        \bottomrule
      \end{tabular} \\[4pt]
      \footnotesize
      \(n\)\; Number of participants \quad \feat{C}\; Choice \quad \feat{R}\; Recursion \quad \feat{IR}\; Infinite recursion \quad \feat{AMR}\; Asynchronous message reordering \\[2pt]
      \ftick\; Expressible \quad \htick\; Expressible using endpoint types
      (but without deadlock-freedom guarantee) \quad \cross\; Not expressible
    \end{table*}

  \item[FFT.] The fast Fourier transform (FFT) is an algorithm for computing
    the discrete Fourier transform of a vector. We take \(n \times 8\)
    \emph{matrices}, where the FFT is computed on each column to produce a new,
    transformed matrix, and implement the Cooley-Tukey FFT algorithm which
    divides the problem into---in our case---eight. Each of these problems can
    be solved independently by different processes, which communicate with one
    another using message passing. The exchanges between participants are
    described in \cite[Fig. 7]{CastroPerez2020}.

    We have chosen to use FFT as a benchmark as it is a standard problem with
    numerous implementations, and we can therefore compare implementations
    based on MPST with actual high-optimised implementations.

    We implement a concurrent version of the algorithm in \sesh, \ferrite,
    \multicrusty and \rumpsteak, which uses eight processes. Each process works
    on an array of \(n\) inputs, representing one column. Arithmetic operations
    are performed in a pairwise fashion on two of these arrays.

    We use the same approach as in the double buffering protocol to write the
    binary implementations for \sesh and \ferrite. Interestingly, to represent
    this as a sequence of binary sessions, we require additional synchronisation
    of all participants at each stage of the protocol. While we can perform \amr
    to \rumpsteak's version, in practice we found that this does not have much
    effect since the protocol is already heavily synchronised so cannot progress
    any faster.

    We benchmark the protocol by varying the number of columns in the matrix. We
    also compare these implementations with the most downloaded open-source Rust
    FFT implementation, \rustfft~\cite{RustFFT}. \rustfft does not use
    concurrency and computes the FFT of a matrix by iteratively performing the
    Cooley-Tukey algorithm on each column.

    \ferrite performs better than before since its FFT implementation does not
    suffer from the limitations explained previously. Like \rumpsteak, it
    benefits from the use of asynchronous execution, although the additional
    synchronisation from representing the problem as a set of binary sessions
    causes \rumpsteak's throughput to remain around 1.9x greater.

    Most excitingly, \rumpsteak achieves \rustfft's throughput for large
    matrices. \rustfft's implementation is highly tuned for low-level efficiency
    and does not incur any overheads associated with message passing. Moreover,
    asynchronous executors are generally designed for higher-level tasks such as
    networking; MPI-based communication~\cite{MPI} would more likely be used to
    parallelise this problem in practice. Therefore, we think it impressive that
    even a basic parallel implementation using \rumpsteak can reach the same
    performance as a state-of-the-art sequential implementation.

  \item[Expressiveness.] \cref{tab:Expressiveness} discusses the expressiveness
    of \rumpsteak compared with \sesh, \ferrite and \multicrusty. Since \sesh
    and \ferrite support only binary session types, they are unable to guarantee
    deadlock-freedom in protocols with more than two participants. \multicrusty
    has greater expressiveness than \sesh and \ferrite since it implements \mpst
    but, unlike \rumpsteak, still cannot ensure deadlock-freedom for protocols
    optimised using \amr. In addition, many optimisations, like the ones we have
    benchmarked, break duality between pairs of participants so are not
    expressible at all by \sesh, \ferrite and \multicrusty. Meanwhile, our
    powerful API and new subtyping algorithm allow \rumpsteak to express many
    examples using \amr.
\end{description}

\subsection{Verifying Asynchronous Message Reordering}
\label{sec:eval_async_verif}

We perform a second set of benchmarks, shown in \cref{fig:SubtypingBenchmarks},
to evaluate \rumpsteak's asynchronous subtyping algorithm. We compare it against
\begin{enumerate*}[label=\textbf{(\arabic*)}]
  \item \concur~\cite{Bravetti2019}, a sound subtyping algorithm defined for
  \emph{binary} session types only; and
  \item \kmc~\cite{Lange2019}, an algorithm for directly checking the
  compatibility of a set of FSMs without the need for a global type.
  Compatibility is checked up to a bound \(k\) on the size of each process'
  asynchronous queue.
\end{enumerate*}
We note that neither \concur nor \kmc provide a runtime framework like
\rumpsteak does, they are used only to verify the \amr.

\begin{figure*}
  \centering
  \sffamily\footnotesize
  \pgfplotslegendfromname{pgfplots:Subtyping}
  \begin{tikzpicture}[every axis plot/.append style={thick}]
    \tikzset{mark options={mark size=1pt}}
    \begin{groupplot}[
        group style={group size=4 by 1},
        height=4cm,
        width=4.8cm,
        ymode=log,
        ytick distance=10
      ]
      \nextgroupplot[
        title=\textit{Streaming},
        xlabel=Number of unrolls ($n$),
        ylabel=Running time (\(\log \text{s}\)),
        legend columns=-1,
        legend entries={\concur,\kmc,\rumpsteak},
        legend style={draw=none,column sep=5pt},
        legend to name=pgfplots:Subtyping
      ]
      \addplot[solid,mark=square*,color=primary,every mark/.append style={solid, fill=primary!30}] coordinates {
          (0,0.003476)
          (10,0.008556)
          (20,0.020673)
          (30,0.041673)
          (40,0.076425)
          (50,0.127865)
          (60,0.198541)
          (70,0.292471)
          (80,0.422571)
          (90,0.583863)
          (100,0.767426)
        };
      \addplot[dashed,mark=square*,color=secondary,every mark/.append style={solid, fill=secondary!30}] coordinates {
          (0,0.005504)
          (10,0.019316)
          (20,0.057417)
          (30,0.142145)
          (40,0.276446)
          (50,0.496929)
          (60,0.805577)
          (70,1.233327)
          (80,1.780778)
          (90,2.475443)
          (100,3.349204)
        };
      \addplot[densely dotted,mark=*,color=quinary,every mark/.append style={solid, fill=quinary!30}] coordinates {
          (0,0.001872)
          (10,0.001899)
          (20,0.001848)
          (30,0.001906)
          (40,0.001874)
          (50,0.002080)
          (60,0.002083)
          (70,0.002064)
          (80,0.002178)
          (90,0.002190)
          (100,0.002249)
        };
      \nextgroupplot[
        title={\textit{Nested Choice}~\cite[Fig. 3]{Chen2017}},
        xlabel=Number of levels ($n$),
        xtick distance=1
      ]
      \addplot[solid,mark=square*,color=primary,every mark/.append style={solid, fill=primary!30}] coordinates {
          (1,0.002295)
          (2,0.004504)
          (3,0.016347)
          (4,0.224858)
          (5,4.692525)
        };
      \addplot[dashed,mark=square*,color=secondary,every mark/.append style={solid, fill=secondary!30}] coordinates {
          (1,0.006554)
          (2,0.014901)
          (3,0.072423)
          (4,1.515528)
          (5,41.688068)
        };
      \addplot[densely dotted,mark=*,color=quinary,every mark/.append style={solid, fill=quinary!30}] coordinates {
          (1,0.000702)
          (2,0.000755)
          (3,0.001745)
          (4,0.007656)
          (5,0.157548)
        };
      \nextgroupplot[
        title={\textit{Ring}~\cite{CastroPerez2020}},
        xlabel=Number of participants ($n$)
      ]
      \addplot[dashed,mark=square*,color=secondary,every mark/.append style={solid, fill=secondary!30}] coordinates {
          (2,0.004007)
          (4,0.007239)
          (6,0.011806)
          (8,0.018822)
          (10,0.024842)
          (12,0.049232)
          (14,0.102257)
          (16,0.191078)
          (18,0.340262)
          (20,0.570656)
          (22,0.913412)
          (24,1.391075)
          (26,2.042452)
          (28,2.918943)
          (30,4.099072)
        };
      \addplot[densely dotted,mark=*,color=quinary,every mark/.append style={solid, fill=quinary!30}] coordinates {
          (2,0.000675)
          (4,0.000731)
          (6,0.000701)
          (8,0.000835)
          (10,0.000757)
          (12,0.000777)
          (14,0.000744)
          (16,0.000813)
          (18,0.000817)
          (20,0.000766)
          (22,0.000911)
          (24,0.000737)
          (26,0.000752)
          (28,0.000732)
          (30,0.000769)
        };
      \nextgroupplot[
        title={\textit{\(k\)-Buffering}~\cite{CastroPerez2020,Huang2002}},
        xlabel=Number of unrolls ($n$)
      ]
      \addplot[dashed,mark=square*,color=secondary,every mark/.append style={solid, fill=secondary!30}] coordinates {
          (0,0.004825)
          (5,0.007668)
          (10,0.013613)
          (15,0.018770)
          (20,0.031376)
          (25,0.054910)
          (30,0.080879)
          (35,0.122315)
          (40,0.170533)
          (45,0.236354)
          (50,0.305749)
          (55,0.406071)
          (60,0.506069)
          (65,0.639521)
          (70,0.773931)
          (75,0.954399)
          (80,1.127240)
          (85,1.359600)
          (90,1.571745)
          (95,1.869339)
          (100,2.111687)
        };
      \addplot[densely dotted,mark=*,color=quinary,every mark/.append style={solid, fill=quinary!30}] coordinates {
          (0,0.000630)
          (5,0.000747)
          (10,0.000705)
          (15,0.000667)
          (20,0.000825)
          (25,0.000718)
          (30,0.000760)
          (35,0.000853)
          (40,0.000802)
          (45,0.000792)
          (50,0.000916)
          (55,0.000882)
          (60,0.000959)
          (65,0.001028)
          (70,0.001057)
          (75,0.001045)
          (80,0.001125)
          (85,0.001120)
          (90,0.001164)
          (95,0.001156)
          (100,0.001234)
        };
    \end{groupplot}
    \path (current bounding box.north)--++(0,5pt);
  \end{tikzpicture}
  \caption{Benchmarking \rumpsteak's subtyping performance against previous work  (raw data in \inApp{sec:VerificationData}).}
  \label{fig:SubtypingBenchmarks}
\end{figure*}

We benchmark with the same machine as before. \concur and \kmc are written in
Haskell so we must run each tool's binary rather than simply timing Rust
functions. We, therefore, provide a command-line for \rumpsteak's subtyping
algorithm so it is comparable with these other tools. \rumpsteak's binary is
compiled with Rust 1.54.0 and we use Hyperfine~\cite{Hyperfine} to compare the
execution time for each tool.

\begin{description}
  \item[Streaming (from \cref{sec:RuntimeEvaluation}).] We vary \(n\), the
    number of \(\mathit{value}\)s we unroll. Using \concur and \rumpsteak, we
    check that the optimised source is a subtype of its projected version and,
    using \kmc, we check that the optimised source is compatible with the sink.
    Our results show that \rumpsteak scales significantly better than both
    \concur and \kmc. While all three implementations can verify up to 100
    unrolls in under a second, the execution time taken by \concur and \kmc
    increases exponentially while \rumpsteak's remains mostly flat. This is
    consistent with our complexity analysis in \cref{thm:Complexity}, since
    unrolling more \(\mathit{value}\)s does not increase the branching factor of
    the subtype.

  \item[Nested choice.] We next consider a protocol from \citet[Fig.
      3]{Chen2017} containing nested choice. We perform this nesting up to a
    parameterised number of levels $n$ to increase the complexity.
    Specifically, we check that $\lt_n \leq \lt'_n$ where
    \begin{center}
      $\lt_0 = \lt'_0 = \tend$ \\
      $\lt_{n + 1}  = \tout{}\mathit{m}.(\tin{}\mathit{r}.\lt_n\ \&\ \tin{}\mathit{s}.\lt_n\ \&\ \tin{}\mathit{u}.\lt_n) \oplus \tout{}\mathit{p}.(\tin{}\mathit{r}.\lt_n\ \&\ \tin{}\mathit{s}.\lt_n)$        \\
      $\lt'_{n + 1} = \tin{}\mathit{r}.(\tout{}\mathit{m}.\lt'_n \oplus \tout{}\mathit{p}.\lt'_n \oplus \tout{}\mathit{q}.\lt'_n)\ \&\ \tin{}\mathit{s}.(\tout{}\mathit{m}.\lt'_n \oplus \tout{}\mathit{p}.\lt'_n)$
    \end{center}

    We find that \rumpsteak again performs more scalably than \concur and \kmc.
    Here, \rumpsteak's improved efficiency is significant---for five levels,
    \kmc takes around 40s while \rumpsteak requires only a fraction of a second.
    Nonetheless, we observe that \rumpsteak also exhibits exponential
    performance, in contrast to our stream benchmark. This is because our
    algorithm bounds recursion but not choice, causing an explosion in the paths
    to visit as the number of nested choices increases. This is also explained
    by our complexity analysis in \cref{thm:Complexity} since a greater number
    of nested choices increases the depth to which we must explore.

  \item[Ring~\cite{CastroPerez2020}.] We consider a ring protocol with a parameterised number of
    participants. Each participant (aside from the first, which initiates the
    protocol) receives a value from its preceding neighbour then sends a value
    to its succeeding neighbour. Providing that this receive is not dependent on
    the send, we can use \amr to send before receiving.

    We benchmark verifying this optimisation using both \kmc and \rumpsteak. In
    this case, we cannot use \concur since the protocol is multiparty. While
    \rumpsteak can verify each participant's subtype individually, \kmc must
    consider the entire protocol at once. Unsurprisingly, \kmc is significantly
    less scalable as its running time grows exponentially, whereas \rumpsteak's
    performance remains mostly constant.

  \item[\(k\)-buffering.] We extend the double buffering protocol to a
    parameterised number of buffers. We benchmark verifying the optimisation to
    the kernel discussed previously. Crucially, with a greater number of
    buffers, we can unroll a greater number of \(\text{ready}\) messages. We
    again compare only \kmc and \rumpsteak since the protocol is multiparty.
    Similarly to other benchmarks, \rumpsteak scales more efficiently than \kmc.

  \item[Expressiveness.] \cref{tab:Expressiveness} discusses the expressiveness
    of \rumpsteak's algorithm against \concur and \kmc (note again that \concur
    and \kmc only verify \amr and do not provide language implementations like
    \rumpsteak). As we discussed, \concur supports only two parties so cannot
    express many of the case studies. However, it can express some unbounded
    binary protocols that \rumpsteak and \kmc cannot, such as the Hospital
    example~\cite[\textbf{\S~1}]{Bravetti2019}\footnote{Notice, for the case of
      \rumpsteak, that we can manually write the endpoints, and the framework
      checks the conformance to the protocol. However, \rumpsteak cannot verify
      the \emph{deadlock-freedom} property of the protocol, hence the \emph{amber
        cross}.}. Unsurprisingly, \kmc's expressiveness in the table coincides with
    \rumpsteak's as \kmc with $k{=}\infty$ is equivalent to the liveness
    property induced by asynchronous multiparty subtyping~\cite[Theorem
      7.3]{Ghilezan2021}. On the other hand, \kmc can verify a wider syntax of
    local FSMs than those corresponding to \cref{def:syntax}.
\end{description}

\section{Conclusion and Related Work}
\label{sec:Related}

There are a vast number of studies on session types, some of which are
implemented in programming languages~\cite{Ancona2016} and tools~\cite{Gay2017}.
The top-down approach using Scribble~\cite{ScribbleWebsite,Yoshida2013,NuScr} we
presented in this paper has been implemented for a number of other programming
languages such as
Java~\cite{Hu2016,kouzapas_typechecking_2018,hu_explicit_2017},
Go~\cite{Castro2019}, TypeScript~\cite{Miu2021}, Scala~\cite{Scalas2017},
MPI-C~\cite{ng_protocols_2015},
Erlang~\cite{NY2017}, Python~\cite{DHHNY2015},
F\#~\cite{Neykova2018}, F$\star$~\cite{Zhou2020}
and Actor DSL~\cite{harvey_multiparty_2021}.
Several implementations use an EFSM-based approach to generate
APIs from Scribble for target programming languages such
as~\cite{Hu2016,hu_explicit_2017,Neykova2018,Castro2019,Zhou2020,Miu2021,Scalas2017}
to ensure correctness by construction,
but none are integrated with \amr like \rumpsteak.

\rumpsteak provides three approaches for gaining \emph{efficiency} and ensuring
\emph{deadlock-freedom} in message-passing Rust applications: (1) the
\emph{top-down} approach for ensuring \emph{correctness/safety by construction}
and maximising \emph{asynchrony} with local analysis; (2) the \emph{bottom-up}
approach using global analysis on a set of FSMs; and (3) the \emph{hybrid}
approach, which combines inference with local analysis. All three approaches are
backed by \rumpsteak's API (described in \cref{sec:Overview}), which uses Rust's
affine type system to ensure protocol compliance.

Note that \rumpsteak is the first framework (for any programming language) to
enable the bottom-up (by integrating with \kmc~\cite{Lange2019}) and the hybrid
approaches.

\begin{description}
  \item[Rust session type implementations.]
    We closely compared the performance and expressiveness of \rumpsteak with
    three existing works in \cref{sec:Evaluation}:
    \begin{enumerate*}[label=\textbf{(\arabic*)}]
      \item \sesh~\cite{Kokke2019}, a synchronous implementation of binary
      session types;
      \item \ferrite~\cite{Chen2021}, an implementation of shared binary session
      types~\cite{Balzer2017} supporting asynchronous execution; and
      \item \multicrusty~\cite{Lagaillardie2020}, a synchronous \mpst
      implementation based on \sesh.
    \end{enumerate*}

    Of these previous implementations, only \multicrusty can also support \mpst
    with deadlock-freedom. However, to represent a multiparty session,
    \multicrusty requires defining a tuple of binary sessions for each role as
    well as their order of use. This leads to a more complex and less intuitive
    API than \rumpsteak's. Our API requires fewer definitions and is also both
    more expressive and performant (see \cref{sec:Evaluation}).

    \rumpsteak and \ferrite support asynchronous execution, which is more
    efficient than the synchronous and blocking communication used by \sesh and
    \multicrusty. However, only \rumpsteak supports Rust's idiomatic
    \async/\await syntax. \ferrite instead requires nesting sequential
    communication, which is more verbose and less efficient. Iteration cannot be
    easily expressed (see~\cref{sec:Evaluation}) and it requires stricter
    compile-time concurrency guarantees than \rumpsteak. We do not compare
    directly against~\cite{Jespersen2015} (another synchronous implementation of
    binary session types) as this uses an older edition of Rust and has several
    limitations already discussed in~\cite{Lagaillardie2020,Kokke2019}.

    \rumpsteak can perform \amr, which is not possible in any of these existing
    implementations. Combined with its straightforward design and use of
    asynchronous execution, we found that it is therefore much more efficient
    than all previous work. It has around 14.5x, 3.2x and 1.9x greater
    throughput than the next fastest implementation in the stream, double
    buffering and FFT protocols respectively. \rumpsteak can also compete with a
    popular Rust implementation in the FFT benchmark, achieving the same
    throughput for large input sizes.

    Recent work~\cite{Duarte2021} builds a DSL to offer protocol conformance for
    Rust, supported by typestates. They do not yet explore combining the DSL
    with communication channels like \rumpsteak and
    \cite{Jespersen2015,Lagaillardie2020,Kokke2019,Balzer2017}. If this is
    achieved, it would be interesting to integrate session type tooling with
    their DSL and compare this with \rumpsteak.

  \item[Verification of asynchronous subtyping.] Interest in \amr has grown
    recently, both in theory and practice. An extension of the Iris framework,
    Actris~\cite{Hinrichsen_sessions_2021}, formalises a variant of binary
    asynchronous subtyping, which is in turn implemented in the Coq proof
    assistant. The asynchronous session subtyping defined
    in~\cite{Chen2017,Ghilezan2021} is \emph{precise} but was shown to be
    undecidable, even for binary sessions \cite{Lange2017,Bravetti2018}. Hence,
    in general, checking $\efsm_i' \leq \efsm_i$ is undecidable. Various limited
    classes of session types for which $\efsm_i' \leq \efsm_i$ is decidable
    \cite{Bravetti2019,Bravetti2017,Lange2017,CastroPerez2020} are proposed but
    they are not applicable to our use cases since \textbf{(1)} the relations in
    \cite{Bravetti2019,Bravetti2018,Lange2017} are \emph{binary} and the same
    limitations do not work for multiparty sessions; and \textbf{(2)} the
    relation in \cite[Def.~6.1]{CastroPerez2020} does not handle subtyping
    across unrolling recursions, e.g.\ the relation is inapplicable to the
    double buffering algorithm~\cite{Huang2002} (see \cite[Remark
      8.1]{CastroPerez2020}).

    Our new algorithm for asynchronous optimisation is \emph{terminating} (see
    \cref{thm:Termination}), \emph{sound} (see \cref{thm:Soundness}) and capable
    of verifying optimisations in a range of classical examples (see
    \cref{sec:Evaluation}). It is also more performant than a global analysis
    based on $k$-MC, and an existing two-party sound
    algorithm~\cite{Bravetti2019} (see \cref{sec:Evaluation}). This is because
    our algorithm is implemented efficiently (see
    \cref{sec:AlgorithmImplementation}) and can often execute in linear rather
    than exponential time (see \cref{thm:Complexity}).

  \item[Deadlock detection.] There is also recent
    progress~\cite{Qin2020Detector,Qin2020} on static deadlock detection in Rust
    by tracking locks and unlocks in Rust's intermediate representation
    (MIR)~\cite{RustMIR}. This targets shared memory, rather than
    message-passing applications, and does not attempt safety by construction.
    In future work, we could also analyse the MIR directly in \rumpsteak,
    replacing the use of our API. However, it is difficult to identify
    concurrency primitives in the MIR (this would be made harder with
    \async/\await); \cite{Qin2020Detector} currently supports only three
    concurrent data structures from Rust's ecosystem.
\end{description}

\section*{Acknowledgments}
We thank (in alphabetical order) Julia Gabet, Paul Kelly, Roland Kuhn, Nicolas
Lagaillardie, Julien Lange, and Fangyi Zhou, for their valuable proofreading,
feedback and advice on early versions of this work. We also thank the PPoPP
reviewers for the helpful comments in their reviews.

The work is supported by EPSRC EP/T006544/1, EP/K011715/1, EP/K034413/1,
EP/L00058X/1, EP/N027833/1, EP/N028201/1, EP/T014709/1, EP/V000462/1, and
NCSS/EPSRC VeTSS.

%\input{sections/case_study}

%%
%% The acknowledgments section is defined using the "acks" environment
%% (and NOT an unnumbered section). This ensures the proper
%% identification of the section in the article metadata, and the
%% consistent spelling of the heading.
%\begin{acks}
%  We thank Nicolas Lagaillardie and Fangyi Zhou for their helpful comments and
%  suggestions. The work is supported by EPSRC, grants EP/T006544/1,
%  EP/K011715/1, EP/K034413/1, EP/L00058X/1, EP/N027833/1, EP/N028201/1,
%  EP/T014709/1, and EP/V000462/1 and by NCSS/EPSRC VeTSS.
%\end{acks}

\newpage

\appendix

\section{Artifact}

\subsection{Content of the artifact\label{content-of-the-artifact}}

The artifact~\cite{Artifact} contains the following files:

\begin{verbatim}
.
|-- Artifact.md
|-- Artifact.pdf
|-- Dockerfile
|-- gen_fig_6.sh
|-- gen_fig_7.sh
`-- getting_started.sh
\end{verbatim}

Note that an internet connection is required to use the artifact.

\subsection{Getting started guide (Docker artifact)\label{getting-started-guide-docker-artifact}}

In this subsection, we will run the benchmarks used to produce \cref{fig:RuntimeBenchmarks,fig:SubtypingBenchmarks}  of the paper in a Docker container. At the end of this section,
you should be have all the plots of those figures.

\begin{quote}
	Note that the benchmark results may not match those in
	\cref{fig:RuntimeBenchmarks,fig:SubtypingBenchmarks} when run in a
	Docker container or a machine with different specification than used in
	the paper. See the subsection on \emph{Claims supported or not by the
	artifact} for more discussion of this.
\end{quote}

To run the getting started guide, we assume you are running a Linux
machine with Docker and Gnuplot installed, as well as standard Unix
tools (\texttt{tail}, \texttt{awk}, \texttt{cut}, etc.).

This \emph{Getting started} subsection is fully automated: extract the
archive and run the \texttt{getting\_started.sh} script. \emph{The
script takes a long time to run all the benchmarks. On the author's
laptop, it took approximately 2 hours and 15 minutes.} When the script
finishes, you should have a few \texttt{*.png} plots which correspond to
the plots shown in \cref{fig:RuntimeBenchmarks,fig:SubtypingBenchmarks}  of the paper.

	\begin{lstlisting}[language=sh]
$ cd /path/to/extracted/artifact/
$ ./getting_started.sh
	\end{lstlisting}

Once run, to clean-up your system, in addition to removing the archive
folder, you should remove the docker image (the following command
assumes you don't have other docker images/containers on your system):

	\begin{lstlisting}[language=sh]
$ docker rmi rumpsteak_tool
$ docker system prune
	\end{lstlisting}

\subsubsection{Output}

The \texttt{getting\_started.sh} generates figures similar to the one used in
the paper\footnote{They are not exactly the same in the sense that they are not
produced by the same tool (Tikz vs. Gnuplot), but the data they use is produced
similarly.}.

We show in~\cref{fig:ex_fig} the figures \texttt{fft.png} and \texttt{ring.png}
that are the equivalent of \cref{fig:RuntimeBenchmarks} (subfigure~FFT) and
\cref{fig:SubtypingBenchmarks} (subfigure~Ring) in the paper.

\begin{figure}
	\begin{subfigure}[b]{.48\linewidth}
		\includegraphics[width=\textwidth]{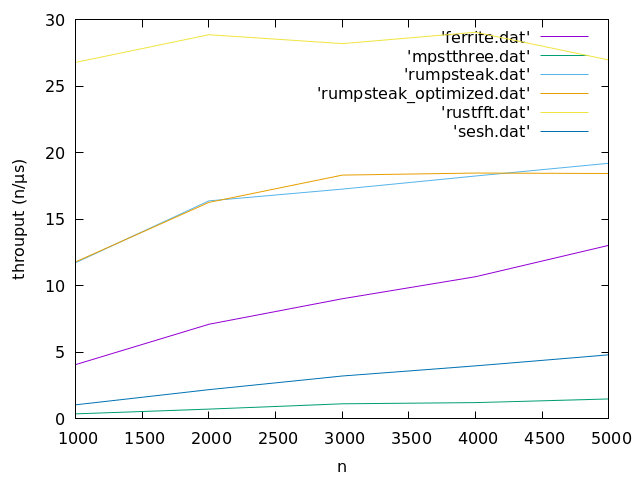}
		\caption{Figure~\texttt{fig\_6\_fft.png} generated by the artifact,
		which corresponds to Example~FFT in
		\cref{fig:RuntimeBenchmarks} in the paper.}
	\end{subfigure}
	\hfill
	\begin{subfigure}[b]{.48\linewidth}
		\includegraphics[width=\textwidth]{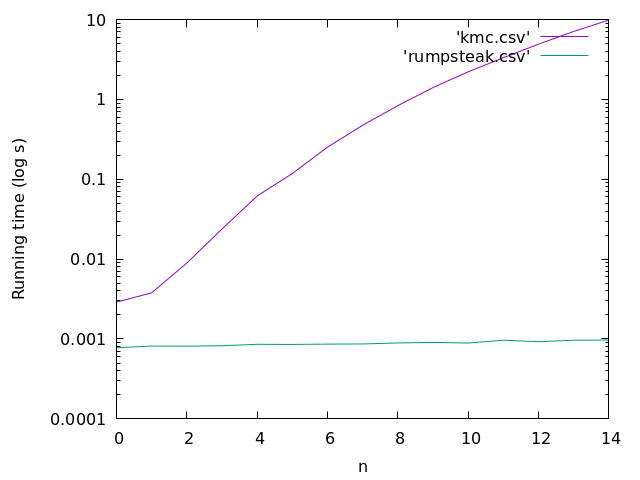}
		\caption{Figure~\texttt{fig\_7\_ring.png} generated by the artifact,
		which corresponds to Example~Ring in
		\cref{fig:SubtypingBenchmarks} in the paper.}
	\end{subfigure}
	\caption{Two examples of the figures generated. Note that those figure
	are obtain from a different run of the artifact, in a lower-end
	computer, which explains the differences with figures presented in the
	paper.}
	\label{fig:ex_fig}
\end{figure}

\subsubsection{Analysing the raw data\label{analysing-the-raw-data}}

\paragraph{Runtime benchmarks\label{a.-runtime-benchmarks}}

The results used to generate \cref{fig:RuntimeBenchmarks} are generated with
\href{https://crates.io/crates/criterion}{Criterion}. Criterion produces
the tree of files shown in \cref{fig:criterion}. The report can be viewed in a
web brower.

\begin{figure}
	\begin{center}
\begin{verbatim}
.
|-- double_buffering
|   |-- ...
|-- fft
|   |-- ...
|-- report
|   `-- index.html
`-- stream
    |-- ...
\end{verbatim}
\end{center}
	\caption{The files created by the Criterion benchmark (in
	\texttt{results/criterion/}). A detailed report is available by viewing
	\texttt{index.html} in a browser.}
	\label{fig:criterion}
\end{figure}

\paragraph{Subtyping benchmarks\label{b.-subtyping-benchmarks}}

Hyperfine generates a CSV file (located at \texttt{results/data/*/*.csv}) for the results
of each benchmark. For each row, this file contains

\begin{enumerate}
\def\labelenumi{\arabic{enumi}.}
\item
  \texttt{command}: the actual command that was run;
\item
  \texttt{mean}: the mean time to execute the command;
\item
  \texttt{stddev}: the standard deviation of executing the command;
\item
  \texttt{median}: the median time to execute the command;
\item
  \texttt{user}: the mean \emph{user} time to run the command;
\item
  \texttt{system}: the mean \emph{system} time to run the command;
\item
  \texttt{min}: the minimum time to run the command;
\item
  \texttt{max}: the maximum time to run the command; and
\item
  \texttt{parameter\_n}: the value of the parameter used in the command
\end{enumerate}

where all times are given in seconds. Since we do not care about
execution time spent in the kernel (all relevant computation is done in
user space) we take only the timings in the \texttt{user} column. We
provide a script to plot the mean user execution time against the
parameter value for each tool.

\subsection{Claims supported or not by the artifact\label{claims-supported-or-not-by-the-artifact}}

\subsubsection{Claims supported by the artifact\label{claims-supported-by-the-artifact}}

\begin{itemize}
\item
	Results presented in \cref{fig:SubtypingBenchmarks} are fairly stable across different
  machines, even when run in Docker. One should expect to obtain similar
  plots on their machine.
\end{itemize}

\subsubsection{Claims partially supported by the artifact\label{claims-partially-supported-by-the-artifact}}

\begin{itemize}
\item
	Results presented in \cref{fig:RuntimeBenchmarks} are quite dependent on the number of cores
  provided by the machine and on other programs being run on the machine
  simultaneously. The results presented in the paper are run on a
  16-core AMD Opteron 6200 Series at 2.6GHz with hyperthreading and
  128GB of RAM.

  Attempts run on lower-end processors or from within a Docker container
  may not provide similar results (in particular, the \texttt{rustfft}
  implementation is highly optimised for sequential execution and
  outperforms approaches based on message passing on processors with
  fewer cores). Nonetheless, the performance of Rumpsteak vs.~other MPST
  implementations would remain mostly comparable.
\end{itemize}

\subsubsection{Claims not supported by the artifact\label{claims-not-supported-by-the-artifact}}

\begin{itemize}
\item
  NuScr is an external tool not contributed by the authors, and
  therefore not part of the claims of the paper.
\end{itemize}

\iftoggle{full}{
	\renewcommand{\thefigure}{A.\arabic{figure}}

\section{Multiparty Asynchronous Subtyping}
\label{sec:AppendixSubtyping}

In the main paper, we mentioned a few definitions from~\cite{Ghilezan2021} that
we omitted due to space constraints. We explain these in the first section of
this appendix. In the later sections, we provide examples of the rules shown in
the paper, as well as proofs of the theorems stated. Finally, we provide more
details on its implementation.

\subsection{Synchronous session subtyping}

We first give the rules for \emph{synchronous session subtyping} given by
\citet{Chen2017}
in \cref{fig:SynchronousSubtyping}. The relation $\leq:$ on sorts
is defined as the least reflexive binary relation such that $\tnat \leq: \tint$
\cite{Ghilezan2021}.

\begin{figure}[h]
  \begin{prooftree}
    \alwaysDoubleLine
    \AxiomC{}
    \RightLabel{\ruleSub{end}}
    \UnaryInfC{$\tend \leq \tend$}
  \end{prooftree}
  \begin{prooftree}
    \alwaysDoubleLine
    \AxiomC{$\forall i \in I$}
    \AxiomC{$S_i \leq: S_i'$}
    \AxiomC{$\lt_i \leq \lt_i'$}
    \RightLabel{\ruleSub{bra}}
    \TrinaryInfC{$\tbra{i \in I \cup J} \tin{p} \ell_i(S_i) . \lt_i \leq \tbra{i \in I} \tin{p} \ell_i(S_i') . \lt_i'$}
  \end{prooftree}
  \begin{prooftree}
    \alwaysDoubleLine
    \AxiomC{$\forall i \in I$}
    \AxiomC{$S_i' \leq: S_i$}
    \AxiomC{$\lt_i \leq \lt_i'$}
    \RightLabel{\ruleSub{sel}}
    \TrinaryInfC{$\tsel{i \in I} \tout{p} \ell_i(S_i) . \lt_i \leq \tsel{i \in I \cup J} \tout{p} \ell_i(S_i') . \lt_i'$}
  \end{prooftree}
  \caption{Subtyping rules for synchronous session types.}
  \label{fig:SynchronousSubtyping}
\end{figure}

\subsection{Precise Asynchronous Multiparty Subtyping}

\begin{figure*}
  \begin{prooftree}
    \alwaysDoubleLine
    \AxiomC{}
    \RightLabel{\ruleRef{end}}
    \UnaryInfC{$\tend \lesssim \tend$}
    \DisplayProof
    \hspace{.2cm}
    \alwaysDoubleLine
    \AxiomC{$S' \leq: S$}
    \AxiomC{$W \lesssim W'$}
    \RightLabel{\ruleRef{in}}
    \BinaryInfC{$\tin{p}\ell(S).W \lesssim \tin{p}\ell(S').W'$}
    \DisplayProof
    \hspace{.2cm}
    \alwaysDoubleLine
    \AxiomC{$S \leq: S'$}
    \AxiomC{$W \lesssim W'$}
    \RightLabel{\ruleRef{out}}
    \BinaryInfC{$\tout{p}\ell(S).W \lesssim \tout{p}\ell(S').W'$}
  \end{prooftree}
  \begin{prooftree}
    \alwaysDoubleLine
    \AxiomC{$S' \leq: S$}
    \AxiomC{$W \lesssim \refa{p}.W'$}
    \AxiomC{$\act{W} = \act{\refa{p}.W'}$}
    \RightLabel{\ruleRefA}
    \TrinaryInfC{$\tin{p}\ell(S).W \lesssim \refa{p}.\tin{p}\ell(S').W'$}
  \end{prooftree}
  \begin{prooftree}
    \alwaysDoubleLine
    \AxiomC{$S \leq: S'$}
    \AxiomC{$W \lesssim \refb{p}.W'$}
    \AxiomC{$\act{W} = \act{\refb{p}.W'}$}
    \RightLabel{\ruleRefB}
    \TrinaryInfC{$\tout{p}\ell(S).W \lesssim \refb{p}.\tout{p}\ell(S').W'$}
  \end{prooftree}
  \caption{Tree refinement relation rules for asynchronous session type trees.}
  \label{fig:TreeRefinement}
\end{figure*}
Asynchronous subtyping is more complex as it allows the order of operations to
be swapped for efficiency.
\begin{figure}[h]
  \centering
  \begin{gather*}
    \act{\tend} = \nil \qquad \act{\tin{p}\ell(S).W} = \{\tin{p}\} \cup \act{W} \\
    \act{\tout{p}\ell(S).W} = \{\tout{p}\} \cup \act{W}
  \end{gather*}
  \caption{Definition of the function \act{W} on a tree $W$.}
  \label{fig:ActionsFunction}
\end{figure}
\begin{figure}[h]
  \centering
  \begin{equation*}
    \tr{\trec{t}{\tin{a}\mathit{add}.\tout{c}\left\{\begin{array}{l}
          \mathit{add}.\tvar{t} \\
          \mathit{sub}.\tvar{t}
        \end{array}\right\}}} = \left[
      \begin{tikzpicture}[thick,baseline={([yshift=-.5ex]current bounding box.center)},sibling distance=2em,style={font=\footnotesize}]
        \node {$\tin{a}$}
        child { node {$\mathit{add}$} }
        child { node {$\tout{c}$}
            child { node {$\mathit{add}$} }
            child { node {$\tin{a}$}
                child { node {\ldots} }
                child { node {\ldots} } }
            child { node {$\mathit{sub}$} }
            child { node {$\tin{a}$}
                child { node {\ldots} }
                child { node {\ldots} } } };
      \end{tikzpicture}
      \right]
  \end{equation*}
  \caption{An example of a session type and its corresponding type tree.}
  \label{fig:TypeTree}
\end{figure}
The \emph{tree refinement relation} $\lesssim$ is defined coinductively on
session types that have only single-inputs (SI) and single-outputs (SO). It is
specified for type trees, which are possibly infinite trees representing a
session type. An example of a type tree is given in \cref{fig:TypeTree} and the
tree refinement relation by \citet{Ghilezan2021} is given in
\cref{fig:TreeRefinement}. The function $\act{W}$, the set of input and output
actions in a tree $W$, is defined in \cref{fig:ActionsFunction}.

Single-input and single-output types are session types which do \emph{not}
include branching, i.e.\ a type generated from the grammar
(\citet{Ghilezan2021}) \(T ::= \tend\ |\ \tin{p}\ell.T\ |\ \tout{p}\ell.T\).

\begin{remark}
  As mentioned in \cite{Ghilezan2021}, checking the set of actions within
  \ruleRefA{} and \ruleRefB{} is important. If this were not included, then
  unsound recursive subtypes that ``forget'' some interactions would be allowed.
  \citet{Ghilezan2021} give the following example of a potential subtype that
  forgets to input an initial $\ell'$ message. If the \ruleRefA{} rule were
  allowed to be used then $T$ would incorrectly be a subtype of $T'$.
  \begin{samepage}
    \begin{gather*}
      T = \tr{\trec{t}{\tin{p}\ell.\tvar{t}}} \qquad T' = \tin{q}\ell'.T = \tin{q}\ell'.\tr{\trec{t}{\tin{p}\ell.\tvar{t}}}
    \end{gather*}
    \begin{prooftree}
      \alwaysDoubleLine
      \AxiomC{$T \leq \tin{q}\ell'.T'$}
      \RightLabel{\ruleRefA}
      \UnaryInfC{$T = \tin{p}\ell.T \leq \tin{q}\ell'.\tin{p}\ell.T = T'$}
    \end{prooftree}
    \smallskip
  \end{samepage}
\end{remark}
%Since $\lesssim$ is only defined for single-input (SI) and single-output (SO)
%type trees, a relation for all session type trees is given by
%\citet{Ghilezan2021} using the rule in \cref{fig:AsynchronousSubtyping}.
%$\so{T}$ is the minimal set of trees containing only single outputs which span
%the tree $T$. $\si{T'}$ is similarly defined as the minimal set of trees
%containing only single inputs which span the tree $T'$. Using existential
%quantifiers for $\si{U}$ and $\so{V'}$ allows external choices to be added and
%internal choices to be removed, as we observed in synchronous subtyping.

\subsubsection{Examples of asynchronous subtyping}
\label{subsec:subtypeex}
\paragraph{Ring protocol.} We show an example of the subtyping rules for a ring
protocol with choice. The projected and optimised local types are given by
$\lt'$ and $\lt$ respectively.
\begin{gather*}
  \lt' = \trec{t}{\tin{a}\mathit{add}.\tout{c}\left\{\begin{array}{l}
      \mathit{add}.\tvar{t} \\
      \mathit{sub}.\tvar{t}
    \end{array}\right\}} \qquad
  \lt = \trec{t}{\tout{c}\left\{\begin{array}{l}
      \mathit{add}.\tin{a}\mathit{add}.\tvar{t} \\
      \mathit{sub}.\tin{a}\mathit{add}.\tvar{t}
    \end{array}\right\}}
\end{gather*}
Considering the tree types $T = \tr{\lt}$ and $T' = \tr{\lt'}$, we must show
that $T \leq T'$ using the tree refinement definition from the main paper in
order to prove that $\lt \leq \lt'$.
\begin{gather*}
		\forall U \in \so{T} :
		\forall V' \in \si{T'} :
		\exists W \in \si{U} :
		\exists W' \in \so{V'} :\\
		W \lesssim W'
\end{gather*}
However, in our case, since $T$ and $T'$ are already SI trees, we can express
the definition more simply using only SO tree transformations.
\begin{equation*}
  \forall W \in \so{T} :
  \exists W' \in \so{T'} :
  W \lesssim W'
\end{equation*}
We  define the SO sets for each tree coinductively and use coinduction to show
that $W \lesssim W'$ in all cases.
\begin{gather*}
  \pi_1 = \tout{c}\mathit{add}.\tin{a}\mathit{add} \qquad
  \pi_2 = \tout{c}\mathit{sub}.\tin{a}\mathit{add} \qquad
  \pi_3 = \tin{a}\mathit{add} \\
  \forall \pi_1.W \in \so{T} : W \in \so{T} \qquad \forall \pi_2.W \in \so{T} : W \in \so{T} \\
  \forall \pi_3.\pi_1.W' \in \so{T'} : \pi_3.W' \in \so{T'} \\ \forall \pi_3.\pi_2.W' \in \so{T'} : \pi_3.W' \in \so{T'}
\end{gather*}
\begin{enumerate}
  \item Using the coinductive hypothesis $\pi_1.W \lesssim \pi_3.\pi_1.W'$, we
        show that $W \lesssim \pi_3.W'$.
        \begin{prooftree}
          \alwaysDoubleLine
          \AxiomC{$W \lesssim \tin{a}\mathit{add}.W' = \pi_3.W'$}
          \RightLabel{\ruleRef{in}}
          \UnaryInfC{$\tin{a}\mathit{add}.W \lesssim \tin{a}\mathit{add}.\tin{a}\mathit{add}.W'$}
          \RightLabel{\ruleRefB}
          \UnaryInfC{$\tout{c}\mathit{add}.\tin{a}\mathit{add}.W \lesssim \tin{a}\mathit{add}.\tout{c}\mathit{add}.\tin{a}\mathit{add}.W'$}
        \end{prooftree}

  \item Using the coinductive hypothesis $\pi_2.W \lesssim \pi_3.\pi_2.W'$, we
        show that $W \lesssim \pi_3.W'$.
        \begin{prooftree}
          \alwaysDoubleLine
          \AxiomC{$W \lesssim \tin{a}\mathit{add}.W' = \pi_3.W'$}
          \RightLabel{\ruleRef{in}}
          \UnaryInfC{$\tin{a}\mathit{add}.W \lesssim \tin{a}\mathit{add}.\tin{a}\mathit{add}.W'$}
          \RightLabel{\ruleRefB}
          \UnaryInfC{$\tout{c}\mathit{sub}.\tin{a}\mathit{add}.W \lesssim \tin{a}\mathit{add}.\tout{c}\mathit{sub}.\tin{a}\mathit{add}.W'$}
        \end{prooftree}
\end{enumerate}
\paragraph{Double buffering protocol.} We also show the optimisation to the
double buffering protocol discussed in \cref{sec:Overview}. The kernel sends
two $\mathit{ready}$ messages at once, allowing the source to fill up both
buffers sooner.
\begin{gather*}
  \lt = \tout{s}\mathit{ready}.\lt' = \tout{s}\mathit{ready}.\trec{x}{\tout{s}\mathit{ready}.\tin{s}\mathit{copy}.\tin{t}\mathit{ready}.\tout{t}\mathit{copy}.\tvar{x}} \\
  \lt' = \trec{x}{\tout{s}\mathit{ready}.\tin{s}\mathit{copy}.\tin{t}\mathit{ready}.\tout{t}\mathit{copy}.\tvar{x}}
\end{gather*}
As before, we consider the tree types $T = \tr{\lt}$ and $T' = \tr{\lt'}$ which
are already both SI and SO trees. Therefore, to prove that $\lt \leq \lt'$ we
need only to show that $T \lesssim T'$.
\begin{prooftree}
  \alwaysDoubleLine
  \AxiomC{$T' \leq W$}
  \RightLabel{\ruleRef{in}}
  \UnaryInfC{$\tout{t}\mathit{copy}.T' \leq \tout{t}\mathit{copy}.W$}
  \RightLabel{\ruleRef{in}}
  \UnaryInfC{$\tin{t}\mathit{ready}.\tout{t}\mathit{copy}.T' \leq \tin{t}\mathit{ready}.\tout{t}\mathit{copy}.W$}
  \RightLabel{\ruleRef{in}}
  \UnaryInfC{$\tin{s}\mathit{copy}.\tin{t}\mathit{ready}.\tout{t}\mathit{copy}.T' \leq \tin{s}\mathit{copy}.\tin{t}\mathit{ready}.\tout{t}\mathit{copy}.W$}
  \RightLabel{\ruleRefB}
  \UnaryInfC{$\tout{s}\mathit{ready}.W \leq \tin{s}\mathit{copy}.\tin{t}\mathit{ready}.\tout{t}\mathit{copy}.\tout{s}\mathit{ready}.W$}
  \RightLabel{\ruleRef{out}}
  \UnaryInfC{$T = \tout{s}\mathit{ready}.T' \leq \tout{s}\mathit{ready}.W = T'$}
\end{prooftree}
\begin{equation*}
  W = \tin{s}\mathit{copy}.\tin{t}\mathit{ready}.\tout{t}\mathit{copy}.T'
\end{equation*}

\subsection{Proofs for the Subtyping Algorithm}
\label{sec:Proofs}

\setcounter{theorem}{2}

\begin{lemma}
  Given finite prefixes $\pi$ and $\pi'$, $\pair{\pi}{\pi'}$ can be reduced only
  a finite number of times.
\end{lemma}

\begin{proof}
  We prove this in a similar style to \cite[Lemma~10]{Gay2005}, using a
  well-founded relation. We first consider the relation $R =
    \{(\pair{\pi_1}{\pi_1'}, \pair{\pi_2}{\pi_2'}) \mid \pair{\pi_2}{\pi_2'}
    \reduce \pair{\pi_1}{\pi_1'} \}$. We define the function $\fterms{\pi}$,
  which returns the number of terms in $\pi$, such that
  \begin{gather*}
    \fterms{\epsilon}        = 0                             \\
    \fterms{\tout{p}\ell(S)} = 1                             \qquad
    \fterms{\tin{p}\ell(S)}  = 1                             \\
    \fterms{\pi_1.\pi_2}     = \fterms{\pi_1} + \fterms{\pi_2}
  \end{gather*}
  and we define $\fterms{\pair{\pi}{\pi'}}$ such that
  $\fterms{\pair{\pi}{\pi'}} = \pair{\fterms{\pi}}{\fterms{\pi'}}$. We
  also define the lexicographical ordering
  \begin{align*}
     & \pair{m}{n} < \pair{m'}{n'}                                    \\
     & \quad \iff m < m' \textit{ or } (m = m' \textit{ and } n < n')
  \end{align*}
  We next show that reducing a pair of prefixes decrements the number of terms
  in the pair by induction over our reduction rules.
  \begin{equation}
    \label{eqn:PrefixReduction}
    \begin{aligned}
       & (\pair{\pi_1}{\pi_1'}, \pair{\pi_2}{\pi_2'}) \in R                           \\
       & \quad \implies \fterms{\pair{\pi_1}{\pi_1'}} < \fterms{\pair{\pi_2}{\pi_2'}}
    \end{aligned}
  \end{equation}
  \begin{itemize}
    \item Case \ruleRed{i}. We have that $\pi_2 = \tin{p}\ell(S).\pi_1$ and
          $\pi'_2 = \tin{p}\ell(S').\pi'_1$.

          Since $\pair{\pi_2}{\pi_2'} \reduce \pair{\pi_1}{\pi_1'}$,
	  we have that, by definition of \(R\), $(\pair{\pi_1}{\pi_1'},
	  \pair{\pi_2}{\pi_2'}) \in R$.

          As we know that $\fterms{\pi_2} = \fterms{\pi_1} + 1$ and
          $\fterms{\pi'_2} = \fterms{\pi'_1} + 1$, we also have that
          \[\fterms{\pair{\pi_1}{\pi_1'}} < \fterms{\pair{\pi_2}{\pi_2'}}\] as
          required.

    \item Case \ruleRed{o}. We have that $\pi_2 = \tout{p}\ell(S).\pi_1$
          and $\pi'_2 = \tout{p}\ell(S').\pi'_1$.

          Since $\pair{\pi_2}{\pi_2'} \reduce \pair{\pi_1}{\pi_1'}$, we have
	  that, by definition of $R$, $(\pair{\pi_1}{\pi_1'},
	  \pair{\pi_2}{\pi_2'}) \in R$.

          As we know that $\fterms{\pi_2} = \fterms{\pi_1} + 1$ and
          $\fterms{\pi'_2} = \fterms{\pi'_1} + 1$, we also have that
          \[\fterms{\pair{\pi_1}{\pi_1'}} < \fterms{\pair{\pi_2}{\pi_2'}}\] as
          required.

    \item Case \ruleRedA. We have that $\pi_2 = \tin{p}\ell(S).\pi_1$ and
          $\pi'_2 = \refa{p}.\tin{p}\ell(S').\pi'_1$.

          Since $\pair{\pi_2}{\pi_2'} \reduce \pair{\pi_1}{\refa{p}.\pi_1'}$,
	  by definition of $R$,
          we have that $(\pair{\pi_1}{\refa{p}.\pi_1'}, \pair{\pi_2}{\pi_2'}) \in R$
          .

          As we know $\fterms{\pi_2} = \fterms{\pi_1} + 1$ and $\fterms{\pi'_2}
            = \fterms{\refa{p}.\pi'_1} + 1$, we have that
          $\fterms{\pair{\pi_1}{\refa{p}.\pi_1'}} <
            \fterms{\pair{\pi_2}{\pi_2'}}$ as required.

    \item Case \ruleRedB. We have that $\pi_2 = \tout{p}\ell(S).\pi_1$ and
          $\pi'_2 = \refb{p}.\tout{p}\ell(S').\pi'_1$.

          Since $\pair{\pi_2}{\pi_2'} \reduce \pair{\pi_1}{\refb{p}.\pi_1'}$,
	  by definition of $R$,
          we have that $(\pair{\pi_1}{\refb{p}.\pi_1'}, \pair{\pi_2}{\pi_2'})\in R$
          .

          As we know $\fterms{\pi_2} = \fterms{\pi_1} + 1$ and $\fterms{\pi'_2}
            = \fterms{\refb{p}.\pi'_1} + 1$, we have that
          $\fterms{\pair{\pi_1}{\refb{p}.\pi_1'}} <
            \fterms{\pair{\pi_2}{\pi_2'}}$ as required.
  \end{itemize}
  The ordering of terms of pairs is well-founded since both components are
  bounded from below by 0. Therefore, from \cref{eqn:PrefixReduction}, $R$ is
  also well-founded and so pairs of prefixes cannot be reduced ad infinitum.
\end{proof}
\begin{figure*}
  \begin{minipage}{\textwidth}
    \centering
    \begin{prooftree}
      \AxiomC{$\Sigma_1\map{\pair{\tin{p}\ell}{\lt} \leq \pair{\tin{q}\ell'}{\lt}} = \tin{p}\ell$}
      \AxiomC{$\act{\tin{p}\ell.\tend} \not\supseteq \act{\tin{q}\ell'.\tend}$}
      \RightLabel{\ruleAlg{asm}}
      \BinaryInfC{$\tin{p}\ell.\tin{p}\ell; \Sigma_3 \vdash \triple{\tin{p}\ell}{\lt}{0} \leq \triple{\tin{q}\ell'}{\lt}{1}$}
      \RightLabel{\ruleAlg{sub}}
      \UnaryInfC{$\tin{p}\ell.\tin{p}\ell; \Sigma_3 \vdash \triple{\tin{p}\ell.\tin{p}\ell}{\lt}{0} \leq \triple{\tin{q}\ell'.\tin{p}\ell}{\lt}{1}$}
      \RightLabel{\ruleAlg{ii}}
      \UnaryInfC{$\tin{p}\ell; \Sigma_3 \vdash \triple{\tin{p}\ell}{\tin{p}\ell.\lt}{0} \leq \triple{\tin{q}\ell'}{\tin{p}\ell.\lt}{1}$}
      \RightLabel{\ruleAlg{\(\mu\)r}}
      \UnaryInfC{$\tin{p}\ell; \Sigma_2 \vdash \triple{\tin{p}\ell}{\tin{p}\ell.\lt}{1} \leq \triple{\tin{q}\ell'}{\lt}{2}$}
      \RightLabel{\ruleAlg{\(\mu\)l}}
      \UnaryInfC{$\tin{p}\ell; \Sigma_1 \vdash \triple{\tin{p}\ell}{\lt}{1} \leq \triple{\tin{q}\ell'}{\lt}{2}$}
      \RightLabel{\ruleAlg{ii}}
      \UnaryInfC{$\epsilon; \Sigma_1 \vdash \triple{\epsilon}{\tin{p}\ell.\lt}{1} \leq \triple{\epsilon}{\tin{q}\ell'.\lt}{2}$}
      \RightLabel{\ruleAlg{\(\mu\)l}}
      \UnaryInfC{$\epsilon; \nil \vdash \triple{\epsilon}{\lt}{2} \leq \triple{\epsilon}{\tin{q}\ell'.\lt}{2}$}
    \end{prooftree}
    \begin{gather*}
      \Sigma_1 = \map{\pair{\epsilon}{\lt} \leq \pair{\epsilon}{\lt'} \mapsto \epsilon}                              \quad
      \Sigma_2 = \Sigma_1\map{\pair{\tin{p}\ell}{\lt} \leq \pair{\tin{q}\ell'}{\lt} \mapsto \tin{p}\ell}             \\
      \Sigma_3 = \Sigma_2\map{\pair{\tin{p}\ell}{\tin{p}\ell.\lt} \leq \pair{\tin{q}\ell'}{\lt} \mapsto \tin{p}\ell}
    \end{gather*}
  \end{minipage}
  \caption{Demonstration of how our algorithm correctly prevents actions from being forgotten.}
  \label{fig:AlgorithmForgottenActions}
\end{figure*}

\begin{theorem}[Termination]
  Our subtyping algorithm always eventually terminates.
\end{theorem}

\begin{proof}
  We prove termination by arguing that each of our subtyping algorithm rules
  can be run only a finite number of times.
  \begin{itemize}
    \item \ruleAlg{end} and \ruleAlg{asm} can each be run only once since
          they are terminating rules.

    \item \ruleAlg{sub} can be run only a finite number of times since a
          pair of prefixes can be reduced only a finite number of times, as
          proven in Lemma~3.

    \item \ruleAlg{oi}, \ruleAlg{oo}, \ruleAlg{ii} and
          \ruleAlg{io} can be run only a finite number of times since
          \begin{enumerate*}[label=\textbf{(\arabic*)}]
            \item the number of terms in $\lt$ and $\lt'$ is finite and
            \item recursion must be explicitly unrolled with
            \ruleAlg{\(\mu\)l} or \ruleAlg{\(\mu\)r}, which themselves can only
            be run a finite number of times.
          \end{enumerate*}

    \item \ruleAlg{\(\mu\)l} and \ruleAlg{\(\mu\)r} can be run only a finite
          number of times since
          \begin{enumerate*}[label=\textbf{(\arabic*)}]
            \item the bounds $n$ and $n'$ are finite;
            \item each execution of the rule decrements $n$ or $n'$
            respectively; and
            \item no rule allows $n$ or $n'$ to be incremented.
          \end{enumerate*}

    \item \ruleAlg{tra} can be run only a finite number of times since
          \begin{enumerate*}[label=\textbf{(\arabic*)}]
            \item the bound $k$ is finite;
            \item each execution of the rule decrements $k$; and
            \item no rule allows $k$ to be incremented.
          \end{enumerate*}
  \end{itemize}
\end{proof}

\begin{theorem}[Soundness]
  Our subtyping algorithm is sound.
\end{theorem}

\begin{proof}
  To prove that our algorithm is sound, we must show that each rule in the
  precise subtyping by \cite{Ghilezan2021} is matched by a rule in our
  algorithm.
  \begin{itemize}
    \item The subtyping relation rule is
          matched by \ruleAlg{oi}, \ruleAlg{oo}, \ruleAlg{ii} and
          \ruleAlg{io}.

    \item \ruleRef{end} is matched by \ruleAlg{end}. Since both prefixes are
          empty and $\lt = \lt' = \tend$, we trivially have that $\pi.\lt
            \leq \pi'.\lt'$.

    \item The coinductive behaviour of the refinement relation rules is
          matched by \ruleAlg{asm}. From our map of assumptions $\Sigma$, we
          have that $\pi.\lt \leq \pi'.\lt'$.

    \item \ruleRef{in} and \ruleRef{out} are matched by \ruleRed{i} and
          \ruleRed{o} respectively, which can be applied with
          \ruleAlg{sub}.

    \item \ruleRefA{} and \ruleRefB{} are matched by \ruleRedA{} and
          \ruleRedB{} respectively, which can be applied with \ruleAlg{sub}.
          To ensure that actions are not forgotten when using coinduction,
          the refinement relation rules additionally require that the
          actions of the resulting type trees are equal.
          Our reduction rules emit
          this check since they deal only with finite sequences. We instead
          prevent actions from being forgotten in our \ruleAlg{asm} rule by
          ensuring that each reordered action in the supertype (which is
          contained in $\pi'$) is encountered in the recursive part of the
          subtype ($\rho'$) by using the subset relation.

          \citet{Ghilezan2021} use the example of checking $\lt =
            \trec{t}{\tin{p}\ell.\tvar{t}} \leq \tin{q}\ell'.\lt = \lt'$
          to demonstrate why comparing the actions is necessary---we
          show how our algorithm also correctly rejects this subtype in
          \cref{fig:AlgorithmForgottenActions} (with a bound of 2 for
          brevity).

          We cannot apply \ruleAlg{asm} as the final rule since
          \[\act{\tin{p}\ell.\tend} = \{\tin{p}\} \not\supseteq \{\tin{q}\} =
            \act{\tin{q}\ell'.\tend}.\] This subset check spots that the
          $\tin{q}$ action was not present in the recursive part of the
          supposed subtype. Otherwise, our algorithm would incorrectly
          conclude that $\lt \leq \lt'$.
  \end{itemize}
  It is straightforward to also argue that our algorithm preserves
  reflexivity. We have that $\triple{\epsilon}{\lt}{n} \leq
    \triple{\epsilon}{\lt}{n}$, providing $n$ is sufficiently large to ensure
  that each recursion can be visited at least once. The prefixes of both sides
  will always be identical so will reduce to $\pair{\epsilon}{\epsilon}$ using
  \ruleRed{i} or \ruleRed{o}. These reductions will be applied using
  \ruleAlg{sub} until either $\tend$ is encountered and \ruleAlg{end} can be
  applied or the algorithm loops, which allows the application of
  \ruleAlg{asm}.
\end{proof}

\begin{lemma}
  Given finite prefixes $\pi$ and $\pi'$, the time complexity of reducing
  $\pair{\pi}{\pi'}$ is $\oh(\min(\tsize{\pi}, \tsize{\pi'}))$.
\end{lemma}

\begin{proof}
  We prove this quite simply by induction over our reduction rules.
  \begin{itemize}
    \item Inductive case $\pair{\tin{p}\ell.\pi}{\tin{p}\ell.\pi'}$. Then, we
          can perform a \ruleRed{i} reduction to get $\pair{\pi}{\pi'}$. By the
          inductive hypothesis, the complexity of reducing $\pair{\pi}{\pi'}$ is
          $\oh(\min(\tsize{\pi}, \tsize{\pi'}))$. Therefore, since we require
          one additional reduction step, the complexity of reducing
          $\pair{\tin{p}\ell.\pi}{\tin{p}\ell.\pi'}$ is $\oh(\min(\tsize{\pi},
              \tsize{\pi'}) + 1) = \oh(\min(\tsize{\pi} + 1, \tsize{\pi'} + 1)) =
            \oh(\min(\tsize{\tin{p}\ell.\pi}, \tsize{\tin{p}\ell.\pi'}))$ as
          required.

    \item Inductive case $\pair{\tout{p}\ell.\pi}{\tout{p}\ell.\pi'}$. The proof
          is the same as for the $\pair{\tin{p}\ell.\pi}{\tin{p}\ell.\pi'}$
          case, except a \ruleRed{o} reduction is applied.

    \item Inductive case $\pair{\tin{p}\ell.\pi}{\refa{p}.\tin{p}\ell.\pi'}$.
          Then, we can perform a \ruleRedA{} reduction to get
          $\pair{\pi}{\refa{p}.\pi'}$. By the inductive hypothesis, the
          complexity of reducing $\pair{\pi}{\refa{p}.\pi'}$ is
          $\oh(\min(\tsize{\pi}, \tsize{\refa{p}.\pi'}))$. Therefore, since we
          require one additional reduction step, the complexity of reducing
          $\pair{\tin{p}\ell.\pi}{\refa{p}.\tin{p}\ell.\pi'}$ is
          $\oh(\min(\tsize{\pi}, \tsize{\refa{p}.\pi'}) + 1) =
            \oh(\min(\tsize{\pi} + 1, \tsize{\refa{p}.\pi'} + 1)) =
            \oh(\min(\tsize{\tin{p}\ell.\pi}, \tsize{\refa{p}.\tin{p}\ell.\pi'}))$
          as required.

    \item Inductive case $\pair{\tout{p}\ell.\pi}{\refb{p}.\tout{p}\ell.\pi'}$.
          The proof is the same as for the
          $\pair{\tin{p}\ell.\pi}{\refa{p}.\tin{p}\ell.\pi'}$ case, except a
          \ruleRedB{} reduction is applied.

    \item Base case $\pair{\pi}{\pi'}$ where $\pair{\pi}{\pi'}$ cannot be
          reduced. Since $\forall \pi . \tsize{\pi} > 0$, the complexity of
          reducing $\pair{\pi}{\pi'}$ is $\oh(0) = \oh(\min(\tsize{\pi},
              \tsize{\pi'}))$ as required.
  \end{itemize}
\end{proof}

\begin{theorem}[Complexity]
  Consider $\lt$ and $\lt'$ as (possibly infinite) trees $\tr{\lt}$ and
  $\tr{\lt'}$ with \emph{asymptotic branching factors} $b$ and $b'$
  respectively~\cite{Edelkamp1998,Korf1985}. Our algorithm has time complexity
  $\oh(n\min(b, b')^n)$ and space complexity $\oh(n\min(b, b'))$ in the worst
  case to determine if $\lt \leq \lt'$ with bound $n$.
\end{theorem}

\begin{proof}
  Let us consider for now only the left tree $\tr{\lt}$ which has asymptotic
  branching factor $b$. In the worst case, the number of nodes we have to
  explore in the tree is
  \begin{equation*}
    1 + b + b^2 + \ldots + b^{n - 1} = \sum_{i = 0}^{n - 1} b^i = \frac{b^n - 1}{b - 1}
  \end{equation*}
  Therefore, exploring the left tree up to a depth of $n$ has time complexity
  $\oh(b^n)$. Similarly, exploring the right tree $\tr{\lt'}$ to depth $n$ has
  complexity $\oh(b'^n)$. Note that in the worst case we cannot reduce any
  prefixes as we go along and must therefore do the entire reduction at the end
  of each exploration path.

  Suppose at the end of some exploration path we have the pair of prefixes
  $\pair{\pi}{\pi'}$. Since $\tsize{\pi} = \tsize{\pi'}$ (our algorithm does not
  add an uneven number of terms to either side of the prefix pair), the time
  complexity of reducing this pair is $\oh(\min(\tsize{\pi}, \tsize{\pi'})) =
    \oh(\tsize{\pi})$ from \cref{thm:PrefixesComplexity}.

  At the end of our exploration, we will have $b^{n - 1}$ prefixes, each with
  size $n$. Therefore, the complexity of reducing all the pairs of prefixes is
  $\oh(nb^{n - 1})$ so the total complexity for the exploration and reduction is
  $\oh(b^n + nb^{n - 1}) = \oh(nb^n)$.

  Our algorithm stops when the bound $n$ is reached in
  \emph{either} of the two trees so the overall time complexity of exploration
  is $\oh(\min(nb^n, nb'^n)) = \oh(n\min(b, b')^n)$.

  Considering space complexity, it is clear that the greatest amount of memory
  will be required at the end of an exploration path (when the prefixes are
  greatest in length). For the left tree, at this point, we will need to store
  \begin{enumerate*}[label=\textbf{(\arabic*)}]
    \item a prefix of length $n$ (since we are considering the worst case); and
    \item the other $b$ siblings to visit at $n - 1$ levels.
  \end{enumerate*}

  Therefore, the total space complexity for the left tree is $\oh(n + b(n - 1))
    = \oh(nb)$ and the space complexity for exploring both trees is
  $\oh(\min(nb, nb')) = \oh(n\min(b, b'))$.
\end{proof}

\subsection{Algorithm Examples}
\label{subsec:algorthmex}

\paragraph{Ring protocol.} We again use the ring protocol with choice and show
that our algorithm can successfully check the optimisation to \ppt{b}. The
derivation trees are shown in \cref{fig:app:derivation_ring}.
\begingroup
\small
\begin{gather*}
  \lt = \trec{t}{\tout{c}\left\{\begin{array}{l}
      \mathit{add}.\tin{a}\mathit{add}.\tvar{t} \\
      \mathit{sub}.\tin{a}\mathit{add}.\tvar{t}
    \end{array}\right\}} \qquad
  \lt' = \trec{t}{\tin{a}\mathit{add}.\tout{c}\left\{\begin{array}{l}
      \mathit{add}.\tvar{t} \\
      \mathit{sub}.\tvar{t}
    \end{array}\right\}}
\end{gather*}
\endgroup
\begin{figure*}
\begin{prooftree}
  \AxiomC{$(\star)$}
  \AxiomC{$\rho_3; \Sigma_2 \vdash \triple{\tout{c} \mathit{add} . \tin{a} \mathit{add}}{\lt}{0} \leq \triple{\tin{a} \mathit{add} . \tout{c}  \mathit{sub}}{\lt'}{0}$}
  \RightLabel{\ruleAlg{io}}
  \BinaryInfC{$\rho_1; \Sigma_2 \vdash \triple{\tout{c} \mathit{add}}{\tin{a} \mathit{add} . \lt}{0} \leq \triple{\tin{a} \mathit{add}}{\tout{c}\left\{\begin{array}{l}
          \mathit{add}.\lt' \\
          \mathit{sub}.\lt'
        \end{array}\right\}}{0}$}
  \AxiomC{$(\dagger)$}
  \AxiomC{$\rho_4; \Sigma_2 \vdash \triple{\tout{c} \mathit{sub} . \tin{a} \mathit{add}}{\lt}{0} \leq \triple{\tin{a} \mathit{add} . \tout{c}  \mathit{add}}{\lt'}{0}$}
  \RightLabel{\ruleAlg{io}}
  \BinaryInfC{$\rho_2; \Sigma_2 \vdash \triple{\tout{c} \mathit{sub}}{\tin{a} \mathit{add} . \lt}{0} \leq \triple{\tin{a} \mathit{add}}{\tout{c}\left\{\begin{array}{l}
          \mathit{add}.\lt' \\
          \mathit{sub}.\lt'
        \end{array}\right\}}{0}$}
  \RightLabel{\ruleAlg{oi}}
  \BinaryInfC{$\epsilon; \Sigma_2 \vdash \triple{\epsilon}{\tout{c}\left\{\begin{array}{l}
          \mathit{add}.\tin{a}\mathit{add}.\lt \\
          \mathit{sub}.\tin{a}\mathit{add}.\lt
        \end{array}\right\}}{0} \leq \triple{\epsilon}{\tin{a}\mathit{add}.\tout{c}\left\{\begin{array}{l}
          \mathit{add}.\lt' \\
          \mathit{sub}.\lt'
        \end{array}\right\}}{0}$}
  \RightLabel{\ruleAlg{\(\mu\)r}}
  \UnaryInfC{$\epsilon; \Sigma_1 \vdash \triple{\epsilon}{\tout{c}\left\{\begin{array}{l}
          \mathit{add}.\tin{a}\mathit{add}.\lt \\
          \mathit{sub}.\tin{a}\mathit{add}.\lt
        \end{array}\right\}}{0} \leq \triple{\epsilon}{\lt'}{1}$}
  \RightLabel{\ruleAlg{\(\mu\)l}}
  \UnaryInfC{$\epsilon; \nil \vdash \triple{\epsilon}{\lt}{1} \leq \triple{\epsilon}{\lt'}{1}$}
\end{prooftree}
\begin{equation*}
  (\star)\ =\
  \AxiomC{}
  \RightLabel{\ruleRedB}
  \UnaryInfC{$\pair{\tout{c} \mathit{add} . \tin{a} \mathit{add}}{\tin{a} \mathit{add} . \tout{c}  \mathit{add}} \reduce \pair{\tin{a} \mathit{add}}{\tin{a} \mathit{add}}$}
  \AxiomC{}
  \RightLabel{\ruleRed{i}}
  \UnaryInfC{$\pair{\tin{a} \mathit{add}}{\tin{a} \mathit{add}} \reduce \pair{\epsilon}{\epsilon}$}
  \AxiomC{$\act{\rho_3.\tend} \supseteq \act{\tend}$}
  \RightLabel{\ruleAlg{asm}}
  \UnaryInfC{$\rho_3; \Sigma_2 \vdash \triple{\epsilon}{\lt}{0} \leq \triple{\epsilon}{\lt'}{0}$}
  \RightLabel{\ruleAlg{sub}}
  \BinaryInfC{$\rho_3; \Sigma_2 \vdash \triple{\tin{a} \mathit{add}}{\lt}{0} \leq \triple{\tin{a} \mathit{add}}{\lt'}{0}$}
  \RightLabel{\ruleAlg{sub}}
  \BinaryInfC{$\rho_3; \Sigma_2 \vdash \triple{\tout{c} \mathit{add} . \tin{a} \mathit{add}}{\lt}{0} \leq \triple{\tin{a} \mathit{add} . \tout{c}  \mathit{add}}{\lt'}{0}$}
  \DisplayProof
\end{equation*}
\begin{equation*}
  (\dagger)\ =\
  \AxiomC{}
  \RightLabel{\ruleRedB}
  \UnaryInfC{$\pair{\tout{c} \mathit{sub} . \tin{a} \mathit{add}}{\tin{a} \mathit{add} . \tout{c}  \mathit{sub}} \reduce \pair{\tin{a} \mathit{add}}{\tin{a} \mathit{add}}$}
  \AxiomC{}
  \RightLabel{\ruleRed{i}}
  \UnaryInfC{$\pair{\tin{a} \mathit{add}}{\tin{a} \mathit{add}} \reduce \pair{\epsilon}{\epsilon}$}
  \AxiomC{$\act{\rho_4.\tend} \supseteq \act{\tend}$}
  \RightLabel{\ruleAlg{asm}}
  \UnaryInfC{$\rho_4; \Sigma_2 \vdash \triple{\epsilon}{\lt}{0} \leq \triple{\epsilon}{\lt'}{0}$}
  \RightLabel{\ruleAlg{sub}}
  \BinaryInfC{$\rho_4; \Sigma_2 \vdash \triple{\tin{a} \mathit{add}}{\lt}{0} \leq \triple{\tin{a} \mathit{add}}{\lt'}{0}$}
  \RightLabel{\ruleAlg{sub}}
  \BinaryInfC{$\rho_4; \Sigma_2 \vdash \triple{\tout{c} \mathit{sub} . \tin{a} \mathit{add}}{\lt}{0} \leq \triple{\tin{a} \mathit{add} . \tout{c}  \mathit{sub}}{\lt'}{0}$}
  \DisplayProof
\end{equation*}
\begin{gather*}
  \rho_1 = \tout{c} \mathit{add} \qquad
  \rho_2 = \tout{c} \mathit{sub} \qquad
  \rho_3 = \rho_1 . \tin{a} \mathit{add} \qquad
  \rho_4 = \rho_2 . \tin{a} \mathit{add} \\
  \Sigma_1 = \map{\pair{\epsilon}{\lt} \leq \pair{\epsilon}{\lt'} \mapsto \epsilon} \qquad
  \Sigma_2 = \Sigma_1\map{\pair{\epsilon}{\tout{c}\left\{\begin{array}{l}
        \mathit{add}.\tin{a}\mathit{add}.\lt \\
        \mathit{sub}.\tin{a}\mathit{add}.\lt
      \end{array}\right\}} \leq \pair{\epsilon}{\lt'} \mapsto \epsilon}
\end{gather*}
\caption{Derivation trees to verify the subtyping of the Ring protocol}
\label{fig:app:derivation_ring}
\end{figure*}

\paragraph{Alternating bit protocol.} We consider the alternating bit
protocol \cite{AlternatingBit}. We
construct a global type $\gt$ for the protocol such that when projected onto
the receiver, its local type matches the protocol specification.

\begingroup
\small
\begin{equation*}
  \gt = \gtrec{t}\gtmsg{s}{r}{\mathit{d0} . \gtmsg{r}{s}{\begin{array}{l}
        \mathit{a0} . \gtrec{u}\gtmsg{s}{r}{\mathit{d1} . \gtmsg{r}{s}{\begin{array}{l}
              \mathit{a0} . \gtvar{u} \\
              \mathit{a1} . \gtvar{t}
            \end{array}}} \\
        \mathit{a1} . \gtvar{t}
      \end{array}}}
\end{equation*}
\begin{equation*}
  \gtproj{\gt}{s} = \trec{t}{\tout{r} \mathit{d0} . \tin{r} \left\{\begin{array}{l}
      \mathit{a0} . \trec{x}{\tout{r} \mathit{d1} . \tin{r} \left\{\begin{array}{l}
          \mathit{a0} . \tvar{x} \\
          \mathit{a1} . \tvar{t}
        \end{array}\right\}} \\
      \mathit{a1} . \tvar{t}
    \end{array}\right\}} \qquad
\end{equation*}
\begin{equation*}
  \gtproj{\gt}{r} = \trec{t}{\tin{s} \mathit{d0} . \tout{s} \left\{\begin{array}{l}
      \mathit{a0} . \trec{x}{\tin{s} \mathit{d1} . \tout{s} \left\{\begin{array}{l}
          \mathit{a0} . \tvar{x} \\
          \mathit{a1} . \tvar{t}
        \end{array}\right\}} \\
      \mathit{a1} . \tvar{t}
    \end{array}\right\}}
\end{equation*}
\begin{gather*}
  \lt = \trec{t}{\tin{s} \left\{\begin{array}{l}
      \mathit{d0} . \tout{s} \mathit{a0} . \tvar{t} \\
      \mathit{d1} . \tout{s} \mathit{a1} . \tvar{t}
    \end{array}\right\}} \qquad
  \lt' = \gtproj{\gt}{r} \\
\end{gather*}
\endgroup
We then use our subtyping algorithm to confirm that the type given by the
protocol specification for the receiver \cite{AlternatingBit} is a subtype
of its projected version. In this derivation, we omit some exploration paths
for brevity. The derivation tree is in~\cref{fig:app:derivation_AB}.

\begin{figure*}
\begingroup
\small
\begin{prooftree}
  \AxiomC{$\act{\tin{s} \mathit{d0} . \tout{s} \mathit{a0} . \tend} \supseteq \act{\tend}$}
  \RightLabel{\ruleAlg{asm}}
  \UnaryInfC{$\tin{s} \mathit{d0} . \tout{s} \mathit{a0}; \Sigma_4 \vdash \triple{\epsilon}{\lt}{0} \leq \triple{\epsilon}{\lt'}{0}$}
  \RightLabel{\ruleAlg{sub}}
  \UnaryInfC{$\tin{s} \mathit{d0} . \tout{s} \mathit{a0}; \Sigma_4 \vdash \triple{\tout{s} \mathit{a1}}{\lt}{0} \leq \triple{\tout{s} \mathit{a1}}{\lt'}{0}$}
  \AxiomC{\ldots}
  \RightLabel{\ruleAlg{out-out}}
  \BinaryInfC{$\tin{s} \mathit{d0} . \tout{s} \mathit{a0}; \Sigma_4 \vdash \triple{\epsilon}{\tout{s} \mathit{a1} . \lt}{0} \leq \triple{\epsilon}{\tout{s} \left\{\begin{array}{l}
          \mathit{a0} . \lt_1 \\
          \mathit{a1} . \lt'
        \end{array}\right\}}{0}$}
  \RightLabel{\ruleAlg{sub}}
  \UnaryInfC{$\tin{s} \mathit{d0} . \tout{s} \mathit{a0}; \Sigma_4 \vdash \triple{\tin{s} \mathit{d1}}{\tout{s} \mathit{a1} . \lt}{0} \leq \triple{\tin{s} \mathit{d1}}{\tout{s} \left\{\begin{array}{l}
          \mathit{a0} . \lt_1 \\
          \mathit{a1} . \lt'
        \end{array}\right\}}{0}$}
  \AxiomC{\ldots}
  \RightLabel{\ruleAlg{in-in}}
  \BinaryInfC{$\tin{s} \mathit{d0} . \tout{s} \mathit{a0}; \Sigma_4 \vdash \triple{\epsilon}{\tin{s} \left\{\begin{array}{l}
          \mathit{d0} . \tout{s} \mathit{a0} . \lt \\
          \mathit{d1} . \tout{s} \mathit{a1} . \lt
        \end{array}\right\}}{0} \leq \triple{\epsilon}{\tin{s} \mathit{d1} . \tout{s} \left\{\begin{array}{l}
          \mathit{a0} . \lt_1 \\
          \mathit{a1} . \lt'
        \end{array}\right\}}{0}$}
  \RightLabel{\ruleAlg{\(\mu\)r}}
  \UnaryInfC{$\tin{s} \mathit{d0} . \tout{s} \mathit{a0}; \Sigma_3 \vdash \triple{\epsilon}{\tin{s} \left\{\begin{array}{l}
          \mathit{d0} . \tout{s} \mathit{a0} . \lt \\
          \mathit{d1} . \tout{s} \mathit{a1} . \lt
        \end{array}\right\}}{0} \leq \triple{\epsilon}{\lt_1}{1}$}
  \RightLabel{\ruleAlg{\(\mu\)l}}
  \UnaryInfC{$\tin{s} \mathit{d0} . \tout{s} \mathit{a0}; \Sigma_2 \vdash \triple{\epsilon}{\lt}{1} \leq \triple{\epsilon}{\lt_1}{1}$}
  \RightLabel{\ruleAlg{sub}}
  \UnaryInfC{$\tin{s} \mathit{d0} . \tout{s} \mathit{a0}; \Sigma_2 \vdash \triple{\tout{s} \mathit{a0}}{\lt}{1} \leq \triple{\tout{s} \mathit{a0}}{\lt_1}{1}$}
  \AxiomC{\ldots}
  \RightLabel{\ruleAlg{out-out}}
  \BinaryInfC{$\tin{s} \mathit{d0}; \Sigma_2\vdash \triple{\epsilon}{\tout{s} \mathit{a0} . \lt}{1} \leq \triple{\epsilon}{\tout{s} \left\{\begin{array}{l}
          \mathit{a0} . \lt_1 \\
          \mathit{a1} . \lt'
        \end{array}\right\}}{1}$}
  \RightLabel{\ruleAlg{sub}}
  \UnaryInfC{$\tin{s} \mathit{d0}; \Sigma_2 \vdash \triple{\tin{s} \mathit{d0}}{\tout{s} \mathit{a0} . \lt}{1} \leq \triple{\tin{s} \mathit{d0}}{\tout{s} \left\{\begin{array}{l}
          \mathit{a0} . \lt_1 \\
          \mathit{a1} . \lt'
        \end{array}\right\}}{1}$}
  \AxiomC{\ldots}
  \RightLabel{\ruleAlg{in-in}}
  \BinaryInfC{$\epsilon; \Sigma_2 \vdash \triple{\epsilon}{\tin{s} \left\{\begin{array}{l}
          \mathit{d0} . \tout{s} \mathit{a0} . \lt \\
          \mathit{d1} . \tout{s} \mathit{a1} . \lt
        \end{array}\right\}}{1} \leq \triple{\epsilon}{\tin{s} \mathit{d0} . \tout{s} \left\{\begin{array}{l}
          \mathit{a0} . \lt_1 \\
          \mathit{a1} . \lt'
        \end{array}\right\}}{1}$}
  \RightLabel{\ruleAlg{\(\mu\)r}}
  \UnaryInfC{$\epsilon; \Sigma_1 \vdash \triple{\epsilon}{\tin{s} \left\{\begin{array}{l}
          \mathit{d0} . \tout{s} \mathit{a0} . \lt \\
          \mathit{d1} . \tout{s} \mathit{a1} . \lt
        \end{array}\right\}}{1} \leq \triple{\epsilon}{\lt'}{2}$}
  \RightLabel{\ruleAlg{\(\mu\)l}}
  \UnaryInfC{$\epsilon; \nil \vdash \triple{\epsilon}{\lt}{2} \leq \triple{\epsilon}{\lt'}{2}$}
\end{prooftree}
\begin{gather*}
  \lt_1 = \trec{x}{\tin{s} \mathit{d1} . \tout{s} \left\{\begin{array}{l}
      \mathit{a0} . \tvar{x} \\
      \mathit{a1} . \lt'
    \end{array}\right\}} \\
  \Sigma_1 = \map{\pair{\epsilon}{\lt} \leq \pair{\epsilon}{\lt'} \mapsto \epsilon} \qquad
  \Sigma_2 = \Sigma_1\map{\pair{\epsilon}{\tin{s} \left\{\begin{array}{l}
        \mathit{d0} . \tout{s} \mathit{a0} . \lt \\
        \mathit{d1} . \tout{s} \mathit{a1} . \lt
      \end{array}\right\}} \leq \pair{\epsilon}{\lt'} \mapsto \epsilon} \\
  \Sigma_3 = \Sigma_2\map{\pair{\epsilon}{\lt} \leq \pair{\epsilon}{\lt_1} \mapsto \tin{s} \mathit{d0} . \tout{s} \mathit{a0}} \qquad
  \Sigma_4 = \Sigma_3\map{\pair{\epsilon}{\tin{s} \left\{\begin{array}{l}
        \mathit{d0} . \tout{s} \mathit{a0} . \lt \\
        \mathit{d1} . \tout{s} \mathit{a1} . \lt
      \end{array}\right\}} \leq \pair{\epsilon}{\lt_1} \mapsto \tin{s} \mathit{d0} . \tout{s} \mathit{a0}}
\end{gather*}
\endgroup
\caption{Derivation trees to verify the subtyping of the Alternating-Bit protocol}
\label{fig:app:derivation_AB}
\end{figure*}

\subsection{Implementation of the Algorithm}
\label{subsec:impalg}
In practice, we implement our asynchronous subtyping algorithm on FSMs $\efsm$
and $\oefsm$ rather than local types $\lt$ and $\lt'$. We discuss the practical
considerations behind some of our implementation decisions and explain why these
are equivalent to the theory presented in \cref{sec:PreciseSubtyping}.

\paragraph{Prefixes.} We define prefixes somewhat differently in Rust to avoid
copying memory where possible. A prefix is a struct containing three elements:
\begin{enumerate}
  \item A list of lazy-removable \code{transitions} which make up the prefix.
        A boolean for each element indicates whether the corresponding
        transition has been lazily removed. A transition is either
        $\tout{p}\ell(S)$ or $\tin{p}\ell(S)$, which is identical to a prefix
        term in the theory.

  \item A \code{start} index, which indicates that the first \code{start}
        elements in \code{transitions} should be ignored as they have been
        lazily removed.

  \item A list of indexes of elements that have been lazily \code{removed} by
        setting their boolean to \code{true}.
\end{enumerate}

\noindent\begin{minipage}{\linewidth}
  \begin{lstlisting}[language=Rust, aboveskip=\baselineskip, belowskip=\baselineskip]
struct Prefix {
  transitions: Vec<(bool, Transition)>,
  start: usize,
  removed: Vec<usize>,
}
\end{lstlisting}
\end{minipage}
Elements can be lazily removed either by incrementing \code{start} or by setting
the element's boolean to \code{true} and adding its index to \code{removed}. We
favour the first option so as to maintain the invariant
\begin{equation*}
  \code{transitions.len() > 0} \implies \code{!transitions[0].0}
\end{equation*}
where the tuple indexing syntax \code{(x, y).0} will evaluate to \code{x}, the
first element of the tuple. To ensure that this invariant holds, we must advance
\code{start} as far as possible when removing a transition at the head of the
prefix.

We also give the option of storing snapshots to previous versions of a prefix. A
snapshot stores
\begin{enumerate*}[label=\textbf{(\arabic*)}]
  \item the \code{size} of the transitions list;
  \item the value of the \code{start} field; and
  \item the size of the \code{removed} list,
\end{enumerate*}
all taken at the time of the snapshot.

\noindent\begin{minipage}{\linewidth}
  \begin{lstlisting}[language=Rust, aboveskip=\baselineskip, belowskip=\baselineskip]
struct Snapshot {
  size: usize,
  start: usize,
  removed: usize,
}
\end{lstlisting}
\end{minipage}
We can easily revert a prefix to a previous snapshot by
\begin{enumerate*}[label=\textbf{(\arabic*)}]
  \item finding the elements of \code{removed} that have been added since the
  snapshot;
  \item setting the boolean to \code{false} for each of these elements to
  restore them;
  \item truncating \code{transitions} to its previous \code{size};
  \item restoring \code{start} to its previous value; and
  \item truncating \code{removed} to its previous size.
\end{enumerate*}

\paragraph{Visitor.} We use the visitor pattern \cite{Palsberg1998} to traverse
a pair of FSMs $\efsm$ and $\oefsm$. In our visitor, we store
\begin{enumerate*}[label=\textbf{(\arabic*)}]
  \item the \code{fsms} we are traversing;
  \item a matrix of \code{history} (as we will see, this is equivalent to the
  assumptions map $\Sigma$ in the theory); and
  \item a pair of \code{prefixes}, as in the theory.
\end{enumerate*}

\noindent\begin{minipage}{\linewidth}
  \begin{lstlisting}[language=Rust, aboveskip=\baselineskip, belowskip=\baselineskip]
struct SubtypeVisitor {
  fsms: Pair<Fsm>,
  history: Matrix<Previous>,
  prefixes: Pair<Prefix>,
}
\end{lstlisting}
\end{minipage}
The \code{history} matrix stores a value for each combination of states in
$\efsm$ and $\oefsm$ (it effectively has the type $\lvert \efsm \rvert \times
  \lvert \oefsm \rvert \reduce \code{Previous}$). Each of these values stores
a \code{Previous} struct containing the number of \code{visits} this combination
of states has remaining and optionally (if it has been visited before) a pair of
\code{snapshots} taken during the last visit to this combination.

\noindent\begin{minipage}{\linewidth}
  \begin{lstlisting}[language=Rust, aboveskip=\baselineskip, belowskip=\baselineskip]
struct Previous {
  visits: usize,
  snapshots: Option<Pair<Snapshot>>,
}
\end{lstlisting}
\end{minipage}
In the theory, termination is guaranteed by allowing recursions to be unrolled
only $n$ times. Here, our `$n$' is the value of \code{visits}, which limits how
many times the same combination of states can be visited. Since $\efsm$ and
$\oefsm$ each contain a finite number of states and their cross product is also
finite, this will achieve termination just as in the theory (provided that $n$
is also finite). Otherwise, this is identical to the theory---our \code{history}
matrix corresponds to the map of assumptions $\Sigma$ and the \code{Previous}
struct represents a single mapping (we use snapshots in place of prefixes).

Each state in an \code{Fsm} is given a unique \code{StateIndex} that identifies
it. Our \code{Visitor} is executed using its recursive \code{visit} method,
which takes a mutable reference to the \code{Visitor} and a \code{StateIndex}
for each \code{Fsm}.

\noindent\begin{minipage}{\linewidth}
  \begin{lstlisting}[language=Rust, aboveskip=\baselineskip, belowskip=\baselineskip]
impl Visitor {
  fn visit(&mut self, states: Pair<StateIndex>) -> bool {
    [...]
  }
}
\end{lstlisting}
\end{minipage}
This \code{visit} method performs our asynchronous subtyping algorithm as
follows.
\begin{enumerate}
  \item We look up the current combination states in our \code{history} to
        ensure \code{visits} is positive, as in \ruleAlg{\(\mu\)l} and
        \ruleAlg{\(\mu\)r}. If it is not then our bound has been exhausted and we
        return with \code{false}.

  \item We attempt to reduce the pair of prefixes, as in \ruleAlg{sub}. This
        reduction process follows precisely the same rules as in the theory,
        lazily removing transitions where appropriate.

  \item If the current combination of states has been visited before, we
        attempt to use our assumptions map to return \code{true}, as in
        \ruleAlg{asm}. The method we use to check the actions sets, explained
        below, differs slightly from the theory.

  \item If both FSMs are in a \emph{terminal} state and the prefixes are
        empty then we return \code{true}, as in \ruleAlg{end}.

  \item If both FSMs are in a \emph{non-terminal} state then we
        \begin{itemize}
          \item take a snapshot of the current prefixes;
          \item update the \code{history} matrix for the current combination
                of states, setting \code{visits} to \code{visits - 1} and
                \code{snapshots} to the snapshots we just took;
          \item for each pair of transitions we can take from the current
                combination of states we
                \begin{itemize}
                  \item add each transition in the pair to its
                        corresponding prefix;
                  \item recurse using the \code{visit} method, setting the
                        \code{states} argument to the pair of end states
                        corresponding to our transitions; and after the
                        recursive call returns
                  \item revert the changes made to the prefixes by using
                        the snapshot we took previously;
                \end{itemize}
          \item restore the current \code{history} matrix entry to its
                original value; and
          \item return a value depending on the results of the recursive
                calls and whether the current combination of states performs
                input or output actions, as described by the quantifiers in
                \ruleAlg{\{in,out\}-\{in,out\}}.
        \end{itemize}

  \item Otherwise, one of the FSMs has reached a terminal state but the other
        has not. In this case, there is no way to progress and we return
        \code{false}.
\end{enumerate}
By performing a depth-first search we can make changes to the \code{history} and
\code{prefixes} fields of our visitor and revert them later, using a snapshot
for each prefix. This method improves the efficiency of our algorithm
by avoiding copying memory. If we instead used a
breadth-first search, for instance, we would need to store a separate visitor
for each frontier of our search. This would require an expensive copy of the
\code{history} and \code{prefixes}.

\paragraph{Checking actions.} In the theory, our \ruleAlg{asm} rule compares two
sets of actions to ensure that it is safe to apply an assumption. Specifically,
it checks that the actions of the supposed supertype's prefix ($\pi'$) are a
subset of the actions performed by the subtype since the assumption was made
($\rho'$). In our algorithm, we can actually perform a far cheaper but
equivalent check thanks to our use of lazy removal. We need only to confirm that
\begin{equation}
  \label{eqn:ActionCheck}
	\begin{aligned}
		&\code{transitions[start..] == }\\
		&\code{transitions[..snapshot.size][snapshot.start..]}
	\end{aligned}
\end{equation}
for each prefix/snapshot combination. The syntax \code{x[i..]} evaluates to
\code{x} with the first \code{i} elements removed and \code{y[..j]} evaluates to
the first \code{j} elements of \code{y}. Surprisingly, this check is identical
in effect to the one performed in the theory due to two observations.
\begin{enumerate}
  \item Comparing the full list of transitions (which include labels and
        sorts) rather than only their actions is sound since the reduction
        rules do not allow sends or receives to or from the same participant
        to be reordered.

        We can easily prove this by contradiction. Suppose $\tin{p}\ell(S) \in
          \pi'$ and $\tin{p}\ell'(S') \in \rho'$ and we can apply
        \ruleAlg{asm}. Clearly, $\tin{p}\ell(S)$ has not been reduced by
        \ruleRef{in}, otherwise, it would not still be in $\pi'$.
        Therefore, at some point since the assumption was added to
        $\Sigma$, \ruleRefA{} must have been used to move $\tin{p}\ell(S)$
        before $\tin{p}\ell'(S')$. This is a contradiction because
        $\refa{p}$ cannot contain $\tin{p}\ell'(S')$ by definition so
        \ruleRefA{} cannot have been applied. A similar argument can be
        made for the output case.

  \item The version in the theory is intuitively checking whether there is an
        action that `hangs on' to the far left of $\pi'$ for multiple
        iterations of a recursive type without ever being reduced. If this is
        the case, then the action will not be matched by any of the actions in
        $\rho'$ (otherwise it would have been reduced) so $\rho' \not\supseteq
          \pi'$.

        In our implementation, if an action hangs on to the supertype's prefix
        then it will never be lazily removed. This means that the size of the
        prefix will grow on each iteration of the FSM since \code{start} is
        never advanced. Since
        \begin{equation*}
		\begin{aligned}
			&\code{transitions[start..].len() !=}\\
			&\code{transitions[..snapshot.size][snapshot.start..].len()}
		\end{aligned}
        \end{equation*}
        \cref{eqn:ActionCheck} is trivially false. Note that the full check in
        \cref{eqn:ActionCheck} must be performed, rather than only comparing
        the lengths, to ensure that the prefixes do actually match those of
        the assumption, as in \ruleAlg{asm}.
\end{enumerate}

\paragraph{Fail-early reductions.} Our practical implementation performs the
same reduction rules on prefixes as described in the theory. However, we add a
practical optimisation to, in some cases, determine that a particular path
cannot succeed before even reaching the bound.

For example, consider the pair
$\pair{\tin{p}\ell(S).\pi}{\tout{q}\ell'(S').\tin{p}\ell(S).\pi'}$. Regardless
of what $\pi$ and $\pi'$ are set to, this pair cannot be reduced as it will
require using the \ruleRefA{} but $\tout{q}\ell'(S')$ cannot be contained in
$\refa{p}$. Therefore, if at some point we reach a pair of prefixes which looks
like $\pair{\tin{p}\ell(S).\tin{p}\ell(S)}{\tout{q}\ell'(S').\tin{p}\ell(S)}$,
we can immediately return \code{false} as there is no way that it can ever be
reduced by adding more terms.

\section{Benchmarking results}

\subsection{Session-Based Rust Implementations}
\label{sec:RuntimeData}
\paragraph{Results for the stream benchmark.}
\begin{center}
	\sffamily\footnotesize
	\begin{tabular}{cccccc}
		\toprule
		& \multicolumn{5}{c}{Throughput ($n$/$\mu$s)}                                                            \\ \cmidrule(l){2-6}
		$n$ & \sesh                                       & \multicrusty & \ferrite & \rumpsteak & \rumpsteak (opt.) \\
		\midrule
		10  & 0.019389                                    & 0.011678     & 0.011386 & 0.202587   & 0.215583          \\
		20  & 0.028142                                    & 0.014325     & 0.012994 & 0.336988   & 0.356978          \\
		30  & 0.034193                                    & 0.015160     & 0.013463 & 0.427489   & 0.437795          \\
		40  & 0.036566                                    & 0.016072     & 0.013671 & 0.488886   & 0.517468          \\
		50  & 0.040315                                    & 0.016577     & 0.014126 & 0.545378   & 0.583366          \\
		\bottomrule
	\end{tabular}
\end{center}

\paragraph{Results for the double buffering benchmark.}
\begin{center}
	\sffamily\footnotesize
	\begin{tabular}{cccccc}
		\toprule
		& \multicolumn{5}{c}{Throughput ($n$/$\mu$s)}                                                             \\ \cmidrule(l){2-6}
		$n$   & \sesh                                       & \multicrusty & \ferrite  & \rumpsteak & \rumpsteak (opt.) \\
		\midrule
		5000  & 6.929567                                    & 5.675414     & 7.617643  & 27.704354  & 32.340989         \\
		10000 & 13.138401                                   & 11.254181    & 14.649028 & 44.154722  & 50.126532         \\
		15000 & 18.739983                                   & 16.187341    & 20.429845 & 56.813002  & 67.884430         \\
		20000 & 24.103215                                   & 20.481378    & 25.506427 & 67.595301  & 82.039366         \\
		25000 & 28.609966                                   & 25.050058    & 29.629025 & 75.848611  & 96.010424         \\
		\bottomrule
	\end{tabular}
\end{center}

\paragraph{Results for the FFT benchmark.}
\begin{center}
	\sffamily\footnotesize
	\begin{tabular}{cccccc}
		\toprule
		& \multicolumn{5}{c}{Throughput ($n$/$\mu$s)}                                                   \\ \cmidrule(l){2-6}
		$n$  & \sesh                                       & \multicrusty & \ferrite & \rustfft & \rumpsteak \\
		\midrule
		1000 & 0.551154                                    & 0.810134     & 1.458279 & 9.320778 & 5.038554   \\
		2000 & 1.050958                                    & 1.515538     & 2.513855 & 9.313359 & 7.206404   \\
		3000 & 1.510567                                    & 2.163629     & 3.496405 & 9.333569 & 8.421026   \\
		4000 & 1.935263                                    & 2.783617     & 4.198723 & 9.336939 & 9.262763   \\
		5000 & 2.303627                                    & 3.261020     & 4.811375 & 9.323199 & 9.316716   \\
		\bottomrule
	\end{tabular}
\end{center}

\pagebreak
\subsection{Verifying Asynchronous Message Reordering}
\label{sec:VerificationData}
\paragraph{Results for the stream benchmark.}
\begin{center}
	\sffamily\footnotesize
	\begin{tabular}{cccc}
		\toprule
		& \multicolumn{3}{c}{Running time (s)}                         \\ \cmidrule(l){2-4}
		$n$ & \concur                              & \kmc     & \rumpsteak \\
		\midrule
		0   & 0.003476                             & 0.005504 & 0.001872   \\
		10  & 0.008556                             & 0.019316 & 0.001899   \\
		20  & 0.020673                             & 0.057417 & 0.001848   \\
		30  & 0.041673                             & 0.142145 & 0.001906   \\
		40  & 0.076425                             & 0.276446 & 0.001874   \\
		50  & 0.127865                             & 0.496929 & 0.002080   \\
		60  & 0.198541                             & 0.805577 & 0.002083   \\
		70  & 0.292471                             & 1.233327 & 0.002064   \\
		80  & 0.422571                             & 1.780778 & 0.002178   \\
		90  & 0.583863                             & 2.475443 & 0.002190   \\
		100 & 0.767426                             & 3.349204 & 0.002249   \\
		\bottomrule
	\end{tabular}
\end{center}

\paragraph{Results for the nested choice benchmark.}

\begin{center}
	\sffamily\footnotesize
	\begin{tabular}{cccc}
		\toprule
		& \multicolumn{3}{c}{Running time (s)}                          \\ \cmidrule(l){2-4}
		$n$ & \concur                              & \kmc      & \rumpsteak \\
		\midrule
		1   & 0.002295                             & 0.006554  & 0.000702   \\
		2   & 0.004504                             & 0.014901  & 0.000755   \\
		3   & 0.016347                             & 0.072423  & 0.001745   \\
		4   & 0.224858                             & 1.515528  & 0.007656   \\
		5   & 4.692525                             & 41.688068 & 0.157548   \\
		\bottomrule
	\end{tabular}
\end{center}

\paragraph{Results for the ring benchmark.}

\begin{center}
	\sffamily\footnotesize
	\begin{tabular}{ccc}
		\toprule
		& \multicolumn{2}{c}{Running time (s)}              \\ \cmidrule(l){2-3}
		$n$ & \kmc                                 & \rumpsteak \\
		\midrule
		2   & 0.004007                             & 0.000675   \\
		4   & 0.007239                             & 0.000731   \\
		6   & 0.011806                             & 0.000701   \\
		8   & 0.018822                             & 0.000835   \\
		10  & 0.024842                             & 0.000757   \\
		12  & 0.049232                             & 0.000777   \\
		14  & 0.102257                             & 0.000744   \\
		16  & 0.191078                             & 0.000813   \\
		18  & 0.340262                             & 0.000817   \\
		20  & 0.570656                             & 0.000766   \\
		22  & 0.913412                             & 0.000911   \\
		24  & 1.391075                             & 0.000737   \\
		26  & 2.042452                             & 0.000752   \\
		28  & 2.918943                             & 0.000732   \\
		30  & 4.099072                             & 0.000769   \\
		\bottomrule
	\end{tabular}
\end{center}
\pagebreak

\paragraph{Results for $k$-buffering benchmark.}
\begin{center}
	\sffamily\footnotesize
	\begin{tabular}{ccc}
		\toprule
		& \multicolumn{2}{c}{Running time (s)}              \\ \cmidrule(l){2-3}
		$n$ & \kmc                                 & \rumpsteak \\
		\midrule
		0   & 0.004825                             & 0.000630   \\
		5   & 0.007668                             & 0.000747   \\
		10  & 0.013613                             & 0.000705   \\
		15  & 0.018770                             & 0.000667   \\
		20  & 0.031376                             & 0.000825   \\
		25  & 0.054910                             & 0.000718   \\
		30  & 0.080879                             & 0.000760   \\
		35  & 0.122315                             & 0.000853   \\
		40  & 0.170533                             & 0.000802   \\
		45  & 0.236354                             & 0.000792   \\
		50  & 0.305749                             & 0.000916   \\
		55  & 0.406071                             & 0.000882   \\
		60  & 0.506069                             & 0.000959   \\
		65  & 0.639521                             & 0.001028   \\
		70  & 0.773931                             & 0.001057   \\
		75  & 0.954399                             & 0.001045   \\
		80  & 1.127240                             & 0.001125   \\
		85  & 1.359600                             & 0.001120   \\
		90  & 1.571745                             & 0.001164   \\
		95  & 1.869339                             & 0.001156   \\
		100 & 2.111687                             & 0.001234   \\
		\bottomrule
	\end{tabular}
\end{center}

}{
}

%%
%% The next two lines define the bibliography style to be used, and
%% the bibliography file.
\bibliographystyle{ACM-Reference-Format}
\bibliography{main}

%%% -*-BibTeX-*-
%%% Do NOT edit. File created by BibTeX with style
%%% ACM-Reference-Format-Journals [18-Jan-2012].

\begin{thebibliography}{64}

%%% ====================================================================
%%% NOTE TO THE USER: you can override these defaults by providing
%%% customized versions of any of these macros before the \bibliography
%%% command.  Each of them MUST provide its own final punctuation,
%%% except for \shownote{}, \showDOI{}, and \showURL{}.  The latter two
%%% do not use final punctuation, in order to avoid confusing it with
%%% the Web address.
%%%
%%% To suppress output of a particular field, define its macro to expand
%%% to an empty string, or better, \unskip, like this:
%%%
%%% \newcommand{\showDOI}[1]{\unskip}   % LaTeX syntax
%%%
%%% \def \showDOI #1{\unskip}           % plain TeX syntax
%%%
%%% ====================================================================

\ifx \showCODEN    \undefined \def \showCODEN     #1{\unskip}     \fi
\ifx \showDOI      \undefined \def \showDOI       #1{#1}\fi
\ifx \showISBNx    \undefined \def \showISBNx     #1{\unskip}     \fi
\ifx \showISBNxiii \undefined \def \showISBNxiii  #1{\unskip}     \fi
\ifx \showISSN     \undefined \def \showISSN      #1{\unskip}     \fi
\ifx \showLCCN     \undefined \def \showLCCN      #1{\unskip}     \fi
\ifx \shownote     \undefined \def \shownote      #1{#1}          \fi
\ifx \showarticletitle \undefined \def \showarticletitle #1{#1}   \fi
\ifx \showURL      \undefined \def \showURL       {\relax}        \fi
% The following commands are used for tagged output and should be
% invisible to TeX
\providecommand\bibfield[2]{#2}
\providecommand\bibinfo[2]{#2}
\providecommand\natexlab[1]{#1}
\providecommand\showeprint[2][]{arXiv:#2}

\bibitem[\protect\citeauthoryear{??}{Alt}{[n.d.]}]%
        {AlternatingBit}
 \bibinfo{year}{[n.d.]}\natexlab{}.
\newblock \bibinfo{title}{{Introduction to Protocol Engineering}}.
\newblock
\newblock
\urldef\tempurl%
\url{http://cs.uccs.edu/~cs522/pe/pe.htm}
\showURL{%
\tempurl}


\bibitem[\protect\citeauthoryear{??}{NuS}{[n.d.]}]%
        {NuScr}
 \bibinfo{year}{[n.d.]}\natexlab{}.
\newblock \bibinfo{booktitle}{\emph{{$\nu$Scr}}}.
\newblock
\urldef\tempurl%
\url{https://github.com/nuscr/nuscr}
\showURL{%
\tempurl}


\bibitem[\protect\citeauthoryear{??}{Rum}{[n.d.]}]%
        {Rumpsteak}
 \bibinfo{year}{[n.d.]}\natexlab{}.
\newblock \bibinfo{booktitle}{\emph{{Rumpsteak}}}.
\newblock
\urldef\tempurl%
\url{https://github.com/zakcutner/rumpsteak}
\showURL{%
\tempurl}


\bibitem[\protect\citeauthoryear{Ancona, Bono, Bravetti, Campos, Castagna,
  Deniélou, Gay, Gesbert, Giachino, Hu, Johnsen, Martins, Mascardi, Montesi,
  Neykova, Ng, Padovani, Vasconcelos, and Yoshida}{Ancona
  et~al\mbox{.}}{2016}]%
        {Ancona2016}
\bibfield{author}{\bibinfo{person}{Davide Ancona}, \bibinfo{person}{Viviana
  Bono}, \bibinfo{person}{Mario Bravetti}, \bibinfo{person}{Joana Campos},
  \bibinfo{person}{Giuseppe Castagna}, \bibinfo{person}{Pierre-Malo Deniélou},
  \bibinfo{person}{Simon~J. Gay}, \bibinfo{person}{Nils Gesbert},
  \bibinfo{person}{Elena Giachino}, \bibinfo{person}{Raymond Hu},
  \bibinfo{person}{Einar~Broch Johnsen}, \bibinfo{person}{Francisco Martins},
  \bibinfo{person}{Viviana Mascardi}, \bibinfo{person}{Fabrizio Montesi},
  \bibinfo{person}{Rumyana Neykova}, \bibinfo{person}{Nicholas Ng},
  \bibinfo{person}{Luca Padovani}, \bibinfo{person}{Vasco~T. Vasconcelos},
  {and} \bibinfo{person}{Nobuko Yoshida}.} \bibinfo{year}{2016}\natexlab{}.
\newblock \showarticletitle{{Behavioral Types in Programming Languages}}.
\newblock \bibinfo{journal}{\emph{{Foundations and Trends in Programming
  Languages}}} \bibinfo{volume}{3}, \bibinfo{number}{2-3}
  (\bibinfo{year}{2016}), \bibinfo{pages}{95--230}.
\newblock
\urldef\tempurl%
\url{https://doi.org/10.1561/2500000031}
\showDOI{\tempurl}


\bibitem[\protect\citeauthoryear{Balzer and Pfenning}{Balzer and
  Pfenning}{2017}]%
        {Balzer2017}
\bibfield{author}{\bibinfo{person}{Stephanie Balzer} {and}
  \bibinfo{person}{Frank Pfenning}.} \bibinfo{year}{2017}\natexlab{}.
\newblock \showarticletitle{{Manifest Sharing with Session Types}}.
\newblock \bibinfo{journal}{\emph{{Proceedings of the ACM on Programming
  Languages}}} \bibinfo{volume}{1}, \bibinfo{number}{{ICFP}}
  (\bibinfo{year}{2017}), \bibinfo{pages}{1--29}.
\newblock
\urldef\tempurl%
\url{https://doi.org/10.1145/3110281}
\showDOI{\tempurl}


\bibitem[\protect\citeauthoryear{Bouajjani, Enea, Ji, and Qadeer}{Bouajjani
  et~al\mbox{.}}{2018}]%
        {Bouajjani2018}
\bibfield{author}{\bibinfo{person}{Ahmed Bouajjani},
  \bibinfo{person}{Constantin Enea}, \bibinfo{person}{Kailiang Ji}, {and}
  \bibinfo{person}{Shaz Qadeer}.} \bibinfo{year}{2018}\natexlab{}.
\newblock \showarticletitle{{On the Completeness of Verifying Message Passing
  Programs Under Bounded Asynchrony}}. In \bibinfo{booktitle}{\emph{{Computer
  Aided Verification}}} \emph{(\bibinfo{series}{{LNCS}},
  Vol.~\bibinfo{volume}{10982})}. \bibinfo{publisher}{{Springer}},
  \bibinfo{pages}{372--391}.
\newblock
\urldef\tempurl%
\url{https://doi.org/10.1007/978-3-319-96142-2_23}
\showDOI{\tempurl}


\bibitem[\protect\citeauthoryear{Bravetti, Carbone, Lange, Yoshida, and
  Zavattaro}{Bravetti et~al\mbox{.}}{2021}]%
        {Bravetti2019}
\bibfield{author}{\bibinfo{person}{Mario Bravetti}, \bibinfo{person}{Marco
  Carbone}, \bibinfo{person}{Julien Lange}, \bibinfo{person}{Nobuko Yoshida},
  {and} \bibinfo{person}{Gianluigi Zavattaro}.}
  \bibinfo{year}{2021}\natexlab{}.
\newblock \showarticletitle{{A Sound Algorithm for Asynchronous Session
  Subtyping and its Implementation}}.
\newblock \bibinfo{journal}{\emph{{Logical Methods in Computer Science}}}
  \bibinfo{volume}{17} (\bibinfo{year}{2021}).
\newblock
Issue 1.
\urldef\tempurl%
\url{https://doi.org/10.23638/LMCS-17(1:20)2021}
\showDOI{\tempurl}
\newblock
\shownote{{(repository is found at
  \url{https://github.com/julien-lange/asynchronous-subtyping})}.}


\bibitem[\protect\citeauthoryear{Bravetti, Carbone, and Zavattaro}{Bravetti
  et~al\mbox{.}}{2017}]%
        {Bravetti2017}
\bibfield{author}{\bibinfo{person}{Mario Bravetti}, \bibinfo{person}{Marco
  Carbone}, {and} \bibinfo{person}{Gianluigi Zavattaro}.}
  \bibinfo{year}{2017}\natexlab{}.
\newblock \showarticletitle{{Undecidability of Asynchronous Session
  Subtyping}}.
\newblock \bibinfo{journal}{\emph{{Information and Computation}}}
  \bibinfo{volume}{256}, \bibinfo{number}{C} (\bibinfo{year}{2017}),
  \bibinfo{pages}{300--320}.
\newblock
\urldef\tempurl%
\url{https://doi.org/10.1016/j.ic.2017.07.010}
\showDOI{\tempurl}


\bibitem[\protect\citeauthoryear{Bravetti, Carbone, and Zavattaro}{Bravetti
  et~al\mbox{.}}{2018}]%
        {Bravetti2018}
\bibfield{author}{\bibinfo{person}{Mario Bravetti}, \bibinfo{person}{Marco
  Carbone}, {and} \bibinfo{person}{Gianluigi Zavattaro}.}
  \bibinfo{year}{2018}\natexlab{}.
\newblock \showarticletitle{{On the Boundary Between Decidability and
  Undecidability of Asynchronous Session Subtyping}}.
\newblock \bibinfo{journal}{\emph{{Theoretical Computer Science}}}
  \bibinfo{volume}{722} (\bibinfo{year}{2018}), \bibinfo{pages}{19--51}.
\newblock
\urldef\tempurl%
\url{https://doi.org/10.1016/j.tcs.2018.02.010}
\showDOI{\tempurl}


\bibitem[\protect\citeauthoryear{Castro, Hu, Jongmans, Ng, and Yoshida}{Castro
  et~al\mbox{.}}{2019}]%
        {Castro2019}
\bibfield{author}{\bibinfo{person}{David Castro}, \bibinfo{person}{Raymond Hu},
  \bibinfo{person}{Sung-Shik Jongmans}, \bibinfo{person}{Nicholas Ng}, {and}
  \bibinfo{person}{Nobuko Yoshida}.} \bibinfo{year}{2019}\natexlab{}.
\newblock \showarticletitle{{Distributed Programming Using Role-Parametric
  Session Types in Go: Statically-Typed Endpoint APIs for
  Dynamically-Instantiated Communication Structures}}.
\newblock \bibinfo{journal}{\emph{{Proceedings of the ACM on Programming
  Languages}}} \bibinfo{volume}{3}, \bibinfo{number}{{POPL}}
  (\bibinfo{year}{2019}), \bibinfo{pages}{1--30}.
\newblock
\urldef\tempurl%
\url{https://doi.org/10.1145/3290342}
\showDOI{\tempurl}


\bibitem[\protect\citeauthoryear{Castro-Perez and Yoshida}{Castro-Perez and
  Yoshida}{2020}]%
        {CastroPerez2020}
\bibfield{author}{\bibinfo{person}{David Castro-Perez} {and}
  \bibinfo{person}{Nobuko Yoshida}.} \bibinfo{year}{2020}\natexlab{}.
\newblock \showarticletitle{{CAMP: Cost-Aware Multiparty Session Protocols}}.
\newblock \bibinfo{journal}{\emph{{Proceedings of the ACM on Programming
  Languages}}} \bibinfo{volume}{4}, \bibinfo{number}{{OOPSLA}}
  (\bibinfo{year}{2020}), \bibinfo{pages}{1--30}.
\newblock
\urldef\tempurl%
\url{https://doi.org/10.1145/3428223}
\showDOI{\tempurl}


\bibitem[\protect\citeauthoryear{Chen and Balzer}{Chen and Balzer}{2021}]%
        {Chen2021}
\bibfield{author}{\bibinfo{person}{Ruofei Chen} {and}
  \bibinfo{person}{Stephanie Balzer}.} \bibinfo{year}{2021}\natexlab{}.
\newblock \bibinfo{title}{{Ferrite: A Judgmental Embedding of Session Types in
  Rust}}.
\newblock
\newblock
\showeprint[arxiv]{2009.13619}
\newblock
\shownote{{(repository is found at
  \url{https://github.com/ferrite-rs/ferrite})}.}


\bibitem[\protect\citeauthoryear{Chen, Dezani-Ciancaglini, Scalas, and
  Yoshida}{Chen et~al\mbox{.}}{2017}]%
        {Chen2017}
\bibfield{author}{\bibinfo{person}{Tzu-Chun Chen}, \bibinfo{person}{Mariangiola
  Dezani-Ciancaglini}, \bibinfo{person}{Alceste Scalas}, {and}
  \bibinfo{person}{Nobuko Yoshida}.} \bibinfo{year}{2017}\natexlab{}.
\newblock \showarticletitle{{On the Preciseness of Subtyping in Session
  Types}}.
\newblock \bibinfo{journal}{\emph{{Logical Methods in Computer Science}}}
  \bibinfo{volume}{13} (\bibinfo{year}{2017}).
\newblock
Issue 2.
\urldef\tempurl%
\url{https://doi.org/10.23638/LMCS-13(2:12)2017}
\showDOI{\tempurl}


\bibitem[\protect\citeauthoryear{Choi, Richards, and Kale}{Choi
  et~al\mbox{.}}{2020}]%
        {Choi2020}
\bibfield{author}{\bibinfo{person}{Jaemin Choi}, \bibinfo{person}{David~F.
  Richards}, {and} \bibinfo{person}{Laxmikant~V. Kale}.}
  \bibinfo{year}{2020}\natexlab{}.
\newblock \showarticletitle{{Achieving Computation-Communication Overlap with
  Overdecomposition on GPU Systems}}. In \bibinfo{booktitle}{\emph{{2020
  IEEE/ACM Fifth International Workshop on Extreme Scale Programming Models and
  Middleware}}} \emph{(\bibinfo{series}{{ESPM2}})}.
  \bibinfo{publisher}{{IEEE}}, \bibinfo{pages}{1--10}.
\newblock
\urldef\tempurl%
\url{https://doi.org/10.1109/ESPM251964.2020.00006}
\showDOI{\tempurl}


\bibitem[\protect\citeauthoryear{Crichton}{Crichton}{[n.d.]}]%
        {Futures}
\bibfield{author}{\bibinfo{person}{Alex Crichton}.}
  \bibinfo{year}{[n.d.]}\natexlab{}.
\newblock \bibinfo{booktitle}{\emph{{Futures}}}.
\newblock
\urldef\tempurl%
\url{https://github.com/rust-lang/futures-rs}
\showURL{%
\tempurl}


\bibitem[\protect\citeauthoryear{Cutner, Yoshida, and Vassor}{Cutner
  et~al\mbox{.}}{2021a}]%
        {Artifact}
\bibfield{author}{\bibinfo{person}{Zak Cutner}, \bibinfo{person}{Nobuko
  Yoshida}, {and} \bibinfo{person}{Martin Vassor}.}
  \bibinfo{year}{2021}\natexlab{a}.
\newblock \bibinfo{booktitle}{\emph{{Artifact: Deadlock-Free Asynchronous
  Message Reoerdering in Rust with Multiparty Session Types}}}.
\newblock
\urldef\tempurl%
\url{https://doi.org/10.5281/zenodo.5786034}
\showDOI{\tempurl}


\bibitem[\protect\citeauthoryear{Cutner, Yoshida, and Vassor}{Cutner
  et~al\mbox{.}}{2021b}]%
        {FullVersion}
\bibfield{author}{\bibinfo{person}{Zak Cutner}, \bibinfo{person}{Nobuko
  Yoshida}, {and} \bibinfo{person}{Martin Vassor}.}
  \bibinfo{year}{2021}\natexlab{b}.
\newblock \bibinfo{title}{{Deadlock-Free Asynchronous Message Reordering in
  Rust with Multiparty Session Types}}.
\newblock
\newblock
\newblock
\shownote{{(repository is found at \url{https://arxiv.org/abs/2112.12693})}.}


\bibitem[\protect\citeauthoryear{Demangeon, Honda, Hu, Neykova, and
  Yoshida}{Demangeon et~al\mbox{.}}{2015}]%
        {DHHNY2015}
\bibfield{author}{\bibinfo{person}{Romain Demangeon}, \bibinfo{person}{Kohei
  Honda}, \bibinfo{person}{Raymond Hu}, \bibinfo{person}{Rumyana Neykova},
  {and} \bibinfo{person}{Nobuko Yoshida}.} \bibinfo{year}{2015}\natexlab{}.
\newblock \showarticletitle{{Practical interruptible conversations: Distributed
  dynamic verification with multiparty session types and Python}}.
\newblock \bibinfo{journal}{\emph{FMSD}} (\bibinfo{year}{2015}),
  \bibinfo{pages}{1--29}.
\newblock


\bibitem[\protect\citeauthoryear{Donovan}{Donovan}{2020}]%
        {StackOverflowSurvey}
\bibfield{author}{\bibinfo{person}{Ryan Donovan}.}
  \bibinfo{year}{2020}\natexlab{}.
\newblock \bibinfo{booktitle}{\emph{{Why the Developers Who Use Rust Love It so
  Much}}}.
\newblock
\urldef\tempurl%
\url{https://stackoverflow.blog/2020/06/05/why-the-developers-who-use-rust-love-it-so-much/}
\showURL{%
\tempurl}


\bibitem[\protect\citeauthoryear{Duarte and Ravara}{Duarte and Ravara}{2021}]%
        {Duarte2021}
\bibfield{author}{\bibinfo{person}{Jos\'{e} Duarte} {and}
  \bibinfo{person}{Ant\'{o}nio Ravara}.} \bibinfo{year}{2021}\natexlab{}.
\newblock \showarticletitle{{Retrofitting Typestates into Rust}}. In
  \bibinfo{booktitle}{\emph{{Proceedings of the 24th Brazilian Symposium on
  Context-Oriented Programming and Advanced Modularity}}}
  \emph{(\bibinfo{series}{{SBLP}})}. \bibinfo{publisher}{{ACM}}.
\newblock
\urldef\tempurl%
\url{https://github.com/rustype/typestate-rs/blob/main/paper/sblp21.pdf}
\showURL{%
\tempurl}
\newblock
\shownote{{(to appear)}.}


\bibitem[\protect\citeauthoryear{Edelkamp and Korf}{Edelkamp and Korf}{1998}]%
        {Edelkamp1998}
\bibfield{author}{\bibinfo{person}{Stefan Edelkamp} {and}
  \bibinfo{person}{Richard~E. Korf}.} \bibinfo{year}{1998}\natexlab{}.
\newblock \showarticletitle{{The Branching Factor of Regular Search Spaces}}.
  In \bibinfo{booktitle}{\emph{{Proceedings of the Fifteenth National
  Conference on Artificial Intelligence}}} \emph{(\bibinfo{series}{{AAAI}})}.
  \bibinfo{publisher}{{AAAI}}, \bibinfo{pages}{299--304}.
\newblock
\urldef\tempurl%
\url{https://www.aaai.org/Papers/AAAI/1998/AAAI98-042.pdf}
\showURL{%
\tempurl}


\bibitem[\protect\citeauthoryear{Gay and Hole}{Gay and Hole}{2005}]%
        {Gay2005}
\bibfield{author}{\bibinfo{person}{Simon Gay} {and} \bibinfo{person}{Malcolm
  Hole}.} \bibinfo{year}{2005}\natexlab{}.
\newblock \showarticletitle{{Subtyping for Session Types in the Pi Calculus}}.
\newblock \bibinfo{journal}{\emph{{Acta Informatica}}} \bibinfo{volume}{42},
  \bibinfo{number}{2--3} (\bibinfo{year}{2005}), \bibinfo{pages}{191--225}.
\newblock
\urldef\tempurl%
\url{https://doi.org/10.1007/s00236-005-0177-z}
\showDOI{\tempurl}


\bibitem[\protect\citeauthoryear{Gay and Ravara}{Gay and Ravara}{2017a}]%
        {gay_behavioural_2017}
\bibfield{author}{\bibinfo{person}{Simon Gay} {and} \bibinfo{person}{António
  Ravara}.} \bibinfo{year}{2017}\natexlab{a}.
\newblock \showarticletitle{Behavioural {Types}: from {Theory} to {Tools}}.
\newblock \bibinfo{pages}{1--412}.
\newblock
\urldef\tempurl%
\url{https://doi.org/10.13052/rp-9788793519817}
\showDOI{\tempurl}


\bibitem[\protect\citeauthoryear{Gay and Ravara}{Gay and Ravara}{2017b}]%
        {Gay2017}
\bibfield{author}{\bibinfo{person}{Simon Gay} {and}
  \bibinfo{person}{Ant\'{o}nio Ravara}.} \bibinfo{year}{2017}\natexlab{b}.
\newblock \bibinfo{booktitle}{\emph{{Behavioural Types: from Theory to
  Tools}}}.
\newblock \bibinfo{publisher}{{River Publisher}}. 1--412 pages.
\newblock
\urldef\tempurl%
\url{https://doi.org/10.13052/rp-9788793519817}
\showDOI{\tempurl}


\bibitem[\protect\citeauthoryear{Ghilezan, Pantovi\'{c}, Proki\'{c}, Scalas,
  and Yoshida}{Ghilezan et~al\mbox{.}}{2021}]%
        {Ghilezan2021}
\bibfield{author}{\bibinfo{person}{Silvia Ghilezan}, \bibinfo{person}{Jovanka
  Pantovi\'{c}}, \bibinfo{person}{Ivan Proki\'{c}}, \bibinfo{person}{Alceste
  Scalas}, {and} \bibinfo{person}{Nobuko Yoshida}.}
  \bibinfo{year}{2021}\natexlab{}.
\newblock \showarticletitle{{Precise Subtyping for Asynchronous Multiparty
  Sessions}}.
\newblock \bibinfo{journal}{\emph{{Proceedings of the ACM on Programming
  Languages}}} \bibinfo{volume}{5}, \bibinfo{number}{{POPL}}
  (\bibinfo{year}{2021}), \bibinfo{pages}{1--28}.
\newblock
\urldef\tempurl%
\url{https://doi.org/10.1145/3434297}
\showDOI{\tempurl}


\bibitem[\protect\citeauthoryear{Harvey, Fowler, Dardha, and Gay}{Harvey
  et~al\mbox{.}}{2021}]%
        {harvey_multiparty_2021}
\bibfield{author}{\bibinfo{person}{Paul Harvey}, \bibinfo{person}{Simon
  Fowler}, \bibinfo{person}{Ornela Dardha}, {and} \bibinfo{person}{Simon~J.
  Gay}.} \bibinfo{year}{2021}\natexlab{}.
\newblock \showarticletitle{{Multiparty Session Types for Safe Runtime
  Adaptation in an Actor Language}}. In \bibinfo{booktitle}{\emph{{35th
  European Conference on Object-Oriented Programming}}}
  \emph{(\bibinfo{series}{{LIPIcs}}, Vol.~\bibinfo{volume}{194})}.
  \bibinfo{publisher}{{Schloss Dagstuhl}}, \bibinfo{pages}{10:1--10:30}.
\newblock
\urldef\tempurl%
\url{https://doi.org/10.4230/LIPIcs.ECOOP.2021.10}
\showDOI{\tempurl}


\bibitem[\protect\citeauthoryear{Heisler}{Heisler}{[n.d.]}]%
        {Criterion}
\bibfield{author}{\bibinfo{person}{Brook Heisler}.}
  \bibinfo{year}{[n.d.]}\natexlab{}.
\newblock \bibinfo{booktitle}{\emph{{Criterion.rs}}}.
\newblock
\urldef\tempurl%
\url{https://github.com/bheisler/criterion.rs}
\showURL{%
\tempurl}


\bibitem[\protect\citeauthoryear{Honda, Vasconcelos, and Kubo}{Honda
  et~al\mbox{.}}{1998}]%
        {Honda1998}
\bibfield{author}{\bibinfo{person}{Kohei Honda}, \bibinfo{person}{Vasco~T.
  Vasconcelos}, {and} \bibinfo{person}{Makoto Kubo}.}
  \bibinfo{year}{1998}\natexlab{}.
\newblock \showarticletitle{{Language Primitives and Type Disciplines for
  Structured Communication-based Programming}}. In
  \bibinfo{booktitle}{\emph{{Programming Languages and Systems}}}
  \emph{(\bibinfo{series}{{LNCS}}, Vol.~\bibinfo{volume}{1381})}.
  \bibinfo{publisher}{{Springer}}, \bibinfo{pages}{122--138}.
\newblock
\urldef\tempurl%
\url{https://doi.org/10.1007/bfb0053567}
\showDOI{\tempurl}


\bibitem[\protect\citeauthoryear{Honda, Yoshida, and Carbone}{Honda
  et~al\mbox{.}}{2008}]%
        {Honda2008}
\bibfield{author}{\bibinfo{person}{Kohei Honda}, \bibinfo{person}{Nobuko
  Yoshida}, {and} \bibinfo{person}{Marco Carbone}.}
  \bibinfo{year}{2008}\natexlab{}.
\newblock \showarticletitle{{Multiparty Asynchronous Session Types}}. In
  \bibinfo{booktitle}{\emph{{Proceedings of the 35th Annual ACM SIGPLAN-SIGACT
  Symposium on Principles of Programming Languages}}}
  \emph{(\bibinfo{series}{{POPL}})}. \bibinfo{publisher}{{ACM}},
  \bibinfo{pages}{273--284}.
\newblock
\urldef\tempurl%
\url{https://doi.org/10.1145/1328438.1328472}
\showDOI{\tempurl}


\bibitem[\protect\citeauthoryear{Honda, Yoshida, and Carbone}{Honda
  et~al\mbox{.}}{2016}]%
        {Honda2016}
\bibfield{author}{\bibinfo{person}{Kohei Honda}, \bibinfo{person}{Nobuko
  Yoshida}, {and} \bibinfo{person}{Marco Carbone}.}
  \bibinfo{year}{2016}\natexlab{}.
\newblock \showarticletitle{{Multiparty Asynchronous Session Types}}.
\newblock \bibinfo{journal}{\emph{{Journal of the ACM}}} \bibinfo{volume}{63},
  \bibinfo{number}{1} (\bibinfo{year}{2016}), \bibinfo{pages}{1--67}.
\newblock
\urldef\tempurl%
\url{https://doi.org/10.1145/2827695}
\showDOI{\tempurl}


\bibitem[\protect\citeauthoryear{Hu and Yoshida}{Hu and Yoshida}{2016}]%
        {Hu2016}
\bibfield{author}{\bibinfo{person}{Raymond Hu} {and} \bibinfo{person}{Nobuko
  Yoshida}.} \bibinfo{year}{2016}\natexlab{}.
\newblock \showarticletitle{{Hybrid Session Verification through Endpoint API
  Generation}}. In \bibinfo{booktitle}{\emph{{Fundamental Approaches to
  Software Engineering}}} \emph{(\bibinfo{series}{{LNCS}},
  Vol.~\bibinfo{volume}{9633})}. \bibinfo{publisher}{{Springer}},
  \bibinfo{pages}{401--418}.
\newblock
\urldef\tempurl%
\url{https://doi.org/10.1007/978-3-662-49665-7_24}
\showDOI{\tempurl}


\bibitem[\protect\citeauthoryear{Hu and Yoshida}{Hu and Yoshida}{2017}]%
        {hu_explicit_2017}
\bibfield{author}{\bibinfo{person}{Raymond Hu} {and} \bibinfo{person}{Nobuko
  Yoshida}.} \bibinfo{year}{2017}\natexlab{}.
\newblock \showarticletitle{{Explicit Connection Actions in Multiparty Session
  Types}}. In \bibinfo{booktitle}{\emph{{Fundamental Approaches to Software
  Engineering}}} \emph{(\bibinfo{series}{{LNCS}},
  Vol.~\bibinfo{volume}{10202})}. \bibinfo{publisher}{{Springer}},
  \bibinfo{pages}{116--133}.
\newblock
\urldef\tempurl%
\url{https://doi.org/10.1007/978-3-662-54494-5_7}
\showDOI{\tempurl}


\bibitem[\protect\citeauthoryear{Huang, Pillai, and Shin}{Huang
  et~al\mbox{.}}{2002}]%
        {Huang2002}
\bibfield{author}{\bibinfo{person}{Hai Huang}, \bibinfo{person}{Padmanabhan
  Pillai}, {and} \bibinfo{person}{Kang~G. Shin}.}
  \bibinfo{year}{2002}\natexlab{}.
\newblock \showarticletitle{{Improving Wait-Free Algorithms for Interprocess
  Communication in Embedded Real-Time Systems}}. In
  \bibinfo{booktitle}{\emph{{2002 USENIX Annual Technical Conference}}}.
  \bibinfo{publisher}{{USENIX Association}}.
\newblock
\urldef\tempurl%
\url{https://www.usenix.org/legacy/events/usenix02/huang.html}
\showURL{%
\tempurl}


\bibitem[\protect\citeauthoryear{Jespersen, Munksgaard, and Larsen}{Jespersen
  et~al\mbox{.}}{2015}]%
        {Jespersen2015}
\bibfield{author}{\bibinfo{person}{Thomas Bracht~Laumann Jespersen},
  \bibinfo{person}{Philip Munksgaard}, {and} \bibinfo{person}{Ken~Friis
  Larsen}.} \bibinfo{year}{2015}\natexlab{}.
\newblock \showarticletitle{{Session Types for Rust}}. In
  \bibinfo{booktitle}{\emph{{Proceedings of the 11th ACM SIGPLAN Workshop on
  Generic Programming}}} \emph{(\bibinfo{series}{{WGP}})}.
  \bibinfo{publisher}{{ACM}}, \bibinfo{pages}{13--22}.
\newblock
\urldef\tempurl%
\url{https://doi.org/10.1145/2808098.2808100}
\showDOI{\tempurl}


\bibitem[\protect\citeauthoryear{{Jonas Kastberg Hinrichsen}}{{Jonas Kastberg
  Hinrichsen}}{2021}]%
        {Hinrichsen_sessions_2021}
\bibfield{author}{\bibinfo{person}{{Jonas Kastberg Hinrichsen}}.}
  \bibinfo{year}{2021}\natexlab{}.
\newblock \emph{\bibinfo{title}{Sessions and {Separation}}}.
\newblock \bibinfo{thesistype}{Ph.D. Dissertation}. \bibinfo{school}{IT
  University of Copenhagen}, \bibinfo{address}{Copenhagen}.
\newblock
\urldef\tempurl%
\url{http://itu.dk/people/jkas/papers/thesis.pdf}
\showURL{%
\tempurl}


\bibitem[\protect\citeauthoryear{Jung}{Jung}{2020}]%
        {Jung_2020}
\bibfield{author}{\bibinfo{person}{Ralf Jung}.}
  \bibinfo{year}{2020}\natexlab{}.
\newblock \bibinfo{title}{Understanding and evolving the Rust programming
  language}.
\newblock
\newblock
\urldef\tempurl%
\url{https://doi.org/10.22028/D291-31946}
\showDOI{\tempurl}


\bibitem[\protect\citeauthoryear{Kim and Park}{Kim and Park}{2012}]%
        {Kim2012}
\bibfield{author}{\bibinfo{person}{Ki-Hwan Kim} {and} \bibinfo{person}{Q-Han
  Park}.} \bibinfo{year}{2012}\natexlab{}.
\newblock \showarticletitle{{Overlapping Computation and Communication of
  Three-Dimensional FDTD on a GPU Cluster}}.
\newblock \bibinfo{journal}{\emph{{Computer Physics Communications}}}
  \bibinfo{volume}{183}, \bibinfo{number}{11} (\bibinfo{year}{2012}),
  \bibinfo{pages}{2364--2369}.
\newblock
\urldef\tempurl%
\url{https://doi.org/10.1016/j.cpc.2012.06.003}
\showDOI{\tempurl}


\bibitem[\protect\citeauthoryear{Kokke}{Kokke}{2019}]%
        {Kokke2019}
\bibfield{author}{\bibinfo{person}{Wen Kokke}.}
  \bibinfo{year}{2019}\natexlab{}.
\newblock \showarticletitle{{Rusty Variation: Deadlock-free Sessions with
  Failure in Rust}}.
\newblock \bibinfo{journal}{\emph{{Electronic Proceedings in Theoretical
  Computer Science}}}  \bibinfo{volume}{304} (\bibinfo{year}{2019}),
  \bibinfo{pages}{48--60}.
\newblock
\urldef\tempurl%
\url{https://doi.org/10.4204/eptcs.304.4}
\showDOI{\tempurl}
\newblock
\shownote{{(repository is found at \url{https://github.com/wenkokke/sesh})}.}


\bibitem[\protect\citeauthoryear{Korf}{Korf}{1985}]%
        {Korf1985}
\bibfield{author}{\bibinfo{person}{Richard~E. Korf}.}
  \bibinfo{year}{1985}\natexlab{}.
\newblock \showarticletitle{{Depth-First Iterative-Deepening: An Optimal
  Admissible Tree Search}}.
\newblock \bibinfo{journal}{\emph{{Artificial Intelligence}}}
  \bibinfo{volume}{27}, \bibinfo{number}{1} (\bibinfo{year}{1985}),
  \bibinfo{pages}{97--109}.
\newblock
\urldef\tempurl%
\url{https://doi.org/10.1016/0004-3702(85)90084-0}
\showDOI{\tempurl}


\bibitem[\protect\citeauthoryear{Kouzapas, Dardha, Perera, and Gay}{Kouzapas
  et~al\mbox{.}}{2018}]%
        {kouzapas_typechecking_2018}
\bibfield{author}{\bibinfo{person}{Dimitrios Kouzapas}, \bibinfo{person}{Ornela
  Dardha}, \bibinfo{person}{Roly Perera}, {and} \bibinfo{person}{Simon~J.
  Gay}.} \bibinfo{year}{2018}\natexlab{}.
\newblock \showarticletitle{{Typechecking Protocols with Mungo and StMungo: A
  Session Type Toolchain for Java}}.
\newblock \bibinfo{journal}{\emph{{Science of Computer Programming}}}
  \bibinfo{volume}{155} (\bibinfo{year}{2018}), \bibinfo{pages}{52--75}.
\newblock
\urldef\tempurl%
\url{https://doi.org/10.1016/j.scico.2017.10.006}
\showDOI{\tempurl}


\bibitem[\protect\citeauthoryear{Lagaillardie, Neykova, and
  Yoshida}{Lagaillardie et~al\mbox{.}}{2020}]%
        {Lagaillardie2020}
\bibfield{author}{\bibinfo{person}{Nicolas Lagaillardie},
  \bibinfo{person}{Rumyana Neykova}, {and} \bibinfo{person}{Nobuko Yoshida}.}
  \bibinfo{year}{2020}\natexlab{}.
\newblock \showarticletitle{{Implementing Multiparty Session Types in Rust}}.
  In \bibinfo{booktitle}{\emph{{Coordination Models and Languages}}}
  \emph{(\bibinfo{series}{{LNCS}}, Vol.~\bibinfo{volume}{12134})}.
  \bibinfo{publisher}{{Springer}}, \bibinfo{pages}{127--136}.
\newblock
\urldef\tempurl%
\url{https://doi.org/10.1007/978-3-030-50029-0_8}
\showDOI{\tempurl}
\newblock
\shownote{{(repository is found at
  \url{https://github.com/NicolasLagaillardie/mpst_rust_github})}.}


\bibitem[\protect\citeauthoryear{Lange and Yoshida}{Lange and Yoshida}{2017}]%
        {Lange2017}
\bibfield{author}{\bibinfo{person}{Julien Lange} {and} \bibinfo{person}{Nobuko
  Yoshida}.} \bibinfo{year}{2017}\natexlab{}.
\newblock \showarticletitle{{On the Undecidability of Asynchronous Session
  Subtyping}}. In \bibinfo{booktitle}{\emph{{Foundations of Software Science
  and Computation Structures}}} \emph{(\bibinfo{series}{{LNCS}},
  Vol.~\bibinfo{volume}{10203})}. \bibinfo{publisher}{{Springer}},
  \bibinfo{pages}{441--457}.
\newblock
\urldef\tempurl%
\url{https://doi.org/10.1007/978-3-662-54458-7_26}
\showDOI{\tempurl}


\bibitem[\protect\citeauthoryear{Lange and Yoshida}{Lange and Yoshida}{2019}]%
        {Lange2019}
\bibfield{author}{\bibinfo{person}{Julien Lange} {and} \bibinfo{person}{Nobuko
  Yoshida}.} \bibinfo{year}{2019}\natexlab{}.
\newblock \showarticletitle{{Verifying Asynchronous Interactions via
  Communicating Session Automata}}. In \bibinfo{booktitle}{\emph{{Computer
  Aided Verification}}} \emph{(\bibinfo{series}{{LNCS}},
  Vol.~\bibinfo{volume}{11561})}. \bibinfo{publisher}{{Springer}},
  \bibinfo{pages}{97--117}.
\newblock
\urldef\tempurl%
\url{https://doi.org/10.1007/978-3-030-25540-4_6}
\showDOI{\tempurl}
\newblock
\shownote{{(repository is found at
  \url{https://bitbucket.org/julien-lange/kmc-cav19})}.}


\bibitem[\protect\citeauthoryear{{Message Passing Interface Forum}}{{Message
  Passing Interface Forum}}{2021}]%
        {MPI}
\bibfield{author}{\bibinfo{person}{{Message Passing Interface Forum}}.}
  \bibinfo{year}{2021}\natexlab{}.
\newblock \bibinfo{booktitle}{\emph{{MPI: A Message-Passing Interface
  Standard}}}.
\newblock
\urldef\tempurl%
\url{https://www.mpi-forum.org/docs/mpi-4.0/mpi40-report.pdf}
\showURL{%
\tempurl}
\newblock
\shownote{{(version 4.0)}.}


\bibitem[\protect\citeauthoryear{Miu, Ferreira, Yoshida, and Zhou}{Miu
  et~al\mbox{.}}{2021}]%
        {Miu2021}
\bibfield{author}{\bibinfo{person}{Anson Miu}, \bibinfo{person}{Francisco
  Ferreira}, \bibinfo{person}{Nobuko Yoshida}, {and} \bibinfo{person}{Fangyi
  Zhou}.} \bibinfo{year}{2021}\natexlab{}.
\newblock \showarticletitle{{Communication-Safe Web Programming in TypeScript
  with Routed Multiparty Session Types}}. In
  \bibinfo{booktitle}{\emph{{Proceedings of the 30th ACM SIGPLAN International
  Conference on Compiler Construction}}} \emph{(\bibinfo{series}{{CC}})}.
  \bibinfo{publisher}{{ACM}}, \bibinfo{pages}{94--106}.
\newblock
\urldef\tempurl%
\url{https://doi.org/10.1145/3446804.3446854}
\showDOI{\tempurl}


\bibitem[\protect\citeauthoryear{Neykova, Hu, Yoshida, and Abdeljallal}{Neykova
  et~al\mbox{.}}{2018}]%
        {Neykova2018}
\bibfield{author}{\bibinfo{person}{Rumyana Neykova}, \bibinfo{person}{Raymond
  Hu}, \bibinfo{person}{Nobuko Yoshida}, {and} \bibinfo{person}{Fahd
  Abdeljallal}.} \bibinfo{year}{2018}\natexlab{}.
\newblock \showarticletitle{{A Session Type Provider: Compile-time API
  Generation of Distributed Protocols with Refinements in F\#}}. In
  \bibinfo{booktitle}{\emph{{Proceedings of the 27th International Conference
  on Compiler Construction}}} \emph{(\bibinfo{series}{{CC}})}.
  \bibinfo{publisher}{{ACM}}, \bibinfo{pages}{128--138}.
\newblock
\urldef\tempurl%
\url{https://doi.org/10.1145/3178372.3179495}
\showDOI{\tempurl}


\bibitem[\protect\citeauthoryear{Neykova and Yoshida}{Neykova and
  Yoshida}{2017}]%
        {NY2017}
\bibfield{author}{\bibinfo{person}{Rumyana Neykova} {and}
  \bibinfo{person}{Nobuko Yoshida}.} \bibinfo{year}{2017}\natexlab{}.
\newblock \showarticletitle{{Let It Recover: Multiparty Protocol-Induced
  Recovery}}. In \bibinfo{booktitle}{\emph{26th International Conference on
  Compiler Construction}}. \bibinfo{publisher}{ACM}, \bibinfo{pages}{98--108}.
\newblock


\bibitem[\protect\citeauthoryear{Neykova, Yoshida, and Hu}{Neykova
  et~al\mbox{.}}{2013}]%
        {Neykova2013}
\bibfield{author}{\bibinfo{person}{Rumyana Neykova}, \bibinfo{person}{Nobuko
  Yoshida}, {and} \bibinfo{person}{Raymond Hu}.}
  \bibinfo{year}{2013}\natexlab{}.
\newblock \showarticletitle{{SPY: Local Verification of Global Protocols}}. In
  \bibinfo{booktitle}{\emph{{Runtime Verification}}}
  \emph{(\bibinfo{series}{{LNCS}}, Vol.~\bibinfo{volume}{8174})}.
  \bibinfo{publisher}{{Springer}}, \bibinfo{pages}{358--363}.
\newblock
\urldef\tempurl%
\url{https://doi.org/10.1007/978-3-642-40787-1_25}
\showDOI{\tempurl}


\bibitem[\protect\citeauthoryear{Ng, de~Figueiredo~Coutinho, and Yoshida}{Ng
  et~al\mbox{.}}{2015}]%
        {ng_protocols_2015}
\bibfield{author}{\bibinfo{person}{Nicholas Ng}, \bibinfo{person}{Jose~Gabriel
  de Figueiredo~Coutinho}, {and} \bibinfo{person}{Nobuko Yoshida}.}
  \bibinfo{year}{2015}\natexlab{}.
\newblock \showarticletitle{{Protocols by Default}}. In
  \bibinfo{booktitle}{\emph{{Compiler Construction}}}
  \emph{(\bibinfo{series}{{LNCS}}, Vol.~\bibinfo{volume}{9031})}.
  \bibinfo{publisher}{{Springer}}, \bibinfo{pages}{212--232}.
\newblock
\urldef\tempurl%
\url{https://doi.org/10.1007/978-3-662-46663-6_11}
\showDOI{\tempurl}


\bibitem[\protect\citeauthoryear{Palsberg and Jay}{Palsberg and Jay}{1998}]%
        {Palsberg1998}
\bibfield{author}{\bibinfo{person}{J. Palsberg} {and} \bibinfo{person}{C.B.
  Jay}.} \bibinfo{year}{1998}\natexlab{}.
\newblock \showarticletitle{{The Essence of the Visitor Pattern}}. In
  \bibinfo{booktitle}{\emph{{Proceedings of the 37th Annual Computer Software
  and Applications Conference}}}. \bibinfo{publisher}{{IEEE Computer Society}},
  \bibinfo{pages}{9--15}.
\newblock
\urldef\tempurl%
\url{https://doi.org/10.1109/CMPSAC.1998.716629}
\showDOI{\tempurl}


\bibitem[\protect\citeauthoryear{Peter}{Peter}{[n.d.]}]%
        {Hyperfine}
\bibfield{author}{\bibinfo{person}{David Peter}.}
  \bibinfo{year}{[n.d.]}\natexlab{}.
\newblock \bibinfo{booktitle}{\emph{{Hyperfine}}}.
\newblock
\urldef\tempurl%
\url{https://github.com/sharkdp/hyperfine}
\showURL{%
\tempurl}


\bibitem[\protect\citeauthoryear{Qin}{Qin}{2020}]%
        {Qin2020Detector}
\bibfield{author}{\bibinfo{person}{Boqin Qin}.}
  \bibinfo{year}{2020}\natexlab{}.
\newblock \bibinfo{booktitle}{\emph{{\code{rust-lock-bug-detector}: Statically
  Detect Double-Lock \& Conflicting-Lock Bugs on MIR}}}.
\newblock
\urldef\tempurl%
\url{https://github.com/BurtonQin/rust-lock-bug-detector}
\showURL{%
\tempurl}


\bibitem[\protect\citeauthoryear{Qin, Chen, Yu, Song, and Zhang}{Qin
  et~al\mbox{.}}{2020}]%
        {Qin2020}
\bibfield{author}{\bibinfo{person}{Boqin Qin}, \bibinfo{person}{Yilun Chen},
  \bibinfo{person}{Zeming Yu}, \bibinfo{person}{Linhai Song}, {and}
  \bibinfo{person}{Yiying Zhang}.} \bibinfo{year}{2020}\natexlab{}.
\newblock \showarticletitle{{Understanding Memory and Thread Safety Practices
  and Issues in Real-World Rust Programs}}. In
  \bibinfo{booktitle}{\emph{{Proceedings of the 41st ACM SIGPLAN Conference on
  Programming Language Design and Implementation}}}
  \emph{(\bibinfo{series}{{PLDI}})}. \bibinfo{publisher}{{ACM}},
  \bibinfo{pages}{763--779}.
\newblock
\urldef\tempurl%
\url{https://doi.org/10.1145/3385412.3386036}
\showDOI{\tempurl}


\bibitem[\protect\citeauthoryear{Scalas, Dardha, Hu, and Yoshida}{Scalas
  et~al\mbox{.}}{2017}]%
        {Scalas2017}
\bibfield{author}{\bibinfo{person}{Alceste Scalas}, \bibinfo{person}{Ornela
  Dardha}, \bibinfo{person}{Raymond Hu}, {and} \bibinfo{person}{Nobuko
  Yoshida}.} \bibinfo{year}{2017}\natexlab{}.
\newblock \showarticletitle{{A Linear Decomposition of Multiparty Sessions for
  Safe Distributed Programming}}. In \bibinfo{booktitle}{\emph{{31st European
  Conference on Object-Oriented Programming}}}
  \emph{(\bibinfo{series}{{LIPIcs}}, Vol.~\bibinfo{volume}{74})}.
  \bibinfo{publisher}{Schloss Dagstuhl}, \bibinfo{pages}{24:1--24:31}.
\newblock
\urldef\tempurl%
\url{https://doi.org/10.4230/LIPIcs.ECOOP.2017.24}
\showDOI{\tempurl}


\bibitem[\protect\citeauthoryear{{Scribble Authors}}{{Scribble
  Authors}}{[n.d.]}]%
        {ScribbleWebsite}
\bibfield{author}{\bibinfo{person}{{Scribble Authors}}.}
  \bibinfo{year}{[n.d.]}\natexlab{}.
\newblock \bibinfo{booktitle}{\emph{{Scribble: Describing Multi Party
  Protocols}}}.
\newblock
\urldef\tempurl%
\url{http://www.scribble.org/}
\showURL{%
\tempurl}


\bibitem[\protect\citeauthoryear{Sergent, Dagrada, Carribault, Jaeger,
  P{\'e}rache, and Papaur{\'e}}{Sergent et~al\mbox{.}}{2018}]%
        {Sergent2018}
\bibfield{author}{\bibinfo{person}{Marc Sergent}, \bibinfo{person}{Mario
  Dagrada}, \bibinfo{person}{Patrick Carribault}, \bibinfo{person}{Julien
  Jaeger}, \bibinfo{person}{Marc P{\'e}rache}, {and} \bibinfo{person}{Guillaume
  Papaur{\'e}}.} \bibinfo{year}{2018}\natexlab{}.
\newblock \showarticletitle{{Efficient Communication/Computation Overlap with
  MPI+OpenMP Runtimes Collaboration}}. In \bibinfo{booktitle}{\emph{{Parallel
  Processing}}} \emph{(\bibinfo{series}{{LNCS}},
  Vol.~\bibinfo{volume}{11014})}. \bibinfo{publisher}{{Springer}},
  \bibinfo{pages}{560--572}.
\newblock
\urldef\tempurl%
\url{https://doi.org/10.1007/978-3-319-96983-1_40}
\showDOI{\tempurl}


\bibitem[\protect\citeauthoryear{Takeuchi, Honda, and Kubo}{Takeuchi
  et~al\mbox{.}}{1994}]%
        {Takeuchi1994}
\bibfield{author}{\bibinfo{person}{Kaku Takeuchi}, \bibinfo{person}{Kohei
  Honda}, {and} \bibinfo{person}{Makoto Kubo}.}
  \bibinfo{year}{1994}\natexlab{}.
\newblock \showarticletitle{{An Interaction-based Language and its Typing
  System}}. In \bibinfo{booktitle}{\emph{{Parallel Architectures and Languages
  Europe}}} \emph{(\bibinfo{series}{{LNCS}}, Vol.~\bibinfo{volume}{817})}.
  \bibinfo{publisher}{{Springer}}, \bibinfo{pages}{398--413}.
\newblock
\urldef\tempurl%
\url{https://doi.org/10.1007/3-540-58184-7_118}
\showDOI{\tempurl}


\bibitem[\protect\citeauthoryear{{The Rust Project Developers}}{{The Rust
  Project Developers}}{[n.d.]a}]%
        {RustProceduralMacros}
\bibfield{author}{\bibinfo{person}{{The Rust Project Developers}}.}
  \bibinfo{year}{[n.d.]}\natexlab{a}.
\newblock \bibinfo{booktitle}{\emph{{Procedural Macros}}}.
\newblock
\urldef\tempurl%
\url{https://doc.rust-lang.org/reference/procedural-macros.html}
\showURL{%
\tempurl}


\bibitem[\protect\citeauthoryear{{The Rust Project Developers}}{{The Rust
  Project Developers}}{[n.d.]b}]%
        {RustMIR}
\bibfield{author}{\bibinfo{person}{{The Rust Project Developers}}.}
  \bibinfo{year}{[n.d.]}\natexlab{b}.
\newblock \bibinfo{booktitle}{\emph{{The MIR (Mid-level IR)}}}.
\newblock
\urldef\tempurl%
\url{https://rustc-dev-guide.rust-lang.org/mir/index.html}
\showURL{%
\tempurl}


\bibitem[\protect\citeauthoryear{{Tokio Contributors}}{{Tokio
  Contributors}}{[n.d.]}]%
        {Tokio}
\bibfield{author}{\bibinfo{person}{{Tokio Contributors}}.}
  \bibinfo{year}{[n.d.]}\natexlab{}.
\newblock \bibinfo{title}{{Tokio}}.
\newblock
\newblock
\urldef\tempurl%
\url{https://github.com/tokio-rs/tokio}
\showURL{%
\tempurl}


\bibitem[\protect\citeauthoryear{Welkie and Mahler}{Welkie and
  Mahler}{[n.d.]}]%
        {RustFFT}
\bibfield{author}{\bibinfo{person}{Allen Welkie} {and} \bibinfo{person}{Elliott
  Mahler}.} \bibinfo{year}{[n.d.]}\natexlab{}.
\newblock \bibinfo{booktitle}{\emph{{RustFFT}}}.
\newblock
\urldef\tempurl%
\url{https://github.com/ejmahler/RustFFT}
\showURL{%
\tempurl}


\bibitem[\protect\citeauthoryear{Yoshida and Gheri}{Yoshida and Gheri}{2020}]%
        {Yoshida2020}
\bibfield{author}{\bibinfo{person}{Nobuko Yoshida} {and}
  \bibinfo{person}{Lorenzo Gheri}.} \bibinfo{year}{2020}\natexlab{}.
\newblock \showarticletitle{{A Very Gentle Introduction to Multiparty Session
  Types}}. In \bibinfo{booktitle}{\emph{{Distributed Computing and Internet
  Technology}}} \emph{(\bibinfo{series}{{LNCS}},
  Vol.~\bibinfo{volume}{11969})}. \bibinfo{publisher}{{Springer}},
  \bibinfo{pages}{73--93}.
\newblock
\urldef\tempurl%
\url{https://doi.org/10.1007/978-3-030-36987-3_5}
\showDOI{\tempurl}


\bibitem[\protect\citeauthoryear{Yoshida, Hu, Neykova, and Ng}{Yoshida
  et~al\mbox{.}}{2013}]%
        {Yoshida2013}
\bibfield{author}{\bibinfo{person}{Nobuko Yoshida}, \bibinfo{person}{Raymond
  Hu}, \bibinfo{person}{Rumyana Neykova}, {and} \bibinfo{person}{Nicholas Ng}.}
  \bibinfo{year}{2013}\natexlab{}.
\newblock \showarticletitle{{The Scribble Protocol Language}}. In
  \bibinfo{booktitle}{\emph{{Trustworthy Global Computing}}}
  \emph{(\bibinfo{series}{{LNCS}}, Vol.~\bibinfo{volume}{8358})}.
  \bibinfo{publisher}{{Springer}}, \bibinfo{pages}{22--41}.
\newblock
\urldef\tempurl%
\url{https://doi.org/10.1007/978-3-319-05119-2_3}
\showDOI{\tempurl}


\bibitem[\protect\citeauthoryear{Zhou, Ferreira, Hu, Neykova, and Yoshida}{Zhou
  et~al\mbox{.}}{2020}]%
        {Zhou2020}
\bibfield{author}{\bibinfo{person}{Fangyi Zhou}, \bibinfo{person}{Francisco
  Ferreira}, \bibinfo{person}{Raymond Hu}, \bibinfo{person}{Rumyana Neykova},
  {and} \bibinfo{person}{Nobuko Yoshida}.} \bibinfo{year}{2020}\natexlab{}.
\newblock \showarticletitle{{Statically Verified Refinements for Multiparty
  Protocols}}.
\newblock \bibinfo{journal}{\emph{{Proceedings of the ACM on Programming
  Languages}}}  \bibinfo{volume}{4} (\bibinfo{year}{2020}),
  \bibinfo{pages}{1--30}.
\newblock
Issue {OOPSLA}.
\urldef\tempurl%
\url{https://doi.org/10.1145/3428216}
\showDOI{\tempurl}


\end{thebibliography}


%%% -*-BibTeX-*-
%%% Do NOT edit. File created by BibTeX with style
%%% ACM-Reference-Format-Journals [18-Jan-2012].

\begin{thebibliography}{7}

%%% ====================================================================
%%% NOTE TO THE USER: you can override these defaults by providing
%%% customized versions of any of these macros before the \bibliography
%%% command.  Each of them MUST provide its own final punctuation,
%%% except for \shownote{}, \showDOI{}, and \showURL{}.  The latter two
%%% do not use final punctuation, in order to avoid confusing it with
%%% the Web address.
%%%
%%% To suppress output of a particular field, define its macro to expand
%%% to an empty string, or better, \unskip, like this:
%%%
%%% \newcommand{\showDOI}[1]{\unskip}   % LaTeX syntax
%%%
%%% \def \showDOI #1{\unskip}           % plain TeX syntax
%%%
%%% ====================================================================

\ifx \showCODEN    \undefined \def \showCODEN     #1{\unskip}     \fi
\ifx \showDOI      \undefined \def \showDOI       #1{#1}\fi
\ifx \showISBNx    \undefined \def \showISBNx     #1{\unskip}     \fi
\ifx \showISBNxiii \undefined \def \showISBNxiii  #1{\unskip}     \fi
\ifx \showISSN     \undefined \def \showISSN      #1{\unskip}     \fi
\ifx \showLCCN     \undefined \def \showLCCN      #1{\unskip}     \fi
\ifx \shownote     \undefined \def \shownote      #1{#1}          \fi
\ifx \showarticletitle \undefined \def \showarticletitle #1{#1}   \fi
\ifx \showURL      \undefined \def \showURL       {\relax}        \fi
% The following commands are used for tagged output and should be
% invisible to TeX
\providecommand\bibfield[2]{#2}
\providecommand\bibinfo[2]{#2}
\providecommand\natexlab[1]{#1}
\providecommand\showeprint[2][]{arXiv:#2}

\bibitem[\protect\citeauthoryear{??}{Alt}{[n.d.]}]%
        {AlternatingBit}
 \bibinfo{year}{[n.d.]}\natexlab{}.
\newblock \bibinfo{title}{{Introduction to Protocol Engineering}}.
\newblock
\newblock
\urldef\tempurl%
\url{http://cs.uccs.edu/~cs522/pe/pe.htm}
\showURL{%
\tempurl}


\bibitem[\protect\citeauthoryear{Chen, Dezani-Ciancaglini, Scalas, and
  Yoshida}{Chen et~al\mbox{.}}{2017}]%
        {Chen2017}
\bibfield{author}{\bibinfo{person}{Tzu-Chun Chen}, \bibinfo{person}{Mariangiola
  Dezani-Ciancaglini}, \bibinfo{person}{Alceste Scalas}, {and}
  \bibinfo{person}{Nobuko Yoshida}.} \bibinfo{year}{2017}\natexlab{}.
\newblock \showarticletitle{{On the Preciseness of Subtyping in Session
  Types}}.
\newblock \bibinfo{journal}{\emph{{Logical Methods in Computer Science}}}
  \bibinfo{volume}{13} (\bibinfo{year}{2017}).
\newblock
Issue 2.
\urldef\tempurl%
\url{https://doi.org/10.23638/LMCS-13(2:12)2017}
\showDOI{\tempurl}


\bibitem[\protect\citeauthoryear{Edelkamp and Korf}{Edelkamp and Korf}{1998}]%
        {Edelkamp1998}
\bibfield{author}{\bibinfo{person}{Stefan Edelkamp} {and}
  \bibinfo{person}{Richard~E. Korf}.} \bibinfo{year}{1998}\natexlab{}.
\newblock \showarticletitle{{The Branching Factor of Regular Search Spaces}}.
  In \bibinfo{booktitle}{\emph{{Proceedings of the Fifteenth National
  Conference on Artificial Intelligence}}} \emph{(\bibinfo{series}{{AAAI}})}.
  \bibinfo{publisher}{{AAAI}}, \bibinfo{pages}{299--304}.
\newblock
\urldef\tempurl%
\url{https://www.aaai.org/Papers/AAAI/1998/AAAI98-042.pdf}
\showURL{%
\tempurl}


\bibitem[\protect\citeauthoryear{Gay and Hole}{Gay and Hole}{2005}]%
        {Gay2005}
\bibfield{author}{\bibinfo{person}{Simon Gay} {and} \bibinfo{person}{Malcolm
  Hole}.} \bibinfo{year}{2005}\natexlab{}.
\newblock \showarticletitle{{Subtyping for Session Types in the Pi Calculus}}.
\newblock \bibinfo{journal}{\emph{{Acta Informatica}}} \bibinfo{volume}{42},
  \bibinfo{number}{2--3} (\bibinfo{year}{2005}), \bibinfo{pages}{191--225}.
\newblock
\urldef\tempurl%
\url{https://doi.org/10.1007/s00236-005-0177-z}
\showDOI{\tempurl}


\bibitem[\protect\citeauthoryear{Ghilezan, Pantovi\'{c}, Proki\'{c}, Scalas,
  and Yoshida}{Ghilezan et~al\mbox{.}}{2021}]%
        {Ghilezan2021}
\bibfield{author}{\bibinfo{person}{Silvia Ghilezan}, \bibinfo{person}{Jovanka
  Pantovi\'{c}}, \bibinfo{person}{Ivan Proki\'{c}}, \bibinfo{person}{Alceste
  Scalas}, {and} \bibinfo{person}{Nobuko Yoshida}.}
  \bibinfo{year}{2021}\natexlab{}.
\newblock \showarticletitle{{Precise Subtyping for Asynchronous Multiparty
  Sessions}}.
\newblock \bibinfo{journal}{\emph{{Proceedings of the ACM on Programming
  Languages}}} \bibinfo{volume}{5}, \bibinfo{number}{{POPL}}
  (\bibinfo{year}{2021}), \bibinfo{pages}{1--28}.
\newblock
\urldef\tempurl%
\url{https://doi.org/10.1145/3434297}
\showDOI{\tempurl}


\bibitem[\protect\citeauthoryear{Korf}{Korf}{1985}]%
        {Korf1985}
\bibfield{author}{\bibinfo{person}{Richard~E. Korf}.}
  \bibinfo{year}{1985}\natexlab{}.
\newblock \showarticletitle{{Depth-First Iterative-Deepening: An Optimal
  Admissible Tree Search}}.
\newblock \bibinfo{journal}{\emph{{Artificial Intelligence}}}
  \bibinfo{volume}{27}, \bibinfo{number}{1} (\bibinfo{year}{1985}),
  \bibinfo{pages}{97--109}.
\newblock
\urldef\tempurl%
\url{https://doi.org/10.1016/0004-3702(85)90084-0}
\showDOI{\tempurl}


\bibitem[\protect\citeauthoryear{Palsberg and Jay}{Palsberg and Jay}{1998}]%
        {Palsberg1998}
\bibfield{author}{\bibinfo{person}{J. Palsberg} {and} \bibinfo{person}{C.B.
  Jay}.} \bibinfo{year}{1998}\natexlab{}.
\newblock \showarticletitle{{The Essence of the Visitor Pattern}}. In
  \bibinfo{booktitle}{\emph{{Proceedings of the 37th Annual Computer Software
  and Applications Conference}}}. \bibinfo{publisher}{{IEEE Computer Society}},
  \bibinfo{pages}{9--15}.
\newblock
\urldef\tempurl%
\url{https://doi.org/10.1109/CMPSAC.1998.716629}
\showDOI{\tempurl}


\end{thebibliography}

\end{document}

% --- supplement: arxiv/supplement.tex ---

%%
%% The "title" command has an optional parameter,
%% allowing the author to define a "short title" to be used in page headers.
\title{Deadlock-Free Asynchronous Message Reordering in Rust with Multiparty Session Types}

%%
%% The "author" command and its associated commands are used to define
%% the authors and their affiliations.
%% Of note is the shared affiliation of the first two authors, and the
%% "authornote" and "authornotemark" commands
%% used to denote shared contribution to the research.
\author{Zak Cutner}
%\email{zachary.cutner17@imperial.ac.uk}
\orcid{0000-0001-7180-4530}
\affiliation{%
    \institution{Imperial College London}
    \city{London}
    \country{UK}
}

\author{Nobuko Yoshida}
%\email{n.yoshida@imperial.ac.uk}
\orcid{0000-0002-3925-8557}
\affiliation{%
    \institution{Imperial College London}
    \city{London}
    \country{UK}
}

\author{Martin Vassor}
%\email{m.vassor@imperial.ac.uk}
\orcid{ 0000-0002-2057-0495 }
\affiliation{%
    \institution{Imperial College London}
    \city{London}
    \country{UK}
}

%%
%% By default, the full list of authors will be used in the page
%% headers. Often, this list is too long, and will overlap
%% other information printed in the page headers. This command allows
%% the author to define a more concise list
%% of authors' names for this purpose.
% \renewcommand{\shortauthors}{Trovato and Tobin, et al.}

%%
%% If your work has an appendix, this is the place to put it.
\appendix

\renewcommand{\thefigure}{A.\arabic{figure}}

\section{Multiparty Asynchronous Subtyping}
\label{sec:AppendixSubtyping}

In the main paper, we mentioned a few definitions from~\cite{Ghilezan2021} that
we omitted due to space constraints. We explain these in the first section of
this appendix. In the later sections, we provide examples of the rules shown in
the paper, as well as proofs of the theorems stated. Finally, we provide more
details on its implementation.

\subsection{Synchronous session subtyping}

We first give the rules for \emph{synchronous session subtyping} given by
\citet{Chen2017}
in \cref{fig:SynchronousSubtyping}. The relation $\leq:$ on sorts
is defined as the least reflexive binary relation such that $\tnat \leq: \tint$
\cite{Ghilezan2021}.

\begin{figure}[h]
  \begin{prooftree}
    \alwaysDoubleLine
    \AxiomC{}
    \RightLabel{\ruleSub{end}}
    \UnaryInfC{$\tend \leq \tend$}
  \end{prooftree}
  \begin{prooftree}
    \alwaysDoubleLine
    \AxiomC{$\forall i \in I$}
    \AxiomC{$S_i \leq: S_i'$}
    \AxiomC{$\lt_i \leq \lt_i'$}
    \RightLabel{\ruleSub{bra}}
    \TrinaryInfC{$\tbra{i \in I \cup J} \tin{p} \ell_i(S_i) . \lt_i \leq \tbra{i \in I} \tin{p} \ell_i(S_i') . \lt_i'$}
  \end{prooftree}
  \begin{prooftree}
    \alwaysDoubleLine
    \AxiomC{$\forall i \in I$}
    \AxiomC{$S_i' \leq: S_i$}
    \AxiomC{$\lt_i \leq \lt_i'$}
    \RightLabel{\ruleSub{sel}}
    \TrinaryInfC{$\tsel{i \in I} \tout{p} \ell_i(S_i) . \lt_i \leq \tsel{i \in I \cup J} \tout{p} \ell_i(S_i') . \lt_i'$}
  \end{prooftree}
  \caption{Subtyping rules for synchronous session types.}
  \label{fig:SynchronousSubtyping}
\end{figure}

\subsection{Precise Asynchronous Multiparty Subtyping}

\begin{figure*}
  \begin{prooftree}
    \alwaysDoubleLine
    \AxiomC{}
    \RightLabel{\ruleRef{end}}
    \UnaryInfC{$\tend \lesssim \tend$}
    \DisplayProof
    \hspace{.2cm}
    \alwaysDoubleLine
    \AxiomC{$S' \leq: S$}
    \AxiomC{$W \lesssim W'$}
    \RightLabel{\ruleRef{in}}
    \BinaryInfC{$\tin{p}\ell(S).W \lesssim \tin{p}\ell(S').W'$}
    \DisplayProof
    \hspace{.2cm}
    \alwaysDoubleLine
    \AxiomC{$S \leq: S'$}
    \AxiomC{$W \lesssim W'$}
    \RightLabel{\ruleRef{out}}
    \BinaryInfC{$\tout{p}\ell(S).W \lesssim \tout{p}\ell(S').W'$}
  \end{prooftree}
  \begin{prooftree}
    \alwaysDoubleLine
    \AxiomC{$S' \leq: S$}
    \AxiomC{$W \lesssim \refa{p}.W'$}
    \AxiomC{$\act{W} = \act{\refa{p}.W'}$}
    \RightLabel{\ruleRefA}
    \TrinaryInfC{$\tin{p}\ell(S).W \lesssim \refa{p}.\tin{p}\ell(S').W'$}
  \end{prooftree}
  \begin{prooftree}
    \alwaysDoubleLine
    \AxiomC{$S \leq: S'$}
    \AxiomC{$W \lesssim \refb{p}.W'$}
    \AxiomC{$\act{W} = \act{\refb{p}.W'}$}
    \RightLabel{\ruleRefB}
    \TrinaryInfC{$\tout{p}\ell(S).W \lesssim \refb{p}.\tout{p}\ell(S').W'$}
  \end{prooftree}
  \caption{Tree refinement relation rules for asynchronous session type trees.}
  \label{fig:TreeRefinement}
\end{figure*}
Asynchronous subtyping is more complex as it allows the order of operations to
be swapped for efficiency.
\begin{figure}[h]
  \centering
  \begin{gather*}
    \act{\tend} = \nil \qquad \act{\tin{p}\ell(S).W} = \{\tin{p}\} \cup \act{W} \\
    \act{\tout{p}\ell(S).W} = \{\tout{p}\} \cup \act{W}
  \end{gather*}
  \caption{Definition of the function \act{W} on a tree $W$.}
  \label{fig:ActionsFunction}
\end{figure}
\begin{figure}[h]
  \centering
  \begin{equation*}
    \tr{\trec{t}{\tin{a}\mathit{add}.\tout{c}\left\{\begin{array}{l}
          \mathit{add}.\tvar{t} \\
          \mathit{sub}.\tvar{t}
        \end{array}\right\}}} = \left[
      \begin{tikzpicture}[thick,baseline={([yshift=-.5ex]current bounding box.center)},sibling distance=2em,style={font=\footnotesize}]
        \node {$\tin{a}$}
        child { node {$\mathit{add}$} }
        child { node {$\tout{c}$}
            child { node {$\mathit{add}$} }
            child { node {$\tin{a}$}
                child { node {\ldots} }
                child { node {\ldots} } }
            child { node {$\mathit{sub}$} }
            child { node {$\tin{a}$}
                child { node {\ldots} }
                child { node {\ldots} } } };
      \end{tikzpicture}
      \right]
  \end{equation*}
  \caption{An example of a session type and its corresponding type tree.}
  \label{fig:TypeTree}
\end{figure}
The \emph{tree refinement relation} $\lesssim$ is defined coinductively on
session types that have only single-inputs (SI) and single-outputs (SO). It is
specified for type trees, which are possibly infinite trees representing a
session type. An example of a type tree is given in \cref{fig:TypeTree} and the
tree refinement relation by \citet{Ghilezan2021} is given in
\cref{fig:TreeRefinement}. The function $\act{W}$, the set of input and output
actions in a tree $W$, is defined in \cref{fig:ActionsFunction}.

Single-input and single-output types are session types which do \emph{not}
include branching, i.e.\ a type generated from the grammar
(\citet{Ghilezan2021}) \(T ::= \tend\ |\ \tin{p}\ell.T\ |\ \tout{p}\ell.T\).

\begin{remark}
  As mentioned in \cite{Ghilezan2021}, checking the set of actions within
  \ruleRefA{} and \ruleRefB{} is important. If this were not included, then
  unsound recursive subtypes that ``forget'' some interactions would be allowed.
  \citet{Ghilezan2021} give the following example of a potential subtype that
  forgets to input an initial $\ell'$ message. If the \ruleRefA{} rule were
  allowed to be used then $T$ would incorrectly be a subtype of $T'$.
  \begin{samepage}
    \begin{gather*}
      T = \tr{\trec{t}{\tin{p}\ell.\tvar{t}}} \qquad T' = \tin{q}\ell'.T = \tin{q}\ell'.\tr{\trec{t}{\tin{p}\ell.\tvar{t}}}
    \end{gather*}
    \begin{prooftree}
      \alwaysDoubleLine
      \AxiomC{$T \leq \tin{q}\ell'.T'$}
      \RightLabel{\ruleRefA}
      \UnaryInfC{$T = \tin{p}\ell.T \leq \tin{q}\ell'.\tin{p}\ell.T = T'$}
    \end{prooftree}
    \smallskip
  \end{samepage}
\end{remark}
%Since $\lesssim$ is only defined for single-input (SI) and single-output (SO)
%type trees, a relation for all session type trees is given by
%\citet{Ghilezan2021} using the rule in \cref{fig:AsynchronousSubtyping}.
%$\so{T}$ is the minimal set of trees containing only single outputs which span
%the tree $T$. $\si{T'}$ is similarly defined as the minimal set of trees
%containing only single inputs which span the tree $T'$. Using existential
%quantifiers for $\si{U}$ and $\so{V'}$ allows external choices to be added and
%internal choices to be removed, as we observed in synchronous subtyping.

\subsubsection{Examples of asynchronous subtyping}
\label{subsec:subtypeex}
\paragraph{Ring protocol.} We show an example of the subtyping rules for a ring
protocol with choice. The projected and optimised local types are given by
$\lt'$ and $\lt$ respectively.
\begin{gather*}
  \lt' = \trec{t}{\tin{a}\mathit{add}.\tout{c}\left\{\begin{array}{l}
      \mathit{add}.\tvar{t} \\
      \mathit{sub}.\tvar{t}
    \end{array}\right\}} \qquad
  \lt = \trec{t}{\tout{c}\left\{\begin{array}{l}
      \mathit{add}.\tin{a}\mathit{add}.\tvar{t} \\
      \mathit{sub}.\tin{a}\mathit{add}.\tvar{t}
    \end{array}\right\}}
\end{gather*}
Considering the tree types $T = \tr{\lt}$ and $T' = \tr{\lt'}$, we must show
that $T \leq T'$ using the tree refinement definition from the main paper in
order to prove that $\lt \leq \lt'$.
\begin{gather*}
		\forall U \in \so{T} :
		\forall V' \in \si{T'} :
		\exists W \in \si{U} :
		\exists W' \in \so{V'} :\\
		W \lesssim W'
\end{gather*}
However, in our case, since $T$ and $T'$ are already SI trees, we can express
the definition more simply using only SO tree transformations.
\begin{equation*}
  \forall W \in \so{T} :
  \exists W' \in \so{T'} :
  W \lesssim W'
\end{equation*}
We  define the SO sets for each tree coinductively and use coinduction to show
that $W \lesssim W'$ in all cases.
\begin{gather*}
  \pi_1 = \tout{c}\mathit{add}.\tin{a}\mathit{add} \qquad
  \pi_2 = \tout{c}\mathit{sub}.\tin{a}\mathit{add} \qquad
  \pi_3 = \tin{a}\mathit{add} \\
  \forall \pi_1.W \in \so{T} : W \in \so{T} \qquad \forall \pi_2.W \in \so{T} : W \in \so{T} \\
  \forall \pi_3.\pi_1.W' \in \so{T'} : \pi_3.W' \in \so{T'} \\ \forall \pi_3.\pi_2.W' \in \so{T'} : \pi_3.W' \in \so{T'}
\end{gather*}
\begin{enumerate}
  \item Using the coinductive hypothesis $\pi_1.W \lesssim \pi_3.\pi_1.W'$, we
        show that $W \lesssim \pi_3.W'$.
        \begin{prooftree}
          \alwaysDoubleLine
          \AxiomC{$W \lesssim \tin{a}\mathit{add}.W' = \pi_3.W'$}
          \RightLabel{\ruleRef{in}}
          \UnaryInfC{$\tin{a}\mathit{add}.W \lesssim \tin{a}\mathit{add}.\tin{a}\mathit{add}.W'$}
          \RightLabel{\ruleRefB}
          \UnaryInfC{$\tout{c}\mathit{add}.\tin{a}\mathit{add}.W \lesssim \tin{a}\mathit{add}.\tout{c}\mathit{add}.\tin{a}\mathit{add}.W'$}
        \end{prooftree}

  \item Using the coinductive hypothesis $\pi_2.W \lesssim \pi_3.\pi_2.W'$, we
        show that $W \lesssim \pi_3.W'$.
        \begin{prooftree}
          \alwaysDoubleLine
          \AxiomC{$W \lesssim \tin{a}\mathit{add}.W' = \pi_3.W'$}
          \RightLabel{\ruleRef{in}}
          \UnaryInfC{$\tin{a}\mathit{add}.W \lesssim \tin{a}\mathit{add}.\tin{a}\mathit{add}.W'$}
          \RightLabel{\ruleRefB}
          \UnaryInfC{$\tout{c}\mathit{sub}.\tin{a}\mathit{add}.W \lesssim \tin{a}\mathit{add}.\tout{c}\mathit{sub}.\tin{a}\mathit{add}.W'$}
        \end{prooftree}
\end{enumerate}
\paragraph{Double buffering protocol.} We also show the optimisation to the
double buffering protocol discussed in \cref{sec:Overview}. The kernel sends
two $\mathit{ready}$ messages at once, allowing the source to fill up both
buffers sooner.
\begin{gather*}
  \lt = \tout{s}\mathit{ready}.\lt' = \tout{s}\mathit{ready}.\trec{x}{\tout{s}\mathit{ready}.\tin{s}\mathit{copy}.\tin{t}\mathit{ready}.\tout{t}\mathit{copy}.\tvar{x}} \\
  \lt' = \trec{x}{\tout{s}\mathit{ready}.\tin{s}\mathit{copy}.\tin{t}\mathit{ready}.\tout{t}\mathit{copy}.\tvar{x}}
\end{gather*}
As before, we consider the tree types $T = \tr{\lt}$ and $T' = \tr{\lt'}$ which
are already both SI and SO trees. Therefore, to prove that $\lt \leq \lt'$ we
need only to show that $T \lesssim T'$.
\begin{prooftree}
  \alwaysDoubleLine
  \AxiomC{$T' \leq W$}
  \RightLabel{\ruleRef{in}}
  \UnaryInfC{$\tout{t}\mathit{copy}.T' \leq \tout{t}\mathit{copy}.W$}
  \RightLabel{\ruleRef{in}}
  \UnaryInfC{$\tin{t}\mathit{ready}.\tout{t}\mathit{copy}.T' \leq \tin{t}\mathit{ready}.\tout{t}\mathit{copy}.W$}
  \RightLabel{\ruleRef{in}}
  \UnaryInfC{$\tin{s}\mathit{copy}.\tin{t}\mathit{ready}.\tout{t}\mathit{copy}.T' \leq \tin{s}\mathit{copy}.\tin{t}\mathit{ready}.\tout{t}\mathit{copy}.W$}
  \RightLabel{\ruleRefB}
  \UnaryInfC{$\tout{s}\mathit{ready}.W \leq \tin{s}\mathit{copy}.\tin{t}\mathit{ready}.\tout{t}\mathit{copy}.\tout{s}\mathit{ready}.W$}
  \RightLabel{\ruleRef{out}}
  \UnaryInfC{$T = \tout{s}\mathit{ready}.T' \leq \tout{s}\mathit{ready}.W = T'$}
\end{prooftree}
\begin{equation*}
  W = \tin{s}\mathit{copy}.\tin{t}\mathit{ready}.\tout{t}\mathit{copy}.T'
\end{equation*}

\subsection{Proofs for the Subtyping Algorithm}
\label{sec:Proofs}

\setcounter{theorem}{2}

\begin{lemma}
  Given finite prefixes $\pi$ and $\pi'$, $\pair{\pi}{\pi'}$ can be reduced only
  a finite number of times.
\end{lemma}

\begin{proof}
  We prove this in a similar style to \cite[Lemma~10]{Gay2005}, using a
  well-founded relation. We first consider the relation $R =
    \{(\pair{\pi_1}{\pi_1'}, \pair{\pi_2}{\pi_2'}) \mid \pair{\pi_2}{\pi_2'}
    \reduce \pair{\pi_1}{\pi_1'} \}$. We define the function $\fterms{\pi}$,
  which returns the number of terms in $\pi$, such that
  \begin{gather*}
    \fterms{\epsilon}        = 0                             \\
    \fterms{\tout{p}\ell(S)} = 1                             \qquad
    \fterms{\tin{p}\ell(S)}  = 1                             \\
    \fterms{\pi_1.\pi_2}     = \fterms{\pi_1} + \fterms{\pi_2}
  \end{gather*}
  and we define $\fterms{\pair{\pi}{\pi'}}$ such that
  $\fterms{\pair{\pi}{\pi'}} = \pair{\fterms{\pi}}{\fterms{\pi'}}$. We
  also define the lexicographical ordering
  \begin{align*}
     & \pair{m}{n} < \pair{m'}{n'}                                    \\
     & \quad \iff m < m' \textit{ or } (m = m' \textit{ and } n < n')
  \end{align*}
  We next show that reducing a pair of prefixes decrements the number of terms
  in the pair by induction over our reduction rules.
  \begin{equation}
    \label{eqn:PrefixReduction}
    \begin{aligned}
       & (\pair{\pi_1}{\pi_1'}, \pair{\pi_2}{\pi_2'}) \in R                           \\
       & \quad \implies \fterms{\pair{\pi_1}{\pi_1'}} < \fterms{\pair{\pi_2}{\pi_2'}}
    \end{aligned}
  \end{equation}
  \begin{itemize}
    \item Case \ruleRed{i}. We have that $\pi_2 = \tin{p}\ell(S).\pi_1$ and
          $\pi'_2 = \tin{p}\ell(S').\pi'_1$.

          Since $\pair{\pi_2}{\pi_2'} \reduce \pair{\pi_1}{\pi_1'}$,
	  we have that, by definition of \(R\), $(\pair{\pi_1}{\pi_1'},
	  \pair{\pi_2}{\pi_2'}) \in R$.

          As we know that $\fterms{\pi_2} = \fterms{\pi_1} + 1$ and
          $\fterms{\pi'_2} = \fterms{\pi'_1} + 1$, we also have that
          \[\fterms{\pair{\pi_1}{\pi_1'}} < \fterms{\pair{\pi_2}{\pi_2'}}\] as
          required.

    \item Case \ruleRed{o}. We have that $\pi_2 = \tout{p}\ell(S).\pi_1$
          and $\pi'_2 = \tout{p}\ell(S').\pi'_1$.

          Since $\pair{\pi_2}{\pi_2'} \reduce \pair{\pi_1}{\pi_1'}$, we have
	  that, by definition of $R$, $(\pair{\pi_1}{\pi_1'},
	  \pair{\pi_2}{\pi_2'}) \in R$.

          As we know that $\fterms{\pi_2} = \fterms{\pi_1} + 1$ and
          $\fterms{\pi'_2} = \fterms{\pi'_1} + 1$, we also have that
          \[\fterms{\pair{\pi_1}{\pi_1'}} < \fterms{\pair{\pi_2}{\pi_2'}}\] as
          required.

    \item Case \ruleRedA. We have that $\pi_2 = \tin{p}\ell(S).\pi_1$ and
          $\pi'_2 = \refa{p}.\tin{p}\ell(S').\pi'_1$.

          Since $\pair{\pi_2}{\pi_2'} \reduce \pair{\pi_1}{\refa{p}.\pi_1'}$,
	  by definition of $R$,
          we have that $(\pair{\pi_1}{\refa{p}.\pi_1'}, \pair{\pi_2}{\pi_2'}) \in R$
          .

          As we know $\fterms{\pi_2} = \fterms{\pi_1} + 1$ and $\fterms{\pi'_2}
            = \fterms{\refa{p}.\pi'_1} + 1$, we have that
          $\fterms{\pair{\pi_1}{\refa{p}.\pi_1'}} <
            \fterms{\pair{\pi_2}{\pi_2'}}$ as required.

    \item Case \ruleRedB. We have that $\pi_2 = \tout{p}\ell(S).\pi_1$ and
          $\pi'_2 = \refb{p}.\tout{p}\ell(S').\pi'_1$.

          Since $\pair{\pi_2}{\pi_2'} \reduce \pair{\pi_1}{\refb{p}.\pi_1'}$,
	  by definition of $R$,
          we have that $(\pair{\pi_1}{\refb{p}.\pi_1'}, \pair{\pi_2}{\pi_2'})\in R$
          .

          As we know $\fterms{\pi_2} = \fterms{\pi_1} + 1$ and $\fterms{\pi'_2}
            = \fterms{\refb{p}.\pi'_1} + 1$, we have that
          $\fterms{\pair{\pi_1}{\refb{p}.\pi_1'}} <
            \fterms{\pair{\pi_2}{\pi_2'}}$ as required.
  \end{itemize}
  The ordering of terms of pairs is well-founded since both components are
  bounded from below by 0. Therefore, from \cref{eqn:PrefixReduction}, $R$ is
  also well-founded and so pairs of prefixes cannot be reduced ad infinitum.
\end{proof}
\begin{figure*}
  \begin{minipage}{\textwidth}
    \centering
    \begin{prooftree}
      \AxiomC{$\Sigma_1\map{\pair{\tin{p}\ell}{\lt} \leq \pair{\tin{q}\ell'}{\lt}} = \tin{p}\ell$}
      \AxiomC{$\act{\tin{p}\ell.\tend} \not\supseteq \act{\tin{q}\ell'.\tend}$}
      \RightLabel{\ruleAlg{asm}}
      \BinaryInfC{$\tin{p}\ell.\tin{p}\ell; \Sigma_3 \vdash \triple{\tin{p}\ell}{\lt}{0} \leq \triple{\tin{q}\ell'}{\lt}{1}$}
      \RightLabel{\ruleAlg{sub}}
      \UnaryInfC{$\tin{p}\ell.\tin{p}\ell; \Sigma_3 \vdash \triple{\tin{p}\ell.\tin{p}\ell}{\lt}{0} \leq \triple{\tin{q}\ell'.\tin{p}\ell}{\lt}{1}$}
      \RightLabel{\ruleAlg{ii}}
      \UnaryInfC{$\tin{p}\ell; \Sigma_3 \vdash \triple{\tin{p}\ell}{\tin{p}\ell.\lt}{0} \leq \triple{\tin{q}\ell'}{\tin{p}\ell.\lt}{1}$}
      \RightLabel{\ruleAlg{\(\mu\)r}}
      \UnaryInfC{$\tin{p}\ell; \Sigma_2 \vdash \triple{\tin{p}\ell}{\tin{p}\ell.\lt}{1} \leq \triple{\tin{q}\ell'}{\lt}{2}$}
      \RightLabel{\ruleAlg{\(\mu\)l}}
      \UnaryInfC{$\tin{p}\ell; \Sigma_1 \vdash \triple{\tin{p}\ell}{\lt}{1} \leq \triple{\tin{q}\ell'}{\lt}{2}$}
      \RightLabel{\ruleAlg{ii}}
      \UnaryInfC{$\epsilon; \Sigma_1 \vdash \triple{\epsilon}{\tin{p}\ell.\lt}{1} \leq \triple{\epsilon}{\tin{q}\ell'.\lt}{2}$}
      \RightLabel{\ruleAlg{\(\mu\)l}}
      \UnaryInfC{$\epsilon; \nil \vdash \triple{\epsilon}{\lt}{2} \leq \triple{\epsilon}{\tin{q}\ell'.\lt}{2}$}
    \end{prooftree}
    \begin{gather*}
      \Sigma_1 = \map{\pair{\epsilon}{\lt} \leq \pair{\epsilon}{\lt'} \mapsto \epsilon}                              \quad
      \Sigma_2 = \Sigma_1\map{\pair{\tin{p}\ell}{\lt} \leq \pair{\tin{q}\ell'}{\lt} \mapsto \tin{p}\ell}             \\
      \Sigma_3 = \Sigma_2\map{\pair{\tin{p}\ell}{\tin{p}\ell.\lt} \leq \pair{\tin{q}\ell'}{\lt} \mapsto \tin{p}\ell}
    \end{gather*}
  \end{minipage}
  \caption{Demonstration of how our algorithm correctly prevents actions from being forgotten.}
  \label{fig:AlgorithmForgottenActions}
\end{figure*}

\begin{theorem}[Termination]
  Our subtyping algorithm always eventually terminates.
\end{theorem}

\begin{proof}
  We prove termination by arguing that each of our subtyping algorithm rules
  can be run only a finite number of times.
  \begin{itemize}
    \item \ruleAlg{end} and \ruleAlg{asm} can each be run only once since
          they are terminating rules.

    \item \ruleAlg{sub} can be run only a finite number of times since a
          pair of prefixes can be reduced only a finite number of times, as
          proven in Lemma~3.

    \item \ruleAlg{oi}, \ruleAlg{oo}, \ruleAlg{ii} and
          \ruleAlg{io} can be run only a finite number of times since
          \begin{enumerate*}[label=\textbf{(\arabic*)}]
            \item the number of terms in $\lt$ and $\lt'$ is finite and
            \item recursion must be explicitly unrolled with
            \ruleAlg{\(\mu\)l} or \ruleAlg{\(\mu\)r}, which themselves can only
            be run a finite number of times.
          \end{enumerate*}

    \item \ruleAlg{\(\mu\)l} and \ruleAlg{\(\mu\)r} can be run only a finite
          number of times since
          \begin{enumerate*}[label=\textbf{(\arabic*)}]
            \item the bounds $n$ and $n'$ are finite;
            \item each execution of the rule decrements $n$ or $n'$
            respectively; and
            \item no rule allows $n$ or $n'$ to be incremented.
          \end{enumerate*}

    \item \ruleAlg{tra} can be run only a finite number of times since
          \begin{enumerate*}[label=\textbf{(\arabic*)}]
            \item the bound $k$ is finite;
            \item each execution of the rule decrements $k$; and
            \item no rule allows $k$ to be incremented.
          \end{enumerate*}
  \end{itemize}
\end{proof}

\begin{theorem}[Soundness]
  Our subtyping algorithm is sound.
\end{theorem}

\begin{proof}
  To prove that our algorithm is sound, we must show that each rule in the
  precise subtyping by \cite{Ghilezan2021} is matched by a rule in our
  algorithm.
  \begin{itemize}
    \item The subtyping relation rule is
          matched by \ruleAlg{oi}, \ruleAlg{oo}, \ruleAlg{ii} and
          \ruleAlg{io}.

    \item \ruleRef{end} is matched by \ruleAlg{end}. Since both prefixes are
          empty and $\lt = \lt' = \tend$, we trivially have that $\pi.\lt
            \leq \pi'.\lt'$.

    \item The coinductive behaviour of the refinement relation rules is
          matched by \ruleAlg{asm}. From our map of assumptions $\Sigma$, we
          have that $\pi.\lt \leq \pi'.\lt'$.

    \item \ruleRef{in} and \ruleRef{out} are matched by \ruleRed{i} and
          \ruleRed{o} respectively, which can be applied with
          \ruleAlg{sub}.

    \item \ruleRefA{} and \ruleRefB{} are matched by \ruleRedA{} and
          \ruleRedB{} respectively, which can be applied with \ruleAlg{sub}.
          To ensure that actions are not forgotten when using coinduction,
          the refinement relation rules additionally require that the
          actions of the resulting type trees are equal.
          Our reduction rules emit
          this check since they deal only with finite sequences. We instead
          prevent actions from being forgotten in our \ruleAlg{asm} rule by
          ensuring that each reordered action in the supertype (which is
          contained in $\pi'$) is encountered in the recursive part of the
          subtype ($\rho'$) by using the subset relation.

          \citet{Ghilezan2021} use the example of checking $\lt =
            \trec{t}{\tin{p}\ell.\tvar{t}} \leq \tin{q}\ell'.\lt = \lt'$
          to demonstrate why comparing the actions is necessary---we
          show how our algorithm also correctly rejects this subtype in
          \cref{fig:AlgorithmForgottenActions} (with a bound of 2 for
          brevity).

          We cannot apply \ruleAlg{asm} as the final rule since
          \[\act{\tin{p}\ell.\tend} = \{\tin{p}\} \not\supseteq \{\tin{q}\} =
            \act{\tin{q}\ell'.\tend}.\] This subset check spots that the
          $\tin{q}$ action was not present in the recursive part of the
          supposed subtype. Otherwise, our algorithm would incorrectly
          conclude that $\lt \leq \lt'$.
  \end{itemize}
  It is straightforward to also argue that our algorithm preserves
  reflexivity. We have that $\triple{\epsilon}{\lt}{n} \leq
    \triple{\epsilon}{\lt}{n}$, providing $n$ is sufficiently large to ensure
  that each recursion can be visited at least once. The prefixes of both sides
  will always be identical so will reduce to $\pair{\epsilon}{\epsilon}$ using
  \ruleRed{i} or \ruleRed{o}. These reductions will be applied using
  \ruleAlg{sub} until either $\tend$ is encountered and \ruleAlg{end} can be
  applied or the algorithm loops, which allows the application of
  \ruleAlg{asm}.
\end{proof}

\begin{lemma}
  Given finite prefixes $\pi$ and $\pi'$, the time complexity of reducing
  $\pair{\pi}{\pi'}$ is $\oh(\min(\tsize{\pi}, \tsize{\pi'}))$.
\end{lemma}

\begin{proof}
  We prove this quite simply by induction over our reduction rules.
  \begin{itemize}
    \item Inductive case $\pair{\tin{p}\ell.\pi}{\tin{p}\ell.\pi'}$. Then, we
          can perform a \ruleRed{i} reduction to get $\pair{\pi}{\pi'}$. By the
          inductive hypothesis, the complexity of reducing $\pair{\pi}{\pi'}$ is
          $\oh(\min(\tsize{\pi}, \tsize{\pi'}))$. Therefore, since we require
          one additional reduction step, the complexity of reducing
          $\pair{\tin{p}\ell.\pi}{\tin{p}\ell.\pi'}$ is $\oh(\min(\tsize{\pi},
              \tsize{\pi'}) + 1) = \oh(\min(\tsize{\pi} + 1, \tsize{\pi'} + 1)) =
            \oh(\min(\tsize{\tin{p}\ell.\pi}, \tsize{\tin{p}\ell.\pi'}))$ as
          required.

    \item Inductive case $\pair{\tout{p}\ell.\pi}{\tout{p}\ell.\pi'}$. The proof
          is the same as for the $\pair{\tin{p}\ell.\pi}{\tin{p}\ell.\pi'}$
          case, except a \ruleRed{o} reduction is applied.

    \item Inductive case $\pair{\tin{p}\ell.\pi}{\refa{p}.\tin{p}\ell.\pi'}$.
          Then, we can perform a \ruleRedA{} reduction to get
          $\pair{\pi}{\refa{p}.\pi'}$. By the inductive hypothesis, the
          complexity of reducing $\pair{\pi}{\refa{p}.\pi'}$ is
          $\oh(\min(\tsize{\pi}, \tsize{\refa{p}.\pi'}))$. Therefore, since we
          require one additional reduction step, the complexity of reducing
          $\pair{\tin{p}\ell.\pi}{\refa{p}.\tin{p}\ell.\pi'}$ is
          $\oh(\min(\tsize{\pi}, \tsize{\refa{p}.\pi'}) + 1) =
            \oh(\min(\tsize{\pi} + 1, \tsize{\refa{p}.\pi'} + 1)) =
            \oh(\min(\tsize{\tin{p}\ell.\pi}, \tsize{\refa{p}.\tin{p}\ell.\pi'}))$
          as required.

    \item Inductive case $\pair{\tout{p}\ell.\pi}{\refb{p}.\tout{p}\ell.\pi'}$.
          The proof is the same as for the
          $\pair{\tin{p}\ell.\pi}{\refa{p}.\tin{p}\ell.\pi'}$ case, except a
          \ruleRedB{} reduction is applied.

    \item Base case $\pair{\pi}{\pi'}$ where $\pair{\pi}{\pi'}$ cannot be
          reduced. Since $\forall \pi . \tsize{\pi} > 0$, the complexity of
          reducing $\pair{\pi}{\pi'}$ is $\oh(0) = \oh(\min(\tsize{\pi},
              \tsize{\pi'}))$ as required.
  \end{itemize}
\end{proof}

\begin{theorem}[Complexity]
  Consider $\lt$ and $\lt'$ as (possibly infinite) trees $\tr{\lt}$ and
  $\tr{\lt'}$ with \emph{asymptotic branching factors} $b$ and $b'$
  respectively~\cite{Edelkamp1998,Korf1985}. Our algorithm has time complexity
  $\oh(n\min(b, b')^n)$ and space complexity $\oh(n\min(b, b'))$ in the worst
  case to determine if $\lt \leq \lt'$ with bound $n$.
\end{theorem}

\begin{proof}
  Let us consider for now only the left tree $\tr{\lt}$ which has asymptotic
  branching factor $b$. In the worst case, the number of nodes we have to
  explore in the tree is
  \begin{equation*}
    1 + b + b^2 + \ldots + b^{n - 1} = \sum_{i = 0}^{n - 1} b^i = \frac{b^n - 1}{b - 1}
  \end{equation*}
  Therefore, exploring the left tree up to a depth of $n$ has time complexity
  $\oh(b^n)$. Similarly, exploring the right tree $\tr{\lt'}$ to depth $n$ has
  complexity $\oh(b'^n)$. Note that in the worst case we cannot reduce any
  prefixes as we go along and must therefore do the entire reduction at the end
  of each exploration path.

  Suppose at the end of some exploration path we have the pair of prefixes
  $\pair{\pi}{\pi'}$. Since $\tsize{\pi} = \tsize{\pi'}$ (our algorithm does not
  add an uneven number of terms to either side of the prefix pair), the time
  complexity of reducing this pair is $\oh(\min(\tsize{\pi}, \tsize{\pi'})) =
    \oh(\tsize{\pi})$ from \cref{thm:PrefixesComplexity}.

  At the end of our exploration, we will have $b^{n - 1}$ prefixes, each with
  size $n$. Therefore, the complexity of reducing all the pairs of prefixes is
  $\oh(nb^{n - 1})$ so the total complexity for the exploration and reduction is
  $\oh(b^n + nb^{n - 1}) = \oh(nb^n)$.

  Our algorithm stops when the bound $n$ is reached in
  \emph{either} of the two trees so the overall time complexity of exploration
  is $\oh(\min(nb^n, nb'^n)) = \oh(n\min(b, b')^n)$.

  Considering space complexity, it is clear that the greatest amount of memory
  will be required at the end of an exploration path (when the prefixes are
  greatest in length). For the left tree, at this point, we will need to store
  \begin{enumerate*}[label=\textbf{(\arabic*)}]
    \item a prefix of length $n$ (since we are considering the worst case); and
    \item the other $b$ siblings to visit at $n - 1$ levels.
  \end{enumerate*}

  Therefore, the total space complexity for the left tree is $\oh(n + b(n - 1))
    = \oh(nb)$ and the space complexity for exploring both trees is
  $\oh(\min(nb, nb')) = \oh(n\min(b, b'))$.
\end{proof}

\subsection{Algorithm Examples}
\label{subsec:algorthmex}

\paragraph{Ring protocol.} We again use the ring protocol with choice and show
that our algorithm can successfully check the optimisation to \ppt{b}. The
derivation trees are shown in \cref{fig:app:derivation_ring}.
\begingroup
\small
\begin{gather*}
  \lt = \trec{t}{\tout{c}\left\{\begin{array}{l}
      \mathit{add}.\tin{a}\mathit{add}.\tvar{t} \\
      \mathit{sub}.\tin{a}\mathit{add}.\tvar{t}
    \end{array}\right\}} \qquad
  \lt' = \trec{t}{\tin{a}\mathit{add}.\tout{c}\left\{\begin{array}{l}
      \mathit{add}.\tvar{t} \\
      \mathit{sub}.\tvar{t}
    \end{array}\right\}}
\end{gather*}
\endgroup
\begin{figure*}
\begin{prooftree}
  \AxiomC{$(\star)$}
  \AxiomC{$\rho_3; \Sigma_2 \vdash \triple{\tout{c} \mathit{add} . \tin{a} \mathit{add}}{\lt}{0} \leq \triple{\tin{a} \mathit{add} . \tout{c}  \mathit{sub}}{\lt'}{0}$}
  \RightLabel{\ruleAlg{io}}
  \BinaryInfC{$\rho_1; \Sigma_2 \vdash \triple{\tout{c} \mathit{add}}{\tin{a} \mathit{add} . \lt}{0} \leq \triple{\tin{a} \mathit{add}}{\tout{c}\left\{\begin{array}{l}
          \mathit{add}.\lt' \\
          \mathit{sub}.\lt'
        \end{array}\right\}}{0}$}
  \AxiomC{$(\dagger)$}
  \AxiomC{$\rho_4; \Sigma_2 \vdash \triple{\tout{c} \mathit{sub} . \tin{a} \mathit{add}}{\lt}{0} \leq \triple{\tin{a} \mathit{add} . \tout{c}  \mathit{add}}{\lt'}{0}$}
  \RightLabel{\ruleAlg{io}}
  \BinaryInfC{$\rho_2; \Sigma_2 \vdash \triple{\tout{c} \mathit{sub}}{\tin{a} \mathit{add} . \lt}{0} \leq \triple{\tin{a} \mathit{add}}{\tout{c}\left\{\begin{array}{l}
          \mathit{add}.\lt' \\
          \mathit{sub}.\lt'
        \end{array}\right\}}{0}$}
  \RightLabel{\ruleAlg{oi}}
  \BinaryInfC{$\epsilon; \Sigma_2 \vdash \triple{\epsilon}{\tout{c}\left\{\begin{array}{l}
          \mathit{add}.\tin{a}\mathit{add}.\lt \\
          \mathit{sub}.\tin{a}\mathit{add}.\lt
        \end{array}\right\}}{0} \leq \triple{\epsilon}{\tin{a}\mathit{add}.\tout{c}\left\{\begin{array}{l}
          \mathit{add}.\lt' \\
          \mathit{sub}.\lt'
        \end{array}\right\}}{0}$}
  \RightLabel{\ruleAlg{\(\mu\)r}}
  \UnaryInfC{$\epsilon; \Sigma_1 \vdash \triple{\epsilon}{\tout{c}\left\{\begin{array}{l}
          \mathit{add}.\tin{a}\mathit{add}.\lt \\
          \mathit{sub}.\tin{a}\mathit{add}.\lt
        \end{array}\right\}}{0} \leq \triple{\epsilon}{\lt'}{1}$}
  \RightLabel{\ruleAlg{\(\mu\)l}}
  \UnaryInfC{$\epsilon; \nil \vdash \triple{\epsilon}{\lt}{1} \leq \triple{\epsilon}{\lt'}{1}$}
\end{prooftree}
\begin{equation*}
  (\star)\ =\
  \AxiomC{}
  \RightLabel{\ruleRedB}
  \UnaryInfC{$\pair{\tout{c} \mathit{add} . \tin{a} \mathit{add}}{\tin{a} \mathit{add} . \tout{c}  \mathit{add}} \reduce \pair{\tin{a} \mathit{add}}{\tin{a} \mathit{add}}$}
  \AxiomC{}
  \RightLabel{\ruleRed{i}}
  \UnaryInfC{$\pair{\tin{a} \mathit{add}}{\tin{a} \mathit{add}} \reduce \pair{\epsilon}{\epsilon}$}
  \AxiomC{$\act{\rho_3.\tend} \supseteq \act{\tend}$}
  \RightLabel{\ruleAlg{asm}}
  \UnaryInfC{$\rho_3; \Sigma_2 \vdash \triple{\epsilon}{\lt}{0} \leq \triple{\epsilon}{\lt'}{0}$}
  \RightLabel{\ruleAlg{sub}}
  \BinaryInfC{$\rho_3; \Sigma_2 \vdash \triple{\tin{a} \mathit{add}}{\lt}{0} \leq \triple{\tin{a} \mathit{add}}{\lt'}{0}$}
  \RightLabel{\ruleAlg{sub}}
  \BinaryInfC{$\rho_3; \Sigma_2 \vdash \triple{\tout{c} \mathit{add} . \tin{a} \mathit{add}}{\lt}{0} \leq \triple{\tin{a} \mathit{add} . \tout{c}  \mathit{add}}{\lt'}{0}$}
  \DisplayProof
\end{equation*}
\begin{equation*}
  (\dagger)\ =\
  \AxiomC{}
  \RightLabel{\ruleRedB}
  \UnaryInfC{$\pair{\tout{c} \mathit{sub} . \tin{a} \mathit{add}}{\tin{a} \mathit{add} . \tout{c}  \mathit{sub}} \reduce \pair{\tin{a} \mathit{add}}{\tin{a} \mathit{add}}$}
  \AxiomC{}
  \RightLabel{\ruleRed{i}}
  \UnaryInfC{$\pair{\tin{a} \mathit{add}}{\tin{a} \mathit{add}} \reduce \pair{\epsilon}{\epsilon}$}
  \AxiomC{$\act{\rho_4.\tend} \supseteq \act{\tend}$}
  \RightLabel{\ruleAlg{asm}}
  \UnaryInfC{$\rho_4; \Sigma_2 \vdash \triple{\epsilon}{\lt}{0} \leq \triple{\epsilon}{\lt'}{0}$}
  \RightLabel{\ruleAlg{sub}}
  \BinaryInfC{$\rho_4; \Sigma_2 \vdash \triple{\tin{a} \mathit{add}}{\lt}{0} \leq \triple{\tin{a} \mathit{add}}{\lt'}{0}$}
  \RightLabel{\ruleAlg{sub}}
  \BinaryInfC{$\rho_4; \Sigma_2 \vdash \triple{\tout{c} \mathit{sub} . \tin{a} \mathit{add}}{\lt}{0} \leq \triple{\tin{a} \mathit{add} . \tout{c}  \mathit{sub}}{\lt'}{0}$}
  \DisplayProof
\end{equation*}
\begin{gather*}
  \rho_1 = \tout{c} \mathit{add} \qquad
  \rho_2 = \tout{c} \mathit{sub} \qquad
  \rho_3 = \rho_1 . \tin{a} \mathit{add} \qquad
  \rho_4 = \rho_2 . \tin{a} \mathit{add} \\
  \Sigma_1 = \map{\pair{\epsilon}{\lt} \leq \pair{\epsilon}{\lt'} \mapsto \epsilon} \qquad
  \Sigma_2 = \Sigma_1\map{\pair{\epsilon}{\tout{c}\left\{\begin{array}{l}
        \mathit{add}.\tin{a}\mathit{add}.\lt \\
        \mathit{sub}.\tin{a}\mathit{add}.\lt
      \end{array}\right\}} \leq \pair{\epsilon}{\lt'} \mapsto \epsilon}
\end{gather*}
\caption{Derivation trees to verify the subtyping of the Ring protocol}
\label{fig:app:derivation_ring}
\end{figure*}

\paragraph{Alternating bit protocol.} We consider the alternating bit
protocol \cite{AlternatingBit}. We
construct a global type $\gt$ for the protocol such that when projected onto
the receiver, its local type matches the protocol specification.

\begingroup
\small
\begin{equation*}
  \gt = \gtrec{t}\gtmsg{s}{r}{\mathit{d0} . \gtmsg{r}{s}{\begin{array}{l}
        \mathit{a0} . \gtrec{u}\gtmsg{s}{r}{\mathit{d1} . \gtmsg{r}{s}{\begin{array}{l}
              \mathit{a0} . \gtvar{u} \\
              \mathit{a1} . \gtvar{t}
            \end{array}}} \\
        \mathit{a1} . \gtvar{t}
      \end{array}}}
\end{equation*}
\begin{equation*}
  \gtproj{\gt}{s} = \trec{t}{\tout{r} \mathit{d0} . \tin{r} \left\{\begin{array}{l}
      \mathit{a0} . \trec{x}{\tout{r} \mathit{d1} . \tin{r} \left\{\begin{array}{l}
          \mathit{a0} . \tvar{x} \\
          \mathit{a1} . \tvar{t}
        \end{array}\right\}} \\
      \mathit{a1} . \tvar{t}
    \end{array}\right\}} \qquad
\end{equation*}
\begin{equation*}
  \gtproj{\gt}{r} = \trec{t}{\tin{s} \mathit{d0} . \tout{s} \left\{\begin{array}{l}
      \mathit{a0} . \trec{x}{\tin{s} \mathit{d1} . \tout{s} \left\{\begin{array}{l}
          \mathit{a0} . \tvar{x} \\
          \mathit{a1} . \tvar{t}
        \end{array}\right\}} \\
      \mathit{a1} . \tvar{t}
    \end{array}\right\}}
\end{equation*}
\begin{gather*}
  \lt = \trec{t}{\tin{s} \left\{\begin{array}{l}
      \mathit{d0} . \tout{s} \mathit{a0} . \tvar{t} \\
      \mathit{d1} . \tout{s} \mathit{a1} . \tvar{t}
    \end{array}\right\}} \qquad
  \lt' = \gtproj{\gt}{r} \\
\end{gather*}
\endgroup
We then use our subtyping algorithm to confirm that the type given by the
protocol specification for the receiver \cite{AlternatingBit} is a subtype
of its projected version. In this derivation, we omit some exploration paths
for brevity. The derivation tree is in~\cref{fig:app:derivation_AB}.

\begin{figure*}
\begingroup
\small
\begin{prooftree}
  \AxiomC{$\act{\tin{s} \mathit{d0} . \tout{s} \mathit{a0} . \tend} \supseteq \act{\tend}$}
  \RightLabel{\ruleAlg{asm}}
  \UnaryInfC{$\tin{s} \mathit{d0} . \tout{s} \mathit{a0}; \Sigma_4 \vdash \triple{\epsilon}{\lt}{0} \leq \triple{\epsilon}{\lt'}{0}$}
  \RightLabel{\ruleAlg{sub}}
  \UnaryInfC{$\tin{s} \mathit{d0} . \tout{s} \mathit{a0}; \Sigma_4 \vdash \triple{\tout{s} \mathit{a1}}{\lt}{0} \leq \triple{\tout{s} \mathit{a1}}{\lt'}{0}$}
  \AxiomC{\ldots}
  \RightLabel{\ruleAlg{out-out}}
  \BinaryInfC{$\tin{s} \mathit{d0} . \tout{s} \mathit{a0}; \Sigma_4 \vdash \triple{\epsilon}{\tout{s} \mathit{a1} . \lt}{0} \leq \triple{\epsilon}{\tout{s} \left\{\begin{array}{l}
          \mathit{a0} . \lt_1 \\
          \mathit{a1} . \lt'
        \end{array}\right\}}{0}$}
  \RightLabel{\ruleAlg{sub}}
  \UnaryInfC{$\tin{s} \mathit{d0} . \tout{s} \mathit{a0}; \Sigma_4 \vdash \triple{\tin{s} \mathit{d1}}{\tout{s} \mathit{a1} . \lt}{0} \leq \triple{\tin{s} \mathit{d1}}{\tout{s} \left\{\begin{array}{l}
          \mathit{a0} . \lt_1 \\
          \mathit{a1} . \lt'
        \end{array}\right\}}{0}$}
  \AxiomC{\ldots}
  \RightLabel{\ruleAlg{in-in}}
  \BinaryInfC{$\tin{s} \mathit{d0} . \tout{s} \mathit{a0}; \Sigma_4 \vdash \triple{\epsilon}{\tin{s} \left\{\begin{array}{l}
          \mathit{d0} . \tout{s} \mathit{a0} . \lt \\
          \mathit{d1} . \tout{s} \mathit{a1} . \lt
        \end{array}\right\}}{0} \leq \triple{\epsilon}{\tin{s} \mathit{d1} . \tout{s} \left\{\begin{array}{l}
          \mathit{a0} . \lt_1 \\
          \mathit{a1} . \lt'
        \end{array}\right\}}{0}$}
  \RightLabel{\ruleAlg{\(\mu\)r}}
  \UnaryInfC{$\tin{s} \mathit{d0} . \tout{s} \mathit{a0}; \Sigma_3 \vdash \triple{\epsilon}{\tin{s} \left\{\begin{array}{l}
          \mathit{d0} . \tout{s} \mathit{a0} . \lt \\
          \mathit{d1} . \tout{s} \mathit{a1} . \lt
        \end{array}\right\}}{0} \leq \triple{\epsilon}{\lt_1}{1}$}
  \RightLabel{\ruleAlg{\(\mu\)l}}
  \UnaryInfC{$\tin{s} \mathit{d0} . \tout{s} \mathit{a0}; \Sigma_2 \vdash \triple{\epsilon}{\lt}{1} \leq \triple{\epsilon}{\lt_1}{1}$}
  \RightLabel{\ruleAlg{sub}}
  \UnaryInfC{$\tin{s} \mathit{d0} . \tout{s} \mathit{a0}; \Sigma_2 \vdash \triple{\tout{s} \mathit{a0}}{\lt}{1} \leq \triple{\tout{s} \mathit{a0}}{\lt_1}{1}$}
  \AxiomC{\ldots}
  \RightLabel{\ruleAlg{out-out}}
  \BinaryInfC{$\tin{s} \mathit{d0}; \Sigma_2\vdash \triple{\epsilon}{\tout{s} \mathit{a0} . \lt}{1} \leq \triple{\epsilon}{\tout{s} \left\{\begin{array}{l}
          \mathit{a0} . \lt_1 \\
          \mathit{a1} . \lt'
        \end{array}\right\}}{1}$}
  \RightLabel{\ruleAlg{sub}}
  \UnaryInfC{$\tin{s} \mathit{d0}; \Sigma_2 \vdash \triple{\tin{s} \mathit{d0}}{\tout{s} \mathit{a0} . \lt}{1} \leq \triple{\tin{s} \mathit{d0}}{\tout{s} \left\{\begin{array}{l}
          \mathit{a0} . \lt_1 \\
          \mathit{a1} . \lt'
        \end{array}\right\}}{1}$}
  \AxiomC{\ldots}
  \RightLabel{\ruleAlg{in-in}}
  \BinaryInfC{$\epsilon; \Sigma_2 \vdash \triple{\epsilon}{\tin{s} \left\{\begin{array}{l}
          \mathit{d0} . \tout{s} \mathit{a0} . \lt \\
          \mathit{d1} . \tout{s} \mathit{a1} . \lt
        \end{array}\right\}}{1} \leq \triple{\epsilon}{\tin{s} \mathit{d0} . \tout{s} \left\{\begin{array}{l}
          \mathit{a0} . \lt_1 \\
          \mathit{a1} . \lt'
        \end{array}\right\}}{1}$}
  \RightLabel{\ruleAlg{\(\mu\)r}}
  \UnaryInfC{$\epsilon; \Sigma_1 \vdash \triple{\epsilon}{\tin{s} \left\{\begin{array}{l}
          \mathit{d0} . \tout{s} \mathit{a0} . \lt \\
          \mathit{d1} . \tout{s} \mathit{a1} . \lt
        \end{array}\right\}}{1} \leq \triple{\epsilon}{\lt'}{2}$}
  \RightLabel{\ruleAlg{\(\mu\)l}}
  \UnaryInfC{$\epsilon; \nil \vdash \triple{\epsilon}{\lt}{2} \leq \triple{\epsilon}{\lt'}{2}$}
\end{prooftree}
\begin{gather*}
  \lt_1 = \trec{x}{\tin{s} \mathit{d1} . \tout{s} \left\{\begin{array}{l}
      \mathit{a0} . \tvar{x} \\
      \mathit{a1} . \lt'
    \end{array}\right\}} \\
  \Sigma_1 = \map{\pair{\epsilon}{\lt} \leq \pair{\epsilon}{\lt'} \mapsto \epsilon} \qquad
  \Sigma_2 = \Sigma_1\map{\pair{\epsilon}{\tin{s} \left\{\begin{array}{l}
        \mathit{d0} . \tout{s} \mathit{a0} . \lt \\
        \mathit{d1} . \tout{s} \mathit{a1} . \lt
      \end{array}\right\}} \leq \pair{\epsilon}{\lt'} \mapsto \epsilon} \\
  \Sigma_3 = \Sigma_2\map{\pair{\epsilon}{\lt} \leq \pair{\epsilon}{\lt_1} \mapsto \tin{s} \mathit{d0} . \tout{s} \mathit{a0}} \qquad
  \Sigma_4 = \Sigma_3\map{\pair{\epsilon}{\tin{s} \left\{\begin{array}{l}
        \mathit{d0} . \tout{s} \mathit{a0} . \lt \\
        \mathit{d1} . \tout{s} \mathit{a1} . \lt
      \end{array}\right\}} \leq \pair{\epsilon}{\lt_1} \mapsto \tin{s} \mathit{d0} . \tout{s} \mathit{a0}}
\end{gather*}
\endgroup
\caption{Derivation trees to verify the subtyping of the Alternating-Bit protocol}
\label{fig:app:derivation_AB}
\end{figure*}

\subsection{Implementation of the Algorithm}
\label{subsec:impalg}
In practice, we implement our asynchronous subtyping algorithm on FSMs $\efsm$
and $\oefsm$ rather than local types $\lt$ and $\lt'$. We discuss the practical
considerations behind some of our implementation decisions and explain why these
are equivalent to the theory presented in \cref{sec:PreciseSubtyping}.

\paragraph{Prefixes.} We define prefixes somewhat differently in Rust to avoid
copying memory where possible. A prefix is a struct containing three elements:
\begin{enumerate}
  \item A list of lazy-removable \code{transitions} which make up the prefix.
        A boolean for each element indicates whether the corresponding
        transition has been lazily removed. A transition is either
        $\tout{p}\ell(S)$ or $\tin{p}\ell(S)$, which is identical to a prefix
        term in the theory.

  \item A \code{start} index, which indicates that the first \code{start}
        elements in \code{transitions} should be ignored as they have been
        lazily removed.

  \item A list of indexes of elements that have been lazily \code{removed} by
        setting their boolean to \code{true}.
\end{enumerate}

\noindent\begin{minipage}{\linewidth}
  \begin{lstlisting}[language=Rust, aboveskip=\baselineskip, belowskip=\baselineskip]
struct Prefix {
  transitions: Vec<(bool, Transition)>,
  start: usize,
  removed: Vec<usize>,
}
\end{lstlisting}
\end{minipage}
Elements can be lazily removed either by incrementing \code{start} or by setting
the element's boolean to \code{true} and adding its index to \code{removed}. We
favour the first option so as to maintain the invariant
\begin{equation*}
  \code{transitions.len() > 0} \implies \code{!transitions[0].0}
\end{equation*}
where the tuple indexing syntax \code{(x, y).0} will evaluate to \code{x}, the
first element of the tuple. To ensure that this invariant holds, we must advance
\code{start} as far as possible when removing a transition at the head of the
prefix.

We also give the option of storing snapshots to previous versions of a prefix. A
snapshot stores
\begin{enumerate*}[label=\textbf{(\arabic*)}]
  \item the \code{size} of the transitions list;
  \item the value of the \code{start} field; and
  \item the size of the \code{removed} list,
\end{enumerate*}
all taken at the time of the snapshot.

\noindent\begin{minipage}{\linewidth}
  \begin{lstlisting}[language=Rust, aboveskip=\baselineskip, belowskip=\baselineskip]
struct Snapshot {
  size: usize,
  start: usize,
  removed: usize,
}
\end{lstlisting}
\end{minipage}
We can easily revert a prefix to a previous snapshot by
\begin{enumerate*}[label=\textbf{(\arabic*)}]
  \item finding the elements of \code{removed} that have been added since the
  snapshot;
  \item setting the boolean to \code{false} for each of these elements to
  restore them;
  \item truncating \code{transitions} to its previous \code{size};
  \item restoring \code{start} to its previous value; and
  \item truncating \code{removed} to its previous size.
\end{enumerate*}

\paragraph{Visitor.} We use the visitor pattern \cite{Palsberg1998} to traverse
a pair of FSMs $\efsm$ and $\oefsm$. In our visitor, we store
\begin{enumerate*}[label=\textbf{(\arabic*)}]
  \item the \code{fsms} we are traversing;
  \item a matrix of \code{history} (as we will see, this is equivalent to the
  assumptions map $\Sigma$ in the theory); and
  \item a pair of \code{prefixes}, as in the theory.
\end{enumerate*}

\noindent\begin{minipage}{\linewidth}
  \begin{lstlisting}[language=Rust, aboveskip=\baselineskip, belowskip=\baselineskip]
struct SubtypeVisitor {
  fsms: Pair<Fsm>,
  history: Matrix<Previous>,
  prefixes: Pair<Prefix>,
}
\end{lstlisting}
\end{minipage}
The \code{history} matrix stores a value for each combination of states in
$\efsm$ and $\oefsm$ (it effectively has the type $\lvert \efsm \rvert \times
  \lvert \oefsm \rvert \reduce \code{Previous}$). Each of these values stores
a \code{Previous} struct containing the number of \code{visits} this combination
of states has remaining and optionally (if it has been visited before) a pair of
\code{snapshots} taken during the last visit to this combination.

\noindent\begin{minipage}{\linewidth}
  \begin{lstlisting}[language=Rust, aboveskip=\baselineskip, belowskip=\baselineskip]
struct Previous {
  visits: usize,
  snapshots: Option<Pair<Snapshot>>,
}
\end{lstlisting}
\end{minipage}
In the theory, termination is guaranteed by allowing recursions to be unrolled
only $n$ times. Here, our `$n$' is the value of \code{visits}, which limits how
many times the same combination of states can be visited. Since $\efsm$ and
$\oefsm$ each contain a finite number of states and their cross product is also
finite, this will achieve termination just as in the theory (provided that $n$
is also finite). Otherwise, this is identical to the theory---our \code{history}
matrix corresponds to the map of assumptions $\Sigma$ and the \code{Previous}
struct represents a single mapping (we use snapshots in place of prefixes).

Each state in an \code{Fsm} is given a unique \code{StateIndex} that identifies
it. Our \code{Visitor} is executed using its recursive \code{visit} method,
which takes a mutable reference to the \code{Visitor} and a \code{StateIndex}
for each \code{Fsm}.

\noindent\begin{minipage}{\linewidth}
  \begin{lstlisting}[language=Rust, aboveskip=\baselineskip, belowskip=\baselineskip]
impl Visitor {
  fn visit(&mut self, states: Pair<StateIndex>) -> bool {
    [...]
  }
}
\end{lstlisting}
\end{minipage}
This \code{visit} method performs our asynchronous subtyping algorithm as
follows.
\begin{enumerate}
  \item We look up the current combination states in our \code{history} to
        ensure \code{visits} is positive, as in \ruleAlg{\(\mu\)l} and
        \ruleAlg{\(\mu\)r}. If it is not then our bound has been exhausted and we
        return with \code{false}.

  \item We attempt to reduce the pair of prefixes, as in \ruleAlg{sub}. This
        reduction process follows precisely the same rules as in the theory,
        lazily removing transitions where appropriate.

  \item If the current combination of states has been visited before, we
        attempt to use our assumptions map to return \code{true}, as in
        \ruleAlg{asm}. The method we use to check the actions sets, explained
        below, differs slightly from the theory.

  \item If both FSMs are in a \emph{terminal} state and the prefixes are
        empty then we return \code{true}, as in \ruleAlg{end}.

  \item If both FSMs are in a \emph{non-terminal} state then we
        \begin{itemize}
          \item take a snapshot of the current prefixes;
          \item update the \code{history} matrix for the current combination
                of states, setting \code{visits} to \code{visits - 1} and
                \code{snapshots} to the snapshots we just took;
          \item for each pair of transitions we can take from the current
                combination of states we
                \begin{itemize}
                  \item add each transition in the pair to its
                        corresponding prefix;
                  \item recurse using the \code{visit} method, setting the
                        \code{states} argument to the pair of end states
                        corresponding to our transitions; and after the
                        recursive call returns
                  \item revert the changes made to the prefixes by using
                        the snapshot we took previously;
                \end{itemize}
          \item restore the current \code{history} matrix entry to its
                original value; and
          \item return a value depending on the results of the recursive
                calls and whether the current combination of states performs
                input or output actions, as described by the quantifiers in
                \ruleAlg{\{in,out\}-\{in,out\}}.
        \end{itemize}

  \item Otherwise, one of the FSMs has reached a terminal state but the other
        has not. In this case, there is no way to progress and we return
        \code{false}.
\end{enumerate}
By performing a depth-first search we can make changes to the \code{history} and
\code{prefixes} fields of our visitor and revert them later, using a snapshot
for each prefix. This method improves the efficiency of our algorithm
by avoiding copying memory. If we instead used a
breadth-first search, for instance, we would need to store a separate visitor
for each frontier of our search. This would require an expensive copy of the
\code{history} and \code{prefixes}.

\paragraph{Checking actions.} In the theory, our \ruleAlg{asm} rule compares two
sets of actions to ensure that it is safe to apply an assumption. Specifically,
it checks that the actions of the supposed supertype's prefix ($\pi'$) are a
subset of the actions performed by the subtype since the assumption was made
($\rho'$). In our algorithm, we can actually perform a far cheaper but
equivalent check thanks to our use of lazy removal. We need only to confirm that
\begin{equation}
  \label{eqn:ActionCheck}
	\begin{aligned}
		&\code{transitions[start..] == }\\
		&\code{transitions[..snapshot.size][snapshot.start..]}
	\end{aligned}
\end{equation}
for each prefix/snapshot combination. The syntax \code{x[i..]} evaluates to
\code{x} with the first \code{i} elements removed and \code{y[..j]} evaluates to
the first \code{j} elements of \code{y}. Surprisingly, this check is identical
in effect to the one performed in the theory due to two observations.
\begin{enumerate}
  \item Comparing the full list of transitions (which include labels and
        sorts) rather than only their actions is sound since the reduction
        rules do not allow sends or receives to or from the same participant
        to be reordered.

        We can easily prove this by contradiction. Suppose $\tin{p}\ell(S) \in
          \pi'$ and $\tin{p}\ell'(S') \in \rho'$ and we can apply
        \ruleAlg{asm}. Clearly, $\tin{p}\ell(S)$ has not been reduced by
        \ruleRef{in}, otherwise, it would not still be in $\pi'$.
        Therefore, at some point since the assumption was added to
        $\Sigma$, \ruleRefA{} must have been used to move $\tin{p}\ell(S)$
        before $\tin{p}\ell'(S')$. This is a contradiction because
        $\refa{p}$ cannot contain $\tin{p}\ell'(S')$ by definition so
        \ruleRefA{} cannot have been applied. A similar argument can be
        made for the output case.

  \item The version in the theory is intuitively checking whether there is an
        action that `hangs on' to the far left of $\pi'$ for multiple
        iterations of a recursive type without ever being reduced. If this is
        the case, then the action will not be matched by any of the actions in
        $\rho'$ (otherwise it would have been reduced) so $\rho' \not\supseteq
          \pi'$.

        In our implementation, if an action hangs on to the supertype's prefix
        then it will never be lazily removed. This means that the size of the
        prefix will grow on each iteration of the FSM since \code{start} is
        never advanced. Since
        \begin{equation*}
		\begin{aligned}
			&\code{transitions[start..].len() !=}\\
			&\code{transitions[..snapshot.size][snapshot.start..].len()}
		\end{aligned}
        \end{equation*}
        \cref{eqn:ActionCheck} is trivially false. Note that the full check in
        \cref{eqn:ActionCheck} must be performed, rather than only comparing
        the lengths, to ensure that the prefixes do actually match those of
        the assumption, as in \ruleAlg{asm}.
\end{enumerate}

\paragraph{Fail-early reductions.} Our practical implementation performs the
same reduction rules on prefixes as described in the theory. However, we add a
practical optimisation to, in some cases, determine that a particular path
cannot succeed before even reaching the bound.

For example, consider the pair
$\pair{\tin{p}\ell(S).\pi}{\tout{q}\ell'(S').\tin{p}\ell(S).\pi'}$. Regardless
of what $\pi$ and $\pi'$ are set to, this pair cannot be reduced as it will
require using the \ruleRefA{} but $\tout{q}\ell'(S')$ cannot be contained in
$\refa{p}$. Therefore, if at some point we reach a pair of prefixes which looks
like $\pair{\tin{p}\ell(S).\tin{p}\ell(S)}{\tout{q}\ell'(S').\tin{p}\ell(S)}$,
we can immediately return \code{false} as there is no way that it can ever be
reduced by adding more terms.

\section{Benchmarking results}

\subsection{Session-Based Rust Implementations}
\label{sec:RuntimeData}
\paragraph{Results for the stream benchmark.}
\begin{center}
	\sffamily\footnotesize
	\begin{tabular}{cccccc}
		\toprule
		& \multicolumn{5}{c}{Throughput ($n$/$\mu$s)}                                                            \\ \cmidrule(l){2-6}
		$n$ & \sesh                                       & \multicrusty & \ferrite & \rumpsteak & \rumpsteak (opt.) \\
		\midrule
		10  & 0.019389                                    & 0.011678     & 0.011386 & 0.202587   & 0.215583          \\
		20  & 0.028142                                    & 0.014325     & 0.012994 & 0.336988   & 0.356978          \\
		30  & 0.034193                                    & 0.015160     & 0.013463 & 0.427489   & 0.437795          \\
		40  & 0.036566                                    & 0.016072     & 0.013671 & 0.488886   & 0.517468          \\
		50  & 0.040315                                    & 0.016577     & 0.014126 & 0.545378   & 0.583366          \\
		\bottomrule
	\end{tabular}
\end{center}

\paragraph{Results for the double buffering benchmark.}
\begin{center}
	\sffamily\footnotesize
	\begin{tabular}{cccccc}
		\toprule
		& \multicolumn{5}{c}{Throughput ($n$/$\mu$s)}                                                             \\ \cmidrule(l){2-6}
		$n$   & \sesh                                       & \multicrusty & \ferrite  & \rumpsteak & \rumpsteak (opt.) \\
		\midrule
		5000  & 6.929567                                    & 5.675414     & 7.617643  & 27.704354  & 32.340989         \\
		10000 & 13.138401                                   & 11.254181    & 14.649028 & 44.154722  & 50.126532         \\
		15000 & 18.739983                                   & 16.187341    & 20.429845 & 56.813002  & 67.884430         \\
		20000 & 24.103215                                   & 20.481378    & 25.506427 & 67.595301  & 82.039366         \\
		25000 & 28.609966                                   & 25.050058    & 29.629025 & 75.848611  & 96.010424         \\
		\bottomrule
	\end{tabular}
\end{center}

\paragraph{Results for the FFT benchmark.}
\begin{center}
	\sffamily\footnotesize
	\begin{tabular}{cccccc}
		\toprule
		& \multicolumn{5}{c}{Throughput ($n$/$\mu$s)}                                                   \\ \cmidrule(l){2-6}
		$n$  & \sesh                                       & \multicrusty & \ferrite & \rustfft & \rumpsteak \\
		\midrule
		1000 & 0.551154                                    & 0.810134     & 1.458279 & 9.320778 & 5.038554   \\
		2000 & 1.050958                                    & 1.515538     & 2.513855 & 9.313359 & 7.206404   \\
		3000 & 1.510567                                    & 2.163629     & 3.496405 & 9.333569 & 8.421026   \\
		4000 & 1.935263                                    & 2.783617     & 4.198723 & 9.336939 & 9.262763   \\
		5000 & 2.303627                                    & 3.261020     & 4.811375 & 9.323199 & 9.316716   \\
		\bottomrule
	\end{tabular}
\end{center}

\pagebreak
\subsection{Verifying Asynchronous Message Reordering}
\label{sec:VerificationData}
\paragraph{Results for the stream benchmark.}
\begin{center}
	\sffamily\footnotesize
	\begin{tabular}{cccc}
		\toprule
		& \multicolumn{3}{c}{Running time (s)}                         \\ \cmidrule(l){2-4}
		$n$ & \concur                              & \kmc     & \rumpsteak \\
		\midrule
		0   & 0.003476                             & 0.005504 & 0.001872   \\
		10  & 0.008556                             & 0.019316 & 0.001899   \\
		20  & 0.020673                             & 0.057417 & 0.001848   \\
		30  & 0.041673                             & 0.142145 & 0.001906   \\
		40  & 0.076425                             & 0.276446 & 0.001874   \\
		50  & 0.127865                             & 0.496929 & 0.002080   \\
		60  & 0.198541                             & 0.805577 & 0.002083   \\
		70  & 0.292471                             & 1.233327 & 0.002064   \\
		80  & 0.422571                             & 1.780778 & 0.002178   \\
		90  & 0.583863                             & 2.475443 & 0.002190   \\
		100 & 0.767426                             & 3.349204 & 0.002249   \\
		\bottomrule
	\end{tabular}
\end{center}

\paragraph{Results for the nested choice benchmark.}

\begin{center}
	\sffamily\footnotesize
	\begin{tabular}{cccc}
		\toprule
		& \multicolumn{3}{c}{Running time (s)}                          \\ \cmidrule(l){2-4}
		$n$ & \concur                              & \kmc      & \rumpsteak \\
		\midrule
		1   & 0.002295                             & 0.006554  & 0.000702   \\
		2   & 0.004504                             & 0.014901  & 0.000755   \\
		3   & 0.016347                             & 0.072423  & 0.001745   \\
		4   & 0.224858                             & 1.515528  & 0.007656   \\
		5   & 4.692525                             & 41.688068 & 0.157548   \\
		\bottomrule
	\end{tabular}
\end{center}

\paragraph{Results for the ring benchmark.}

\begin{center}
	\sffamily\footnotesize
	\begin{tabular}{ccc}
		\toprule
		& \multicolumn{2}{c}{Running time (s)}              \\ \cmidrule(l){2-3}
		$n$ & \kmc                                 & \rumpsteak \\
		\midrule
		2   & 0.004007                             & 0.000675   \\
		4   & 0.007239                             & 0.000731   \\
		6   & 0.011806                             & 0.000701   \\
		8   & 0.018822                             & 0.000835   \\
		10  & 0.024842                             & 0.000757   \\
		12  & 0.049232                             & 0.000777   \\
		14  & 0.102257                             & 0.000744   \\
		16  & 0.191078                             & 0.000813   \\
		18  & 0.340262                             & 0.000817   \\
		20  & 0.570656                             & 0.000766   \\
		22  & 0.913412                             & 0.000911   \\
		24  & 1.391075                             & 0.000737   \\
		26  & 2.042452                             & 0.000752   \\
		28  & 2.918943                             & 0.000732   \\
		30  & 4.099072                             & 0.000769   \\
		\bottomrule
	\end{tabular}
\end{center}
\pagebreak

\paragraph{Results for $k$-buffering benchmark.}
\begin{center}
	\sffamily\footnotesize
	\begin{tabular}{ccc}
		\toprule
		& \multicolumn{2}{c}{Running time (s)}              \\ \cmidrule(l){2-3}
		$n$ & \kmc                                 & \rumpsteak \\
		\midrule
		0   & 0.004825                             & 0.000630   \\
		5   & 0.007668                             & 0.000747   \\
		10  & 0.013613                             & 0.000705   \\
		15  & 0.018770                             & 0.000667   \\
		20  & 0.031376                             & 0.000825   \\
		25  & 0.054910                             & 0.000718   \\
		30  & 0.080879                             & 0.000760   \\
		35  & 0.122315                             & 0.000853   \\
		40  & 0.170533                             & 0.000802   \\
		45  & 0.236354                             & 0.000792   \\
		50  & 0.305749                             & 0.000916   \\
		55  & 0.406071                             & 0.000882   \\
		60  & 0.506069                             & 0.000959   \\
		65  & 0.639521                             & 0.001028   \\
		70  & 0.773931                             & 0.001057   \\
		75  & 0.954399                             & 0.001045   \\
		80  & 1.127240                             & 0.001125   \\
		85  & 1.359600                             & 0.001120   \\
		90  & 1.571745                             & 0.001164   \\
		95  & 1.869339                             & 0.001156   \\
		100 & 2.111687                             & 0.001234   \\
		\bottomrule
	\end{tabular}
\end{center}

%%
%% The next two lines define the bibliography style to be used, and
%% the bibliography file.
\bibliographystyle{ACM-Reference-Format}
\bibliography{main}

% --- supplement: supplement.tex ---

%%
%% The "title" command has an optional parameter,
%% allowing the author to define a "short title" to be used in page headers.
\title{Deadlock-Free Asynchronous Message Reordering in Rust with Multiparty Session Types}

%%
%% The "author" command and its associated commands are used to define
%% the authors and their affiliations.
%% Of note is the shared affiliation of the first two authors, and the
%% "authornote" and "authornotemark" commands
%% used to denote shared contribution to the research.
\author{Zak Cutner}
%\email{zachary.cutner17@imperial.ac.uk}
\orcid{0000-0001-7180-4530}
\affiliation{%
    \institution{Imperial College London}
    \city{London}
    \country{UK}
}

\author{Nobuko Yoshida}
%\email{n.yoshida@imperial.ac.uk}
\orcid{0000-0002-3925-8557}
\affiliation{%
    \institution{Imperial College London}
    \city{London}
    \country{UK}
}

\author{Martin Vassor}
%\email{m.vassor@imperial.ac.uk}
\orcid{ 0000-0002-2057-0495 }
\affiliation{%
    \institution{Imperial College London}
    \city{London}
    \country{UK}
}

%%
%% By default, the full list of authors will be used in the page
%% headers. Often, this list is too long, and will overlap
%% other information printed in the page headers. This command allows
%% the author to define a more concise list
%% of authors' names for this purpose.
% \renewcommand{\shortauthors}{Trovato and Tobin, et al.}

%%
%% If your work has an appendix, this is the place to put it.
\appendix

\renewcommand{\thefigure}{A.\arabic{figure}}

\section{Multiparty Asynchronous Subtyping}
\label{sec:AppendixSubtyping}

In the main paper, we mentioned a few definitions from~\cite{Ghilezan2021} that
we omitted due to space constraints. We explain these in the first section of
this appendix. In the later sections, we provide examples of the rules shown in
the paper, as well as proofs of the theorems stated. Finally, we provide more
details on its implementation.

\subsection{Synchronous session subtyping}

We first give the rules for \emph{synchronous session subtyping} given by
\citet{Chen2017}
in \cref{fig:SynchronousSubtyping}. The relation $\leq:$ on sorts
is defined as the least reflexive binary relation such that $\tnat \leq: \tint$
\cite{Ghilezan2021}.

\begin{figure}[h]
  \begin{prooftree}
    \alwaysDoubleLine
    \AxiomC{}
    \RightLabel{\ruleSub{end}}
    \UnaryInfC{$\tend \leq \tend$}
  \end{prooftree}
  \begin{prooftree}
    \alwaysDoubleLine
    \AxiomC{$\forall i \in I$}
    \AxiomC{$S_i \leq: S_i'$}
    \AxiomC{$\lt_i \leq \lt_i'$}
    \RightLabel{\ruleSub{bra}}
    \TrinaryInfC{$\tbra{i \in I \cup J} \tin{p} \ell_i(S_i) . \lt_i \leq \tbra{i \in I} \tin{p} \ell_i(S_i') . \lt_i'$}
  \end{prooftree}
  \begin{prooftree}
    \alwaysDoubleLine
    \AxiomC{$\forall i \in I$}
    \AxiomC{$S_i' \leq: S_i$}
    \AxiomC{$\lt_i \leq \lt_i'$}
    \RightLabel{\ruleSub{sel}}
    \TrinaryInfC{$\tsel{i \in I} \tout{p} \ell_i(S_i) . \lt_i \leq \tsel{i \in I \cup J} \tout{p} \ell_i(S_i') . \lt_i'$}
  \end{prooftree}
  \caption{Subtyping rules for synchronous session types.}
  \label{fig:SynchronousSubtyping}
\end{figure}

\subsection{Precise Asynchronous Multiparty Subtyping}

\begin{figure*}
  \begin{prooftree}
    \alwaysDoubleLine
    \AxiomC{}
    \RightLabel{\ruleRef{end}}
    \UnaryInfC{$\tend \lesssim \tend$}
    \DisplayProof
    \hspace{.2cm}
    \alwaysDoubleLine
    \AxiomC{$S' \leq: S$}
    \AxiomC{$W \lesssim W'$}
    \RightLabel{\ruleRef{in}}
    \BinaryInfC{$\tin{p}\ell(S).W \lesssim \tin{p}\ell(S').W'$}
    \DisplayProof
    \hspace{.2cm}
    \alwaysDoubleLine
    \AxiomC{$S \leq: S'$}
    \AxiomC{$W \lesssim W'$}
    \RightLabel{\ruleRef{out}}
    \BinaryInfC{$\tout{p}\ell(S).W \lesssim \tout{p}\ell(S').W'$}
  \end{prooftree}
  \begin{prooftree}
    \alwaysDoubleLine
    \AxiomC{$S' \leq: S$}
    \AxiomC{$W \lesssim \refa{p}.W'$}
    \AxiomC{$\act{W} = \act{\refa{p}.W'}$}
    \RightLabel{\ruleRefA}
    \TrinaryInfC{$\tin{p}\ell(S).W \lesssim \refa{p}.\tin{p}\ell(S').W'$}
  \end{prooftree}
  \begin{prooftree}
    \alwaysDoubleLine
    \AxiomC{$S \leq: S'$}
    \AxiomC{$W \lesssim \refb{p}.W'$}
    \AxiomC{$\act{W} = \act{\refb{p}.W'}$}
    \RightLabel{\ruleRefB}
    \TrinaryInfC{$\tout{p}\ell(S).W \lesssim \refb{p}.\tout{p}\ell(S').W'$}
  \end{prooftree}
  \caption{Tree refinement relation rules for asynchronous session type trees.}
  \label{fig:TreeRefinement}
\end{figure*}
Asynchronous subtyping is more complex as it allows the order of operations to
be swapped for efficiency.
\begin{figure}[h]
  \centering
  \begin{gather*}
    \act{\tend} = \nil \qquad \act{\tin{p}\ell(S).W} = \{\tin{p}\} \cup \act{W} \\
    \act{\tout{p}\ell(S).W} = \{\tout{p}\} \cup \act{W}
  \end{gather*}
  \caption{Definition of the function \act{W} on a tree $W$.}
  \label{fig:ActionsFunction}
\end{figure}
\begin{figure}[h]
  \centering
  \begin{equation*}
    \tr{\trec{t}{\tin{a}\mathit{add}.\tout{c}\left\{\begin{array}{l}
          \mathit{add}.\tvar{t} \\
          \mathit{sub}.\tvar{t}
        \end{array}\right\}}} = \left[
      \begin{tikzpicture}[thick,baseline={([yshift=-.5ex]current bounding box.center)},sibling distance=2em,style={font=\footnotesize}]
        \node {$\tin{a}$}
        child { node {$\mathit{add}$} }
        child { node {$\tout{c}$}
            child { node {$\mathit{add}$} }
            child { node {$\tin{a}$}
                child { node {\ldots} }
                child { node {\ldots} } }
            child { node {$\mathit{sub}$} }
            child { node {$\tin{a}$}
                child { node {\ldots} }
                child { node {\ldots} } } };
      \end{tikzpicture}
      \right]
  \end{equation*}
  \caption{An example of a session type and its corresponding type tree.}
  \label{fig:TypeTree}
\end{figure}
The \emph{tree refinement relation} $\lesssim$ is defined coinductively on
session types that have only single-inputs (SI) and single-outputs (SO). It is
specified for type trees, which are possibly infinite trees representing a
session type. An example of a type tree is given in \cref{fig:TypeTree} and the
tree refinement relation by \citet{Ghilezan2021} is given in
\cref{fig:TreeRefinement}. The function $\act{W}$, the set of input and output
actions in a tree $W$, is defined in \cref{fig:ActionsFunction}.

Single-input and single-output types are session types which do \emph{not}
include branching, i.e.\ a type generated from the grammar
(\citet{Ghilezan2021}) \(T ::= \tend\ |\ \tin{p}\ell.T\ |\ \tout{p}\ell.T\).

\begin{remark}
  As mentioned in \cite{Ghilezan2021}, checking the set of actions within
  \ruleRefA{} and \ruleRefB{} is important. If this were not included, then
  unsound recursive subtypes that ``forget'' some interactions would be allowed.
  \citet{Ghilezan2021} give the following example of a potential subtype that
  forgets to input an initial $\ell'$ message. If the \ruleRefA{} rule were
  allowed to be used then $T$ would incorrectly be a subtype of $T'$.
  \begin{samepage}
    \begin{gather*}
      T = \tr{\trec{t}{\tin{p}\ell.\tvar{t}}} \qquad T' = \tin{q}\ell'.T = \tin{q}\ell'.\tr{\trec{t}{\tin{p}\ell.\tvar{t}}}
    \end{gather*}
    \begin{prooftree}
      \alwaysDoubleLine
      \AxiomC{$T \leq \tin{q}\ell'.T'$}
      \RightLabel{\ruleRefA}
      \UnaryInfC{$T = \tin{p}\ell.T \leq \tin{q}\ell'.\tin{p}\ell.T = T'$}
    \end{prooftree}
    \smallskip
  \end{samepage}
\end{remark}
%Since $\lesssim$ is only defined for single-input (SI) and single-output (SO)
%type trees, a relation for all session type trees is given by
%\citet{Ghilezan2021} using the rule in \cref{fig:AsynchronousSubtyping}.
%$\so{T}$ is the minimal set of trees containing only single outputs which span
%the tree $T$. $\si{T'}$ is similarly defined as the minimal set of trees
%containing only single inputs which span the tree $T'$. Using existential
%quantifiers for $\si{U}$ and $\so{V'}$ allows external choices to be added and
%internal choices to be removed, as we observed in synchronous subtyping.

\subsubsection{Examples of asynchronous subtyping}
\label{subsec:subtypeex}
\paragraph{Ring protocol.} We show an example of the subtyping rules for a ring
protocol with choice. The projected and optimised local types are given by
$\lt'$ and $\lt$ respectively.
\begin{gather*}
  \lt' = \trec{t}{\tin{a}\mathit{add}.\tout{c}\left\{\begin{array}{l}
      \mathit{add}.\tvar{t} \\
      \mathit{sub}.\tvar{t}
    \end{array}\right\}} \qquad
  \lt = \trec{t}{\tout{c}\left\{\begin{array}{l}
      \mathit{add}.\tin{a}\mathit{add}.\tvar{t} \\
      \mathit{sub}.\tin{a}\mathit{add}.\tvar{t}
    \end{array}\right\}}
\end{gather*}
Considering the tree types $T = \tr{\lt}$ and $T' = \tr{\lt'}$, we must show
that $T \leq T'$ using the tree refinement definition from the main paper in
order to prove that $\lt \leq \lt'$.
\begin{gather*}
		\forall U \in \so{T} :
		\forall V' \in \si{T'} :
		\exists W \in \si{U} :
		\exists W' \in \so{V'} :\\
		W \lesssim W'
\end{gather*}
However, in our case, since $T$ and $T'$ are already SI trees, we can express
the definition more simply using only SO tree transformations.
\begin{equation*}
  \forall W \in \so{T} :
  \exists W' \in \so{T'} :
  W \lesssim W'
\end{equation*}
We  define the SO sets for each tree coinductively and use coinduction to show
that $W \lesssim W'$ in all cases.
\begin{gather*}
  \pi_1 = \tout{c}\mathit{add}.\tin{a}\mathit{add} \qquad
  \pi_2 = \tout{c}\mathit{sub}.\tin{a}\mathit{add} \qquad
  \pi_3 = \tin{a}\mathit{add} \\
  \forall \pi_1.W \in \so{T} : W \in \so{T} \qquad \forall \pi_2.W \in \so{T} : W \in \so{T} \\
  \forall \pi_3.\pi_1.W' \in \so{T'} : \pi_3.W' \in \so{T'} \\ \forall \pi_3.\pi_2.W' \in \so{T'} : \pi_3.W' \in \so{T'}
\end{gather*}
\begin{enumerate}
  \item Using the coinductive hypothesis $\pi_1.W \lesssim \pi_3.\pi_1.W'$, we
        show that $W \lesssim \pi_3.W'$.
        \begin{prooftree}
          \alwaysDoubleLine
          \AxiomC{$W \lesssim \tin{a}\mathit{add}.W' = \pi_3.W'$}
          \RightLabel{\ruleRef{in}}
          \UnaryInfC{$\tin{a}\mathit{add}.W \lesssim \tin{a}\mathit{add}.\tin{a}\mathit{add}.W'$}
          \RightLabel{\ruleRefB}
          \UnaryInfC{$\tout{c}\mathit{add}.\tin{a}\mathit{add}.W \lesssim \tin{a}\mathit{add}.\tout{c}\mathit{add}.\tin{a}\mathit{add}.W'$}
        \end{prooftree}

  \item Using the coinductive hypothesis $\pi_2.W \lesssim \pi_3.\pi_2.W'$, we
        show that $W \lesssim \pi_3.W'$.
        \begin{prooftree}
          \alwaysDoubleLine
          \AxiomC{$W \lesssim \tin{a}\mathit{add}.W' = \pi_3.W'$}
          \RightLabel{\ruleRef{in}}
          \UnaryInfC{$\tin{a}\mathit{add}.W \lesssim \tin{a}\mathit{add}.\tin{a}\mathit{add}.W'$}
          \RightLabel{\ruleRefB}
          \UnaryInfC{$\tout{c}\mathit{sub}.\tin{a}\mathit{add}.W \lesssim \tin{a}\mathit{add}.\tout{c}\mathit{sub}.\tin{a}\mathit{add}.W'$}
        \end{prooftree}
\end{enumerate}
\paragraph{Double buffering protocol.} We also show the optimisation to the
double buffering protocol discussed in \cref{sec:Overview}. The kernel sends
two $\mathit{ready}$ messages at once, allowing the source to fill up both
buffers sooner.
\begin{gather*}
  \lt = \tout{s}\mathit{ready}.\lt' = \tout{s}\mathit{ready}.\trec{x}{\tout{s}\mathit{ready}.\tin{s}\mathit{copy}.\tin{t}\mathit{ready}.\tout{t}\mathit{copy}.\tvar{x}} \\
  \lt' = \trec{x}{\tout{s}\mathit{ready}.\tin{s}\mathit{copy}.\tin{t}\mathit{ready}.\tout{t}\mathit{copy}.\tvar{x}}
\end{gather*}
As before, we consider the tree types $T = \tr{\lt}$ and $T' = \tr{\lt'}$ which
are already both SI and SO trees. Therefore, to prove that $\lt \leq \lt'$ we
need only to show that $T \lesssim T'$.
\begin{prooftree}
  \alwaysDoubleLine
  \AxiomC{$T' \leq W$}
  \RightLabel{\ruleRef{in}}
  \UnaryInfC{$\tout{t}\mathit{copy}.T' \leq \tout{t}\mathit{copy}.W$}
  \RightLabel{\ruleRef{in}}
  \UnaryInfC{$\tin{t}\mathit{ready}.\tout{t}\mathit{copy}.T' \leq \tin{t}\mathit{ready}.\tout{t}\mathit{copy}.W$}
  \RightLabel{\ruleRef{in}}
  \UnaryInfC{$\tin{s}\mathit{copy}.\tin{t}\mathit{ready}.\tout{t}\mathit{copy}.T' \leq \tin{s}\mathit{copy}.\tin{t}\mathit{ready}.\tout{t}\mathit{copy}.W$}
  \RightLabel{\ruleRefB}
  \UnaryInfC{$\tout{s}\mathit{ready}.W \leq \tin{s}\mathit{copy}.\tin{t}\mathit{ready}.\tout{t}\mathit{copy}.\tout{s}\mathit{ready}.W$}
  \RightLabel{\ruleRef{out}}
  \UnaryInfC{$T = \tout{s}\mathit{ready}.T' \leq \tout{s}\mathit{ready}.W = T'$}
\end{prooftree}
\begin{equation*}
  W = \tin{s}\mathit{copy}.\tin{t}\mathit{ready}.\tout{t}\mathit{copy}.T'
\end{equation*}

\subsection{Proofs for the Subtyping Algorithm}
\label{sec:Proofs}

\setcounter{theorem}{2}

\begin{lemma}
  Given finite prefixes $\pi$ and $\pi'$, $\pair{\pi}{\pi'}$ can be reduced only
  a finite number of times.
\end{lemma}

\begin{proof}
  We prove this in a similar style to \cite[Lemma~10]{Gay2005}, using a
  well-founded relation. We first consider the relation $R =
    \{(\pair{\pi_1}{\pi_1'}, \pair{\pi_2}{\pi_2'}) \mid \pair{\pi_2}{\pi_2'}
    \reduce \pair{\pi_1}{\pi_1'} \}$. We define the function $\fterms{\pi}$,
  which returns the number of terms in $\pi$, such that
  \begin{gather*}
    \fterms{\epsilon}        = 0                             \\
    \fterms{\tout{p}\ell(S)} = 1                             \qquad
    \fterms{\tin{p}\ell(S)}  = 1                             \\
    \fterms{\pi_1.\pi_2}     = \fterms{\pi_1} + \fterms{\pi_2}
  \end{gather*}
  and we define $\fterms{\pair{\pi}{\pi'}}$ such that
  $\fterms{\pair{\pi}{\pi'}} = \pair{\fterms{\pi}}{\fterms{\pi'}}$. We
  also define the lexicographical ordering
  \begin{align*}
     & \pair{m}{n} < \pair{m'}{n'}                                    \\
     & \quad \iff m < m' \textit{ or } (m = m' \textit{ and } n < n')
  \end{align*}
  We next show that reducing a pair of prefixes decrements the number of terms
  in the pair by induction over our reduction rules.
  \begin{equation}
    \label{eqn:PrefixReduction}
    \begin{aligned}
       & (\pair{\pi_1}{\pi_1'}, \pair{\pi_2}{\pi_2'}) \in R                           \\
       & \quad \implies \fterms{\pair{\pi_1}{\pi_1'}} < \fterms{\pair{\pi_2}{\pi_2'}}
    \end{aligned}
  \end{equation}
  \begin{itemize}
    \item Case \ruleRed{i}. We have that $\pi_2 = \tin{p}\ell(S).\pi_1$ and
          $\pi'_2 = \tin{p}\ell(S').\pi'_1$.

          Since $\pair{\pi_2}{\pi_2'} \reduce \pair{\pi_1}{\pi_1'}$,
	  we have that, by definition of \(R\), $(\pair{\pi_1}{\pi_1'},
	  \pair{\pi_2}{\pi_2'}) \in R$.

          As we know that $\fterms{\pi_2} = \fterms{\pi_1} + 1$ and
          $\fterms{\pi'_2} = \fterms{\pi'_1} + 1$, we also have that
          \[\fterms{\pair{\pi_1}{\pi_1'}} < \fterms{\pair{\pi_2}{\pi_2'}}\] as
          required.

    \item Case \ruleRed{o}. We have that $\pi_2 = \tout{p}\ell(S).\pi_1$
          and $\pi'_2 = \tout{p}\ell(S').\pi'_1$.

          Since $\pair{\pi_2}{\pi_2'} \reduce \pair{\pi_1}{\pi_1'}$, we have
	  that, by definition of $R$, $(\pair{\pi_1}{\pi_1'},
	  \pair{\pi_2}{\pi_2'}) \in R$.

          As we know that $\fterms{\pi_2} = \fterms{\pi_1} + 1$ and
          $\fterms{\pi'_2} = \fterms{\pi'_1} + 1$, we also have that
          \[\fterms{\pair{\pi_1}{\pi_1'}} < \fterms{\pair{\pi_2}{\pi_2'}}\] as
          required.

    \item Case \ruleRedA. We have that $\pi_2 = \tin{p}\ell(S).\pi_1$ and
          $\pi'_2 = \refa{p}.\tin{p}\ell(S').\pi'_1$.

          Since $\pair{\pi_2}{\pi_2'} \reduce \pair{\pi_1}{\refa{p}.\pi_1'}$,
	  by definition of $R$,
          we have that $(\pair{\pi_1}{\refa{p}.\pi_1'}, \pair{\pi_2}{\pi_2'}) \in R$
          .

          As we know $\fterms{\pi_2} = \fterms{\pi_1} + 1$ and $\fterms{\pi'_2}
            = \fterms{\refa{p}.\pi'_1} + 1$, we have that
          $\fterms{\pair{\pi_1}{\refa{p}.\pi_1'}} <
            \fterms{\pair{\pi_2}{\pi_2'}}$ as required.

    \item Case \ruleRedB. We have that $\pi_2 = \tout{p}\ell(S).\pi_1$ and
          $\pi'_2 = \refb{p}.\tout{p}\ell(S').\pi'_1$.

          Since $\pair{\pi_2}{\pi_2'} \reduce \pair{\pi_1}{\refb{p}.\pi_1'}$,
	  by definition of $R$,
          we have that $(\pair{\pi_1}{\refb{p}.\pi_1'}, \pair{\pi_2}{\pi_2'})\in R$
          .

          As we know $\fterms{\pi_2} = \fterms{\pi_1} + 1$ and $\fterms{\pi'_2}
            = \fterms{\refb{p}.\pi'_1} + 1$, we have that
          $\fterms{\pair{\pi_1}{\refb{p}.\pi_1'}} <
            \fterms{\pair{\pi_2}{\pi_2'}}$ as required.
  \end{itemize}
  The ordering of terms of pairs is well-founded since both components are
  bounded from below by 0. Therefore, from \cref{eqn:PrefixReduction}, $R$ is
  also well-founded and so pairs of prefixes cannot be reduced ad infinitum.
\end{proof}
\begin{figure*}
  \begin{minipage}{\textwidth}
    \centering
    \begin{prooftree}
      \AxiomC{$\Sigma_1\map{\pair{\tin{p}\ell}{\lt} \leq \pair{\tin{q}\ell'}{\lt}} = \tin{p}\ell$}
      \AxiomC{$\act{\tin{p}\ell.\tend} \not\supseteq \act{\tin{q}\ell'.\tend}$}
      \RightLabel{\ruleAlg{asm}}
      \BinaryInfC{$\tin{p}\ell.\tin{p}\ell; \Sigma_3 \vdash \triple{\tin{p}\ell}{\lt}{0} \leq \triple{\tin{q}\ell'}{\lt}{1}$}
      \RightLabel{\ruleAlg{sub}}
      \UnaryInfC{$\tin{p}\ell.\tin{p}\ell; \Sigma_3 \vdash \triple{\tin{p}\ell.\tin{p}\ell}{\lt}{0} \leq \triple{\tin{q}\ell'.\tin{p}\ell}{\lt}{1}$}
      \RightLabel{\ruleAlg{ii}}
      \UnaryInfC{$\tin{p}\ell; \Sigma_3 \vdash \triple{\tin{p}\ell}{\tin{p}\ell.\lt}{0} \leq \triple{\tin{q}\ell'}{\tin{p}\ell.\lt}{1}$}
      \RightLabel{\ruleAlg{\(\mu\)r}}
      \UnaryInfC{$\tin{p}\ell; \Sigma_2 \vdash \triple{\tin{p}\ell}{\tin{p}\ell.\lt}{1} \leq \triple{\tin{q}\ell'}{\lt}{2}$}
      \RightLabel{\ruleAlg{\(\mu\)l}}
      \UnaryInfC{$\tin{p}\ell; \Sigma_1 \vdash \triple{\tin{p}\ell}{\lt}{1} \leq \triple{\tin{q}\ell'}{\lt}{2}$}
      \RightLabel{\ruleAlg{ii}}
      \UnaryInfC{$\epsilon; \Sigma_1 \vdash \triple{\epsilon}{\tin{p}\ell.\lt}{1} \leq \triple{\epsilon}{\tin{q}\ell'.\lt}{2}$}
      \RightLabel{\ruleAlg{\(\mu\)l}}
      \UnaryInfC{$\epsilon; \nil \vdash \triple{\epsilon}{\lt}{2} \leq \triple{\epsilon}{\tin{q}\ell'.\lt}{2}$}
    \end{prooftree}
    \begin{gather*}
      \Sigma_1 = \map{\pair{\epsilon}{\lt} \leq \pair{\epsilon}{\lt'} \mapsto \epsilon}                              \quad
      \Sigma_2 = \Sigma_1\map{\pair{\tin{p}\ell}{\lt} \leq \pair{\tin{q}\ell'}{\lt} \mapsto \tin{p}\ell}             \\
      \Sigma_3 = \Sigma_2\map{\pair{\tin{p}\ell}{\tin{p}\ell.\lt} \leq \pair{\tin{q}\ell'}{\lt} \mapsto \tin{p}\ell}
    \end{gather*}
  \end{minipage}
  \caption{Demonstration of how our algorithm correctly prevents actions from being forgotten.}
  \label{fig:AlgorithmForgottenActions}
\end{figure*}

\begin{theorem}[Termination]
  Our subtyping algorithm always eventually terminates.
\end{theorem}

\begin{proof}
  We prove termination by arguing that each of our subtyping algorithm rules
  can be run only a finite number of times.
  \begin{itemize}
    \item \ruleAlg{end} and \ruleAlg{asm} can each be run only once since
          they are terminating rules.

    \item \ruleAlg{sub} can be run only a finite number of times since a
          pair of prefixes can be reduced only a finite number of times, as
          proven in Lemma~3.

    \item \ruleAlg{oi}, \ruleAlg{oo}, \ruleAlg{ii} and
          \ruleAlg{io} can be run only a finite number of times since
          \begin{enumerate*}[label=\textbf{(\arabic*)}]
            \item the number of terms in $\lt$ and $\lt'$ is finite and
            \item recursion must be explicitly unrolled with
            \ruleAlg{\(\mu\)l} or \ruleAlg{\(\mu\)r}, which themselves can only
            be run a finite number of times.
          \end{enumerate*}

    \item \ruleAlg{\(\mu\)l} and \ruleAlg{\(\mu\)r} can be run only a finite
          number of times since
          \begin{enumerate*}[label=\textbf{(\arabic*)}]
            \item the bounds $n$ and $n'$ are finite;
            \item each execution of the rule decrements $n$ or $n'$
            respectively; and
            \item no rule allows $n$ or $n'$ to be incremented.
          \end{enumerate*}

    \item \ruleAlg{tra} can be run only a finite number of times since
          \begin{enumerate*}[label=\textbf{(\arabic*)}]
            \item the bound $k$ is finite;
            \item each execution of the rule decrements $k$; and
            \item no rule allows $k$ to be incremented.
          \end{enumerate*}
  \end{itemize}
\end{proof}

\begin{theorem}[Soundness]
  Our subtyping algorithm is sound.
\end{theorem}

\begin{proof}
  To prove that our algorithm is sound, we must show that each rule in the
  precise subtyping by \cite{Ghilezan2021} is matched by a rule in our
  algorithm.
  \begin{itemize}
    \item The subtyping relation rule is
          matched by \ruleAlg{oi}, \ruleAlg{oo}, \ruleAlg{ii} and
          \ruleAlg{io}.

    \item \ruleRef{end} is matched by \ruleAlg{end}. Since both prefixes are
          empty and $\lt = \lt' = \tend$, we trivially have that $\pi.\lt
            \leq \pi'.\lt'$.

    \item The coinductive behaviour of the refinement relation rules is
          matched by \ruleAlg{asm}. From our map of assumptions $\Sigma$, we
          have that $\pi.\lt \leq \pi'.\lt'$.

    \item \ruleRef{in} and \ruleRef{out} are matched by \ruleRed{i} and
          \ruleRed{o} respectively, which can be applied with
          \ruleAlg{sub}.

    \item \ruleRefA{} and \ruleRefB{} are matched by \ruleRedA{} and
          \ruleRedB{} respectively, which can be applied with \ruleAlg{sub}.
          To ensure that actions are not forgotten when using coinduction,
          the refinement relation rules additionally require that the
          actions of the resulting type trees are equal.
          Our reduction rules emit
          this check since they deal only with finite sequences. We instead
          prevent actions from being forgotten in our \ruleAlg{asm} rule by
          ensuring that each reordered action in the supertype (which is
          contained in $\pi'$) is encountered in the recursive part of the
          subtype ($\rho'$) by using the subset relation.

          \citet{Ghilezan2021} use the example of checking $\lt =
            \trec{t}{\tin{p}\ell.\tvar{t}} \leq \tin{q}\ell'.\lt = \lt'$
          to demonstrate why comparing the actions is necessary---we
          show how our algorithm also correctly rejects this subtype in
          \cref{fig:AlgorithmForgottenActions} (with a bound of 2 for
          brevity).

          We cannot apply \ruleAlg{asm} as the final rule since
          \[\act{\tin{p}\ell.\tend} = \{\tin{p}\} \not\supseteq \{\tin{q}\} =
            \act{\tin{q}\ell'.\tend}.\] This subset check spots that the
          $\tin{q}$ action was not present in the recursive part of the
          supposed subtype. Otherwise, our algorithm would incorrectly
          conclude that $\lt \leq \lt'$.
  \end{itemize}
  It is straightforward to also argue that our algorithm preserves
  reflexivity. We have that $\triple{\epsilon}{\lt}{n} \leq
    \triple{\epsilon}{\lt}{n}$, providing $n$ is sufficiently large to ensure
  that each recursion can be visited at least once. The prefixes of both sides
  will always be identical so will reduce to $\pair{\epsilon}{\epsilon}$ using
  \ruleRed{i} or \ruleRed{o}. These reductions will be applied using
  \ruleAlg{sub} until either $\tend$ is encountered and \ruleAlg{end} can be
  applied or the algorithm loops, which allows the application of
  \ruleAlg{asm}.
\end{proof}

\begin{lemma}
  Given finite prefixes $\pi$ and $\pi'$, the time complexity of reducing
  $\pair{\pi}{\pi'}$ is $\oh(\min(\tsize{\pi}, \tsize{\pi'}))$.
\end{lemma}

\begin{proof}
  We prove this quite simply by induction over our reduction rules.
  \begin{itemize}
    \item Inductive case $\pair{\tin{p}\ell.\pi}{\tin{p}\ell.\pi'}$. Then, we
          can perform a \ruleRed{i} reduction to get $\pair{\pi}{\pi'}$. By the
          inductive hypothesis, the complexity of reducing $\pair{\pi}{\pi'}$ is
          $\oh(\min(\tsize{\pi}, \tsize{\pi'}))$. Therefore, since we require
          one additional reduction step, the complexity of reducing
          $\pair{\tin{p}\ell.\pi}{\tin{p}\ell.\pi'}$ is $\oh(\min(\tsize{\pi},
              \tsize{\pi'}) + 1) = \oh(\min(\tsize{\pi} + 1, \tsize{\pi'} + 1)) =
            \oh(\min(\tsize{\tin{p}\ell.\pi}, \tsize{\tin{p}\ell.\pi'}))$ as
          required.

    \item Inductive case $\pair{\tout{p}\ell.\pi}{\tout{p}\ell.\pi'}$. The proof
          is the same as for the $\pair{\tin{p}\ell.\pi}{\tin{p}\ell.\pi'}$
          case, except a \ruleRed{o} reduction is applied.

    \item Inductive case $\pair{\tin{p}\ell.\pi}{\refa{p}.\tin{p}\ell.\pi'}$.
          Then, we can perform a \ruleRedA{} reduction to get
          $\pair{\pi}{\refa{p}.\pi'}$. By the inductive hypothesis, the
          complexity of reducing $\pair{\pi}{\refa{p}.\pi'}$ is
          $\oh(\min(\tsize{\pi}, \tsize{\refa{p}.\pi'}))$. Therefore, since we
          require one additional reduction step, the complexity of reducing
          $\pair{\tin{p}\ell.\pi}{\refa{p}.\tin{p}\ell.\pi'}$ is
          $\oh(\min(\tsize{\pi}, \tsize{\refa{p}.\pi'}) + 1) =
            \oh(\min(\tsize{\pi} + 1, \tsize{\refa{p}.\pi'} + 1)) =
            \oh(\min(\tsize{\tin{p}\ell.\pi}, \tsize{\refa{p}.\tin{p}\ell.\pi'}))$
          as required.

    \item Inductive case $\pair{\tout{p}\ell.\pi}{\refb{p}.\tout{p}\ell.\pi'}$.
          The proof is the same as for the
          $\pair{\tin{p}\ell.\pi}{\refa{p}.\tin{p}\ell.\pi'}$ case, except a
          \ruleRedB{} reduction is applied.

    \item Base case $\pair{\pi}{\pi'}$ where $\pair{\pi}{\pi'}$ cannot be
          reduced. Since $\forall \pi . \tsize{\pi} > 0$, the complexity of
          reducing $\pair{\pi}{\pi'}$ is $\oh(0) = \oh(\min(\tsize{\pi},
              \tsize{\pi'}))$ as required.
  \end{itemize}
\end{proof}

\begin{theorem}[Complexity]
  Consider $\lt$ and $\lt'$ as (possibly infinite) trees $\tr{\lt}$ and
  $\tr{\lt'}$ with \emph{asymptotic branching factors} $b$ and $b'$
  respectively~\cite{Edelkamp1998,Korf1985}. Our algorithm has time complexity
  $\oh(n\min(b, b')^n)$ and space complexity $\oh(n\min(b, b'))$ in the worst
  case to determine if $\lt \leq \lt'$ with bound $n$.
\end{theorem}

\begin{proof}
  Let us consider for now only the left tree $\tr{\lt}$ which has asymptotic
  branching factor $b$. In the worst case, the number of nodes we have to
  explore in the tree is
  \begin{equation*}
    1 + b + b^2 + \ldots + b^{n - 1} = \sum_{i = 0}^{n - 1} b^i = \frac{b^n - 1}{b - 1}
  \end{equation*}
  Therefore, exploring the left tree up to a depth of $n$ has time complexity
  $\oh(b^n)$. Similarly, exploring the right tree $\tr{\lt'}$ to depth $n$ has
  complexity $\oh(b'^n)$. Note that in the worst case we cannot reduce any
  prefixes as we go along and must therefore do the entire reduction at the end
  of each exploration path.

  Suppose at the end of some exploration path we have the pair of prefixes
  $\pair{\pi}{\pi'}$. Since $\tsize{\pi} = \tsize{\pi'}$ (our algorithm does not
  add an uneven number of terms to either side of the prefix pair), the time
  complexity of reducing this pair is $\oh(\min(\tsize{\pi}, \tsize{\pi'})) =
    \oh(\tsize{\pi})$ from \cref{thm:PrefixesComplexity}.

  At the end of our exploration, we will have $b^{n - 1}$ prefixes, each with
  size $n$. Therefore, the complexity of reducing all the pairs of prefixes is
  $\oh(nb^{n - 1})$ so the total complexity for the exploration and reduction is
  $\oh(b^n + nb^{n - 1}) = \oh(nb^n)$.

  Our algorithm stops when the bound $n$ is reached in
  \emph{either} of the two trees so the overall time complexity of exploration
  is $\oh(\min(nb^n, nb'^n)) = \oh(n\min(b, b')^n)$.

  Considering space complexity, it is clear that the greatest amount of memory
  will be required at the end of an exploration path (when the prefixes are
  greatest in length). For the left tree, at this point, we will need to store
  \begin{enumerate*}[label=\textbf{(\arabic*)}]
    \item a prefix of length $n$ (since we are considering the worst case); and
    \item the other $b$ siblings to visit at $n - 1$ levels.
  \end{enumerate*}

  Therefore, the total space complexity for the left tree is $\oh(n + b(n - 1))
    = \oh(nb)$ and the space complexity for exploring both trees is
  $\oh(\min(nb, nb')) = \oh(n\min(b, b'))$.
\end{proof}

\subsection{Algorithm Examples}
\label{subsec:algorthmex}

\paragraph{Ring protocol.} We again use the ring protocol with choice and show
that our algorithm can successfully check the optimisation to \ppt{b}. The
derivation trees are shown in \cref{fig:app:derivation_ring}.
\begingroup
\small
\begin{gather*}
  \lt = \trec{t}{\tout{c}\left\{\begin{array}{l}
      \mathit{add}.\tin{a}\mathit{add}.\tvar{t} \\
      \mathit{sub}.\tin{a}\mathit{add}.\tvar{t}
    \end{array}\right\}} \qquad
  \lt' = \trec{t}{\tin{a}\mathit{add}.\tout{c}\left\{\begin{array}{l}
      \mathit{add}.\tvar{t} \\
      \mathit{sub}.\tvar{t}
    \end{array}\right\}}
\end{gather*}
\endgroup
\begin{figure*}
\begin{prooftree}
  \AxiomC{$(\star)$}
  \AxiomC{$\rho_3; \Sigma_2 \vdash \triple{\tout{c} \mathit{add} . \tin{a} \mathit{add}}{\lt}{0} \leq \triple{\tin{a} \mathit{add} . \tout{c}  \mathit{sub}}{\lt'}{0}$}
  \RightLabel{\ruleAlg{io}}
  \BinaryInfC{$\rho_1; \Sigma_2 \vdash \triple{\tout{c} \mathit{add}}{\tin{a} \mathit{add} . \lt}{0} \leq \triple{\tin{a} \mathit{add}}{\tout{c}\left\{\begin{array}{l}
          \mathit{add}.\lt' \\
          \mathit{sub}.\lt'
        \end{array}\right\}}{0}$}
  \AxiomC{$(\dagger)$}
  \AxiomC{$\rho_4; \Sigma_2 \vdash \triple{\tout{c} \mathit{sub} . \tin{a} \mathit{add}}{\lt}{0} \leq \triple{\tin{a} \mathit{add} . \tout{c}  \mathit{add}}{\lt'}{0}$}
  \RightLabel{\ruleAlg{io}}
  \BinaryInfC{$\rho_2; \Sigma_2 \vdash \triple{\tout{c} \mathit{sub}}{\tin{a} \mathit{add} . \lt}{0} \leq \triple{\tin{a} \mathit{add}}{\tout{c}\left\{\begin{array}{l}
          \mathit{add}.\lt' \\
          \mathit{sub}.\lt'
        \end{array}\right\}}{0}$}
  \RightLabel{\ruleAlg{oi}}
  \BinaryInfC{$\epsilon; \Sigma_2 \vdash \triple{\epsilon}{\tout{c}\left\{\begin{array}{l}
          \mathit{add}.\tin{a}\mathit{add}.\lt \\
          \mathit{sub}.\tin{a}\mathit{add}.\lt
        \end{array}\right\}}{0} \leq \triple{\epsilon}{\tin{a}\mathit{add}.\tout{c}\left\{\begin{array}{l}
          \mathit{add}.\lt' \\
          \mathit{sub}.\lt'
        \end{array}\right\}}{0}$}
  \RightLabel{\ruleAlg{\(\mu\)r}}
  \UnaryInfC{$\epsilon; \Sigma_1 \vdash \triple{\epsilon}{\tout{c}\left\{\begin{array}{l}
          \mathit{add}.\tin{a}\mathit{add}.\lt \\
          \mathit{sub}.\tin{a}\mathit{add}.\lt
        \end{array}\right\}}{0} \leq \triple{\epsilon}{\lt'}{1}$}
  \RightLabel{\ruleAlg{\(\mu\)l}}
  \UnaryInfC{$\epsilon; \nil \vdash \triple{\epsilon}{\lt}{1} \leq \triple{\epsilon}{\lt'}{1}$}
\end{prooftree}
\begin{equation*}
  (\star)\ =\
  \AxiomC{}
  \RightLabel{\ruleRedB}
  \UnaryInfC{$\pair{\tout{c} \mathit{add} . \tin{a} \mathit{add}}{\tin{a} \mathit{add} . \tout{c}  \mathit{add}} \reduce \pair{\tin{a} \mathit{add}}{\tin{a} \mathit{add}}$}
  \AxiomC{}
  \RightLabel{\ruleRed{i}}
  \UnaryInfC{$\pair{\tin{a} \mathit{add}}{\tin{a} \mathit{add}} \reduce \pair{\epsilon}{\epsilon}$}
  \AxiomC{$\act{\rho_3.\tend} \supseteq \act{\tend}$}
  \RightLabel{\ruleAlg{asm}}
  \UnaryInfC{$\rho_3; \Sigma_2 \vdash \triple{\epsilon}{\lt}{0} \leq \triple{\epsilon}{\lt'}{0}$}
  \RightLabel{\ruleAlg{sub}}
  \BinaryInfC{$\rho_3; \Sigma_2 \vdash \triple{\tin{a} \mathit{add}}{\lt}{0} \leq \triple{\tin{a} \mathit{add}}{\lt'}{0}$}
  \RightLabel{\ruleAlg{sub}}
  \BinaryInfC{$\rho_3; \Sigma_2 \vdash \triple{\tout{c} \mathit{add} . \tin{a} \mathit{add}}{\lt}{0} \leq \triple{\tin{a} \mathit{add} . \tout{c}  \mathit{add}}{\lt'}{0}$}
  \DisplayProof
\end{equation*}
\begin{equation*}
  (\dagger)\ =\
  \AxiomC{}
  \RightLabel{\ruleRedB}
  \UnaryInfC{$\pair{\tout{c} \mathit{sub} . \tin{a} \mathit{add}}{\tin{a} \mathit{add} . \tout{c}  \mathit{sub}} \reduce \pair{\tin{a} \mathit{add}}{\tin{a} \mathit{add}}$}
  \AxiomC{}
  \RightLabel{\ruleRed{i}}
  \UnaryInfC{$\pair{\tin{a} \mathit{add}}{\tin{a} \mathit{add}} \reduce \pair{\epsilon}{\epsilon}$}
  \AxiomC{$\act{\rho_4.\tend} \supseteq \act{\tend}$}
  \RightLabel{\ruleAlg{asm}}
  \UnaryInfC{$\rho_4; \Sigma_2 \vdash \triple{\epsilon}{\lt}{0} \leq \triple{\epsilon}{\lt'}{0}$}
  \RightLabel{\ruleAlg{sub}}
  \BinaryInfC{$\rho_4; \Sigma_2 \vdash \triple{\tin{a} \mathit{add}}{\lt}{0} \leq \triple{\tin{a} \mathit{add}}{\lt'}{0}$}
  \RightLabel{\ruleAlg{sub}}
  \BinaryInfC{$\rho_4; \Sigma_2 \vdash \triple{\tout{c} \mathit{sub} . \tin{a} \mathit{add}}{\lt}{0} \leq \triple{\tin{a} \mathit{add} . \tout{c}  \mathit{sub}}{\lt'}{0}$}
  \DisplayProof
\end{equation*}
\begin{gather*}
  \rho_1 = \tout{c} \mathit{add} \qquad
  \rho_2 = \tout{c} \mathit{sub} \qquad
  \rho_3 = \rho_1 . \tin{a} \mathit{add} \qquad
  \rho_4 = \rho_2 . \tin{a} \mathit{add} \\
  \Sigma_1 = \map{\pair{\epsilon}{\lt} \leq \pair{\epsilon}{\lt'} \mapsto \epsilon} \qquad
  \Sigma_2 = \Sigma_1\map{\pair{\epsilon}{\tout{c}\left\{\begin{array}{l}
        \mathit{add}.\tin{a}\mathit{add}.\lt \\
        \mathit{sub}.\tin{a}\mathit{add}.\lt
      \end{array}\right\}} \leq \pair{\epsilon}{\lt'} \mapsto \epsilon}
\end{gather*}
\caption{Derivation trees to verify the subtyping of the Ring protocol}
\label{fig:app:derivation_ring}
\end{figure*}

\paragraph{Alternating bit protocol.} We consider the alternating bit
protocol \cite{AlternatingBit}. We
construct a global type $\gt$ for the protocol such that when projected onto
the receiver, its local type matches the protocol specification.

\begingroup
\small
\begin{equation*}
  \gt = \gtrec{t}\gtmsg{s}{r}{\mathit{d0} . \gtmsg{r}{s}{\begin{array}{l}
        \mathit{a0} . \gtrec{u}\gtmsg{s}{r}{\mathit{d1} . \gtmsg{r}{s}{\begin{array}{l}
              \mathit{a0} . \gtvar{u} \\
              \mathit{a1} . \gtvar{t}
            \end{array}}} \\
        \mathit{a1} . \gtvar{t}
      \end{array}}}
\end{equation*}
\begin{equation*}
  \gtproj{\gt}{s} = \trec{t}{\tout{r} \mathit{d0} . \tin{r} \left\{\begin{array}{l}
      \mathit{a0} . \trec{x}{\tout{r} \mathit{d1} . \tin{r} \left\{\begin{array}{l}
          \mathit{a0} . \tvar{x} \\
          \mathit{a1} . \tvar{t}
        \end{array}\right\}} \\
      \mathit{a1} . \tvar{t}
    \end{array}\right\}} \qquad
\end{equation*}
\begin{equation*}
  \gtproj{\gt}{r} = \trec{t}{\tin{s} \mathit{d0} . \tout{s} \left\{\begin{array}{l}
      \mathit{a0} . \trec{x}{\tin{s} \mathit{d1} . \tout{s} \left\{\begin{array}{l}
          \mathit{a0} . \tvar{x} \\
          \mathit{a1} . \tvar{t}
        \end{array}\right\}} \\
      \mathit{a1} . \tvar{t}
    \end{array}\right\}}
\end{equation*}
\begin{gather*}
  \lt = \trec{t}{\tin{s} \left\{\begin{array}{l}
      \mathit{d0} . \tout{s} \mathit{a0} . \tvar{t} \\
      \mathit{d1} . \tout{s} \mathit{a1} . \tvar{t}
    \end{array}\right\}} \qquad
  \lt' = \gtproj{\gt}{r} \\
\end{gather*}
\endgroup
We then use our subtyping algorithm to confirm that the type given by the
protocol specification for the receiver \cite{AlternatingBit} is a subtype
of its projected version. In this derivation, we omit some exploration paths
for brevity. The derivation tree is in~\cref{fig:app:derivation_AB}.

\begin{figure*}
\begingroup
\small
\begin{prooftree}
  \AxiomC{$\act{\tin{s} \mathit{d0} . \tout{s} \mathit{a0} . \tend} \supseteq \act{\tend}$}
  \RightLabel{\ruleAlg{asm}}
  \UnaryInfC{$\tin{s} \mathit{d0} . \tout{s} \mathit{a0}; \Sigma_4 \vdash \triple{\epsilon}{\lt}{0} \leq \triple{\epsilon}{\lt'}{0}$}
  \RightLabel{\ruleAlg{sub}}
  \UnaryInfC{$\tin{s} \mathit{d0} . \tout{s} \mathit{a0}; \Sigma_4 \vdash \triple{\tout{s} \mathit{a1}}{\lt}{0} \leq \triple{\tout{s} \mathit{a1}}{\lt'}{0}$}
  \AxiomC{\ldots}
  \RightLabel{\ruleAlg{out-out}}
  \BinaryInfC{$\tin{s} \mathit{d0} . \tout{s} \mathit{a0}; \Sigma_4 \vdash \triple{\epsilon}{\tout{s} \mathit{a1} . \lt}{0} \leq \triple{\epsilon}{\tout{s} \left\{\begin{array}{l}
          \mathit{a0} . \lt_1 \\
          \mathit{a1} . \lt'
        \end{array}\right\}}{0}$}
  \RightLabel{\ruleAlg{sub}}
  \UnaryInfC{$\tin{s} \mathit{d0} . \tout{s} \mathit{a0}; \Sigma_4 \vdash \triple{\tin{s} \mathit{d1}}{\tout{s} \mathit{a1} . \lt}{0} \leq \triple{\tin{s} \mathit{d1}}{\tout{s} \left\{\begin{array}{l}
          \mathit{a0} . \lt_1 \\
          \mathit{a1} . \lt'
        \end{array}\right\}}{0}$}
  \AxiomC{\ldots}
  \RightLabel{\ruleAlg{in-in}}
  \BinaryInfC{$\tin{s} \mathit{d0} . \tout{s} \mathit{a0}; \Sigma_4 \vdash \triple{\epsilon}{\tin{s} \left\{\begin{array}{l}
          \mathit{d0} . \tout{s} \mathit{a0} . \lt \\
          \mathit{d1} . \tout{s} \mathit{a1} . \lt
        \end{array}\right\}}{0} \leq \triple{\epsilon}{\tin{s} \mathit{d1} . \tout{s} \left\{\begin{array}{l}
          \mathit{a0} . \lt_1 \\
          \mathit{a1} . \lt'
        \end{array}\right\}}{0}$}
  \RightLabel{\ruleAlg{\(\mu\)r}}
  \UnaryInfC{$\tin{s} \mathit{d0} . \tout{s} \mathit{a0}; \Sigma_3 \vdash \triple{\epsilon}{\tin{s} \left\{\begin{array}{l}
          \mathit{d0} . \tout{s} \mathit{a0} . \lt \\
          \mathit{d1} . \tout{s} \mathit{a1} . \lt
        \end{array}\right\}}{0} \leq \triple{\epsilon}{\lt_1}{1}$}
  \RightLabel{\ruleAlg{\(\mu\)l}}
  \UnaryInfC{$\tin{s} \mathit{d0} . \tout{s} \mathit{a0}; \Sigma_2 \vdash \triple{\epsilon}{\lt}{1} \leq \triple{\epsilon}{\lt_1}{1}$}
  \RightLabel{\ruleAlg{sub}}
  \UnaryInfC{$\tin{s} \mathit{d0} . \tout{s} \mathit{a0}; \Sigma_2 \vdash \triple{\tout{s} \mathit{a0}}{\lt}{1} \leq \triple{\tout{s} \mathit{a0}}{\lt_1}{1}$}
  \AxiomC{\ldots}
  \RightLabel{\ruleAlg{out-out}}
  \BinaryInfC{$\tin{s} \mathit{d0}; \Sigma_2\vdash \triple{\epsilon}{\tout{s} \mathit{a0} . \lt}{1} \leq \triple{\epsilon}{\tout{s} \left\{\begin{array}{l}
          \mathit{a0} . \lt_1 \\
          \mathit{a1} . \lt'
        \end{array}\right\}}{1}$}
  \RightLabel{\ruleAlg{sub}}
  \UnaryInfC{$\tin{s} \mathit{d0}; \Sigma_2 \vdash \triple{\tin{s} \mathit{d0}}{\tout{s} \mathit{a0} . \lt}{1} \leq \triple{\tin{s} \mathit{d0}}{\tout{s} \left\{\begin{array}{l}
          \mathit{a0} . \lt_1 \\
          \mathit{a1} . \lt'
        \end{array}\right\}}{1}$}
  \AxiomC{\ldots}
  \RightLabel{\ruleAlg{in-in}}
  \BinaryInfC{$\epsilon; \Sigma_2 \vdash \triple{\epsilon}{\tin{s} \left\{\begin{array}{l}
          \mathit{d0} . \tout{s} \mathit{a0} . \lt \\
          \mathit{d1} . \tout{s} \mathit{a1} . \lt
        \end{array}\right\}}{1} \leq \triple{\epsilon}{\tin{s} \mathit{d0} . \tout{s} \left\{\begin{array}{l}
          \mathit{a0} . \lt_1 \\
          \mathit{a1} . \lt'
        \end{array}\right\}}{1}$}
  \RightLabel{\ruleAlg{\(\mu\)r}}
  \UnaryInfC{$\epsilon; \Sigma_1 \vdash \triple{\epsilon}{\tin{s} \left\{\begin{array}{l}
          \mathit{d0} . \tout{s} \mathit{a0} . \lt \\
          \mathit{d1} . \tout{s} \mathit{a1} . \lt
        \end{array}\right\}}{1} \leq \triple{\epsilon}{\lt'}{2}$}
  \RightLabel{\ruleAlg{\(\mu\)l}}
  \UnaryInfC{$\epsilon; \nil \vdash \triple{\epsilon}{\lt}{2} \leq \triple{\epsilon}{\lt'}{2}$}
\end{prooftree}
\begin{gather*}
  \lt_1 = \trec{x}{\tin{s} \mathit{d1} . \tout{s} \left\{\begin{array}{l}
      \mathit{a0} . \tvar{x} \\
      \mathit{a1} . \lt'
    \end{array}\right\}} \\
  \Sigma_1 = \map{\pair{\epsilon}{\lt} \leq \pair{\epsilon}{\lt'} \mapsto \epsilon} \qquad
  \Sigma_2 = \Sigma_1\map{\pair{\epsilon}{\tin{s} \left\{\begin{array}{l}
        \mathit{d0} . \tout{s} \mathit{a0} . \lt \\
        \mathit{d1} . \tout{s} \mathit{a1} . \lt
      \end{array}\right\}} \leq \pair{\epsilon}{\lt'} \mapsto \epsilon} \\
  \Sigma_3 = \Sigma_2\map{\pair{\epsilon}{\lt} \leq \pair{\epsilon}{\lt_1} \mapsto \tin{s} \mathit{d0} . \tout{s} \mathit{a0}} \qquad
  \Sigma_4 = \Sigma_3\map{\pair{\epsilon}{\tin{s} \left\{\begin{array}{l}
        \mathit{d0} . \tout{s} \mathit{a0} . \lt \\
        \mathit{d1} . \tout{s} \mathit{a1} . \lt
      \end{array}\right\}} \leq \pair{\epsilon}{\lt_1} \mapsto \tin{s} \mathit{d0} . \tout{s} \mathit{a0}}
\end{gather*}
\endgroup
\caption{Derivation trees to verify the subtyping of the Alternating-Bit protocol}
\label{fig:app:derivation_AB}
\end{figure*}

\subsection{Implementation of the Algorithm}
\label{subsec:impalg}
In practice, we implement our asynchronous subtyping algorithm on FSMs $\efsm$
and $\oefsm$ rather than local types $\lt$ and $\lt'$. We discuss the practical
considerations behind some of our implementation decisions and explain why these
are equivalent to the theory presented in \cref{sec:PreciseSubtyping}.

\paragraph{Prefixes.} We define prefixes somewhat differently in Rust to avoid
copying memory where possible. A prefix is a struct containing three elements:
\begin{enumerate}
  \item A list of lazy-removable \code{transitions} which make up the prefix.
        A boolean for each element indicates whether the corresponding
        transition has been lazily removed. A transition is either
        $\tout{p}\ell(S)$ or $\tin{p}\ell(S)$, which is identical to a prefix
        term in the theory.

  \item A \code{start} index, which indicates that the first \code{start}
        elements in \code{transitions} should be ignored as they have been
        lazily removed.

  \item A list of indexes of elements that have been lazily \code{removed} by
        setting their boolean to \code{true}.
\end{enumerate}

\noindent\begin{minipage}{\linewidth}
  \begin{lstlisting}[language=Rust, aboveskip=\baselineskip, belowskip=\baselineskip]
struct Prefix {
  transitions: Vec<(bool, Transition)>,
  start: usize,
  removed: Vec<usize>,
}
\end{lstlisting}
\end{minipage}
Elements can be lazily removed either by incrementing \code{start} or by setting
the element's boolean to \code{true} and adding its index to \code{removed}. We
favour the first option so as to maintain the invariant
\begin{equation*}
  \code{transitions.len() > 0} \implies \code{!transitions[0].0}
\end{equation*}
where the tuple indexing syntax \code{(x, y).0} will evaluate to \code{x}, the
first element of the tuple. To ensure that this invariant holds, we must advance
\code{start} as far as possible when removing a transition at the head of the
prefix.

We also give the option of storing snapshots to previous versions of a prefix. A
snapshot stores
\begin{enumerate*}[label=\textbf{(\arabic*)}]
  \item the \code{size} of the transitions list;
  \item the value of the \code{start} field; and
  \item the size of the \code{removed} list,
\end{enumerate*}
all taken at the time of the snapshot.

\noindent\begin{minipage}{\linewidth}
  \begin{lstlisting}[language=Rust, aboveskip=\baselineskip, belowskip=\baselineskip]
struct Snapshot {
  size: usize,
  start: usize,
  removed: usize,
}
\end{lstlisting}
\end{minipage}
We can easily revert a prefix to a previous snapshot by
\begin{enumerate*}[label=\textbf{(\arabic*)}]
  \item finding the elements of \code{removed} that have been added since the
  snapshot;
  \item setting the boolean to \code{false} for each of these elements to
  restore them;
  \item truncating \code{transitions} to its previous \code{size};
  \item restoring \code{start} to its previous value; and
  \item truncating \code{removed} to its previous size.
\end{enumerate*}

\paragraph{Visitor.} We use the visitor pattern \cite{Palsberg1998} to traverse
a pair of FSMs $\efsm$ and $\oefsm$. In our visitor, we store
\begin{enumerate*}[label=\textbf{(\arabic*)}]
  \item the \code{fsms} we are traversing;
  \item a matrix of \code{history} (as we will see, this is equivalent to the
  assumptions map $\Sigma$ in the theory); and
  \item a pair of \code{prefixes}, as in the theory.
\end{enumerate*}

\noindent\begin{minipage}{\linewidth}
  \begin{lstlisting}[language=Rust, aboveskip=\baselineskip, belowskip=\baselineskip]
struct SubtypeVisitor {
  fsms: Pair<Fsm>,
  history: Matrix<Previous>,
  prefixes: Pair<Prefix>,
}
\end{lstlisting}
\end{minipage}
The \code{history} matrix stores a value for each combination of states in
$\efsm$ and $\oefsm$ (it effectively has the type $\lvert \efsm \rvert \times
  \lvert \oefsm \rvert \reduce \code{Previous}$). Each of these values stores
a \code{Previous} struct containing the number of \code{visits} this combination
of states has remaining and optionally (if it has been visited before) a pair of
\code{snapshots} taken during the last visit to this combination.

\noindent\begin{minipage}{\linewidth}
  \begin{lstlisting}[language=Rust, aboveskip=\baselineskip, belowskip=\baselineskip]
struct Previous {
  visits: usize,
  snapshots: Option<Pair<Snapshot>>,
}
\end{lstlisting}
\end{minipage}
In the theory, termination is guaranteed by allowing recursions to be unrolled
only $n$ times. Here, our `$n$' is the value of \code{visits}, which limits how
many times the same combination of states can be visited. Since $\efsm$ and
$\oefsm$ each contain a finite number of states and their cross product is also
finite, this will achieve termination just as in the theory (provided that $n$
is also finite). Otherwise, this is identical to the theory---our \code{history}
matrix corresponds to the map of assumptions $\Sigma$ and the \code{Previous}
struct represents a single mapping (we use snapshots in place of prefixes).

Each state in an \code{Fsm} is given a unique \code{StateIndex} that identifies
it. Our \code{Visitor} is executed using its recursive \code{visit} method,
which takes a mutable reference to the \code{Visitor} and a \code{StateIndex}
for each \code{Fsm}.

\noindent\begin{minipage}{\linewidth}
  \begin{lstlisting}[language=Rust, aboveskip=\baselineskip, belowskip=\baselineskip]
impl Visitor {
  fn visit(&mut self, states: Pair<StateIndex>) -> bool {
    [...]
  }
}
\end{lstlisting}
\end{minipage}
This \code{visit} method performs our asynchronous subtyping algorithm as
follows.
\begin{enumerate}
  \item We look up the current combination states in our \code{history} to
        ensure \code{visits} is positive, as in \ruleAlg{\(\mu\)l} and
        \ruleAlg{\(\mu\)r}. If it is not then our bound has been exhausted and we
        return with \code{false}.

  \item We attempt to reduce the pair of prefixes, as in \ruleAlg{sub}. This
        reduction process follows precisely the same rules as in the theory,
        lazily removing transitions where appropriate.

  \item If the current combination of states has been visited before, we
        attempt to use our assumptions map to return \code{true}, as in
        \ruleAlg{asm}. The method we use to check the actions sets, explained
        below, differs slightly from the theory.

  \item If both FSMs are in a \emph{terminal} state and the prefixes are
        empty then we return \code{true}, as in \ruleAlg{end}.

  \item If both FSMs are in a \emph{non-terminal} state then we
        \begin{itemize}
          \item take a snapshot of the current prefixes;
          \item update the \code{history} matrix for the current combination
                of states, setting \code{visits} to \code{visits - 1} and
                \code{snapshots} to the snapshots we just took;
          \item for each pair of transitions we can take from the current
                combination of states we
                \begin{itemize}
                  \item add each transition in the pair to its
                        corresponding prefix;
                  \item recurse using the \code{visit} method, setting the
                        \code{states} argument to the pair of end states
                        corresponding to our transitions; and after the
                        recursive call returns
                  \item revert the changes made to the prefixes by using
                        the snapshot we took previously;
                \end{itemize}
          \item restore the current \code{history} matrix entry to its
                original value; and
          \item return a value depending on the results of the recursive
                calls and whether the current combination of states performs
                input or output actions, as described by the quantifiers in
                \ruleAlg{\{in,out\}-\{in,out\}}.
        \end{itemize}

  \item Otherwise, one of the FSMs has reached a terminal state but the other
        has not. In this case, there is no way to progress and we return
        \code{false}.
\end{enumerate}
By performing a depth-first search we can make changes to the \code{history} and
\code{prefixes} fields of our visitor and revert them later, using a snapshot
for each prefix. This method improves the efficiency of our algorithm
by avoiding copying memory. If we instead used a
breadth-first search, for instance, we would need to store a separate visitor
for each frontier of our search. This would require an expensive copy of the
\code{history} and \code{prefixes}.

\paragraph{Checking actions.} In the theory, our \ruleAlg{asm} rule compares two
sets of actions to ensure that it is safe to apply an assumption. Specifically,
it checks that the actions of the supposed supertype's prefix ($\pi'$) are a
subset of the actions performed by the subtype since the assumption was made
($\rho'$). In our algorithm, we can actually perform a far cheaper but
equivalent check thanks to our use of lazy removal. We need only to confirm that
\begin{equation}
  \label{eqn:ActionCheck}
	\begin{aligned}
		&\code{transitions[start..] == }\\
		&\code{transitions[..snapshot.size][snapshot.start..]}
	\end{aligned}
\end{equation}
for each prefix/snapshot combination. The syntax \code{x[i..]} evaluates to
\code{x} with the first \code{i} elements removed and \code{y[..j]} evaluates to
the first \code{j} elements of \code{y}. Surprisingly, this check is identical
in effect to the one performed in the theory due to two observations.
\begin{enumerate}
  \item Comparing the full list of transitions (which include labels and
        sorts) rather than only their actions is sound since the reduction
        rules do not allow sends or receives to or from the same participant
        to be reordered.

        We can easily prove this by contradiction. Suppose $\tin{p}\ell(S) \in
          \pi'$ and $\tin{p}\ell'(S') \in \rho'$ and we can apply
        \ruleAlg{asm}. Clearly, $\tin{p}\ell(S)$ has not been reduced by
        \ruleRef{in}, otherwise, it would not still be in $\pi'$.
        Therefore, at some point since the assumption was added to
        $\Sigma$, \ruleRefA{} must have been used to move $\tin{p}\ell(S)$
        before $\tin{p}\ell'(S')$. This is a contradiction because
        $\refa{p}$ cannot contain $\tin{p}\ell'(S')$ by definition so
        \ruleRefA{} cannot have been applied. A similar argument can be
        made for the output case.

  \item The version in the theory is intuitively checking whether there is an
        action that `hangs on' to the far left of $\pi'$ for multiple
        iterations of a recursive type without ever being reduced. If this is
        the case, then the action will not be matched by any of the actions in
        $\rho'$ (otherwise it would have been reduced) so $\rho' \not\supseteq
          \pi'$.

        In our implementation, if an action hangs on to the supertype's prefix
        then it will never be lazily removed. This means that the size of the
        prefix will grow on each iteration of the FSM since \code{start} is
        never advanced. Since
        \begin{equation*}
		\begin{aligned}
			&\code{transitions[start..].len() !=}\\
			&\code{transitions[..snapshot.size][snapshot.start..].len()}
		\end{aligned}
        \end{equation*}
        \cref{eqn:ActionCheck} is trivially false. Note that the full check in
        \cref{eqn:ActionCheck} must be performed, rather than only comparing
        the lengths, to ensure that the prefixes do actually match those of
        the assumption, as in \ruleAlg{asm}.
\end{enumerate}

\paragraph{Fail-early reductions.} Our practical implementation performs the
same reduction rules on prefixes as described in the theory. However, we add a
practical optimisation to, in some cases, determine that a particular path
cannot succeed before even reaching the bound.

For example, consider the pair
$\pair{\tin{p}\ell(S).\pi}{\tout{q}\ell'(S').\tin{p}\ell(S).\pi'}$. Regardless
of what $\pi$ and $\pi'$ are set to, this pair cannot be reduced as it will
require using the \ruleRefA{} but $\tout{q}\ell'(S')$ cannot be contained in
$\refa{p}$. Therefore, if at some point we reach a pair of prefixes which looks
like $\pair{\tin{p}\ell(S).\tin{p}\ell(S)}{\tout{q}\ell'(S').\tin{p}\ell(S)}$,
we can immediately return \code{false} as there is no way that it can ever be
reduced by adding more terms.

\section{Benchmarking results}

\subsection{Session-Based Rust Implementations}
\label{sec:RuntimeData}
\paragraph{Results for the stream benchmark.}
\begin{center}
	\sffamily\footnotesize
	\begin{tabular}{cccccc}
		\toprule
		& \multicolumn{5}{c}{Throughput ($n$/$\mu$s)}                                                            \\ \cmidrule(l){2-6}
		$n$ & \sesh                                       & \multicrusty & \ferrite & \rumpsteak & \rumpsteak (opt.) \\
		\midrule
		10  & 0.019389                                    & 0.011678     & 0.011386 & 0.202587   & 0.215583          \\
		20  & 0.028142                                    & 0.014325     & 0.012994 & 0.336988   & 0.356978          \\
		30  & 0.034193                                    & 0.015160     & 0.013463 & 0.427489   & 0.437795          \\
		40  & 0.036566                                    & 0.016072     & 0.013671 & 0.488886   & 0.517468          \\
		50  & 0.040315                                    & 0.016577     & 0.014126 & 0.545378   & 0.583366          \\
		\bottomrule
	\end{tabular}
\end{center}

\paragraph{Results for the double buffering benchmark.}
\begin{center}
	\sffamily\footnotesize
	\begin{tabular}{cccccc}
		\toprule
		& \multicolumn{5}{c}{Throughput ($n$/$\mu$s)}                                                             \\ \cmidrule(l){2-6}
		$n$   & \sesh                                       & \multicrusty & \ferrite  & \rumpsteak & \rumpsteak (opt.) \\
		\midrule
		5000  & 6.929567                                    & 5.675414     & 7.617643  & 27.704354  & 32.340989         \\
		10000 & 13.138401                                   & 11.254181    & 14.649028 & 44.154722  & 50.126532         \\
		15000 & 18.739983                                   & 16.187341    & 20.429845 & 56.813002  & 67.884430         \\
		20000 & 24.103215                                   & 20.481378    & 25.506427 & 67.595301  & 82.039366         \\
		25000 & 28.609966                                   & 25.050058    & 29.629025 & 75.848611  & 96.010424         \\
		\bottomrule
	\end{tabular}
\end{center}

\paragraph{Results for the FFT benchmark.}
\begin{center}
	\sffamily\footnotesize
	\begin{tabular}{cccccc}
		\toprule
		& \multicolumn{5}{c}{Throughput ($n$/$\mu$s)}                                                   \\ \cmidrule(l){2-6}
		$n$  & \sesh                                       & \multicrusty & \ferrite & \rustfft & \rumpsteak \\
		\midrule
		1000 & 0.551154                                    & 0.810134     & 1.458279 & 9.320778 & 5.038554   \\
		2000 & 1.050958                                    & 1.515538     & 2.513855 & 9.313359 & 7.206404   \\
		3000 & 1.510567                                    & 2.163629     & 3.496405 & 9.333569 & 8.421026   \\
		4000 & 1.935263                                    & 2.783617     & 4.198723 & 9.336939 & 9.262763   \\
		5000 & 2.303627                                    & 3.261020     & 4.811375 & 9.323199 & 9.316716   \\
		\bottomrule
	\end{tabular}
\end{center}

\pagebreak
\subsection{Verifying Asynchronous Message Reordering}
\label{sec:VerificationData}
\paragraph{Results for the stream benchmark.}
\begin{center}
	\sffamily\footnotesize
	\begin{tabular}{cccc}
		\toprule
		& \multicolumn{3}{c}{Running time (s)}                         \\ \cmidrule(l){2-4}
		$n$ & \concur                              & \kmc     & \rumpsteak \\
		\midrule
		0   & 0.003476                             & 0.005504 & 0.001872   \\
		10  & 0.008556                             & 0.019316 & 0.001899   \\
		20  & 0.020673                             & 0.057417 & 0.001848   \\
		30  & 0.041673                             & 0.142145 & 0.001906   \\
		40  & 0.076425                             & 0.276446 & 0.001874   \\
		50  & 0.127865                             & 0.496929 & 0.002080   \\
		60  & 0.198541                             & 0.805577 & 0.002083   \\
		70  & 0.292471                             & 1.233327 & 0.002064   \\
		80  & 0.422571                             & 1.780778 & 0.002178   \\
		90  & 0.583863                             & 2.475443 & 0.002190   \\
		100 & 0.767426                             & 3.349204 & 0.002249   \\
		\bottomrule
	\end{tabular}
\end{center}

\paragraph{Results for the nested choice benchmark.}

\begin{center}
	\sffamily\footnotesize
	\begin{tabular}{cccc}
		\toprule
		& \multicolumn{3}{c}{Running time (s)}                          \\ \cmidrule(l){2-4}
		$n$ & \concur                              & \kmc      & \rumpsteak \\
		\midrule
		1   & 0.002295                             & 0.006554  & 0.000702   \\
		2   & 0.004504                             & 0.014901  & 0.000755   \\
		3   & 0.016347                             & 0.072423  & 0.001745   \\
		4   & 0.224858                             & 1.515528  & 0.007656   \\
		5   & 4.692525                             & 41.688068 & 0.157548   \\
		\bottomrule
	\end{tabular}
\end{center}

\paragraph{Results for the ring benchmark.}

\begin{center}
	\sffamily\footnotesize
	\begin{tabular}{ccc}
		\toprule
		& \multicolumn{2}{c}{Running time (s)}              \\ \cmidrule(l){2-3}
		$n$ & \kmc                                 & \rumpsteak \\
		\midrule
		2   & 0.004007                             & 0.000675   \\
		4   & 0.007239                             & 0.000731   \\
		6   & 0.011806                             & 0.000701   \\
		8   & 0.018822                             & 0.000835   \\
		10  & 0.024842                             & 0.000757   \\
		12  & 0.049232                             & 0.000777   \\
		14  & 0.102257                             & 0.000744   \\
		16  & 0.191078                             & 0.000813   \\
		18  & 0.340262                             & 0.000817   \\
		20  & 0.570656                             & 0.000766   \\
		22  & 0.913412                             & 0.000911   \\
		24  & 1.391075                             & 0.000737   \\
		26  & 2.042452                             & 0.000752   \\
		28  & 2.918943                             & 0.000732   \\
		30  & 4.099072                             & 0.000769   \\
		\bottomrule
	\end{tabular}
\end{center}
\pagebreak

\paragraph{Results for $k$-buffering benchmark.}
\begin{center}
	\sffamily\footnotesize
	\begin{tabular}{ccc}
		\toprule
		& \multicolumn{2}{c}{Running time (s)}              \\ \cmidrule(l){2-3}
		$n$ & \kmc                                 & \rumpsteak \\
		\midrule
		0   & 0.004825                             & 0.000630   \\
		5   & 0.007668                             & 0.000747   \\
		10  & 0.013613                             & 0.000705   \\
		15  & 0.018770                             & 0.000667   \\
		20  & 0.031376                             & 0.000825   \\
		25  & 0.054910                             & 0.000718   \\
		30  & 0.080879                             & 0.000760   \\
		35  & 0.122315                             & 0.000853   \\
		40  & 0.170533                             & 0.000802   \\
		45  & 0.236354                             & 0.000792   \\
		50  & 0.305749                             & 0.000916   \\
		55  & 0.406071                             & 0.000882   \\
		60  & 0.506069                             & 0.000959   \\
		65  & 0.639521                             & 0.001028   \\
		70  & 0.773931                             & 0.001057   \\
		75  & 0.954399                             & 0.001045   \\
		80  & 1.127240                             & 0.001125   \\
		85  & 1.359600                             & 0.001120   \\
		90  & 1.571745                             & 0.001164   \\
		95  & 1.869339                             & 0.001156   \\
		100 & 2.111687                             & 0.001234   \\
		\bottomrule
	\end{tabular}
\end{center}

%%
%% The next two lines define the bibliography style to be used, and
%% the bibliography file.
\bibliographystyle{ACM-Reference-Format}
\bibliography{main}